\numberwithin{equation}{section}
\numberwithin{figure}{section}
\theoremstyle{plain}
\newtheorem*{thm*}{\protect\theoremname}
\theoremstyle{plain}
\newtheorem{thm}{\protect\theoremname}[section]
\theoremstyle{plain}
\newtheorem{lem}[thm]{\protect\lemmaname}
\theoremstyle{remark}
\newtheorem{rem}[thm]{\protect\remarkname}
\theoremstyle{plain}
\theoremstyle{plain}
\newtheorem*{prop*}{\protect\propositionname}
\theoremstyle{plain}
\newtheorem{prop}[thm]{\protect\propositionname}
\theoremstyle{plain}
\newtheorem*{cor*}{\protect\corollaryname}
\theoremstyle{plain}
\newtheorem{cor}[thm]{\protect\corollaryname}
\theoremstyle{plain}
\providecommand{\corollaryname}{Corollary}
\providecommand{\lemmaname}{Lemma}
\providecommand{\propositionname}{Proposition}
\providecommand{\remarkname}{Remark}
\providecommand{\theoremname}{Theorem}
\providecommand{\conjecturename}{Conjecture}
\providecommand{\notename}{Note}
\begin{document}
\global\long\def\SLE{\mathrm{SLE}}
\global\long\def\SLEk{\mathrm{SLE}_{\kappa}}

\global\long\def\CLE{\mathrm{CLE}}
\global\long\def\CLEk{\mathrm{CLE}_{\kappa}}

\global\long\def\SLEkappa#1{\mathrm{SLE}_{#1}}
\global\long\def\SLEkapparho#1#2{\mathrm{SLE}_{#1}(#2)}
\global\long\def\SLEmeasure{\mathsf{P}}
\global\long\def\chordal{\oslash}
\global\long\def\driving{D}

\global\long\def\PR{\mathbb{P}}
\global\long\def\EX{\mathbb{E}}

\global\long\def\sF{\mathcal{F}}
\global\long\def\sZ{\mathcal{Z}}
\global\long\def\sD{\mathcal{D}}
\global\long\def\sC{\mathcal{C}}
\global\long\def\sL{\mathcal{L}}
\global\long\def\sA{\mathcal{A}}
\global\long\def\sR{\mathcal{R}}
\global\long\def\sS{\mathcal{S}}
\global\long\def\sP{\mathcal{P}}
\global\long\def\sM{\mathcal{M}}

\global\long\def\bR{\mathbb{R}}
\global\long\def\bRpos{\mathbb{R}_{> 0}}
\global\long\def\bRnn{\mathbb{R}_{\geq 0}}
\global\long\def\bZ{\mathbb{Z}}
\global\long\def\bN{\mathbb{N}}
\global\long\def\bZpos{\mathbb{Z}_{> 0}}
\global\long\def\bZnn{\mathbb{Z}_{\geq 0}}
\global\long\def\bQ{\mathbb{Q}}
\global\long\def\bC{\mathbb{C}}

\global\long\def\Rsphere{\overline{\bC}}
\global\long\def\bD{\mathbb{D}}
\global\long\def\bH{\mathbb{H}}
\global\long\def\re{\Re\mathfrak{e}}
\global\long\def\im{\Im\mathfrak{m}}
\global\long\def\arg{\mathrm{arg}}
\global\long\def\ii{\mathfrak{i}}
\global\long\def\domain{\Lambda}
\global\long\def\bdrypt{\xi}
\global\long\def\bdryptb{\eta}
\global\long\def\bdry{\partial}
\global\long\def\cl#1{\overline{#1}}
\global\long\def\Mob{\mu}
\global\long\def\confmap{\phi}
\global\long\def\zbar{\bar{z}}

\global\long\def\OO{\mathcal{O}}
\global\long\def\oo{\mathit{o}}

\global\long\def\ud{\mathrm{d}}
\global\long\def\der#1{\frac{\ud}{\ud#1}}
\global\long\def\pder#1{\frac{\partial}{\partial#1}}
\global\long\def\pdder#1{\frac{\partial^{2}}{\partial#1^{2}}}
\global\long\def\pddder#1{\frac{\partial^{3}}{\partial#1^{3}}}

\global\long\def\set#1{\left\{  #1\right\}  }
\global\long\def\setcond#1#2{\left\{  #1\;\big|\;#2\right\}  }

\global\long\def\sl{\mathfrak{sl}}
\global\long\def\sltwo{\mathfrak{sl}_2}
\global\long\def\Uqsltwo{\mathcal{U}_{q}(\mathfrak{sl}_{2})}
\global\long\def\Hcp{\Delta}
\global\long\def\qnum#1{\left[#1\right] }
\global\long\def\qfact#1{\left[#1\right]! }
\global\long\def\qbin#1#2{\left[\begin{array}{c}
	#1\\
	#2 
	\end{array}\right]}
	
\global\long\def\Wd{\mathsf{M}}
\global\long\def\HWsp{\mathsf{H}}

\global\long\def\Wbas{e}
\global\long\def\Tbas{\tau}
\global\long\def\MTbas{\theta}
\global\long\def\Sbas{\mathsf{s}}
\global\long\def\TRbas{\mathsf{t}}
\global\long\def\multiplicity{D}
\global\long\def\Wbastwodim{\Wbas^{(2)}}

\global\long\def\Puregeomtwodim{v}
\global\long\def\Puregeom{\mathfrak{v}}
\global\long\def\Bdryvec{\Puregeom}

\global\long\def\Singletbastwodim{\mathsf{w}}
\global\long\def\Singletbas{\mathfrak{w}}

\global\long\def\Coblobastwodim{\mathsf{u}}
\global\long\def\Coblobas{\mathfrak{u}}

\global\long\def\Projection{\mathfrak{p}}
\global\long\def\Projectionhat{\widehat{\Projection}}
\global\long\def\Embedding{\mathfrak{I}}
\global\long\def\projdmn{\delta}
\global\long\def\ImgofEmbedding{\mathsf{J}}

\global\long\def\TL{\mathrm{TL}}
\global\long\def\fugacity{\nu}

\global\long\def\Rmatrix{\sR}
\global\long\def\RmatrixOperator{\mathscr{R}}


\global\long\def\LP{\mathrm{LP}}
\global\long\def\PP{\mathrm{PP}}
\global\long\def\nested{\boldsymbol{\underline{\Cap}}}
\global\long\def\unnested{\boldsymbol{\underline{\cap\cap}}}
\global\long\def\walks{\mathcal{W}}
\global\long\def\Catalan{\mathrm{C}}
\global\long\def\constantfromdiagram#1#2#3{\mathcal{C}({#1;#2,#3})}
\global\long\def\Fusionconstant#1#2#3{\mathfrak{C_{singl}}({#1;#2,#3})}
\global\long\def\Rpluscomb{\mathcal{R}_+}
\global\long\def\Rminuscomb{\mathcal{R}_-}
\global\long\def\Rpmcomb{\mathcal{R}_\pm}
\global\long\def\Scomb{\mathcal{S}}
\global\long\def\Embeddingcomb{\mathcal{I}}

\global\long\def\linkpatt{\omega}
\global\long\def\partition{\lambda}
\global\long\def\defendpt{u}

\newcommand{\defects}{%
  \raisebox{-.5ex}{%
    \scalebox{1.5}{%
      \rotatebox[origin=c]{270}{$\exists$}%
    }%
  }%
}

\global\long\def\defpatt{\shuffle}

\global\long\def\linkInEquation#1#2{\underset{\;#1\;\;#2}{
\vcenter{\hbox{\includegraphics[scale=0.5]{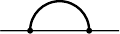}}}}}
\global\long\def\linksmallerInEquation#1#2{\hspace{-1mm}
\underset{\scriptscriptstyle{\,#1\,#2}}{
\vcenter{\hbox{\includegraphics[scale=0.2]{link-0.pdf}}}}}
\global\long\def\defectInEquation#1{\underset{#1}{
\vcenter{\hbox{\includegraphics[scale=0.5]{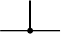}}}}}

\global\long\def\link#1#2{\raisebox{.5ex}{\hspace{-1mm}
\scalebox{.75}{$\linkInEquation{\boldsymbol{#1}}{\boldsymbol{#2}}$}}}

\global\long\def\defect#1{\raisebox{.5ex}{\hspace{-1mm}
\scalebox{.75}{$\defectInEquation{\boldsymbol{#1}}$}}}

\global\long\def\sciOp{\text{\Rightscissors}}
\global\long\def\tieOp{\wp}
\global\long\def\removeLink{/}

\global\long\def\Mmat{\mathscr{M}}
\global\long\def\Minv{\mathscr{M}^{-1}}
\global\long\def\genMmat{\mathfrak{M}}
\global\long\def\genMinv{\mathfrak{M}^{-1}}
\global\long\def\KWleq{{\leftharpoondown}}
\global\long\def\ExcK{\mathsf{K}}
\global\long\def\ExcKH{\mathcal{K}}
\global\long\def\FominDet{\mathbf{\Delta}}
\global\long\def\LPdet#1#2{\FominDet_{#1}^{#2}}
\global\long\def\ConfBlockFun{\mathcal{U}}
\global\long\def\ConfBlockVec{\mathsf{u}}
\global\long\def\Gr{\mathcal{G}}
\global\long\def\Vert{\mathcal{V}}
\global\long\def\Edg{\mathcal{E}}

\global\long\def\tree{\mathcal{T}}
\global\long\def\branch{\gamma}

\global\long\def\LE{\mathrm{LE}}
\global\long\def\RW{\mathcal{X}}
\global\long\def\LERW{\mathcal{L}}

\global\long\def\SymmGrp{\mathfrak{S}}
\global\long\def\sgn{\mathrm{sgn}}

\global\long\def\PPP{\mathrm{PPP}}
\global\long\def\UST{\mathcal{T}}

\global\long\def\MobF{\mathfrak{m}}

\newcommand{\walkfromto}[3]{#1 \text{ from } #2 \text{ to } #3}
\newcommand{\pathfromto}[2]{#1 \rightsquigarrow #2}

\newcommand{\edgeof}[2]{{\langle #1 , #2 \rangle}}

\global\long\def\Hom{\mathrm{Hom}}
\global\long\def\End{\mathrm{End}}
\global\long\def\Aut{\mathrm{Aut}}
\global\long\def\Rad{\mathrm{Rad}}
\global\long\def\Ext{\mathrm{Ext}}
\global\long\def\Mat{\mathrm{Mat}}
\global\long\def\dmn{\mathrm{dim}}
\global\long\def\spn{\mathrm{span}}
\global\long\def\tens{\otimes}
\global\long\def\unitmat{\mathbb{I}}
\global\long\def\id{\mathrm{id}}
\global\long\def\isom{\cong}
\global\long\def\Kern{\mathrm{Ker}}
\global\long\def\SymmGrp{\mathfrak{S}}
\global\long\def\BrGrp{\mathfrak{Br}}
\global\long\def\PureBrGrp{\mathfrak{PBr}}
\global\long\def\Comm{\mathrm{Comm}}
\global\long\def\algebra{\mathfrak{A}}

\global\long\def\Witt{\mathfrak{Witt}}
\global\long\def\Vir{\mathfrak{Vir}}
\global\long\def\Hei{\mathfrak{H}}

\global\long\def\chamber{\mathfrak{X}}
\global\long\def\extendedChamber{\mathfrak{W}}
\global\long\def\FWint{\varphi}
\global\long\def\SurfFW{\mathfrak{L}^{\Supset}}
\global\long\def\PartF{\sZ}
\global\long\def\Sol{\sR}
\global\long\def\ConvSet{\sC}
\global\long\def\FKdual{\mathscr{L}}
\global\long\def\Quantumdual{\psi}

\global\long\def\Ampl{\zeta}
\global\long\def\Corr{\chi}
\global\long\def\Orders{\mathrm{VO}}

\global\long\def\braid{\sigma}
\global\long\def\Pf{\mathrm{Pf}}
\global\long\def\sgn{\mathrm{sgn}}
\global\long\def\dist{\mathrm{dist}}
\global\long\def\const{\mathrm{const.}}
\global\long\def\eps{\varepsilon}
\global\long\def\half{\frac{1}{2}}


\global\long\def\multii{\varsigma}

\global\long\def\LimitOp{\mathscr{L}}
\global\long\def\Projector{\mathscr{P}}
\global\long\def\Projectorhat{\widehat{\mathscr{P}}}

\global\long\def\SpinChain{\mathsf{V}}
\global\long\def\FusedSpinChain{\Projection\mathsf{V}}
\global\long\def\FusedSpinChainHat{\Projectionhat\mathsf{V}}

\global\long\def\FusedHWsp{\Projection\HWsp}

\global\long\def\NormalizationConstant{C}

\global\long\def\BasisF{\mathscr{F}}


\author{E.~Peltola}

\

\vspace{2.5cm}

\begin{center}
\LARGE \bf \scshape {Basis for solutions of the Benoit \& Saint-Aubin PDEs with particular asymptotics properties
}
\end{center}

\vspace{0.75cm}

\begin{center}
{\large \scshape Eveliina Peltola}\\
{\footnotesize{\tt eveliina.peltola@unige.ch}}\\
{\small{Section de Math\'{e}matiques, Universit\'{e} de Gen\`{e}ve,}}\\
{\small{2--4 rue du Li\`{e}vre, C.P. 64, 1211 Gen\`{e}ve 4, Switzerland}}
\end{center}

\vspace{0.75cm}

\begin{center}
\begin{minipage}{0.85\textwidth} \footnotesize
{\scshape Abstract.}
Applying the quantum group method developed 
in~\cite{Kytola-Peltola:Conformally_covariant_boundary_correlation_functions_with_quantum_group}, 
we construct solutions to the Benoit~\& Saint-Aubin partial differential 
equations with boundary conditions given by specific recursive asymptotics properties. Our results generalize solutions constructed 
in~\cite{Kytola-Peltola:Pure_partition_functions_of_multiple_SLEs, PW:Global_multiple_SLEs_and_pure_partition_functions},
known as
the 
pure partition functions of multiple Schramm-Loewner evolutions.
The generalization is reminiscent of fusion in conformal field theory, and our solutions 
can be thought of as partition functions of systems of random curves, where many curves may emerge from the same point. 
\end{minipage}
\end{center}

\vspace{0.75cm}
\setcounter{tocdepth}{2}

\bigskip{}
\section{\label{sec: intro}Introduction}

Conformal field theories (CFT) are expected to describe scaling limits 
of critical models of statistical mechanics. In particular, scaling limits of 
correlations in discrete critical systems should be CFT correlation functions.
Many correlation functions of interest satisfy linear homogeneous partial 
differential equations (PDEs), which in CFT arise from the presence of singular 
vectors in representations of the Virasoro 
algebra~\cite{BPZ:Infinite_conformal_symmetry_in_2D_QFT, Cardy:Conformal_invariance_and_surface_critical_behavior, 
Feigin-Fuchs:Verma_modules_over_Virasoro_book, DMS:CFT}.

Such PDEs of second order frequently appear also in the theory of Schramm-Loewner evolutions ($\SLE$).
In this probabilistic context, they arise from stochastic differentials of certain local martingales.
Solutions to systems of these second order PDEs are known as partition functions for multiple 
$\SLE$s~\cite{BBK:Multiple_SLEs_and_statistical_mechanics_martingales,
Dubedat:Commutation_relations_for_SLE, 
Kozdron-Lawler:Configurational_measure_on_mutually_avoiding_SLEs,
Kytola-Peltola:Pure_partition_functions_of_multiple_SLEs,
PW:Global_multiple_SLEs_and_pure_partition_functions}.
On the other hand, the higher order PDEs of CFT seem not to have a direct probabilistic interpretation, 
but can in some cases be understood in terms of scaling limits, as 
in~\cite{Gamsa-Cardy:The_scaling_limit_of_two_cluster_boundaries_in_critical_lattice_models, 
KKP:Conformal_blocks_pure_partition_functions_and_KW_binary_relation},
$\SLE$ observables, as in~\cite{BJV:Some_remarks_on_SLE_bubbles_and_Schramms_2point_observable,
LV:Coulomb_gas_for_commuting_SLEs, LV:Coulomb_gas_integrals_PART2},
or generalizations of multiple $\SLE$ 
measures~\cite{Friedrich-Werner:Conformal_restriction_highest_weight_representations_and_SLE,
Kontsevich:CFT_SLE_and_phase_boundaries,
Friedrich-Kalkkinen:On_CFT_and_SLE,
Kontsevich-Suhov:On_Malliavin_measures_SLE_and_CFT,
Dubedat:SLE_and_Virasoro_representations_fusion}. 
See also~\cite{BPZ:Infinite_conformal_symmetry_of_critical_fluctuations_in_2D, 
Cardy:Critical_percolation_in_finite_geometries,  
Watts:A_crossing_probability_for_critical_percolation_in_two_dimensions,
Bauer-Bernard:Conformal_field_theories_of_SLEs,
Bauer-Bernard:SLE_CFT_and_zigzag_probabilities, 
Dubedat:Euler_integrals_for_commuting_SLEs,
Dubedat:Excursion_decomposition_for_SLE_and_Watts_crossing_formula,
Sheffield-Wilson:Schramms_proof_of_Watts_formula,
Flores-Kleban:Solution_space_for_system_of_null-state_PDE4,
FSK-Multiple_SLE_connectivity_weights_for_rectangles_hexagons_and_octagons,
JJK:SLE_boundary_visits, PW:Global_multiple_SLEs_and_pure_partition_functions} for further examples.

In this article, we consider solutions to the class of Benoit \& Saint-Aubin type PDE 
systems~\cite{BSA:Degenerate_CFTs_and_explicit_expressions_for_some_null_vectors,
BDIZ:Covariant_differential_equations_and_singular_vectors_in_Virasoro_representations},
corresponding to singular vectors with conformal weights of type $h_{1,s}$ in the Kac table. 
We assume throughout that the parameter $\kappa$ in the central charge 
$c = c(\kappa) = \frac{1}{2\kappa}(6-\kappa)(3\kappa-8)$ is non-rational. 
We construct solutions which satisfy particular boundary conditions given in 
terms of specified asymptotic behavior, recursive in the number of variables. 
In the articles~\cite{JJK:SLE_boundary_visits, Kytola-Peltola:Pure_partition_functions_of_multiple_SLEs,
KKP:Conformal_blocks_pure_partition_functions_and_KW_binary_relation,
LV:Coulomb_gas_for_commuting_SLEs, LV:Coulomb_gas_integrals_PART2, PW:Global_multiple_SLEs_and_pure_partition_functions},
examples of such functions were constructed for applications 
concerning Schramm-Loewner evolutions (see also~\cite{KKP:Conformal_blocks} for solutions relevant in CFT). 
In these applications, the choice of boundary conditions is motivated 
by properties of the random curves, which in CFT language means
specific fusion channels for the fields;
see also~\cite{Cardy:Boundary_conditions_fusion_rules_and_Verlinde_formula,
Cardy:Critical_percolation_in_finite_geometries, 
Watts:A_crossing_probability_for_critical_percolation_in_two_dimensions,
Bauer-Bernard:Conformal_field_theories_of_SLEs,
Bauer-Bernard:SLE_CFT_and_zigzag_probabilities, 
BBK:Multiple_SLEs_and_statistical_mechanics_martingales,
Gamsa-Cardy:The_scaling_limit_of_two_cluster_boundaries_in_critical_lattice_models,
BJV:Some_remarks_on_SLE_bubbles_and_Schramms_2point_observable,
Dubedat:SLE_and_Virasoro_representations_fusion}.

To construct our solutions with the particular boundary conditions, we apply 
the quantum group method developed in~\cite{Kytola-Peltola:Conformally_covariant_boundary_correlation_functions_with_quantum_group}.
We consider spaces of highest weight vectors in tensor product representations 
of the quantum group $\Uqsltwo$ in the generic, semisimple 
case\footnote{By the generic case we mean that the deformation parameter $q$ is not a root of unity. 
The parameter $\kappa$ and the deformation parameter of the quantum group $\Uqsltwo$ are related by
 $q = e^{\ii \pi 4 / \kappa}$.}.
We construct particular bases for these vector spaces, specified by projections to subrepresentations.
Then, via the ``spin chain~--~Coulomb gas correspondence'' 
of~\cite{Kytola-Peltola:Conformally_covariant_boundary_correlation_functions_with_quantum_group},
we obtain the sought solutions to the Benoit~\& Saint-Aubin PDE systems.

Our results provide a generalization of the pure partition functions of multiple 
$\SLE$s~\cite{Kytola-Peltola:Pure_partition_functions_of_multiple_SLEs, 
PW:Global_multiple_SLEs_and_pure_partition_functions}.
They are solutions to a system of second order Benoit~\& Saint-Aubin  PDEs, 
and their recursive boundary conditions are related to
multiple $\SLE$ processes having deterministic connectivities of 
the random curves.
For statistical physics models, the pure partition functions give formulas for crossing probabilities, 
see~\cite{Cardy:Critical_percolation_in_finite_geometries,
Bauer-Bernard:Conformal_field_theories_of_SLEs,
Kontsevich:CFT_SLE_and_phase_boundaries,
BBK:Multiple_SLEs_and_statistical_mechanics_martingales, 
FSK-Multiple_SLE_connectivity_weights_for_rectangles_hexagons_and_octagons,
Izyurov:Smirnovs_observable_for_free_boundary_conditions_interfaces_and_crossing_probabilities, 
KKP:Conformal_blocks_pure_partition_functions_and_KW_binary_relation,
PW:Global_multiple_SLEs_and_pure_partition_functions}. 
Analogously, our solutions can be thought of as partition functions for systems of random curves, 
where packets of curves grow from boundary points of a simply connected domain.
In statistical physics, this corresponds to boundary arm events.
Thus, probabilities of such events should be given by our more general partition functions.
In CFT point of view, this kind of 
events should arise from insertions of boundary changing operators 
with Kac conformal weights of type $h_{1,s}$ at the starting points of the curves. 
In this sense, our generalization of the pure partition functions of multiple $\SLE$s is reminiscent of fusion in CFT.

J.~Dub\'edat studied related questions in his articles~\cite{Dubedat:SLE_and_Virasoro_representations_localization,
Dubedat:SLE_and_Virasoro_representations_fusion}, with emphasis on a priori regularity
of the partition functions, as well as on the construction of a very general framework for the relationship 
of random $\SLE$ type curves and representations of the Virasoro algebra. 
His work is based on the approach of Virasoro uniformization
initiated by M.~Kontsevich~\cite{Kontsevich:Virasoro_and_Teichmuller_spaces,
Kontsevich:CFT_SLE_and_phase_boundaries,
Friedrich-Kalkkinen:On_CFT_and_SLE,
Kontsevich-Suhov:On_Malliavin_measures_SLE_and_CFT},
and hypoellipticity~\cite{HormanderHypoelliptic,
Bony:Maximum_principle_Harnack_inequality_and_uniqueness_of_Cauchy_problem} 
and stochastic flow arguments.

\subsection{Description of the main results}

We now give an overview of the main results of the present article, in a slightly informal manner. The detailed formulations are given later, as referred to below.

\subsubsection{\textbf{Solutions with particular asymptotics}}

The main result of this article is the construction of translation invariant, 
homogeneous solutions 
$\BasisF \colon \set{ (x_1,\ldots,x_p) \in \bR^p \;|\; x_1 < \cdots < x_p} \to \bC$
to the Benoit~\& Saint-Aubin  PDE systems, determined by 
boundary conditions which concern the asymptotic behavior of the functions. 
The PDE system contains $p$ linear, homogeneous PDEs of orders 
$d_1,d_2,\ldots,d_p$,
\begin{align}\label{eq: BSA differential equations intro}
\tag{PDE}
\sD^{(j)}_{d_j} \BasisF(x_1,\ldots,x_p) = 0 , \qquad \text{ for all } j \in \{1,2, \ldots,p \},
\end{align}
and the partial differential operators $\sD^{(j)}_{d_j}$ of interest are given 
in Equation~\eqref{eq: BSA differential operator} in
Section~\ref{sec: basis functions}.

The particular asymptotics properties of the solutions 
are recursive in the number of variables. The collection $(\BasisF_\linkpatt)$ 
of solutions satisfying these properties is indexed by planar link patterns 
\begin{align*} 
\linkpatt = \Big\{\linkInEquation{a_1}{b_1},\ldots,\linkInEquation{a_\ell}{b_\ell}\Big\} \bigcup
\Big\{\defectInEquation{c_1},\ldots,\defectInEquation{c_{s}}\Big\}, 
\end{align*} 
which are defined as collections of $\ell$ links $\link{a\;}{\;b\,}$
and $s$ defects $\defect{c}$ in the upper half-plane, with endpoints
$a_1,\ldots,a_\ell,b_1,\ldots,b_\ell,c_1,\ldots,c_s$
on the real axis --- see Section~\ref{subsec: connectivities} for 
details\footnote{The parameters $d_j$ are the degrees of the partial differential 
operators in~\eqref{eq: BSA differential equations intro}. They are related to
the link patterns $\linkpatt$ in such a way that the total number of lines 
in $\linkpatt$ attached to each index $j$ equals $s_j = d_j-1$.}. 
The set of links in $\linkpatt$ is a multiset, and
for a link $\link{a\;}{\;b\,}$, we denote by 
$\ell_{a,b} = \ell_{a,b}(\linkpatt)$ its multiplicity in $\linkpatt$.

The homogeneity degree of the solution $\BasisF_\linkpatt$ is related to the number $s$ of defects in $\linkpatt$, as explained 
in Section~\ref{sec: basis functions}.
The asymptotic boundary conditions for $\BasisF_\linkpatt$
are given in terms of removing links from the link pattern $\linkpatt$. 
Removal of $m$ links
$\link{a\;}{\;b\,}$ from $\linkpatt$ results in a planar link pattern with $(\ell - m)$ links, denoted by 
$\hat{\linkpatt} = \linkpatt\removeLink (m\times\link{a\;}{\;b\,})$,
as illustrated in Figure~\ref{fig: removing links} and explained in Section~\ref{subsub: link removals}.

\begin{thm*}[Theorem~\ref{thm: asymptotic properties of general basis vectors}]
There exists a collection $(\BasisF_\linkpatt)$ 
of translation invariant, homogeneous solutions
to the Benoit $\&$ Saint-Aubin PDE system~\eqref{eq: BSA differential equations intro}
such that each function $\BasisF_\linkpatt$ has the asymptotic behavior
\begin{align*}
\BasisF_\linkpatt(x_1,\ldots,x_{p}) 
\; \sim \; 
\; & C_j \times (x_{j+1}-x_j)^{\Delta_j} \times
\BasisF_{\hat{\linkpatt}}(x_1,\ldots,x_{j-1},\xi,x_{j+2}\ldots,x_{p}),
\end{align*}
as $x_j,x_{j+1}\to\xi$, 
for any $j \in \{1,2,\ldots,p-1 \}$ and $\xi \in (x_{j-1},x_{j+2})$,
where $\hat{\linkpatt} = \linkpatt\removeLink (\ell_{j,j+1}\times\link{j}{j+1})$
denotes the link pattern obtained from $\linkpatt$ by removing all the links $\link{j}{j+1}$,
and the constant 
$C_j = C(\ell_{j,j+1};d_j,d_{j+1})$ 
and exponent 
$\Delta_j = \Delta(\ell_{j,j+1};d_j,d_{j+1})$ 
are explicitly given in Section~\ref{sec: basis functions}.
\end{thm*}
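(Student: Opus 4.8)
The plan is to realise each $\BasisF_\linkpatt$ as the image, under the spin chain--Coulomb gas correspondence of~\cite{Kytola-Peltola:Conformally_covariant_boundary_correlation_functions_with_quantum_group}, of an explicitly and recursively constructed highest weight vector in a tensor product $\Wd_{s_1}\tens\cdots\tens\Wd_{s_p}$ of $\Uqsltwo$-representations, where $s_j = d_j - 1$. The representations $\Wd_{s_j}$ of dimension $d_j$ themselves arise as images of fusion projectors applied to a spin-$\tfrac{1}{2}$ chain, which is the algebraic reason the PDEs~\eqref{eq: BSA differential equations intro} have the elevated orders $d_j$. Since the correspondence sends every highest weight vector to a translation invariant, homogeneous solution of~\eqref{eq: BSA differential equations intro}, these three properties are inherited automatically; the entire content of the theorem therefore lies in choosing the vectors so that the associated functions exhibit the prescribed recursive collision asymptotics, and in computing the constant $C_j$ and exponent $\Delta_j$.

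First I would build the vectors $\BasisF_\linkpatt$ by induction on the number $\ell$ of links, using the generic (semisimple) decomposition $\Wd_{s_j}\tens\Wd_{s_{j+1}} \isom \bigoplus_{k=0}^{\min(s_j,s_{j+1})}\Wd_{s_j+s_{j+1}-2k}$ of two neighbouring factors. The guiding dictionary is that placing $\ell_{j,j+1}$ links between the adjacent points $j$ and $j+1$ corresponds to the channel $k=\ell_{j,j+1}$, in which exactly $\ell_{j,j+1}$ of the lines emanating from the two points are contracted into a single factor $\Wd_{s_j+s_{j+1}-2\ell_{j,j+1}}$; the defects are carried through untouched and fix the homogeneity degree through their total number $\nodef$. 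Concretely, $\BasisF_\linkpatt$ is obtained by applying the embedding $\Embedding$ associated to this channel to the vector $\BasisF_{\hat\linkpatt}$ already defined for the smaller pattern $\hat\linkpatt = \linkpatt\removeLink(\ell_{j,j+1}\times\link{j}{j+1})$, arranging the construction so that it is consistent for every adjacent pair simultaneously.

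The core step is the asymptotics. On the analytic side the collision limit $x_j,x_{j+1}\to\xi$ is matched, through the correspondence, to the decomposition of the two neighbouring factors on the algebraic side: I would invoke the general asymptotics result of the correspondence, which expresses $\lim_{x_j,x_{j+1}\to\xi}(x_{j+1}-x_j)^{-\Delta(k;d_j,d_{j+1})}\BasisF_v$, restricted to the channel $\Wd_{s_j+s_{j+1}-2k}$, as $\BasisF_{\Projector_k v}$, where $\Projector_k$ projects onto that summand and $\Delta(k;d_j,d_{j+1})$ is the associated difference of conformal weights dictated by homogeneity. These exponents are ordered so that a larger number $k$ of contractions yields a more singular power, so the most strongly contracted channel present in $v$ always furnishes the dominant term. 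The adaptedness of $\BasisF_\linkpatt$ guarantees that its projection onto every deeper channel $k>\ell_{j,j+1}$ vanishes, while the projection onto the channel $k=\ell_{j,j+1}$ returns exactly $\BasisF_{\hat\linkpatt}$; hence the leading power is precisely $\Delta_j = \Delta(\ell_{j,j+1};d_j,d_{j+1})$ and the leading coefficient is the scalar $C_j = C(\ell_{j,j+1};d_j,d_{j+1})$ fixed by the normalisation of the embedding--projection pair.

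I expect the main obstacle to be twofold. First, one must prove the vanishing of the projections onto all deeper channels $k>\ell_{j,j+1}$ for \emph{every} adjacent pair at once; this is the true heart of the argument, since it is what certifies that the stated exponent, rather than a more singular one, is genuinely the leading order, and it forces the adapted basis to be built with a precise triangular structure. Second, evaluating the constant $C_j$ and exponent $\Delta_j$ in closed form --- in particular for higher multiplicity $\ell_{j,j+1}>1$, where several lines contract simultaneously and interact with the remaining lines and defects at the merging point --- reduces to a quantum-group pairing that must be computed explicitly. Verifying that removing links in different orders yields one and the same family of functions is the remaining, more routine, consistency check.
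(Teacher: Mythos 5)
Your analytic half of the argument coincides with the paper's: the functions are $\BasisF_\linkpatt=\sF[\Puregeom_\linkpatt]$ for suitable highest weight vectors, the (PDE) and (COV) parts of the correspondence of~\cite{Kytola-Peltola:Conformally_covariant_boundary_correlation_functions_with_quantum_group} give translation invariance, homogeneity and the PDEs for free, and the asymptotics follow from the (ASY) part together with the fact that for $\kappa\in(0,8)$ the exponents attached to deeper fusion channels are strictly smaller, so the most contracted channel present dominates. The gap is in the algebraic half, namely the construction of the vectors. Defining $\Puregeom_\linkpatt$ by embedding $\Puregeom_{\hat\linkpatt}$ into the single channel $k=\ell_{j,j+1}$ of $\Wd_{d_{j+1}}\tens\Wd_{d_j}$ is not a workable recursion: the resulting vector lies entirely in one summand of the Clebsch--Gordan decomposition at the pair $(j,j+1)$, so its projections onto the shallower channels $1\le m<\ell_{j,j+1}$ vanish, which already contradicts the recursive system~\eqref{eq: projection conditions} that the paper needs (and that the induction itself needs when it passes to a neighbouring pair sharing the endpoint $j+1$); moreover, performing the recursion at different adjacent pairs gives a priori different vectors. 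What you call a ``routine consistency check'' is precisely the existence problem for this overdetermined system, and it is the main content of Section~\ref{sec: basis vectors}.

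The paper resolves it not by induction on links but by a global pull-back. Each $\Wd_{d_i}$ is realized as the top summand of $\Wd_2^{\tens s_i}$, the link pattern $\linkpatt$ is ``opened up'' to a planar pair partition $\alpha(\linkpatt)$, and one sets $\Puregeom_\linkpatt = R_+^{(\nodefsmaller)}\big(\Projectionhat^{(\multii,\nodefsmaller)}(\Puregeomtwodim_{\alpha(\linkpatt)})\big)$, where $(\Puregeomtwodim_\alpha)$ is the already-known basis of $\HWsp_{2N}^{(0)}\subset\Wd_2^{\tens 2N}$ from~\cite{Kytola-Peltola:Pure_partition_functions_of_multiple_SLEs} and $R_+$ accounts for the defects. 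All projection conditions, at all adjacent pairs and all multiplicities $m$ simultaneously, then follow from the known properties of $\Puregeomtwodim_{\alpha(\linkpatt)}$ via the commutative diagram of Lemma~\ref{lem: commutative diagram}, which is also what produces the constant $\constantfromdiagram{m}{s_j}{s_{j+1}}$ (hence $C_j$) in closed form. If you insist on an inductive construction, you would still need an argument of this type, or at least the uniqueness statement of Proposition~\ref{prop: uniqueness}, to certify that the system is solvable and order-independent; that is the missing idea in your proposal.
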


We prove in~\cite{Flores-Peltola:Solution_space_of_BSA_PDEs} that
the solutions $(\BasisF_\linkpatt)$ are in fact linearly independent,
and hence, indeed form a basis of a solution space for the Benoit $\&$ Saint-Aubin PDE system~\eqref{eq: BSA differential equations intro}.
A special case is already established in Proposition~\ref{prop: FK dual elements} of the present article.

Examples of solutions with asymptotics as above were considered 
in~\cite{JJK:SLE_boundary_visits, Kytola-Peltola:Pure_partition_functions_of_multiple_SLEs, 
KKP:Conformal_blocks_pure_partition_functions_and_KW_binary_relation,
PW:Global_multiple_SLEs_and_pure_partition_functions} 
with applications to $\SLE$s: the \emph{multiple $\SLE$ pure partition functions} 
$\PartF_\alpha (x_1,\ldots,x_{2N}) \propto \BasisF_\alpha(x_1,\ldots,x_{2N})$,
where $\alpha$ is a planar pair partition 
(thought of as a link pattern with $\ell = N$ links and $s = 0$ defects), 
and the \emph{chordal $\SLE$ boundary visit probability amplitudes} 
$\Ampl_\linkpatt (x;y_1,y_2,\ldots,y_n) = \BasisF_\linkpatt (x;y_1,y_2,\ldots,y_n)$,
where the link pattern $\linkpatt$ encodes the order of visits of the $\SLE$ curve started from $x$ to 
the boundary points $y_1,y_2,\ldots,y_n$.
We discuss the pure partition functions $\PartF_\alpha$ briefly in Section~\ref{sec: multiple SLEs},
but refer to the literature for details about the boundary visit amplitudes $\Ampl_\linkpatt$; 
see~\cite{JJK:SLE_boundary_visits, Kytola-Peltola:Pure_partition_functions_of_multiple_SLEs}.

\begin{figure}
\includegraphics[scale=.75]{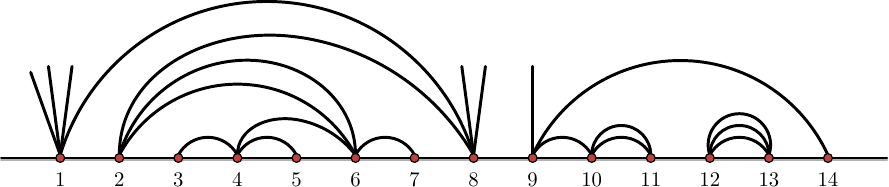}
\caption{\label{fig: example connectivity}
Example of a planar $(36,15)$-link pattern of $p=14$ points.}
\end{figure}

In Section~\ref{sec: basis functions}, we also prove a further property
of the functions $\BasisF_\linkpatt$, concerning limits when taking several 
variables together simultaneously. 
In terms of the link pattern $\linkpatt$, this means removing several links simultaneously. 
Such asymptotics pertains to general boundary behavior of the solutions.

\begin{prop*}[Proposition~\ref{prop: strong limits}]
For any $1 \leq j < k \leq p$ and $\xi \in (x_{j-1},x_{k+1})$, we have
\begin{align*}
\lim_{x_j, x_{j+1}, \ldots, x_k \to \xi}
\frac{\BasisF_\linkpatt(x_1,\ldots,x_p)}{\BasisF_\tau(x_j,\ldots,x_k)}
= \; & \BasisF_{\linkpatt \removeLink \tau}(x_1,\ldots,x_{j-1},\xi,x_{k+1},\ldots,x_p) ,
\end{align*}
where $\tau$ denotes the sub-link pattern of $\linkpatt$ between the indices $j,k$
and $\linkpatt \removeLink \tau$ denotes the link pattern obtained from $\linkpatt$ by removing $\tau$, 
as detailed in Section~\ref{subsec: further limit properties}.
\end{prop*}

\subsubsection{\textbf{Cyclic permutation symmetry}}

Solutions of the Benoit~\& Saint-Aubin  PDEs enjoying 
M\"obius covariance play a special role in conformal field theory. In particular, physical correlation 
functions should transform covariantly under all M\"obius maps, 
by conformal invariance of the theory. In applications to the theory of 
$\SLE$s, observables such as the multiple $\SLE$ (pure) partition functions 
also have this property.
More generally, in Theorem~\ref{thm: asymptotic properties of general basis vectors} 
we show also that the solutions $\BasisF_\linkpatt$ corresponding to link 
patterns $\linkpatt$ with zero defects are M\"obius covariant.
These functions also behave nicely in the limits 
$x_1 \to -\infty$ and $x_p \to +\infty$, in the following sense.

\begin{prop*}[Proposition~\ref{prop: cyclicity of limits}]
For any link pattern $\linkpatt$ with no defects $(s = 0)$, we have
\begin{align*}
\lim_{y \to + \infty} \Big( y^{2h} \times 
\BasisF_{\linkpatt}(x_1,\ldots,x_{p-1},y) \Big)
= \lim_{y \to - \infty} \Big( |y|^{2h} \times 
\BasisF_{\Scomb(\linkpatt)}(y,x_2,\ldots,x_{p}) \Big),
\end{align*} 
where $h = h_{1,d_{p}}$ is a Kac weight associated to the point $x_p$,
and $\Scomb(\linkpatt)$ is a planar link pattern obtained from $\linkpatt$ by 
a cyclic permutation, as defined in Equation~\eqref{eq: combinatorial Smap}
in Section~\ref{sec: cyclic permutation symmetry}.
\end{prop*}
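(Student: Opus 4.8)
The plan is to exploit the M\"obius covariance of the zero-defect solutions, established in Theorem~\ref{thm: asymptotic properties of general basis vectors}, which lets us regard $\BasisF_\linkpatt$ as a covariant function of boundary points on the circle $\partial\bH = \bR\cup\{\infty\}$ rather than on $\bR$ alone. Both limits in the statement extract the leading behaviour of such a function when one of its fields is pushed to the single boundary point $\infty$; the content of the proposition is that this renormalized value does not depend on the chart used to approach $\infty$, and that the combinatorial map $\Scomb$ is exactly the relabeling of the surviving $p-1$ points induced by moving the distinguished point across $\infty$.

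First I would establish that the renormalized limits exist and admit a covariant description. Writing $h = h_{1,d_p}$ for the weight at the $p$-th point and applying covariance under an inversion $\mu$ of $\bH$ that sends $\infty$ to an interior boundary point $\zeta$, one has $\mu'(x_p)^{h} \sim x_p^{-2h}$ as $x_p\to+\infty$, so the factor $x_p^{2h}$ is precisely cancelled by the Jacobian at the diverging point. This shows that $\lim_{x_p\to+\infty} x_p^{2h}\,\BasisF_\linkpatt(x_1,\dots,x_p)$ exists and equals a product of Jacobian factors at the remaining points times $\BasisF_\linkpatt$ evaluated on the images of $x_1,\dots,x_{p-1}$ together with the point $\zeta$. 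An entirely analogous computation, using a second inversion $\mu'$ that sends $\infty$ to the same $\zeta$, represents the right-hand limit in terms of $\BasisF_{\Scomb(\linkpatt)}$.

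Next I would compare the two representations through the composite M\"obius map $g = \mu'\circ\mu^{-1}$, which fixes $\zeta$ and permutes the remaining boundary points. Applying covariance once more converts one expression into the other up to Jacobian factors, and the remaining task is purely combinatorial: to check that the link pattern is transported by $g$ exactly into $\Scomb(\linkpatt)$, with the degree $d_p$ (hence the weight $h$) travelling with the distinguished field as it crosses $\infty$. This matching is forced by the definition of $\Scomb$ in Equation~\eqref{eq: combinatorial Smap}, which is designed precisely so that the point labeled $p$ and approached from $+\infty$ becomes the point labeled $1$ approached from $-\infty$ after the cyclic relabeling. Collecting the Jacobian factors then yields the stated identity, including the appearance of $|x_1|^{2h}$ rather than $x_1^{2h}$ on the right, which comes from the sign of the derivative of the inversion at the point escaping to $-\infty$.

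The main obstacle I anticipate is not any single estimate but the careful bookkeeping that links the geometry to the combinatorics: one must verify that the order of the surviving points is preserved, so that no spurious relabeling beyond $\Scomb$ occurs, that all conformal weights $h_{1,d_i}$ are carried to the correct new indices, and that the Jacobian factors at the $p-1$ finite points cancel between the two sides in the limit. A secondary point requiring care is the justification that the limit and the covariance identity may be interchanged, that is, that the convergence is locally uniform in the remaining variables; this should follow from the explicit covariant formula for the renormalized limit obtained in the first step.
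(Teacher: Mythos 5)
Your first step is fine: using covariance under an inversion that sends $\infty$ to a finite boundary point $\zeta$, the factor $x_p^{2h}$ is cancelled by $\mu'(x_p)^{h}$ and continuity of $\BasisF_\linkpatt$ on the chamber gives existence of the renormalized limit. The gap is in the second step. The M\"obius covariance you are allowed to use (property~(3) of Theorem~\ref{thm: asymptotic properties of general basis vectors}, i.e.\ \eqref{eq: Mobius covariance}) holds only for maps with $\Mob(x_1)<\cdots<\Mob(x_p)$ and relates $\BasisF_\linkpatt$ \emph{to itself} at an order-preserving image of the configuration. It never relates $\BasisF_\linkpatt$ to the \emph{different} function $\BasisF_{\Scomb(\linkpatt)}$, which lives on a different multiindex $\multii'=(s_p,s_1,\ldots,s_{p-1})$. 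Your composite map $g=\mu'\circ\mu^{-1}$ moves a point across $\infty$, so it does not preserve the linear order, and "applying covariance once more" is exactly the step that is not available. The claim that "the link pattern is transported by $g$ into $\Scomb(\linkpatt)$" is not forced by the definition of $\Scomb$ --- $\Scomb$ is a purely combinatorial relabeling, and the assertion that the analytic family $(\BasisF_\linkpatt)$ respects it under rotation of the boundary is precisely the content of the proposition (and of Corollary~\ref{cor: cyclic symmetry}); assuming it makes the argument circular.

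The paper closes this gap on the quantum-group side rather than the analytic side. It quotes \cite[Proposition~5.4]{Kytola-Peltola:Conformally_covariant_boundary_correlation_functions_with_quantum_group} to express both renormalized limits as explicit constants times $\sF[R_+^{(\nodefsmaller)}(\Puregeom_{\linkpatt})]$ and $\sF[R_-^{(\nodefsmaller)}(\Puregeom_{\Scomb(\linkpatt)})]$ respectively, and then invokes Corollary~\ref{cor: cyclic symmetry}, i.e.\ $\Puregeom_{\Scomb(\linkpatt)}=(-q)^{s_p}S^{(s_p)}(\Puregeom_\linkpatt)$, equivalently $R_-(\Puregeom_{\Scomb(\linkpatt)})=(-q)^{s_p}R_+(\Puregeom_\linkpatt)$. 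That identity is proved via the uniqueness characterization of the vectors $\Puregeom_\linkpatt$ (Propositions~\ref{prop: uniqueness} and~\ref{prop: relation of maps R+ and R-}), not via conformal covariance; the prefactors $(q-q^{-1})^{d-1}$, $(q^{-2}-1)^{d-1}$ and $(-q)^{d-1}$ then cancel. To repair your argument you would need an independent proof that the rotated function coincides with $\BasisF_{\Scomb(\linkpatt)}$, which in this paper is only accessible through the characterization of the corresponding vectors in $\HWsp_\multii^{(0)}$.
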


\subsubsection{\textbf{Application to multiple $\SLE$ pure partition functions}}

\begin{figure}
\includegraphics[scale=.75]{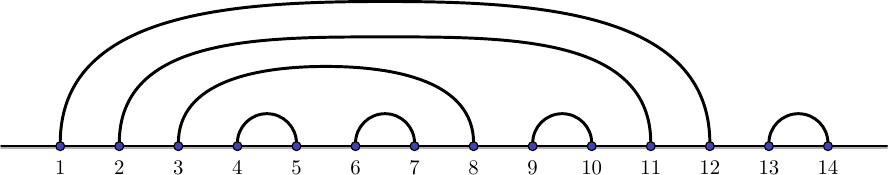}
\caption{\label{fig: example planar pair partition}
Example of a planar pair partition of $2N = 14$ points.}
\end{figure}

As a special case of the cyclic permutation symmetry, it follows that the 
multiple $\SLE$ pure partition functions $\PartF_\alpha(x_1,\ldots,x_{2N})$
satisfy a cascade property when $x_1 \to -\infty$ and $x_{2N} \to +\infty$.
In terms of the probability measures of the random curves, 
we have a natural cascade property concerning 
the removal of one curve, see~\cite{Kozdron-Lawler:Configurational_measure_on_mutually_avoiding_SLEs,
Kytola-Peltola:Pure_partition_functions_of_multiple_SLEs, 
PW:Global_multiple_SLEs_and_pure_partition_functions}.

\begin{cor*}[Corollary~\ref{cor: removing link around infinity}]
For any planar pair partition $\alpha$, we have
\begin{align*}
\lim_{\substack{x_{1}\to -\infty, \\ x_{2N}\to +\infty}}
|x_{2N}-x_{1}|^{2h_{1,2}} \times \PartF_{\alpha}(x_{1},\ldots,x_{2N})
=\; & \begin{cases}
0\quad & \text{if }\linkInEquation{1}{2N}\notin\alpha\\
\PartF_{\hat{\alpha}}(x_{2},\ldots,x_{2N-1}) 
& \text{if } \linkInEquation{1}{2N}\in\alpha,
\end{cases}
\end{align*}
where $\hat{\alpha} = \alpha \removeLink \link{1}{2N}$, and
$h_{1,2} = \frac{6-\kappa}{2\kappa}$,
and $\kappa \in (0,8) \setminus \bQ$ is the parameter of the $\SLEk$.
\end{cor*}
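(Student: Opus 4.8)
The plan is to reduce the statement to the single-collision asymptotics recorded in Theorem~\ref{thm: asymptotic properties of general basis vectors} together with the cyclic symmetry of Proposition~\ref{prop: cyclicity of limits}, exploiting that $\PartF_\alpha \propto \BasisF_\alpha$ and that $\BasisF_\alpha$ is M\"obius covariant whenever $\nodef = 0$. First I would pass from $\PartF_\alpha$ to $\BasisF_\alpha$, fixing the proportionality constant so that the constant $C(1;2,2)$ produced by the asymptotics below is normalized to $1$. For a pair partition every $d_j = 2$, so $h = h_{1,2}$ and the exponent governing the collision of an adjacent linked pair is $\Delta(1;2,2) = -2h_{1,2}$, which is exactly the power in the prefactor $|x_{2N}-x_1|^{2h_{1,2}}$; the matching of these powers is what makes a finite nonzero limit possible and is the first consistency check.

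The geometric idea is that sending $x_1 \to -\infty$ and $x_{2N}\to+\infty$ makes the two extreme points collide at the single point $\infty$ of $\bR\cup\{\infty\}$, where they are cyclically adjacent. To exploit this I would choose a M\"obius map $\Mob$ carrying $\infty$ to a finite value, say $\Mob(z) = -1/z$, and invoke M\"obius covariance of $\BasisF_\alpha$ to write $\BasisF_\alpha(x_1,\ldots,x_{2N})$ as $\prod_i |\Mob'(x_i)|^{h_{1,2}}$ times $\BasisF_{\Scomb(\alpha)}$ evaluated at $\Mob(x_1),\ldots,\Mob(x_{2N})$; here the induced relabelling of the link pattern is precisely the cyclic permutation $\Scomb$ of Proposition~\ref{prop: cyclicity of limits}, and $\link{1}{2N}\in\alpha$ corresponds to $\link{1}{2}\in\Scomb(\alpha)$. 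Under this map the images $\Mob(x_1),\Mob(x_{2N})$ are adjacent and tend to $0$, so the double limit at $\infty$ becomes an ordinary adjacent collision to which Theorem~\ref{thm: asymptotic properties of general basis vectors} applies.

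If $\link{1}{2N}\in\alpha$, the collided pair is linked, and Theorem~\ref{thm: asymptotic properties of general basis vectors} gives leading behaviour $(\Mob(x_{2N})-\Mob(x_1))^{-2h_{1,2}}\BasisF_{\hat\alpha}$. A direct computation of the Jacobian factors together with $|\Mob(x_{2N})-\Mob(x_1)| = |1/x_1 - 1/x_{2N}| = (x_{2N}-x_1)/(|x_1|\,x_{2N})$ shows that all powers of $|x_1|$ and $x_{2N}$ cancel, leaving exactly $(x_{2N}-x_1)^{-2h_{1,2}}$; the remaining factors reassemble, by M\"obius covariance of the defect-free pattern $\hat\alpha$, into $\BasisF_{\hat\alpha}(x_2,\ldots,x_{2N-1})$. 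Multiplying by $|x_{2N}-x_1|^{2h_{1,2}}$ and translating back to $\PartF$ then yields $\PartF_{\hat\alpha}(x_2,\ldots,x_{2N-1})$.

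If $\link{1}{2N}\notin\alpha$, then $x_1$ and $x_{2N}$ are each linked to interior points, so the collided pair is unlinked; here the leading exponent is the other fusion channel $h_{1,3}-2h_{1,2} > -2h_{1,2}$. Repeating the Jacobian bookkeeping, the factor $|x_{2N}-x_1|^{2h_{1,2}}$ now multiplies a positive power $(\,1/|x_1| + 1/x_{2N}\,)^{h_{1,3}}$, which tends to $0$ since $h_{1,3} = (8-\kappa)/\kappa > 0$ for $\kappa\in(0,8)$. I expect this unlinked case to be the main obstacle: Theorem~\ref{thm: asymptotic properties of general basis vectors} as stated records only the link-removal asymptotics, so to conclude the vanishing I must separately invoke that the coefficient of the singular power $(\,\cdot\,)^{-2h_{1,2}}$ is zero when $\link{j}{j+1}\notin\alpha$ --- the familiar ``wrong channel'' property of the pure partition functions at a finite adjacent collision --- and then transport it to $\infty$ through the M\"obius factors. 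Checking that this transport respects the precise normalization of $\PartF$, so that no stray constant survives, is the remaining bookkeeping.
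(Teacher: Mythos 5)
Your overall strategy coincides with the paper's: reduce the double limit at infinity to a finite adjacent collision by a cyclic rotation of the marked points, and then apply the recursive asymptotics \eqref{eq: multiple SLE asymptotics}. The paper's proof does exactly this, as an iterated limit: Corollary~\ref{cor: cyclicity of limits for pure geometries} converts $\lim_{x_{2N}\to+\infty}$ of $\PartF_\alpha$ into $\lim_{x_{2N}\to-\infty}$ of $\PartF_{\Scomb(\alpha)}$ with $x_{2N}$ placed leftmost, after which $x_{2N}$ and $x_1$ are adjacent and the cascade property \eqref{eq: multiple SLE asymptotics} --- including its vanishing clause in the unlinked channel --- finishes the argument.

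One step in your write-up is not licensed by what the paper proves. You invoke ``M\"obius covariance'' to write $\BasisF_\alpha(x_1,\ldots,x_{2N}) = \prod_i|\Mob'(x_i)|^{h_{1,2}}\,\BasisF_{\Scomb(\alpha)}(\cdots)$ for a map $\Mob$ (such as $z\mapsto -1/(z-c)$) that \emph{rotates} the cyclic order of the points. The covariance statement \eqref{eq: Mobius covariance} is asserted only for M\"obius maps preserving the linear order $\Mob(x_1)<\cdots<\Mob(x_p)$; for an order-rotating map, the claim that the function transforms into the one labelled by the cyclically permuted link pattern, with no extra constant, is precisely the nontrivial cyclic symmetry. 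The paper establishes this only in the limiting form of Proposition~\ref{prop: cyclicity of limits}, via the quantum-group computation of Corollary~\ref{cor: cyclic symmetry}, where a factor $(-q)^{s_p}$ must cancel against the ratio of the two limit constants --- it is not a formal consequence of \eqref{eq: Mobius covariance}. Your argument goes through once the pointwise rotation identity is replaced by that limit statement, which is all you actually need. Finally, your worry about the unlinked case is unfounded: the vanishing of the limit when the collided pair is not linked is already part of the recorded asymptotics, both in \eqref{eq: multiple SLE asymptotics} and in item (4) of Theorem~\ref{thm: asymptotic properties of general basis vectors}, so no separate ``wrong channel'' input is required.
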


\subsection{Organization}

Our most important results are given in Section~\ref{sec: basis functions} 
(Theorem~\ref{thm: asymptotic properties of general basis vectors} \& Proposition~\ref{prop: strong limits}):
construction of the solutions $\BasisF_\linkpatt$ to the Benoit~\& Saint-Aubin  PDEs, 
Section~\ref{sec: multiple SLEs} (Corollary~\ref{cor: cyclicity of limits for pure geometries}):
application to the multiple $\SLE$ pure partition functions,
and Section~\ref{sec: basis vectors} (Theorem~\ref{thm: highest weight vector space basis vectors}): construction of
certain basis vectors $\Puregeom_\linkpatt$ in tensor product representations of the quantum group $\Uqsltwo$,
which serve as building blocks for the solutions $\BasisF_\linkpatt$ with the ``spin chain~--~Coulomb gas correspondence''
of~\cite{Kytola-Peltola:Conformally_covariant_boundary_correlation_functions_with_quantum_group}.

Sections~\ref{sec: preliminaries}~--~\ref{sec: cyclic permutation symmetry}
concern the representation theory of the quantum group $\Uqsltwo$ and the construction and properties of the vectors $\Puregeom_\linkpatt$. 
Sections~\ref{sec: basis functions}~--~\ref{sec: multiple SLEs} treat 
the solutions $\BasisF_\linkpatt$ themselves.
In Section~\ref{sec: basis functions}, we also very briefly discuss 
the quantum group method of the article~\cite{Kytola-Peltola:Conformally_covariant_boundary_correlation_functions_with_quantum_group}. 
Appendices~\ref{app: q-combinatorics} and \ref{app: auxiliary calculations}
contain auxiliary calculations which are needed in the proofs in 
Section~\ref{sec: basis vectors}. Appendix~\ref{app: dual elements}
constitutes some additional tools needed to prove the general limiting behavior 
of the basis functions $\BasisF_\linkpatt$.

\subsection{Acknowledgments}

During this work, the author was supported by Vilho, Yrj\"o and Kalle V\"ais\"al\"a Foundation and affiliated with the University of Helsinki.
She wishes to especially thank Steven Flores and Kalle Kyt\"ol\"a for many inspiring discussions and ideas. 
She has also enjoyed stimulating and helpful discussions 
with Michel Bauer, Dmitry Chelkak, Julien Dub\'edat, Bertrand Duplantier, 
Philippe Di Francesco, Cl\'ement Hongler, Konstantin Izyurov, Jesper Jacobsen, 
Fredrik Johansson-Viklund, Rinat Kedem, Antti Kemppainen, Jonatan Lenells, 
Jason Miller, Wei Qian, Hubert Saleur, and Hao Wu.
She thanks Roland Friedrich for pointing out important references.

\bigskip{}
\section{\label{sec: preliminaries}Preliminaries: The quantum group $\Uqsltwo$ and some combinatorics}

In this section, we discuss preliminaries concerning the quantum group 
$\Uqsltwo$ and its representations, as well as the set of planar link patterns 
$\linkpatt$ and some combinatorial results. We also introduce notations which 
will be used and referred to throughout this article.

Fix a parameter $q\in\bC\setminus\set{0}$, and assume that 
$q^m\neq1$, for all $m\in\bZ\setminus\set{0}$, i.e., that $q$
is not a root of unity.
Let $m\in\bZ$, and $n,k\in\bN$, with $0\leq k\leq n$. 
We define the $q$-integers as
\begin{align*}
\qnum{m} = \; & \frac{q^{m}-q^{-m}}{q-q^{-1}} 
    =  q^{m-1}+q^{m-3}+\cdots+q^{3-m}+q^{1-m} 
\end{align*}
and the $q$-factorials and $q$-binomial coefficients as
\begin{align*}
\qfact n = \; & \prod_{m=1}^{n}\qnum m \qquad \qquad \text{and} \qquad \qquad
\qbin nk = \frac{\qfact n}{\qfact k \qfact{n-k}}. 
\end{align*}

\subsection{\label{subsec: quantum group}The quantum group}

We define the quantum group $\Uqsltwo$ as the associative unital algebra over the field $\bC$ of complex numbers, 
with generators $K,K^{-1},E,F$ and relations
\begin{align*}
 KK^{-1} = &\; 1 = K^{-1}K,\qquad KE = q^{2}EK,\qquad KF = q^{-2}FK,\\
 EF-FE = &\; \frac{1}{q-q^{-1}}\left(K-K^{-1}\right).\nonumber 
\end{align*}
The algebra homomorphism
\begin{align*}
\Delta \;\colon\; & \Uqsltwo\rightarrow\Uqsltwo\tens\Uqsltwo,
\end{align*}
defined by its values on the generators,
\begin{align}\label{eq: coproduct}
\Hcp(E) = E\tens K + 1\tens E, \qquad \Hcp(K) = K\tens K,\qquad 
\Hcp(F) = F\tens1 + K^{-1}\tens F,
\end{align}
gives a coproduct on $\Uqsltwo$, and it determines the unique Hopf algebra 
structure on the quantum group. Furthermore, using the the coproduct $\Delta$, 
tensor products of representations of $\Uqsltwo$ can be equipped with 
a representation structure as follows. If $M'$ and $M''$ are two
representations, and we have
\begin{align*}
\Hcp(X) = \; & \sum_{i}X_{i}'\tens X_{i}''\,\in\,\Uqsltwo\tens\Uqsltwo,
\end{align*}
we define the action of $X \in \Uqsltwo$ on the tensor product $M' \tens M''$ 
by linear extension of the formula
\begin{align*}
X.(v'\tens v'') = \; & \sum_{i}(X_{i}'.v')\tens(X_{i}''.v'')\,\in\, M'\tens M'' ,
\end{align*}
for any $v'\in M'$, $v''\in M''$.
We similarly define tensor product representations with 
$n$ tensor components using the $(n-1)$-fold coproduct
\begin{align*}
\Hcp^{(n)} \colon\Uqsltwo\rightarrow\Big(\Uqsltwo\Big)^{\tens n} , \qquad
\Hcp^{(n)} =\; & (\Hcp\tens\id^{\tens(n-2)})\circ(\Hcp\tens\id^{\tens(n-3)})\circ\cdots
\circ(\Hcp\tens\id)\circ\Hcp ,
\end{align*}
and by the coassociativity property 
$(\id\tens\Hcp)\circ\Hcp=(\Hcp\tens\id)\circ\Hcp$ of the coproduct,
there is no need to specify the order in which the tensor products are formed.
The multiple coproducts of the generators have the following formulas
(see e.g. \cite[Lemma~2.2]{Kytola-Peltola:Conformally_covariant_boundary_correlation_functions_with_quantum_group}):
\begin{align} \label{eq: multiple coproducts}
\begin{split}
\Hcp^{(n)}(K)=\; & K^{\tens n} , \\
\Hcp^{(n)}(E)=\; & \sum_{i=1}^{n}1^{\tens(i-1)}\tens E\tens K^{\tens(n-i)} , \\
\Hcp^{(n)}(F)=\; & \sum_{i=1}^{n}(K^{-1})^{\tens(i-1)}\tens F\tens1^{\tens(n-i)}.
\end{split}
\end{align}

\subsection{\label{subsec: representations}Representations of the quantum group}

The quantum group $\Uqsltwo$ has irreducible representations of any dimension $d \in \bZpos$.
Given $d$, we always denote $s = d-1$.
A $d$-dimensional representation $\Wd_d$ of highest weight $q^s$ 
is obtained by suitably $q$-deforming the irreducible representation
of dimension $d$ and highest weight $s$ of the semisimple Lie algebra $\sl_{2}(\bC)$:
$\Wd_d$ has a basis 
$\Wbas_{0}^{(d)},\Wbas_{1}^{(d)},\ldots,\Wbas_{s}^{(d)}$
with action
\begin{align*}
K.\Wbas_{l}^{(d)}=\; & q^{s-2l}\,\Wbas_{l}^{(d)} , \\
F.\Wbas_{l}^{(d)}=\; & \begin{cases}
\Wbas_{l+1}^{(d)} & \text{if }l\neq s \\
0 & \text{if }l=s ,
\end{cases} \\
E.\Wbas_{l}^{(d)}=\; & \begin{cases}
\qnum l\qnum{s-l+1}\,\Wbas_{l-1}^{(d)} & \text{if }l\neq0\\
0 & \text{if }l=0
\end{cases}
\end{align*}
of the generators $E,F,K$.
For simplicity, we usually drop the superscript notation from the basis vectors, writing $\Wbas_{l}^{(d)}=\Wbas_{l}$.
It is well-known that $\Wd_d$ are irreducible,
see, e.g.,~\cite[Lemma~2.3]{Kytola-Peltola:Conformally_covariant_boundary_correlation_functions_with_quantum_group}.

When $q$ is generic (not a root of unity), the representation theory of $\Uqsltwo$ is semisimple, 
and in particular, tensor products of representations decompose
into direct sums of irreducible subrepresentations. 
We will make use of the following quantum Clebsch-Gordan decomposition.

\begin{lem}[{see, e.g., \cite[Lemma~2.4]{Kytola-Peltola:Conformally_covariant_boundary_correlation_functions_with_quantum_group}}]
\label{lem: tensor product representations of quantum sl2}
Let $d_1, d_2 \in \bZpos$ and $m\in \{0,1,\ldots,\min(s_{1},s_{2}) \}$,
where we denote $d=d_{1}+d_{2}-1-2m$ and 
$s_1 = d_1-1$, $s_2 = d_2-1$. 
In the representation $\Wd_{d_{2}}\tens\Wd_{d_{1}}$, the vector
\begin{align} 
\Tbas_{0}^{(d;d_{1},d_{2})} = \; & \sum_{l_{1},l_{2} = 0}^{\min (s_1, s_2)}
\delta_{l_{1}+l_{2},m}\times(-1)^{l_{1}}\frac{\qfact{s_{1}-l_{1}}\,\qfact{s_{2}-l_{2}}}{\qfact{l_{1}}\qfact{s_{1}}\qfact{l_{2}}\qfact{s_{2}}}\,\frac{q^{l_{1}(s_{1}-l_{1}+1)}}{(q-q^{-1})^{m}}
\times(\Wbas_{l_{2}}\tens\Wbas_{l_{1}})
\label{eq: tensor product hwv}, 
\end{align}
is a highest weight vector of a subrepresentation isomorphic to $\Wd_{d}$,
that is, we have
\begin{align*}
E.\Tbas_{0}^{(d;d_{1},d_{2})}=0\qquad\text{and}\qquad K.\Tbas_{0}^{(d;d_{1},d_{2})} = q^{s} \, \Tbas_{0}^{(d;d_{1},d_{2})}.
\end{align*}
The space $\Wd_{d_{2}}\tens\Wd_{d_{1}}$ has the following decomposition
according to the $d$-dimensional subrepresentations:
\begin{align}\label{eq: decomposition of tensor product}
\Wd_{d_{2}}\tens\Wd_{d_{1}}\isom\; & \Wd_{d_{1}+d_{2}-1}\oplus\Wd_{d_{1}+d_{2}-3}\oplus\cdots\oplus\Wd_{|d_{1}-d_{2}|+3}\oplus\Wd_{|d_{1}-d_{2}|+1} .
\end{align}
\end{lem}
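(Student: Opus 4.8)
The plan is to verify directly that $\Tbas_{0}^{(d;d_{1},d_{2})}$ is a highest weight vector of weight $q^{d-1}$, and then to deduce the decomposition~\eqref{eq: decomposition of tensor product} from semisimplicity together with a count of weight-space dimensions. The eigenvalue under $K$ is immediate: since $\Hcp(K) = K \tens K$, each summand $\Wbas_{l_{2}} \tens \Wbas_{l_{1}}$ is a $K$-eigenvector with eigenvalue $q^{(s_{1}+s_{2}) - 2(l_{1}+l_{2})}$, and the factor $\delta_{l_{1}+l_{2},m}$ restricts the sum to $l_{1} + l_{2} = m$, so the common eigenvalue is $q^{s_{1}+s_{2}-2m} = q^{d_{1}+d_{2}-2-2m} = q^{d-1}$, as asserted.

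The core of the argument is the highest weight condition $E.\Tbas_{0}^{(d;d_{1},d_{2})} = 0$. Using $\Hcp(E) = E \tens K + 1 \tens E$ together with the actions of $E$ and $K$ on the two factors, I would expand $E.(\Wbas_{l_{2}} \tens \Wbas_{l_{1}})$ and collect, for each pair $(a,b)$ with $a + b = m-1$, the coefficient of the basis vector $\Wbas_{b} \tens \Wbas_{a}$. Exactly two summands of $\Tbas_{0}$ contribute to it: the one indexed by $(l_{1},l_{2}) = (a,b+1)$ via the $E \tens K$ piece, and the one indexed by $(l_{1},l_{2}) = (a+1,b)$ via the $1 \tens E$ piece. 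A direct manipulation of the $q$-factorials and $q$-powers shows that both contributions are equal in magnitude to $(-1)^{a}\,\frac{\qfact{s_{1}-a}\,\qfact{s_{2}-b}}{\qfact{a}\qfact{s_{1}}\qfact{b}\qfact{s_{2}}}\,\frac{q^{(a+1)(s_{1}-a)}}{(q-q^{-1})^{m}}$; crucially, because of the sign $(-1)^{l_{1}}$ built into the coefficients, they appear with opposite signs and cancel. I expect this bookkeeping to be the main obstacle: one must match the power $q^{s_{1}-2a}$ produced by the $K$-action against the exponent $l_{1}(s_{1}-l_{1}+1)$ carried by the coefficients, and absorb the factors $\qnum{b+1}\qnum{s_{2}-b}$ and $\qnum{a+1}\qnum{s_{1}-a}$ into the $q$-factorials using the elementary identities $\qnum{n}\qfact{n-1} = \qfact{n}$. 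Everything else is formal.

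Granting that $\Tbas_{0}^{(d;d_{1},d_{2})}$ is a highest weight vector of weight $q^{d-1}$, it generates a highest weight submodule, which for generic $q$ is the irreducible $\Wd_{d}$ by the classification recorded above. For the full decomposition~\eqref{eq: decomposition of tensor product} I would use semisimplicity of the generic representation theory, whereby the multiplicity of $\Wd_{d}$ in $\Wd_{d_{2}} \tens \Wd_{d_{1}}$ equals $\dmn \Hom(\Wd_{d}, \Wd_{d_{2}} \tens \Wd_{d_{1}})$, the dimension of the space of highest weight vectors of weight $q^{d-1}$. Writing $V_{w}$ for the weight-$w$ subspace, one has $\dmn V_{w} = k+1$ for $w = s_{1}+s_{2}-2k$ and $0 \le k \le \min(s_{1},s_{2})$, counting the pairs $(l_{1},l_{2})$ with $l_{1}+l_{2}=k$ and $0 \le l_{i} \le s_{i}$. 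Since $E \colon V_{w} \to V_{w+2}$ is surjective for generic $q$ --- it is an isomorphism on each irreducible summand away from that summand's top weight --- rank-nullity shows the number of linearly independent highest weight vectors of weight $q^{w}$ equals $\dmn V_{w} - \dmn V_{w+2} = 1$ for each $m = k$ in the range $0,1,\ldots,\min(s_{1},s_{2})$. Thus each $\Wd_{d_{1}+d_{2}-1-2m}$ occurs with multiplicity one, and a comparison of total dimensions confirms these summands exhaust $\Wd_{d_{2}} \tens \Wd_{d_{1}}$, giving~\eqref{eq: decomposition of tensor product}.
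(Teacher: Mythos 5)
Your verification is correct: the $K$-eigenvalue computation, the pairwise cancellation in $E.\Tbas_{0}^{(d;d_{1},d_{2})}$ (I checked the two contributions to the coefficient of $\Wbas_b\tens\Wbas_a$ and they do both reduce to $\pm(-1)^{a}\,\frac{\qfact{s_{1}-a}\,\qfact{s_{2}-b}}{\qfact{a}\qfact{s_{1}}\qfact{b}\qfact{s_{2}}}\,\frac{q^{(a+1)(s_{1}-a)}}{(q-q^{-1})^{m}}$ with opposite signs, including the boundary cases where $E.\Wbas_0=0$), and the multiplicity count $\dmn V_w-\dmn V_{w+2}=1$ all go through. Note that the paper itself offers no proof of this lemma — it is imported by citation from [KP14, Lemma 2.4] — so there is no internal argument to compare against; your direct coproduct computation combined with semisimplicity and weight-space counting is the standard derivation of the quantum Clebsch--Gordan rule and is a legitimate self-contained substitute for the citation. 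The only point worth making explicit is that the surjectivity of $E\colon V_w\to V_{w+2}$ rests on the semisimplicity of the generic representation theory (which the paper assumes in Section 2.2), not on the decomposition being proved, so there is no circularity.
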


For each $d$, the subrepresentation
$\Wd_{d} \subset \Wd_{d_{2}}\tens\Wd_{d_{1}}$
is generated by the highest weight vector $\Tbas_{0}^{(d;d_{1},d_{2})}$,
and we denote by $\Tbas_{l}^{(d;d_{1},d_{2})} = F^l . \Tbas_{0}^{(d;d_{1},d_{2})}$ the corresponding basis, with $l \in \{0,1,\ldots,s\}$.

\subsection{\label{subsec: tensor products}Tensor products of representations}

In this article, we consider tensor products
\begin{align}\label{eq: order of tensorands}
\bigotimes_{i=1}^{p}\Wd_{d_{i}} =\; & \Wd_{d_{p}}\tens\Wd_{d_{p-1}}\tens\cdots\tens\Wd_{d_{2}}\tens\Wd_{d_{1}}
\end{align}
of irreducible representations of the quantum group $\Uqsltwo$, and we use 
the convention of
\cite{Kytola-Peltola:Pure_partition_functions_of_multiple_SLEs,
Kytola-Peltola:Conformally_covariant_boundary_correlation_functions_with_quantum_group}
for the order of tensorands, as explicitly written on the right hand
side. We occasionally abbreviate the tensor product as above,
in which case the reverse order of tensorands is implicit.
By the coassociativity property of the coproduct, repeated application of the 
decomposition~\eqref{eq: decomposition of tensor product} gives the direct sum 
formula
\begin{align}\label{eq: decomposition of general tensor product}
\Wd_{d_p}\tens\cdots\tens\Wd_{d_1} \isom \bigoplus_d m_d \, \Wd_d,
\end{align}
where the subrepresentations isomorphic to $\Wd_d$ now have multiplicities 
$m_d = m_d(d_1,\ldots,d_p)$.

Throughout this article, it is convenient to denote by
\begin{align}\label{eq: s in terms of d}
s = d - 1, \qquad
s_i = d_i - 1 , \quad \text{ for all } i \in \{1,2,\ldots,p\} , \qquad 
\text{ and } \qquad \multii = (s_1,s_2,\ldots,s_p) \in \bZnn^p.
\end{align}
For $d = s+1$, 
the $m_d$ copies of $\Wd_d$ are generated by highest weight vectors 
of weight $q^{s}$. We denote by
\begin{align}\label{eq: HW space}
\HWsp_\multii^{(s)}
= \; & \Big\{v\in\bigotimes_{i=1}^{p}\Wd_{d_{i}} \; \big| \;
    E.v=0 , \; K.v = q^{s}\,v \Big\}
\end{align}
the $m_d$-dimensional subspace consisting of such vectors.
The dimensions $m_d$ satisfy a recursion equation, 
given in Lemma~\ref{lem: cardinalities are equal},
and they can be calculated by counting certain type of planar link patterns.

\subsection{\label{subsec: projections}Projections to subrepresentations}

Fix $j \in \{1,2,\ldots,p-1 \}$. We decompose the $j$:th and $(j+1)$:st
tensor components in~\eqref{eq: order of tensorands} according to
the quantum Clebsch-Gordan formula~\eqref{eq: decomposition of tensor product},
and denote 
the embedding of the $d$-dimensional component 
in the $j$:th and $(j+1)$:st tensor positions by
\begin{align} \label{eq: embedding of a subrepresentation} 
\begin{split}
& \iota_{j}^{(d)} = \iota_{j,j+1}^{(d;d_j,d_{j+1})} \;\colon\;  \Big(\bigotimes_{i=j+2}^p\Wd_{d_{i}}\Big)\tens\Wd_{d}\tens\Big(\bigotimes_{i=1}^{j-1}\Wd_{d_{i}}\Big)
    \;\to\; \bigotimes_{i=1}^p\Wd_{d_{i}} \\
& \iota_{j}^{(d)} 
\left(\Wbas_{l_{p}}\tens\cdots\tens\Wbas_{l_{j+2}}
\tens\Wbas_{l}\tens\Wbas_{l_{j-1}}\tens\cdots\tens\Wbas_{l_{1}}\right)
= \Wbas_{l_{p}}\tens\cdots\tens\Wbas_{l_{j+2}}
\tens\Tbas_{l}^{(d;d_{j},d_{j+1})}\tens\Wbas_{l_{j-1}}\tens\cdots
\tens\Wbas_{l_{1}} . 
\end{split}
\end{align}
Via the embedding~\eqref{eq: embedding of a subrepresentation},
we identify the shorter tensor product
as a subrepresentation of~\eqref{eq: order of tensorands},
\begin{align}\label{eq: subrepresentation identification}
\Big(\bigotimes_{i=j+2}^p\Wd_{d_{i}}\Big)\tens\Wd_{d}\tens\Big(\bigotimes_{i=1}^{j-1}\Wd_{d_{i}}\Big)\,\subset\,\; & \bigotimes_{i=1}^p\Wd_{d_{i}},
\end{align}
and denote by
\begin{align}\label{eq: subrepresentation projection}
\pi_{j}^{(d)} = \pi_{j,j+1}^{(d;d_j,d_{j+1})} \;\colon\; & \bigotimes_{i=1}^p\Wd_{d_{i}}\to\bigotimes_{i=1}^p\Wd_{d_{i}}
\end{align}
the projection to this subrepresentation
--- by definition, 
a vector $v\in\bigotimes_{i=1}^p\Wd_{d_{i}}$ lies in 
the subrepresentation~\eqref{eq: subrepresentation identification}
if and only if we have $\pi_{j}^{(d)}(v) = v$.
We further let
\begin{align*}
\hat{\pi}_{j}^{(d)} = \hat{\pi}_{j,j+1}^{(d;d_j,d_{j+1})} \;\colon\; & \bigotimes_{i=1}^p\Wd_{d_{i}}\to\Big(\bigotimes_{i=j+2}^p\Wd_{d_{i}}\Big)
\tens\Wd_{d}\tens\Big(\bigotimes_{i=1}^{j-1}\Wd_{d_{i}}\Big)
\end{align*}
denote the projection~\eqref{eq: subrepresentation projection} 
combined with the 
identification~\eqref{eq: subrepresentation identification}
--- the identity
$\pi_{j}^{(d)} = \iota_{j}^{(d)} \circ \hat{\pi}_{j}^{(d)}$ then holds.

More generally, for $m \in \{0,1,\ldots,\min(d_{j},d_{j+1})-1 \}$
and $d=d_{j}+d_{j+1}-1-2m$, we define the map
\begin{align} \label{eq: generalized projection}
\begin{split}
& \widetilde{\pi}_{j}^{(d)} = 
\widetilde{\pi}_{j,j+1}^{(d;d_j,d_{j+1})} \;\colon \;
 \bigotimes_{i=1}^p\Wd_{d_{i}}    \;\to\; 
     \Big(\bigotimes_{i=j+2}^p\Wd_{d_{i}}\Big)\tens
 \Wd_{d_{j+1}-m}\tens\Wd_{d_{j}-m}
 \tens\Big(\bigotimes_{i=1}^{j-1}\Wd_{d_{i}}\Big) \\
& \widetilde{\pi}_{j,j+1}^{(d;d_j,d_{j+1})} :=
\iota_{j,j+1}^{(d;d_j-m,d_{j+1}-m)} \circ\hat{\pi}_{j,j+1}^{(d;d_j,d_{j+1})} , 
\end{split}
\end{align}
whose image is a subrepresentation of 
type~\eqref{eq: subrepresentation identification}, 
having the subrepresentation of $\Wd_{d_{j+1}-m}\tens\Wd_{d_{j}-m}$
isomorphic to $\Wd_d$ in the $j$:th and $(j+1)$:st tensor positions.

The trivial 
representation $\Wd_{1}$ 
is the neutral element for tensor products of representations.
We always identify it with the complex numbers $\bC$, 
via $\Tbas_{0}^{(1;d_j,d_{j+1})}\mapsto1$, and omit it from the tensor products.
The image of the projection $\hat{\pi}_{j}^{(1)}$
thus lies in the shorter tensor product
$\Big(\bigotimes_{i=j+2}^p\Wd_{d_{i}}\Big)\tens
\Big(\bigotimes_{i=1}^{j-1}\Wd_{d_{i}}\Big)$, and for 
$m=\min(d_{j},d_{j+1})-1$, the embedding
$\iota_{j,j+1}^{(d;d_j-m,d_{j+1}-m)}$ reduces to the identity map.

\subsection{\label{subsec: connectivities}Planar link patterns}

Tensor product representations of type~\eqref{eq: decomposition of general tensor product}
have bases indexed by planar link patterns, where each highest weight 
vector corresponds to a link pattern, and the other basis elements are obtained by action of the generator $F$. 
For example, a relatively well-known fact 
is that the tensor power $\Wd_2^{\tens n}$ of two-dimensional irreducibles has such a basis;
see e.g.~\cite{Jimbo:q_analog_of_UqglN_Hecke_algebra_and_YBE,
Lusztig:Canonical_bases_in_tensor_products, 
Martin:On_Schur-Weyl_duality_An_Hecke_algebras_and_quantum_slN_on_CN_tensor_nplus1,
Frenkel-Khovanov:Canonical_bases_in_tensor_products_and_graphical_calculus_for_Uqsl2,
PSA:Idempotents_of_the_TL_module_C2_tensor_n_in_terms_of_elements_of_Uqsltwo,
Ridout-Saint-Aubin:Standard_modules_induction_and_structure_of_TL}.
In this case, for each $s \in \bZnn$ the space~\eqref{eq: HW space}
of highest weight vectors in $\Wd_2^{\tens n}$ admits a natural
diagrammatic action of the Temperley-Lieb algebra, 
known as the link-state representation.
For $s = 0$, $n$ is even, and the link states are indexed by planar pair partitions 
of $n/2$ points, 
see Figure~\ref{fig: example planar pair partition}. 
For $s > 0$, there are also additional lines called defects,
see Figure~\ref{fig: example planar pair partition with defects}.

In the present article, we consider general link patterns, which are useful
in calculations concerning general tensor product representations of 
type~\eqref{eq: decomposition of general tensor product}. The planar pair partitions then arise as 
a special case. A word of warning is in order here: the bases of the tensor 
product representations of type~\eqref{eq: decomposition of general tensor product} which we 
construct in Section~\ref{sec: basis vectors} \emph{do not} carry 
the ``usual'' link-state action of diagram algebras such as
the Temperley-Lieb algebra, even in the special case of $\Wd_2^{\tens n}$. 
In fact, the basis we construct in the present article is the dual basis of 
the ``canonical basis''~\cite{Lusztig:Canonical_bases_in_tensor_products, 
Frenkel-Khovanov:Canonical_bases_in_tensor_products_and_graphical_calculus_for_Uqsl2}\footnote{
General link patterns do not span representations of 
the Temperley-Lieb algebra, but they admit a natural action of a generalized diagram algebra, discussed 
in~\cite{Flores-Peltola:WJTL_algebra, Flores-Peltola:Colored_braid_representations_and_QSW}.
}.
However, we will not pursue this direction here --- our interests lie in 
constructing solutions to the Benoit~\& Saint-Aubin  PDE systems with given 
asymptotics properties, using the quantum group rather as a tool.

Denote the upper half-plane by $\bH=\set{z\in\bC\;|\;\im(z)>0}$.
Fix a multiindex $\multii = (s_1,\ldots,s_p) \in \bZpos^p$, 
and let $\ell \in \bZnn$ be an integer such that $2\ell \leq n$, 
where we denote 
\begin{align}\label{eq: definition of n and s}
n = |\multii| := \sum_{i=1}^p s_i \qquad \text{ and } \qquad s = n - 2\ell \,\in\, \bZnn.
\end{align}
We define planar ($n,\ell$)-link patterns of $p$ points
with index valences $\multii = (s_1,\ldots,s_p)$ as collections
\begin{align*} 
\linkpatt = \Big\{\linkInEquation{a_1}{b_1},\ldots,\linkInEquation{a_\ell}{b_\ell}\Big\}
\bigcup
\Big\{\defectInEquation{c_1},\ldots,\defectInEquation{c_{s}}\Big\}
\end{align*} 
of 
\begin{itemize}
\item $\ell$ links of type $\link{a\;}{\;b\,}$ in $\bH$, which connect
a pair $a < b$ of indices $a,b \in \{1,2,\ldots,p \}$, and
\item $s = n - 2\ell$ defects of type $\defect{c}$ in $\bH$, attached 
to an index $c \in \{1,2,\ldots,p \}$,
\end{itemize}
such that
\begin{itemize}
\item for any $i \in \{1,2, \ldots,p\}$, the index $i$ is an endpoint of 
exactly $s_i$ links and defects, 
\item all the defects lie in the unbounded component of the complement 
of the set of links in $\bH$, and
\item none of the links and defects intersect in $\bH$, 
but only at their common endpoints in $\bN \subset \bR$.
\end{itemize}
Figure~\ref{fig: example connectivity} shows an example of 
a planar link pattern. We denote by $\LP^{(s)}_\multii$ 
the set of planar ($n,\ell$)-link patterns of $p$ points with index valences
$\multii = (s_1,\ldots,s_p)$, having $s = n - 2\ell$ defects.
We usually omit the word ``planar'' when we speak of link patterns.

Because the planar pair partitions play a special role in this article,
we denote the set of them by
\begin{align*}
\PP_N := \LP^{(0)}_{(1,1,\ldots,1,1)} , \quad \text{ for } N\in\bZpos, \qquad
\text{ and }\qquad
\PP_0 := \LP^{(0)}_{()}=\set{\emptyset} , \quad \text{ for }N=0.
\end{align*}
We also set $\PP = \bigsqcup_{N\geq0}\PP_N$.

More generally, if $\multii = (1,1,\ldots,1,1) \in \bZ^{n}$ 
for $n = 2N + s$, we denote by
$\PP_N^{(s)} := \LP^{(s)}_{(1,1,\ldots,1,1)}$
the set of planar $(n,N)$-link patterns each of which consists of a planar pair partition 
of $2N$ points and $s$ defects --- see 
Figure~\ref{fig: example planar pair partition with defects} for an example. 
The set of planar pair partitions 
then corresponds to $\PP_N = \PP_N^{(0)}$.

\begin{figure}
\includegraphics[scale=.75]{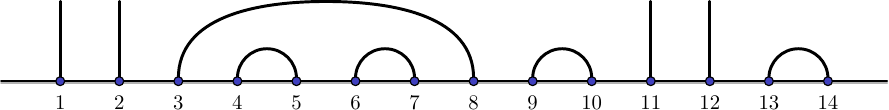}
\caption{\label{fig: example planar pair partition with defects}
Example of a planar pair partition with defects.}
\end{figure}

Next, we consider the tensor product representation~\eqref{eq: order of tensorands},
with dimensions $d_1,\ldots,d_p \geq 2$ related to the multiindex
$\multii$ as in \eqref{eq: s in terms of d}.
The dimension of the subspace $\HWsp_\multii^{(s)}$
of highest weight vectors of weight $q^{s}$ can be calculated 
by counting the planar link patterns in $\LP_\multii^{(s)}$.

\begin{lem}\label{lem: cardinalities are equal}
For each $s \in \bZnn$, we have 
$\dmn \HWsp_\multii^{(s)} = \# \LP_\multii^{(s)}$.
\end{lem}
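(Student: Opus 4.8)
The plan is to prove the identity by induction on the number $p$ of tensor factors, showing that both sides obey the \emph{same} recursion under merging the two innermost tensorands, with matching base case. On the representation-theoretic side no extra work is needed: as recorded just before the statement, semisimplicity in the generic case together with the fact that each copy of $\Wd_d$ (for $d=\nodef+1$) in the decomposition \eqref{eq: decomposition of general tensor product} is generated by a single highest weight vector of weight $q^{\nodefsmaller}$ gives $\dmn(\HWsp_\multii^{(\nodefsmaller)}) = m_d$. Grouping the two rightmost tensorands, applying the quantum Clebsch-Gordan decomposition \eqref{eq: decomposition of tensor product} of Lemma~\ref{lem: tensor product representations of quantum sl2} in the form $\Wd_{d_2}\tens\Wd_{d_1}\isom\bigoplus_{m=0}^{\min(s_1,s_2)}\Wd_{s_1+s_2-2m+1}$, and using coassociativity and distributivity of $\tens$ over $\oplus$, I obtain
\begin{align*}
\dmn\big(\HWsp_{(s_1,s_2,\ldots,s_p)}^{(\nodefsmaller)}\big)
= \sum_{m=0}^{\min(s_1,s_2)} \dmn\big(\HWsp_{(s_1+s_2-2m,\,s_3,\ldots,s_p)}^{(\nodefsmaller)}\big),
\end{align*}
where a merged entry equal to $0$ is read as the omitted trivial factor $\Wd_1\isom\bC$, so that index is dropped. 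The base case $p=1$ is immediate, since $\Wd_{s_1+1}$ is irreducible: the left side is $1$ if $\nodef=s_1$ and $0$ otherwise.

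Next I would establish the identical recursion for the cardinalities $\#\LP_\multii^{(\nodefsmaller)}$. I partition $\LP_{(s_1,\ldots,s_p)}^{(\nodefsmaller)}$ according to the multiplicity $m=\ell_{1,2}(\linkpatt)$ of links joining the two leftmost indices $1$ and $2$; as each such link consumes one endpoint at each index, the admissible values are exactly $0\le m\le\min(s_1,s_2)$, matching the Clebsch-Gordan range. For fixed $m$, the operation deleting the $m$ links between indices $1$ and $2$ and merging the two (adjacent) points into a single index of valence $s_1+s_2-2m$ should be a bijection onto $\LP_{(s_1+s_2-2m,\,s_3,\ldots,s_p)}^{(\nodefsmaller)}$, with inverse splitting the merged point, assigning its leftmost $s_1-m$ and rightmost $s_2-m$ endpoints to indices $1$ and $2$, and reinstating $m$ nested innermost arcs between them. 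Since removing $m$ links lowers $\ell$ by $m$ and $n=|\multii|$ by $2m$, the defect number $\nodef=n-2\ell$ is preserved. Summing over $m$ yields the same recursion as above, and with the matching base case (for $p=1$ there are no links, so the set is a single all-defect pattern precisely when $\nodef=s_1$) the induction closes.

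The main obstacle is verifying that this merging operation is genuinely a bijection of \emph{planar} link patterns, i.e.\ that it respects planarity and the requirement that defects lie in the unbounded component. The two geometric facts I would exploit are: first, since indices $1$ and $2$ are consecutive, any arc joining them encloses no index and hence no defect, so the $m$ arcs between them are innermost and can be removed without altering any nesting relation among the remaining links and defects; and second, every surviving endpoint at indices $1,2$ connects rightward to some index $\geq 3$ (as $1$ is leftmost), so after merging their natural left-to-right order extends the planar embedding without creating crossings or new enclosures. I would then check that valence, link number, and defect number transform correctly and that the splitting map is a two-sided inverse for each fixed $m$; the remaining verifications are routine bookkeeping.
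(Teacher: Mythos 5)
Your proposal is correct, and it follows the same overall strategy as the paper: show that both $\dmn(\HWsp_\multii^{(\nodefsmaller)})$ and $\# \LP_\multii^{(\nodefsmaller)}$ satisfy one and the same recursion derived from the quantum Clebsch--Gordan decomposition~\eqref{eq: decomposition of tensor product}, with the matching base case $p=1$. The difference is where the recursion is applied. The paper peels off the \emph{last} tensorand, writing $\bigotimes_i \Wd_{d_i} = \Wd_{d_p}\tens(\Wd_{d_{p-1}}\tens\cdots\tens\Wd_{d_1})$ and obtaining $\multiplicity_\multii^{(\nodefsmaller)} = \sum_{k\geq 0}\multiplicity_{\hat\multii}^{(\nodefsmaller-s_p+2k)}$ with $\hat\multii=(s_1,\ldots,s_{p-1})$; combinatorially this corresponds to cutting off the point $p$, whereupon its $k$ links become defects, so the defect number changes along the recursion (and some terms are read as zero when the superscript is out of range). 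You instead fuse the two adjacent tensorands $\Wd_{d_2}\tens\Wd_{d_1}$, obtaining $\multiplicity_{(s_1,\ldots,s_p)}^{(\nodefsmaller)}=\sum_{m=0}^{\min(s_1,s_2)}\multiplicity_{(s_1+s_2-2m,s_3,\ldots,s_p)}^{(\nodefsmaller)}$; combinatorially this removes the $m$ links $\link{1}{2}$ and merges the indices $1,2$, keeping $\nodef$ fixed throughout, which is arguably cleaner for a fixed-$\nodef$ statement. The price is that your combinatorial step needs the planarity argument you sketch (the $m$ arcs between the adjacent indices $1,2$ are forced to be nested and innermost, using the rightmost endpoints of index $1$ and the leftmost of index $2$, so the merge/split maps are mutually inverse), whereas the paper's ``cut off the last point'' bijection requires no such bookkeeping since links are simply reinterpreted as defects. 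Your handling of the degenerate case $s_1+s_2-2m=0$ (dropping the trivial factor $\Wd_1$, respectively the emptied index) is consistent with the paper's conventions.
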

\begin{proof}
Fix $s \in \bZnn$. 
The claim follows from the fact that both sides of the asserted equation,
\begin{align*}
\multiplicity_\multii^{(s)} := 
\dmn \HWsp_\multii^{(s)} \qquad\text{and}\qquad
N_\multii^{(s)} := \# \LP_\multii^{(s)} ,
\end{align*}
satisfy the same recursion with the same initial condition.
If $p=1$, then obviously
$\multiplicity_{(s_1)}^{(s)} = \delta_{s,s_1} =
N_{(s_1)}^{(s)}$.
For general $\multii = (s_1,\ldots,s_p) \in \bZpos^p$,
consider first the dimension $\multiplicity_\multii^{(s)}$
of $\HWsp_\multii^{(s)}$. 
Using the notations~\eqref{eq: s in terms of d}, the direct sum 
decomposition~\eqref{eq: decomposition of general tensor product}
of $p$ irreducibles, with $m_d = \multiplicity_\multii^{(s)}$,
can be written recursively as
\begin{align}\label{eq: recursion decomposition}
\bigoplus_d \multiplicity_\multii^{(s)} \; \Wd_d = 
\Wd_{d_p}\tens (\Wd_{d_{p-1}}\tens\cdots\tens\Wd_{d_1}) = 
\Wd_{d_p}\tens \Big( \bigoplus_{\hat{d}} \multiplicity_{\hat{\multii}}^{(\hat{s})} \; \Wd_{\hat{d}} \Big) =
\bigoplus_{\hat{d}} \multiplicity_{\hat{\multii}}^{(\hat{s})}
\left( \Wd_{d_p}\tens\Wd_{\hat{d}} \right),
\end{align}
where $\hat{s}  = \hat{d} - 1$ and $\hat{\multii} = (s_1,\ldots,s_{p-1})$,
by the coassociativity property of the coproduct of $\Uqsltwo$.
Using the explicit decomposition~\eqref{eq: decomposition of tensor product}
of the tensor product of two irreducibles, we obtain the recursion
\begin{align*}
\multiplicity_\multii^{(s)} = 
\sum_{k \geq 0}
\multiplicity_{\hat{\multii}}^{(s-s_p+2k)},
\end{align*}
where the numbers $\multiplicity_{\hat{\multii}}^{(s-s_p+2k)}$
are zero when $k$ is large enough (and for small $k$ in some cases).

Consider then the number $N_\multii^{(s)}$ of link patterns with
$s$ defects. We classify the link patterns 
$\linkpatt \in \LP_\multii^{(s)}$ according to the number $k$ of 
links having the endpoint~$p$ (so there are $s_p-k$ defects having 
the endpoint~$p$). Imagine cutting the point $p$ off from the link pattern
$\linkpatt$. Then, the remaining points $1,2,\ldots,p-1$ will have 
$\hat{s} = s-(s_p-k)+k = s-s_p+2k$ defects in total
--- see Figure~\ref{fig: recursion} --- namely, the $s-(s_p-k)$ defects 
inherited from $\linkpatt$ and in addition $k$ defects attached to 
the $k$ links which had the endpoint $p$. 
This gives the same recursion as above:
\begin{align*}
N_\multii^{(s)} = 
\sum_{k \geq 0} N_{\hat{\multii}}^{(s-s_p+2k)}.
\end{align*}
It follows that $\multiplicity_\multii^{(s)} = N_\multii^{(s)}$, as claimed.
\end{proof}

\begin{figure}
\includegraphics[scale=.75]{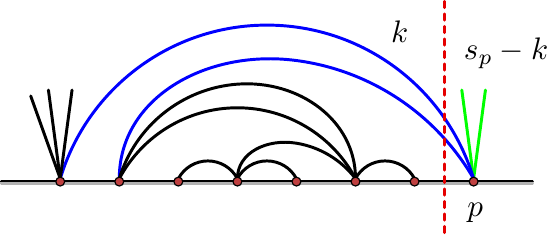}
\caption{\label{fig: recursion}
Illustration of the recursion used in the proof of 
Lemma~\ref{lem: cardinalities are equal}. When cutting the point $p$ off from 
the link pattern, the blue links become defects.}
\end{figure}

\subsection{\label{subsec: combinatorial maps}Combinatorial maps}

\begin{figure}
\includegraphics[scale=.75]{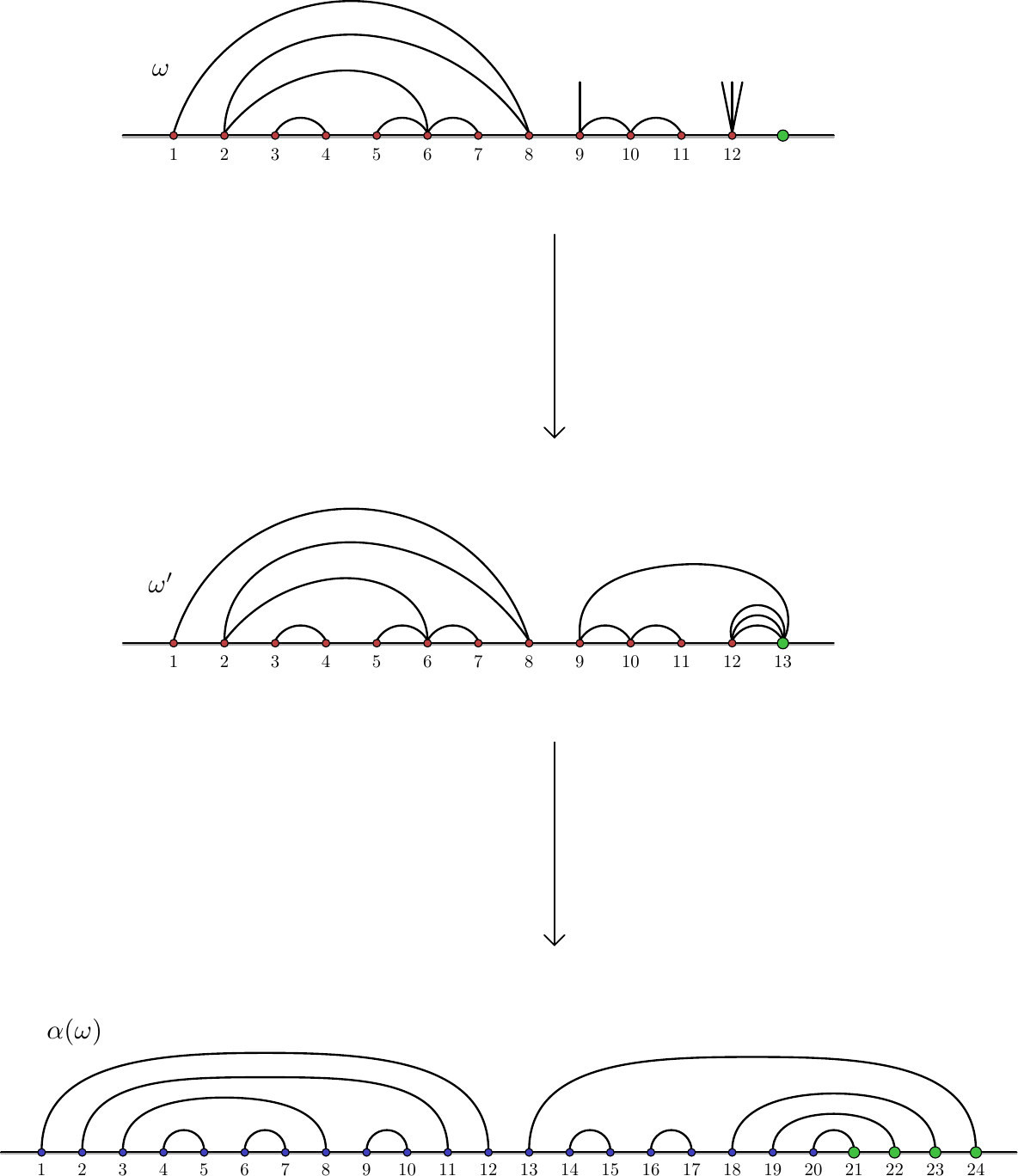}
\bigskip
\bigskip

\caption{\label{fig: open up map}
Illustration of the map $\varphi \colon \linkpatt\mapsto\alpha(\linkpatt)$, 
defined as the composition $\varphi = \Embeddingcomb \circ \Rpluscomb^{-1}$.
The middle figure illustrates the image of $\linkpatt$ under the first map,
that is, $\linkpatt' = \Rpluscomb^{-1}(\linkpatt)$, where the defects of 
$\linkpatt$ are attached to an additional index $p+1$ on the right of all the 
other indices. The lowest figure depicts the planar pair partition 
$\alpha(\linkpatt)$, which is obtained from $\linkpatt'$ by ``opening up'' 
all the points, that is, splitting each index $i$ to $s_i$ new indices and 
taking the lines of $i$ along with the points.}
\end{figure}

We next define a natural map, which associates to each planar link pattern
$\linkpatt \in \LP^{(s)}_{\multii}$
a planar pair partition $\alpha=\alpha(\linkpatt) \in \PP_N$, such that 
\begin{align}\label{eq: particular choice of N}
N=\frac{1}{2}\left(\sum_{i=1}^p s_i + s\right) = 
\frac{n+s}{2} = \ell+s.
\end{align}
This map, denoted by
\begin{align}\label{eq: link pattern to pair partition}
\varphi \; \colon \; \LP^{(s)}_{\multii} \rightarrow \PP_N,\qquad
\linkpatt \mapsto \alpha(\linkpatt),
\end{align}
is defined as a composition $\varphi = \Embeddingcomb \circ \Rpluscomb^{-1}$ 
of the two combinatorial maps
\begin{align*}
\Rpluscomb^{-1} = (\Rpluscomb^{(s)})^{-1} \; & \colon \;
\LP^{(s)}_{\multii} \rightarrow \LP^{(0)}_{(\multii,s)},
\qquad \text{ and } \qquad
\Embeddingcomb = \Embeddingcomb_{(\multii,s)} \; \colon \;
\LP^{(0)}_{(\multii,s)} \rightarrow \PP_N,
\end{align*}
which we define shortly --- see also Figure~\ref{fig: open up map} for an illustration.

We first define $\Rpluscomb^{-1}$ and its inverse map 
$\Rpluscomb \colon \LP^{(0)}_{(\multii,s)} \to \LP^{(s)}_{\multii}$. 
Consider a link pattern
\begin{align*}
\linkpatt=\Big\{\linkInEquation{a_1}{b_1},\ldots,\linkInEquation{a_\ell}{b_\ell}\Big\}
\bigcup
\Big\{\defectInEquation{c_1},\ldots,\defectInEquation{c_{s}}\Big\} \,
\in \, \LP^{(s)}_{\multii}.
\end{align*}
Introduce an additional index $p+1$, of valence $s_{p+1}=s$,
and connect the defects of $\linkpatt$ to it, to form 
\begin{align*}
\linkpatt' = \Big\{\linkInEquation{a_1}{b_1},\ldots,\linkInEquation{a_\ell}{b_\ell},\linkInEquation{c_{1}}{p+1},\linkInEquation{c_{2}}{p+1},\ldots,\linkInEquation{c_{s}}{p+1}\Big\} \,
\in \, \LP^{(0)}_{(\multii,s)},
\end{align*} 
a link pattern of $p+1$ points having index valences 
$(\multii,s) := (s_1,\ldots,s_p,s)$ and zero defects. Set
\[ \Rpluscomb^{-1}(\linkpatt) := \linkpatt' .\] 
This defines the map 
$\Rpluscomb^{-1} \colon \LP^{(s)}_{\multii} \rightarrow
\LP^{(0)}_{(\multii,s)}$. It has an obvious inverse map 
$\Rpluscomb = \Rpluscomb^{(s)} \; \colon \;
\LP^{(0)}_{(\multii,s)} \rightarrow \LP^{(s)}_{\multii}$
obtained by removing the last index $p+1$ of valence $s$ so that the links attached to it become defects. 
We similarly define 
$\Rminuscomb = \Rminuscomb^{(s)} \colon
\LP^{(0)}_{(s,\multii)} \rightarrow \LP^{(s)}_{\multii}$
and its inverse map
$\Rminuscomb^{-1} = (\Rminuscomb^{(s)})^{-1} \colon
\LP^{(s)}_{\multii} \rightarrow \LP^{(0)}_{(s,\multii)}$
by removing (resp. adding) the index $1$ and relabeling 
the other indices from left to right by $1,2,\ldots,p$ 
(resp. $2,3,\ldots,p+1$).

To define the map $\Embeddingcomb = \Embeddingcomb_{(\multii,s)}$, 
split each index $i \in \{1,2,\ldots,p+1 \}$ 
of $\linkpatt'$ to $s_i$ distinct indices, with $s_{p+1} = s$,
and attach the $s_i$ links ending at $i$ in $\linkpatt'$ to these new $s_i$ indices, 
so that each of them has valence one (see Figure~\ref{fig: open up map}).
This results in a diagram with $2N = \sum_{i=1}^p s_i+s$ 
indices, each of which has valence one. 
Label these indices from left to right by $1,2,\ldots,2N$, to obtain
the planar pair partition
\begin{align*}
\Embeddingcomb(\linkpatt') = \alpha(\linkpatt) \, \in \, 
\LP^{(0)}_{(1,1,\ldots,1,1)} = \PP_N.
\end{align*}
This finally defines the map 
$\Embeddingcomb \colon \LP^{(0)}_{(\multii,s)} \rightarrow \PP_N$ 
and thus the map $\varphi = \Embeddingcomb \circ \Rpluscomb^{-1}$.

\subsection{\label{subsec: removing links}Properties of the link patterns}

To finish, we introduce some notation concerning the recursive structure 
of the set of planar link patterns, to be used throughout this article.

\begin{figure}
\includegraphics[scale=.75]{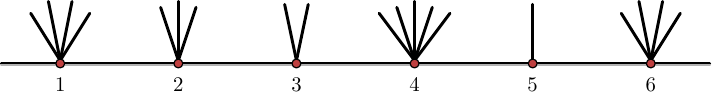}
\caption{\label{fig: defect link pattern}
Example of a link pattern {$\defpatt_\partition$} for a partition
$\partition=(4,3,2,5,1,4)$.}
\end{figure}

\subsubsection{\textbf{Defects and partitions}}
Integer partitions $\partition=(s_{1},\ldots,s_{|\partition|})$ 
of $s$ correspond naturally to endpoints of defects in 
planar link patterns. A partition $\partition$ of $s$
determines a unique $(s,0)$-link pattern denoted by
$\defpatt_\partition \in \LP^{(s)}_{\partition}$, 
which consists of $s$ defects with endpoints 
$i=1,2\ldots,|\partition|$, having valences $s_i=s_{i}$ specified by 
$\partition$, as in Figure~\ref{fig: defect link pattern}.
We also include the $(0,0)$-link pattern $\defpatt_{()}=\emptyset$ for $s=0$.

Conversely, let $\multii \in \bZpos^p$ and consider an $(n,\ell)$-link pattern with $s = n-2\ell$ defects, with notations~\eqref{eq: definition of n and s},
\begin{align*}
\linkpatt=\Big\{\linkInEquation{a_1}{b_1},\ldots,\linkInEquation{a_\ell}{b_\ell}\Big\} 
\bigcup 
\Big\{\defectInEquation{c_1},\ldots,\defectInEquation{c_{s}}\Big\}
\, \in \, \LP^{(s)}_{\multii}.
\end{align*}
When $s \geq 1$, the set 
$\big\{\defect{c_1},\ldots,\defect{c_{s}}\big\}$ of
defects in $\linkpatt$ defines naturally a partition of $s$ as follows: 
if $\set{\defendpt_1,\ldots,\defendpt_{t}} \subset \set{1,\ldots,p}$, for
$\defendpt_1<\ldots<\defendpt_{t}$,
denote the distinct endpoints of the defects in $\linkpatt$ with multiplicities 
given by the number $r_i(\linkpatt)=\#\set{k\;|\;c_k = u_i}\geq 1$ of 
defects ending at the index $\defendpt_i$, then we have
$s = \sum_{i=1}^t r_i(\linkpatt)$, and 
the numbers $r_i(\linkpatt)$ thus define a partition 
of $s$ into $t$ positive integers,
\[\partition(\linkpatt)=\big(r_1(\linkpatt),\ldots,r_t(\linkpatt)\big). \]

We denote the set of all planar link patterns 
with a fixed number $s\in\bZnn$ of defects by 
\begin{align*}
\LP^{(s)} = \bigsqcup_{p \in \bN, \; \multii \in \bZpos^p} \LP^{(s)}_{\multii},
\end{align*}
and, for a fixed partition 
$\partition=(s_{1},\ldots,s_{|\partition|})$ of $s$,
we denote by 
\[\LP^{(s)}(\partition)=\set{\linkpatt\in\LP^{(s)}\;|\;\partition(\linkpatt)=\partition} \qquad \text{ and } \qquad
\LP^{(0)}() = \LP^{(0)} \]
the set of all planar link patterns whose defects 
$\big\{\defect{c_1},\ldots,\defect{c_{s}}\big\}$
define the partition $\partition(\linkpatt)=\partition$.

\subsubsection{\textbf{Removing links}}
\label{subsub: link removals}
Also the links in the $(n,\ell)$-link pattern
\begin{align*}
\linkpatt=\Big\{\linkInEquation{a_1}{b_1},\ldots,\linkInEquation{a_\ell}{b_\ell}\Big\}
\bigcup
\Big\{\defectInEquation{c_1},\ldots,\defectInEquation{c_{s}}\Big\}
\, \in \, \LP^{(s)}_{\multii}
\end{align*} 
appear with multiplicity. For two indices $a<b$, we denote by 
$\ell_{a,b} = \ell_{a,b}(\linkpatt)$ the multiplicity of the link 
$\link{a\;}{b\;}$ in $\linkpatt$, that is, we have 
$\ell_{a,b}=\#\set{i\;|\;a_i=a,b_i=b}$ and $\ell = \sum_{k,l} \ell_{k,l}$.
In particular, the links of $\linkpatt$ can be regarded as a multiset 
of $k \leq \ell = \sum_{a,b} \ell_{a,b}$ elements,
\begin{align*}
\mathfrak{L}(\linkpatt) = 
\Big\{\ell_{a_1,b_1} \times \linkInEquation{a_1}{b_1},\; \ldots, \;\ell_{a_k,b_k} \times \linkInEquation{a_k}{b_k}\Big\} .
\end{align*}
Removing one link from an $(n,\ell)$-link pattern determines 
an $(n-2,\ell-1)$-link pattern. If the removed link had an endpoint with 
valence one, then the endpoint must be removed as well, and the remaining indices
must be relabeled so as to form the endpoints of the smaller link pattern,
as illustrated in Figure~\ref{fig: removing links}.
We denote the removal of a link $\link{a\;}{\;b\,}$ from a link pattern 
$\linkpatt$ by $\linkpatt\removeLink\link{a\;}{\;b\,}$.

More generally, if the link $\link{a\;}{\;b\,}$ appears in 
$\linkpatt \in \LP^{(s)}_{\multii}$ with 
multiplicity $\ell_{a,b}$, we can remove $m\leq \ell_{a,b}$ links
from $\linkpatt$. The removal of $m$ links $\link{a\;}{\;b\,}$ from $\linkpatt$ 
is then denoted by $\linkpatt\removeLink (m\times\link{a\;}{\;b\,})$.
In this case, if $s_j=m$ or $s_{j+1}=m$, we have to also remove the index $j$ 
or $j+1$, respectively (or both), and relabel the indices of the remaining 
links and defects, as also illustrated in Figure~\ref{fig: removing links}.

\begin{figure}
\includegraphics[scale=.75]{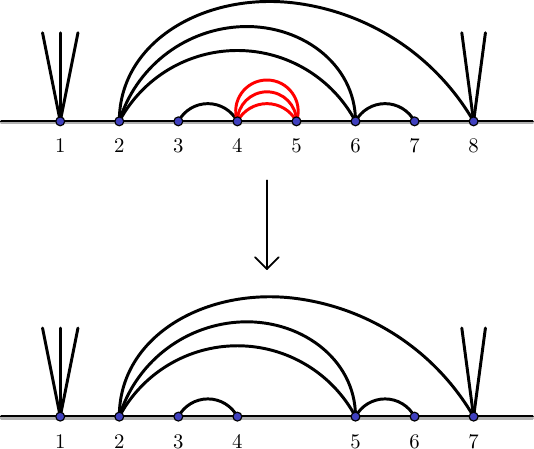} 
\qquad \qquad \qquad
\includegraphics[scale=.75]{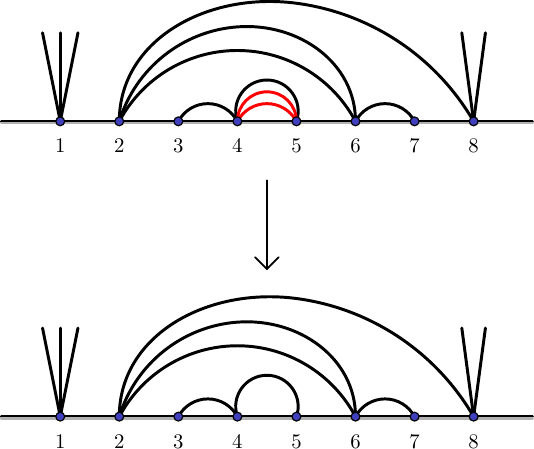}
\caption{\label{fig: removing links}
Removal of links between the indices $a = 4$ and $b = 5$. In the left figure,
all three links are removed, so the index $5$ has to be removed as well
(because it becomes empty, i.e., its valence becomes equal to zero), 
and the remaining indices are labeled accordingly.
On the right, only two links are removed, so the indices remain the same.}
\end{figure}

\bigskip{}
\section{\label{sec: basis vectors}Basis vectors in quantum group representations}

In this section, we construct a basis for each highest weight vector space
$\HWsp_\multii^{(s)}$ (defined in Equation~\eqref{eq: HW space})
whose vectors are uniquely characterized by certain recursive properties, 
concerning projections onto subrepresentations. 
These basis vectors are crucial in our construction of the basis for solutions to 
the Benoit~\& Saint-Aubin  PDEs in Section~\ref{sec: basis functions}.
The defining properties of the basis vectors correspond to the asymptotic boundary conditions 
for the basis functions, as explained in Section~\ref{sec: basis functions}.

In view of Lemma~\ref{lem: cardinalities are equal},
it is natural to index the basis vectors $\Puregeom_\linkpatt$ by link patterns $\linkpatt$.
Specifically, we consider the following system of equations for vectors
$\Puregeom_\linkpatt\in\bigotimes_{i=1}^{p}\Wd_{d_{i}}$, with $\linkpatt \in \LP^{(s)}_{\multii}$:
\begin{align}
&K.\Puregeom_\linkpatt = q^{s}\,\Puregeom_\linkpatt 
\label{eq: cartan eigenvalue}\\
&E.\Puregeom_\linkpatt=0 
\label{eq: highest weight vector}\\
&\widetilde{\pi}_{j}^{(\projdmn)}(\Puregeom_\linkpatt) =
\label{eq: projection conditions}
\begin{cases} 
\frac{1}{\constantfromdiagram{m}{s_j}{s_{j+1}}} \times \Puregeom_{\hat{\linkpatt}} & \mbox{if there are at least $m$ links } \linkInEquation{j}{j+1} \text{ in }\linkpatt\\ 0 & \mbox{otherwise,} \end{cases}\\
&\text{for all } j \in \{1, 2, \ldots,p-1 \} , \; m \in \{1,2,\ldots,\min(s_{j},s_{j+1}) \},
\text{ and }\projdmn=s_{j}+s_{j+1}+1-2m, \nonumber
\end{align}
where $\hat{\linkpatt} = \linkpatt\removeLink (m\times\link{j}{j+1})$, and
the constants in~\eqref{eq: projection conditions} are non-zero 
and explicit:
\begin{align}\label{eq: constant from diagram}
\constantfromdiagram{m}{s_j}{s_{j+1}} =
\frac{\qfact{s_j-m}\qfact{s_{j+1}-m}\qfact{s_j+s_{j+1}-m+1}}{\qnum{2}^m\qfact{s_j}\qfact{s_{j+1}}\qfact{s_j+s_{j+1}-2m+1}}
= \frac{\qbin{s_j+s_{j+1}-m+1}{m}}{\qnum{2}^m\qfact{m}\qbin{s_j}{m}\qbin{s_{j+1}}{m}} ,
\end{align}
and where we use, by default, the notations~\eqref{eq: s in terms of d}
for the parameters $s$, $s_j$, $s_{j+1}$, and $d = s + 1$.

Equations~\eqref{eq: cartan eigenvalue}~--~\eqref{eq: highest weight vector}
state that each $\Puregeom_\linkpatt$ belongs to the highest weight vector space
$\HWsp_\multii^{(s)}$. Equations~\eqref{eq: projection conditions}
concern projections of $\Puregeom_\linkpatt$ to subrepresentations,
corresponding to removing links from the link pattern~$\linkpatt$. 

\begin{thm}\label{thm: highest weight vector space basis vectors}
\
\begin{description}
\item[(a)]
For each integer $s \geq 0$, there exists a unique collection $\left(\Puregeom_{\linkpatt}\right)_{\linkpatt\in\LP^{(s)}}$ 
of vectors 
such that the system of 
equations~\eqref{eq: cartan eigenvalue}~--~\eqref{eq: projection conditions} holds for all $\linkpatt\in\LP^{(s)}$, we have
$\Puregeom_\emptyset = 1$, and
\begin{align}\label{eq: normalization of basis vectors}
\Puregeom_{\defpatt_\partition} =
\frac{1}{(q-q^{-1})^{s}}\frac{\qnum{2}^{s}}{\qfact{s+1}}
\times
(\Wbas_0^{(s_{|\partition|}+1)}\tens\cdots\tens\Wbas_0^{(s_1+1)})
\,\in\, \HWsp_\partition^{(s)} ,
\end{align} 
for any partition $\partition=(s_{1},\ldots,s_{|\partition|})$ of $s\geq 1$.
\item[(b)]
For fixed $\multii \in \bZpos^p$, the collection
$\left(\Puregeom_{\linkpatt}\right)_{\linkpatt\in\LP^{(s)}_{\multii}}$
is a basis of the vector space $\HWsp_\multii^{(s)}$.
In particular, 
\[\big\{ F^l.\Puregeom_{\linkpatt}\;|\;\linkpatt\in\LP^{(s)}_{\multii},
\; l \in \{0,1,\ldots,s \}\big\}\]
is a basis of the subrepresentation 
$m_d\,\Wd_d\subset\bigotimes_{i=1}^p\Wd_{d_{i}}$, with $d = s + 1$
and $m_d = \# \LP^{(s)}_{\multii}$.
\end{description}
\end{thm}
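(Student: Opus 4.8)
The plan is to prove existence and uniqueness together by induction on the number $\ell$ of links, uniformly over all $p$ and $\multii$, with the defect-only patterns serving as the base case. For $\ell=0$ a pattern is some $\defpatt_\partition$, and I would take $\Puregeom_{\defpatt_\partition}$ to be the vector~\eqref{eq: normalization of basis vectors}: being a pure tensor of the highest weight vectors $\Wbas_0^{(d_i)}$, the coproduct formulas~\eqref{eq: multiple coproducts} give $E.\Puregeom_{\defpatt_\partition}=0$ and $K.\Puregeom_{\defpatt_\partition}=q^{\nodefsmaller}\Puregeom_{\defpatt_\partition}$ with $\nodef=n$, so~\eqref{eq: cartan eigenvalue}~--~\eqref{eq: highest weight vector} hold; and since $\Wbas_0^{(d_{j+1})}\tens\Wbas_0^{(d_j)}$ is the highest weight vector of the top Clebsch--Gordan summand $\Wd_{d_j+d_{j+1}-1}$ of Lemma~\ref{lem: tensor product representations of quantum sl2}, it is killed by every $\widetilde{\pi}_j^{(\projdmn)}$ with $m\geq1$, giving~\eqref{eq: projection conditions}.

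The uniqueness mechanism rests on the following fact, which I would establish first: \emph{for $\ell\geq1$ the only $w\in\HWsp_\multii^{(\nodefsmaller)}$ with $\widetilde{\pi}_j^{(\projdmn)}(w)=0$ for all $j\in\{1,\ldots,p-1\}$ and all $m\geq1$ is $w=0$}. Vanishing of all $\widetilde{\pi}_j^{(\projdmn)}(w)$, $m\geq1$, forces the slots $j,j+1$ of $w$ into the top summand $\Wd_{d_j+d_{j+1}-1}$ of~\eqref{eq: decomposition of tensor product}; imposing this at every adjacent pair and inducting on $p$ shows $w$ lies in the fully fused top irreducible $\Wd_{n+1}\subset\bigotimes_i\Wd_{d_i}$, which has highest weight $q^{n}$. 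Since $\ell\geq1$ forces $\nodef=n-2\ell<n$, the intersection $\HWsp_\multii^{(\nodefsmaller)}\cap\Wd_{n+1}$ is zero, so $w=0$. Uniqueness of $\Puregeom_\linkpatt$ is then immediate: for two collections solving the system, the difference $w$ at a fixed $\linkpatt$ has every $\widetilde{\pi}_j^{(\projdmn)}(w)$ equal to a difference of strictly smaller, already-unique vectors, hence $w=0$.

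For existence I would use the decomposition~\eqref{eq: decomposition of tensor product} at one adjacent pair. As $\ell\geq1$ and all valences are positive, a shortest link sits between some adjacent $j_0,j_0+1$ with multiplicity $m_0=\ell_{j_0,j_0+1}\geq1$; decomposing $\HWsp_\multii^{(\nodefsmaller)}$ along the summands of slots $j_0,j_0+1$, I would define the component of $\Puregeom_\linkpatt$ in the summand of dimension $\projdmn=s_{j_0}+s_{j_0+1}+1-2m$ (for $1\leq m\leq m_0$) to be $\frac{1}{\constantfromdiagram{m}{s_{j_0}}{s_{j_0+1}}}$ times the embedded image $\iota_{j_0}^{(\projdmn)}$ of the inductively constructed $\Puregeom_{\linkpatt\removeLink(m\times\link{j_0}{j_0+1})}$, and set the components for $m>m_0$ and the top component $m=0$ to zero. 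This forces~\eqref{eq: cartan eigenvalue}~--~\eqref{eq: highest weight vector} and~\eqref{eq: projection conditions} \emph{at $j_0$}. The main obstacle is then checking~\eqref{eq: projection conditions} at the remaining pairs $j\neq j_0$. For $|j-j_0|\geq2$ the fusions act on disjoint slots, so $\widetilde{\pi}_j$ and $\widetilde{\pi}_{j_0}$ commute and the identity reduces, via the inductive hypothesis, to one level down. The delicate case $|j-j_0|=1$ has the two fusions overlapping in a common slot; here I expect to need the explicit vectors $\Tbas_l^{(d;d_j,d_{j+1})}$ of~\eqref{eq: tensor product hwv} and $q$-combinatorial identities among the constants $\constantfromdiagram{m}{s_j}{s_{j+1}}$ (the content of the auxiliary appendices) to verify that successive overlapping fusions commute up to exactly these constants, so that the two orders of link removal yield the same $\Puregeom_{\hat{\linkpatt}}$. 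This associativity of link removal, matched against coassociativity of the coproduct, is what makes the construction consistent and independent of the chosen pair $j_0$.

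Part~(b) then follows by counting. The vectors $(\Puregeom_\linkpatt)_{\linkpatt\in\LP^{(\nodefsmaller)}_\multii}$ are linearly independent: applying the top fusion at an innermost pair sends each $\Puregeom_\linkpatt$ to a nonzero multiple of a smaller basis vector while annihilating every pattern lacking that link, so an induction on $\ell$ peels off the coefficients of any vanishing combination. Since $\dim\HWsp_\multii^{(\nodefsmaller)}=\#\LP_\multii^{(\nodefsmaller)}$ by Lemma~\ref{lem: cardinalities are equal}, these independent vectors form a basis. Finally, each $\Puregeom_\linkpatt$ is by~\eqref{eq: cartan eigenvalue}~--~\eqref{eq: highest weight vector} a highest weight vector of weight $q^{\nodefsmaller}$ generating a copy of $\Wd_d$; semisimplicity together with the dimension identity $m_d\cdot d=\dim(m_d\Wd_d)$ gives that $\{F^l.\Puregeom_\linkpatt\}$ is a basis of $m_d\Wd_d$.
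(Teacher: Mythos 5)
Your uniqueness argument is essentially the paper's (Lemma~\ref{lem: all projections vanish gives zero} and Proposition~\ref{prop: uniqueness}): all projections vanishing forces the vector into the top component $\Wd_{n+1}$, which meets $\HWsp_\multii^{(\nodefsmaller)}$ trivially once $\ell\geq 1$. Your existence route, however, is genuinely different from the paper's, and it is there that you have a real gap. The paper does \emph{not} build $\Puregeom_\linkpatt$ by induction on the number of links; it transports the already-established special case $\multii=(1,1,\ldots,1)$ (the vectors $\Puregeomtwodim_{\alpha}$ of Theorem~\ref{thm: existence of multiple SLE vectors}) through the ``opening-up'' bijection $\linkpatt\mapsto\alpha(\linkpatt)$, the block projection $\Projectionhat^{(\multii,\nodefsmaller)}$, and the isomorphism $R_+^{(\nodefsmaller)}$, and then verifies \eqref{eq: projection conditions} via a single commutative diagram (Lemma~\ref{lem: commutative diagram}) relating one fused projection $\hat{\pi}^{(\projdmn)}$ to $m$ elementary projections $\hat{\pi}^{(1)}$ inside $\Wd_2^{\tens n}$. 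All mutual consistency of the projection conditions at different pairs $(j,j+1)$ is thereby inherited from the $\Wd_2^{\tens n}$ case, where the projections at distinct positions never overlap in a tensor slot.

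The gap in your version is exactly the step you flag and then defer: verifying \eqref{eq: projection conditions} at the pair $(j_0+1,j_0+2)$ for a vector you have \emph{defined} by its Clebsch--Gordan components at the overlapping pair $(j_0,j_0+1)$. This requires an identity of the form $\widetilde{\pi}_{j_0+1}^{(\projdmn')}\circ\iota_{j_0}^{(\projdmn)} = (\text{const})\times \iota\circ\widetilde{\pi}^{(\projdmn'')}$ together with the precise matching of the constants $\constantfromdiagram{m}{\cdot}{\cdot}$ --- in effect a recoupling ($6j$-symbol) computation --- and nothing in your outline supplies it; ``I expect to need the explicit vectors and $q$-combinatorial identities'' is the entire technical content of the theorem, not a detail. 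Without it you have neither well-definedness of the construction (independence of the chosen $j_0$ cannot be rescued by uniqueness, since uniqueness only applies to vectors already known to satisfy \emph{all} of \eqref{eq: projection conditions}) nor the asserted asymptotic compatibility. Your part (b) is fine in outline --- triangularity of iterated projections plus the count of Lemma~\ref{lem: cardinalities are equal} --- but as written the one-step ``peeling'' should be iterated along an allowable ordering of link removals (as in Appendix~\ref{app: dual elements}) to see that $\tau\neq\linkpatt$ is eventually annihilated; a single projection does not separate all patterns containing the chosen link.
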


A special case of the above problem was considered
in~\cite[Theorem~3.1]{Kytola-Peltola:Pure_partition_functions_of_multiple_SLEs}
where a particular basis for the trivial subrepresentation 
$\HWsp_{(1,1,\ldots,1,1)}^{(0)} \subset \Wd_2^{\tens 2N}$
was constructed. We state the result below in Theorem~\ref{thm: existence of multiple SLE vectors}.
In this case, all valences in 
$\multii = (1,1,\ldots,1,1)$ are equal to one: $s_i = 1$, for all $i$.
The solution to this special case is crucial in the proof of the general case in Section~\ref{subsec: construction}.

\begin{rem}\label{rem: one dimensionality of partition space}
\emph{ 
Let $\partition = (s_{1},\ldots,s_{|\partition|})$
be a partition of $s\geq 1$. 
Then, the space $\HWsp_\partition^{(s)}$ is one-dimensional: 
by Lemma~\ref{lem: cardinalities are equal}, we have 
$\dmn \HWsp_\partition^{(s)} = \# \LP^{(s)}_{\partition} 
= \# \set{\defpatt_\partition} = 1$.
In the the tensor product~\eqref{eq: order of tensorands}, the vector 
$\Puregeom_{\defpatt_\partition} \in \HWsp_\partition^{(s)}$
generates the highest dimensional subrepresentation isomorphic to $\Wd_{s+1}$ 
with multiplicity one. It is sometimes convenient to identify the space 
$\HWsp_\partition^{(s)}$ with $\bC$, via the map
$\Puregeom_{\defpatt_\partition} \mapsto 1 \in \bC$.
}
\end{rem}

The somewhat lengthy proof of Theorem~\ref{thm: highest weight vector space basis vectors} 
is distributed in the next subsections. The results obtained in 
Sections~\ref{subsec: powers of two-dimensionals}~--~\ref{subsec: uniqueness}
are put together in Section~\ref{subsec: proof}, which constitutes a summary of the proof.

We begin with introducing needed results concerning tensor products of 
two-dimensional irreducible representations of $\Uqsltwo$.
Throughout, we use the notations from~\eqref{eq: s in terms of d}~and~\eqref{eq: definition of n and s}.

\subsection{\label{subsec: powers of two-dimensionals}Tensor powers of two-dimensional irreducibles}

The tensor power $\Wd_2^{\tens s}$ of two-dimensional irreducible representations of $\Uqsltwo$
contains a unique subrepresentation of highest dimension, generated by 
the highest weight vector (a special case of the vectors 
in Remark~\ref{rem: one dimensionality of partition space})
\[ \MTbas_{0}^{(s)} := \Wbastwodim_{0} \tens \cdots \tens \Wbastwodim_{0} 
\, \in \, \Wd_2^{\tens s} .\]
This subrepresentation is isomorphic to $\Wd_d$, with $s = d-1$. 
For its basis, we use the notation 
\begin{align*}
\MTbas_{l}^{(s)} := F^l.\MTbas_{0}^{(s)}, \quad \text{ for } 
l \in \{0,1, \ldots s \},
\end{align*}
with convention $\MTbas_{l}^{(s)}=0$ 
when $l < 0$ or $l > s$. Using this basis, we define the projections 
\begin{align}\label{eq: projections}
\Projection = \Projection^{(s)} \colon & \; \Wd_2^{\tens s} \to \Wd_2^{\tens s}
\qquad \text{ and } \qquad
\Projectionhat^{(s)} \colon \Wd_2^{\tens s} \to \Wd_d
\end{align}
as follows. The map $\Projection^{(s)}$ is the projection onto the subrepresentation 
isomorphic to $\Wd_d$, so that we have 
\[ \Projection^{(s)}(\MTbas_{l}^{(s)}) = \MTbas_{l}^{(s)} ,
\quad\text{ for } l \in \{0, 1, \ldots, s \} \qquad \text{ and } \qquad
\Projection^{(s)}(v) = 0 , \quad \text{ for }
v \notin \mathrm{span} \set{\MTbas_{0}^{(s)},\ldots,\MTbas_{s}^{(s)}} \isom \Wd_d. \]
The map $\Projectionhat^{(s)}$ is defined as a composition of $\Projection^{(s)}$ with 
the identification $\MTbas_{l}^{(s)}\mapsto\Wbas_l^{(s)}$
of its image and $\Wd_d$, so that we have
$\Embedding^{(s)} \circ \Projectionhat^{(s)} = \Projection^{(s)}$, where 
$\Embedding^{(s)}$ is the embedding
\begin{align*}
\Embedding^{(s)} \colon \Wd_{d} \hookrightarrow \Wd_2^{\tens s} ,\qquad
\Embedding^{(s)}(\Wbas_l^{(d)}):=\MTbas_{l}^{(s)} ,
\quad\text{ for } l \in \{0, 1, \ldots, s \} .
\end{align*}

Vectors of $\Wd_{d}\subset\Wd_2^{\tens s}$ can be
characterized in terms of projections to subrepresentations 
in two consecutive tensorands. This property is used repeatedly in the proof of 
Theorem~\ref{thm: highest weight vector space basis vectors}.

\begin{lem}[{see, e.g., \cite[Lemma~2.4 \& Corollary~2.5]{Kytola-Peltola:Pure_partition_functions_of_multiple_SLEs}}]
\label{lem: all projections vanish}
For any $v\in\Wd_{2}^{\tens s}$, $s = d - 1 \in \bZpos$,
the following two conditions are equivalent.
\begin{align*}
\textnormal{\bf (a):} \;\; \text{$\hat{\pi}_{j}^{(1)}(v) = 0$, for all $j \in \{1, 2, \ldots, s - 1\}$,} 
\qquad \qquad \qquad
\textnormal{\bf (b):} \;\; \text{$v\in\Wd_{d} \subset \Wd_{2}^{\tens s}$}
\end{align*}
In particular, if we have $E.v=0$, $K.v=v$, and 
$\hat{\pi}_{j}^{(1)}(v) = 0$, for all 
$j \in \{1, 2, \ldots, s - 1\}$, then $v=0.$  
\end{lem}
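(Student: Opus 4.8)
The plan is to exploit that each $\pi_j^{(1)} = \iota_j^{(1)}\circ\hat\pi_j^{(1)}$ is a morphism of $\Uqsltwo$-representations, so that both $\Wd_d \subset \Wd_2^{\tens\nodefsmaller}$ (with $d=\nodefsmaller+1$) and $S := \bigcap_{j=1}^{\nodefsmaller-1}\Kern\hat\pi_j^{(1)}$ are subrepresentations, and then to identify them by semisimplicity. Applying the quantum Clebsch--Gordan decomposition (Lemma~\ref{lem: tensor product representations of quantum sl2}) in positions $j,j+1$ gives $\Wd_2^{\tens\nodefsmaller} = \mathrm{Im}\,\iota_j^{(3)}\oplus\mathrm{Im}\,\iota_j^{(1)}$, and since $\hat\pi_j^{(1)}$ is the projection onto $\mathrm{Im}\,\iota_j^{(1)}$ composed with the injective reinsertion $\iota_j^{(1)}$, we have $\Kern\hat\pi_j^{(1)} = \mathrm{Im}\,\iota_j^{(3)}$. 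For the direction (b)$\Rightarrow$(a): since $\mathrm{Im}\,\iota_j^{(1)}\isom\Wd_2^{\tens(\nodefsmaller-2)}$ has only irreducible constituents of dimension at most $\nodefsmaller-1 < d$, and $\Wd_d$ occurs in $\Wd_2^{\tens\nodefsmaller}$ with multiplicity one, the unique copy of $\Wd_d$ must lie entirely in $\mathrm{Im}\,\iota_j^{(3)} = \Kern\hat\pi_j^{(1)}$. Hence $\hat\pi_j^{(1)}|_{\Wd_d}=0$, which is (b)$\Rightarrow$(a) and also shows $\Wd_d\subseteq S$.

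For the reverse inclusion $S\subseteq\Wd_d$, by semisimplicity it suffices to show that $S$ contains no highest weight vector of weight $q^{t}$ with $t<\nodefsmaller$ (the top weight $q^\nodefsmaller$ occurs only once, inside $\Wd_d$). I would prove this by induction on $\nodefsmaller$, splitting off the last tensorand, $\Wd_2^{\tens\nodefsmaller} = W\tens\Wd_2$ with $W = \Wd_2^{\tens(\nodefsmaller-1)}$. Writing such a vector as $w = a\tens\Wbas_0 + b\tens\Wbas_1$ and imposing $E.w=0$ (via $\Delta(E)=E\tens K+1\tens E$) yields $E.b=0$ and $b = -q\,E.a$, while the conditions $\hat\pi_j^{(1)}(w)=0$ for $j\ge 2$ translate into the vanishing of the corresponding adjacent singlet projections of $a$ and of $b$ inside $W$. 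Thus $b$ is a highest weight vector of $W$ of weight $q^{t+1}$ killed by all these projections, so the induction hypothesis forces $b=0$ whenever $t+1<\nodefsmaller-1$; and if $t=\nodefsmaller-1$ the weight $q^{t+1}=q^\nodefsmaller$ exceeds the top weight of $W$, so again $b=0$. In all these cases $E.a=0$, making $a$ itself such a vector of lower weight in $W$, and the induction hypothesis gives $a=0$, hence $w=0$.

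The single remaining case $t=\nodefsmaller-2$ is where I expect the main obstacle to lie. Here $b$ carries the top weight $q^{\nodefsmaller-1}$ of $W$, so $b=c\,\MTbas_0^{(\nodefsmaller-1)}$, and by the induction hypothesis applied to $a$ (direction (a)$\Rightarrow$(b) for $W$) we get $a\in\Wd_\nodefsmaller\subset W$, whence $a=\lambda\,\MTbas_1^{(\nodefsmaller-1)}$. The relation $b=-q\,E.a$ then gives $c = -q\,\qnum{\nodefsmaller-1}\,\lambda$. The conditions used so far are consistent with a nonzero pair $(\lambda,c)$, so the crux is to bring in the one condition not yet used: the boundary projection $\hat\pi_1^{(1)}(w)=0$. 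A short computation with the singlet vector $\Tbas_0^{(1;2,2)}$ in positions $1,2$ produces a \emph{second} linear relation of the form $c = (q\text{-monomial})\,\lambda$. Comparing the two relations, $-q\,\qnum{\nodefsmaller-1}$ is a genuine Laurent polynomial in $q$ with several terms when $\nodefsmaller\ge 3$, whereas the other coefficient is a single monomial; since $q$ is not a root of unity these cannot coincide, forcing $\lambda=c=0$ and hence $w=0$ (the cases $\nodefsmaller\le 2$ being immediate by direct inspection). This completes (a)$\Rightarrow$(b).

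Putting the two directions together establishes the equivalence (a)$\Leftrightarrow$(b), with $S=\Wd_d$. The ``in particular'' statement is then the special case $t=0$ of the key claim: a vector $v$ with $E.v=0$, $K.v=v$, and $\hat\pi_j^{(1)}(v)=0$ for all $j$ is a highest weight vector of weight $q^0$ lying in $S=\Wd_d$, but $\Wd_d$ has no weight $q^0$ vector annihilated by $E$ other than $0$, so $v=0$. I anticipate that the genericity of $q$ entering the delicate case $t=\nodefsmaller-2$ is the only truly nonformal ingredient; everything else is bookkeeping with the coproduct and the Clebsch--Gordan splitting.
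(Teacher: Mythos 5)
Your proposal is correct, and it is necessarily a different route from the paper's, because the paper gives no proof of this lemma at all: it is imported verbatim by citation from \cite[Lemma~2.4 \& Corollary~2.5]{Kytola-Peltola:Pure_partition_functions_of_multiple_SLEs}. What you supply is a self-contained argument. The skeleton is sound: $\Kern\hat{\pi}_{j}^{(1)}=\mathrm{Im}\,\iota_j^{(3)}$ by the Clebsch--Gordan splitting in positions $j,j+1$; the multiplicity-one occurrence of $\Wd_d$ together with the absence of $\Wd_d$ as a constituent of $\mathrm{Im}\,\iota_j^{(1)}\isom\Wd_2^{\tens(\nodefsmaller-2)}$ gives (b)$\Rightarrow$(a) and $\Wd_d\subseteq S$ by Schur; and the reverse inclusion correctly reduces, via semisimplicity, to excluding highest weight vectors of weight $q^t$, $t<\nodefsmaller$, from $S$, which your induction on the rightmost tensorand handles. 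The bookkeeping with $\Hcp(E)$ and the translation of the conditions $\hat{\pi}_j^{(1)}(w)=0$, $j\geq2$, into separate conditions on $a$ and $b$ are both right.

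The one step whose stated justification does not stand as written is the crux case $t=\nodefsmaller-2$. The principle that a multi-term Laurent polynomial and a monomial ``cannot coincide since $q$ is not a root of unity'' is false in general (for instance $q+q^{-1}$ attains any prescribed value at suitable non-roots of unity), so you cannot conclude by term-counting; you must actually evaluate the two coefficients. Doing so confirms your claim: the relation $b=-q\,E.a$ gives $c=-q\qnum{\nodefsmaller-1}\,\lambda$, the condition $\hat{\pi}_1^{(1)}(w)=0$ gives $c=q^{1-\nodefsmaller}\lambda$ (using $\MTbas_1^{(\nodefsmaller-1)}=\MTbas_1^{(\nodefsmaller-2)}\tens\Wbas_0+q^{-(\nodefsmaller-2)}\MTbas_0^{(\nodefsmaller-2)}\tens\Wbas_1$ and the projection formulas of Lemma~\ref{lem: projection formulas}), and the difference of the two coefficients is exactly $\qnum{\nodefsmaller}$, which is nonzero precisely because $q^{2\nodefsmaller}\neq1$. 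So the conclusion $\lambda=c=0$ is correct, and genericity of $q$ enters exactly where you predicted, but through the specific value $\qnum{\nodefsmaller}$ rather than through any general structural dichotomy. With that computation made explicit, the proof is complete.
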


Consider now the tensor product~\eqref{eq: order of tensorands} with $\multii = (1,1,\ldots,1,1) \in \bZpos^n$. 
By the decomposition~\eqref{eq: decomposition of tensor product}, we can write this tensor product in the form
\begin{align*}
\Wd_{2}^{\tens n} \isom \bigoplus_d m_d^{(n)} \, \Wd_d ,
\end{align*}
where, by Lemma~\ref{lem: cardinalities are equal}, the multiplicities are
$m_d^{(n)} = \# \PP_{N}^{(s)}$, with $N = \frac{1}{2}(n-s)$
and $s = d-1$. These numbers can be calculated explicitly
(see e.g.~\cite[Lemma~2.2]{Kytola-Peltola:Pure_partition_functions_of_multiple_SLEs}): 
we have
\begin{align*}
m_d^{(n)} = \# \PP_{N}^{(s)} 
=\; & \begin{cases}
\frac{2d}{n+d+1}\binom{n}{\frac{n+d-1}{2}} 
= \frac{s + 1}{N+s+1}\binom{2N+s}{N+s} \quad & \text{if } n+s \in 2\,\bZnn \text{ and } 0\leq s \leq n\\
0 & \text{otherwise} .
\end{cases}
\end{align*}
In particular, when $n = 2N$ (i.e., $s = 0$), 
the dimension of the trivial subrepresentation
\[ \HWsp_{2N}^{(0)} := \Big\{ v \in \Wd_{2}^{\tens 2N} \; \big| \;
    E.v=0 , \; K.v = v \Big\} \subset \Wd_{2}^{\tens 2N} \]
is the Catalan number $m_1^{(2N)} = \Catalan_N = \frac{1}{N+1} \binom{2N}{N}$.
For convenience, we also denote by 
$\HWsp_{n}^{(s)} = \HWsp_{(1,1,\ldots,1,1)}^{(s)}$
the $m_d^{(n)}$-dimensional spaces of highest weight vectors in 
$\Wd_{2}^{\tens n}$.

\subsection{\label{subsec: pure geometries}The special case $\multii = (1,1,\ldots,1,1)$}

In the proof of Theorem~\ref{thm: highest weight vector space basis vectors}, 
we make use of results of the article~\cite{Kytola-Peltola:Pure_partition_functions_of_multiple_SLEs}
concerning a particular basis of the trivial subrepresentation 
$\HWsp_{2N}^{(0)} \subset \Wd_{2}^{\tens 2N}$.
Then, the basis vectors $\Puregeomtwodim_\alpha$
are indexed by planar pair partitions $\alpha\in\PP_N$ of $2N$ points. 
They are uniquely characterized by the projection 
properties~\eqref{eq: multiple sle projection conditions} given below --- 
a special case of~\eqref{eq: projection conditions}.

Now, we consider the following linear system of equations for vectors
$\Puregeomtwodim_\alpha\in\Wd_{2}^{\tens 2N}$, with
$\alpha\in\PP_{N}$:
\begin{align}
&K.\Puregeomtwodim_\alpha = \Puregeomtwodim_\alpha
\label{eq: multiple sle cartan eigenvalue}\\
&E.\Puregeomtwodim_\alpha = 0
\label{eq: multiple sle highest weight vector}\\
&\hat{\pi}^{(1)}_{j}(\Puregeomtwodim_\alpha) =
\label{eq: multiple sle projection conditions}
\begin{cases} 0 & \mbox{if } \linkInEquation{j}{j+1}\notin\alpha\\ \Puregeomtwodim_{\hat{\alpha}} 
& \mbox{if } \linkInEquation{j}{j+1}\in\alpha , \end{cases}
\qquad\text{for all } j \in \{1,2,\ldots,2N-1 \},
\end{align}
where $\hat{\alpha} = \alpha\removeLink\link{j}{j+1} \in \PP_{N-1}$.

\begin{thm}{\cite[Theorem~3.1 \& Proposition~3.7]{Kytola-Peltola:Pure_partition_functions_of_multiple_SLEs}}
\label{thm: existence of multiple SLE vectors}
There exists a unique collection 
$\left(\Puregeomtwodim_{\alpha}\right)_{\alpha\in\PP}$ of vectors 
such that the system of 
equations~\eqref{eq: multiple sle cartan eigenvalue}~--~\eqref{eq: multiple sle 
projection conditions} holds for all $\alpha\in\PP$, 
and we have $\Puregeomtwodim_\emptyset=1$. For any $N \in \bZnn$, 
the collection $\left(\Puregeomtwodim_{\alpha}\right)_{\alpha\in\PP_N}$ is a basis of 
$\HWsp_{2N}^{(0)}$.
\end{thm}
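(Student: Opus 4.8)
The plan is to argue by induction on $N$, handling uniqueness, existence, and the basis property in turn, and using throughout that $q$ is not a root of unity. The base case $N=0$ is forced: $\PP_0=\set{\emptyset}$ and $\Puregeomtwodim_\emptyset=1\in\bC=\Wd_2^{\tens 0}$ trivially solves~\eqref{eq: multiple sle cartan eigenvalue}~--~\eqref{eq: multiple sle projection conditions}. For the uniqueness step, suppose two collections $(\Puregeomtwodim_\alpha)$ and $(\Puregeomtwodim'_\alpha)$ both solve the system and agree on $\PP_{N-1}$ by the inductive hypothesis. For $\alpha\in\PP_N$ the difference $w=\Puregeomtwodim_\alpha-\Puregeomtwodim'_\alpha$ lies in $\HWsp_{2N}^{(0)}$ by~\eqref{eq: multiple sle cartan eigenvalue}~--~\eqref{eq: multiple sle highest weight vector}, and for every $j$ its projection $\hat{\pi}_j^{(1)}(w)$ equals either $0-0$ or $\Puregeomtwodim_{\hat\alpha}-\Puregeomtwodim'_{\hat\alpha}=0$ by induction (since $\hat\alpha\in\PP_{N-1}$), so all of its projections vanish. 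By the ``in particular'' part of Lemma~\ref{lem: all projections vanish} (with $\nodefsmaller=2N$), a weight-$0$ highest weight vector in $\Wd_2^{\tens 2N}$ all of whose projections vanish must be zero, whence $w=0$.

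Existence is the substantial point. The naive guess of inserting a singlet at a pair of adjacent linked indices, $\Puregeomtwodim_\alpha\overset{?}{=}\iota_j^{(1)}(\Puregeomtwodim_{\hat\alpha})$, fails: such a vector has non-vanishing projection $\hat{\pi}_{j'}^{(1)}$ across pairs $j',j'+1$ that straddle the inserted singlet, violating the required vanishing when $\link{j'}{j'+1}\notin\alpha$. Instead I would construct the $\Puregeomtwodim_\alpha$ as the dual basis of the link-state (canonical) basis. Concretely, let $(\Singletbastwodim_\alpha)_{\alpha\in\PP_N}$ be the standard link-state basis of $\HWsp_{2N}^{(0)}$, obtained by placing a singlet along every link of $\alpha$, and equip $\Wd_2^{\tens 2N}$ with the contravariant bilinear form for which the Temperley--Lieb generators are self-adjoint. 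The Gram matrix $G_{\alpha\beta}=\langle\Singletbastwodim_\alpha,\Singletbastwodim_\beta\rangle=\fugacity^{\,c(\alpha,\beta)}$ is the meander matrix, where $c(\alpha,\beta)$ counts the loops of the diagram $\bar\alpha\cup\beta$ and $\fugacity=-\qnum{2}$ is the loop weight; its determinant is a known product of powers of $q$-integers, non-zero precisely because $q$ is not a root of unity. Setting $\Puregeomtwodim_\alpha=\sum_\beta (G^{-1})_{\alpha\beta}\,\Singletbastwodim_\beta$ then produces vectors of $\HWsp_{2N}^{(0)}$, and it remains to verify the recursion~\eqref{eq: multiple sle projection conditions}.

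This verification is the main obstacle. It rests on the adjunction relating the cap $\hat{\pi}_j^{(1)}$ and the cup $\iota_j^{(1)}$ under the form, together with the combinatorics of how $\hat{\pi}_j^{(1)}$ acts on the link states $\Singletbastwodim_\beta$ (removing a short link and possibly closing a loop, contributing a factor $\fugacity$). Tracking these loop factors is what pins down the normalization and shows that the dual basis satisfies exactly~\eqref{eq: multiple sle projection conditions} with coefficient $1$, in agreement with the value $\constantfromdiagram{1}{1}{1}=1$ of the general constant~\eqref{eq: constant from diagram}. Once the recursion is in hand, the basis property follows cleanly: for each $\beta\in\PP_N$ choose a sequence of adjacent links of $\beta$ whose successive removal collapses $\beta$ to $\emptyset$, and let $\Pi_\beta$ be the corresponding composition of projections $\hat{\pi}^{(1)}$. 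Applying~\eqref{eq: multiple sle projection conditions} repeatedly gives $\Pi_\beta(\Puregeomtwodim_\alpha)=\delta_{\alpha,\beta}$ by induction on $N$, since the first projection returns $\Puregeomtwodim_{\alpha\removeLink\link{j}{j+1}}$ only when the removed short link of $\beta$ also lies in $\alpha$, and iterating forces $\alpha=\beta$. Hence $(\Puregeomtwodim_\alpha)_{\alpha\in\PP_N}$ is linearly independent, and since $\#\PP_N=\Catalan_N=\dmn\HWsp_{2N}^{(0)}$ it is a basis. Assembling the three inductive steps proves the theorem.
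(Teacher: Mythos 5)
First, note that the paper does not actually prove Theorem~\ref{thm: existence of multiple SLE vectors}: it imports it wholesale from \cite[Theorem~3.1 \& Proposition~3.7]{Kytola-Peltola:Pure_partition_functions_of_multiple_SLEs}, so there is no in-text proof to compare against. Judged on its own terms, your uniqueness argument is correct and is exactly the mechanism the paper uses for the general case (Lemma~\ref{lem: all projections vanish gives zero} and Proposition~\ref{prop: uniqueness}): the difference of two solutions is a weight-$1$ highest weight vector with all projections $\hat{\pi}_j^{(1)}$ vanishing, hence zero by the ``in particular'' part of Lemma~\ref{lem: all projections vanish}. Your basis argument via the iterated-projection functionals attached to an allowable removal ordering is likewise sound and coincides with the dual-element strategy of Proposition~\ref{prop: quantum dual elements} and Proposition~\ref{prop: FK dual elements}. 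Your existence route --- defining $\Puregeomtwodim_\alpha$ as the Gram-dual of the link-state basis with respect to the invariant bilinear form, using the meander determinant for nondegeneracy --- is a genuinely different and legitimate approach; it is consistent with the paper's own remark that the $\Puregeom_\linkpatt$ form the dual of the canonical basis, and it buys a conceptually transparent reason for the ``diagonal'' projection property, at the price of importing the meander determinant formula and the Temperley--Lieb adjunction machinery.

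The one genuine gap is that the crux of the existence step is asserted rather than proved. Everything hinges on the claim that the Gram-dual vectors satisfy~\eqref{eq: multiple sle projection conditions} \emph{with coefficient exactly $1$}, and you explicitly defer this (``tracking these loop factors is what pins down the normalization''). To close it you must (i) establish the precise adjunction $\langle\hat{\pi}_j^{(1)}v,w\rangle=\langle v,\iota_j^{(1)}w\rangle$ for the paper's normalizations --- this does work out, since by~\eqref{eq: tensor product hwv} one has $\Tbas_0^{(1;2,2)}=\frac{1}{q-q^{-1}}(\Wbas_1\tens\Wbas_0-q\,\Wbas_0\tens\Wbas_1)$ and Lemma~\ref{lem: projection formulas} gives $\hat{\pi}^{(1)}(\Tbas_0^{(1;2,2)})=1$, so cap after cup is the identity with no loop factor --- and (ii) verify nondegeneracy of the form restricted to each $\HWsp_{2N}^{(0)}$ (equivalently, nonvanishing of the meander Gram determinant for $q$ not a root of unity), since your deduction ``all pairings vanish, hence $\hat{\pi}_j^{(1)}\Puregeomtwodim_\alpha=0$'' silently uses it one level down, on $\HWsp_{2N-2}^{(0)}$. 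Both facts are true and citable, but together they constitute essentially the entire content of the existence proof, so as written the argument is a correct plan rather than a complete proof.
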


The vectors $\Puregeomtwodim_\alpha$ are related to the 
pure partition functions $\PartF_\alpha(x_1,\ldots,x_{2N})$ of multiple $\SLE_\kappa$, 
with parameter $\kappa$ associated to the deformation parameter $q$ by 
$q = e^{\ii \pi 4 / \kappa}$; see Section~\ref{sec: multiple SLEs}, 
and~\cite{Kytola-Peltola:Pure_partition_functions_of_multiple_SLEs} for details.
Our general Theorem~\ref{thm: highest weight vector space basis vectors} 
concerns basis vectors $\Puregeom_{\linkpatt}$ of the space 
$\HWsp_\multii^{(s)}$, with $\multii \in \bZpos^p$. 
To these vectors, we can also associate 
functions $\BasisF_\linkpatt(x_1,\ldots,x_{p})$, as 
stated in Theorem~\ref{thm: asymptotic properties of general basis vectors}. 
These functions are solutions to
the Benoit~\& Saint-Aubin  PDEs, and they can be interpreted as 
pure partition functions for systems of random curves, where many curves may emerge from the same point, 
see~\cite{Dubedat:SLE_and_Virasoro_representations_fusion}.

\begin{figure}
\includegraphics[scale=.75]{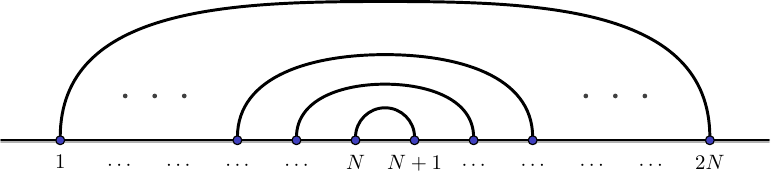}
\caption{\label{fig: rainbow pair partition}
The rainbow pattern (planar pair partition) of $2N$ points.}
\end{figure}

For the special case concerning the rainbow pattern 
defined by $\nested_{0} = \emptyset$ and
\begin{align*}
\nested_N = 
\left\lbrace\linkInEquation{1}{2N},\ldots,\linkInEquation{N-1}{N+2},
\linkInEquation{N}{N+1}\right\rbrace \,\in\, \PP_N ,
\quad\text{ for } N\in\bZpos,
\end{align*}
(see also Figure~\ref{fig: rainbow pair partition}), 
the equations~\eqref{eq: multiple sle cartan eigenvalue}~--~\eqref{eq: multiple sle projection conditions} 
involve only the rainbow patterns $\nested_N$ and $\nested_{N-1}$:
\begin{align}
(K-1).\Puregeomtwodim_{\nested_{N}} = \; & 0\label{eq: cartan eigenvalue for nested}\\
E.\Puregeomtwodim_{\nested_{N}} = \; & 0\label{eq: highest weight vector for nested}\\
\hat{\pi}^{(1)}_{N}(\Puregeomtwodim_{\nested_{N}}) = \; & \Puregeomtwodim_{\nested_{N-1}}\qquad\text{and}\qquad
\hat{\pi}^{(1)}_{j}(\Puregeomtwodim_{\nested_{N}}) =  0 , \quad\text{for }j\neq N .
\label{eq: projections for nested}
\end{align}
Therefore, the formula for $\Puregeomtwodim_{\nested_{N}}$ is 
particularly simple.
\begin{prop}{\cite[Proposition~3.3]{Kytola-Peltola:Pure_partition_functions_of_multiple_SLEs}}
\label{prop: solution for nested}
The vectors
\begin{align}\label{eq: solution for nested}
\Puregeomtwodim_{\nested_{N}} := \; & 
\frac{1}{(q^{-2}-1)^{N}}\frac{\qnum{2}^{N}}{\qfact{N+1}}\,
\sum_{l=0}^{N}(-1)^lq^{l(N-l-1)}\times
\left(\MTbas_l^{(N)}\tens\MTbas_{N-l}^{(N)}\right) \,\in\, \Wd_{2}^{\tens 2N} ,
\end{align}
for $N\in\bZnn$, determine the unique solution to
\eqref{eq: cartan eigenvalue for nested}~--~\eqref{eq: projections for nested} 
with $\Puregeomtwodim_\emptyset=1$.
\end{prop}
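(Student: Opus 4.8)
The plan is to verify directly that the explicit vectors $\Puregeomtwodim_{\nested_N}$ solve the closed recursive system \eqref{eq: cartan eigenvalue for nested}--\eqref{eq: projections for nested}, and then to argue that this system has at most one solution. The organizing observation is structural: under the embedding $\Embedding^{(N)}\tens\Embedding^{(N)}$, the vector $\Puregeomtwodim_{\nested_N}$ lies in $\Wd_{N+1}\tens\Wd_{N+1}\subset\Wd_2^{\tens N}\tens\Wd_2^{\tens N}=\Wd_2^{\tens 2N}$, since each factor $\MTbas_l^{(N)}$ belongs to the top subrepresentation $\Wd_{N+1}\subset\Wd_2^{\tens N}$. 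I would exploit this embedding throughout.

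First I would dispatch the Cartan and highest-weight conditions. From $K.\MTbas_l^{(N)}=q^{N-2l}\MTbas_l^{(N)}$, every summand $\MTbas_l^{(N)}\tens\MTbas_{N-l}^{(N)}$ has $K$-eigenvalue $q^{(N-2l)+(N-2(N-l))}=q^0=1$, giving \eqref{eq: cartan eigenvalue for nested}. For \eqref{eq: highest weight vector for nested} I would apply $\Delta(E)=E\tens K+1\tens E$ together with the irreducible action $E.\MTbas_l^{(N)}=\qnum{l}\qnum{N-l+1}\MTbas_{l-1}^{(N)}$ on each factor; reindexing the two resulting sums so that they share the basis vector $\MTbas_l^{(N)}\tens\MTbas_{N-l-1}^{(N)}$, the coefficients cancel in pairs because the phases match via the exponent identity $(l+1)(N-l-2)+2(l+1)-N=l(N-l-1)$. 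Hence $E.\Puregeomtwodim_{\nested_N}=0$. Since the singlet $\Wd_1\subset\Wd_{N+1}\tens\Wd_{N+1}$ is the unique summand of weight $q^0$ in the Clebsch--Gordan decomposition \eqref{eq: decomposition of tensor product}, these two facts already identify $\Puregeomtwodim_{\nested_N}$ as a scalar multiple of the singlet highest-weight vector $\Tbas_0^{(1;N+1,N+1)}$ of $\Wd_{N+1}\tens\Wd_{N+1}$.

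Next I would treat the projections \eqref{eq: projections for nested}. For $j\neq N$ both tensor indices $j,j+1$ lie inside a single half $\Wd_2^{\tens N}$; as the corresponding factor of $\Puregeomtwodim_{\nested_N}$ lies in $\Wd_{N+1}\subset\Wd_2^{\tens N}$, Lemma~\ref{lem: all projections vanish} (implication (b)$\Rightarrow$(a)) forces $\hat\pi_j^{(1)}(\Puregeomtwodim_{\nested_N})=0$. For the straddling index $j=N$, the map $\hat\pi_N^{(1)}$ acts as the singlet projection on the two middle tensorands, i.e.\ as $\id\tens\hat\pi^{(1)}\tens\id$ with the outer factors being the top copies $\Wd_N\subset\Wd_2^{\tens(N-1)}$; being a tensor product of module maps along the coproduct, it is a homomorphism $\Wd_{N+1}\tens\Wd_{N+1}\to\Wd_N\tens\Wd_N$. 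It therefore sends the singlet highest-weight vector to a multiple of the singlet highest-weight vector $\Tbas_0^{(1;N,N)}$, whose image under $\Embedding^{(N-1)}\tens\Embedding^{(N-1)}$ is proportional to $\Puregeomtwodim_{\nested_{N-1}}$. Consequently $\hat\pi_N^{(1)}(\Puregeomtwodim_{\nested_N})=\kappa_N\,\Puregeomtwodim_{\nested_{N-1}}$ for some scalar $\kappa_N$.

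The main obstacle is to show $\kappa_N=1$ for the normalization displayed in \eqref{eq: solution for nested}. Since both sides are proportional to the same vector, it suffices to compare the coefficient of one convenient basis vector, for instance $\MTbas_0^{(N-1)}\tens\MTbas_{N-1}^{(N-1)}$. To extract it I would peel off the boundary tensorand of each factor using the two Clebsch--Gordan expansions $\MTbas_l^{(N)}=\alpha_{l,0}\,\MTbas_l^{(N-1)}\tens\Wbas_0+\alpha_{l,1}\,\MTbas_{l-1}^{(N-1)}\tens\Wbas_1$ and $\MTbas_m^{(N)}=\beta_{m,0}\,\Wbas_0\tens\MTbas_m^{(N-1)}+\beta_{m,1}\,\Wbas_1\tens\MTbas_{m-1}^{(N-1)}$, coming from the $q$-binomial form of $\Delta(F^l)$, and then apply the explicit singlet projections $\hat\pi^{(1)}(\Wbas_1\tens\Wbas_0)=q^{-1}(q-q^{-1})/\qnum 2$ and $\hat\pi^{(1)}(\Wbas_0\tens\Wbas_1)=-(q-q^{-1})/\qnum 2$ read off from Lemma~\ref{lem: tensor product representations of quantum sl2}. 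Only the terms $l=0$ and $l=1$ contribute to this coefficient, and collecting them reduces $\kappa_N=1$ to a finite $q$-factorial identity pinning down the ratio $c_N/c_{N-1}=\qnum 2\big/\big((q^{-2}-1)\,\qnum{N+1}\big)$ of the normalizations $c_N=\frac{1}{(q^{-2}-1)^{N}}\frac{\qnum 2^{N}}{\qfact{N+1}}$; I expect the bookkeeping of these $q$-factorials to be the only delicate part. Finally, uniqueness follows by induction on $N$: the difference of two solutions is killed by $E$, has $K$-eigenvalue $1$, and has all $\hat\pi_j^{(1)}$ vanishing, so by Lemma~\ref{lem: all projections vanish} it is a highest-weight vector of weight $q^0$ inside the top irreducible $\Wd_{2N+1}\subset\Wd_2^{\tens 2N}$; but the highest weight of $\Wd_{2N+1}$ is $q^{2N}\neq 1$ since $q$ is not a root of unity, so the difference vanishes (alternatively, one may invoke the uniqueness already contained in Theorem~\ref{thm: existence of multiple SLE vectors}).
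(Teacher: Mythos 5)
The paper itself gives no proof of this proposition --- it is imported verbatim from \cite[Proposition~3.3]{Kytola-Peltola:Pure_partition_functions_of_multiple_SLEs} --- so I can only measure your argument against the machinery the paper supplies. Your verification of \eqref{eq: cartan eigenvalue for nested} and \eqref{eq: highest weight vector for nested} is correct (the exponent identity you quote is exactly the required cancellation), as is the vanishing of $\hat{\pi}^{(1)}_{j}$ for $j \neq N$ via Lemma~\ref{lem: all projections vanish}, and your uniqueness argument. One step is justified imprecisely: $\hat{\pi}^{(1)}_{N}$ is a module map on all of $\Wd_2^{\tens 2N}$, but it does \emph{not} factor as a tensor product of module maps on the two halves (the coproduct mixes them; indeed by Schur there is no nonzero module map $\Wd_{N+1} \to \Wd_2^{\tens (N-1)}$, so your phrasing would force the map to vanish). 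The correct way to land the image in the embedded $\Wd_N \tens \Wd_N$ is to note that $\hat{\pi}^{(1)}_{N}$, being of the form $\id \tens \hat{\pi}^{(1)} \tens \id$, commutes with every projection acting strictly inside either half; those projections annihilate $\Puregeomtwodim_{\nested_N}$, hence annihilate its image, and Lemma~\ref{lem: all projections vanish} applied to each half of $\Wd_2^{\tens(2N-2)}$ gives the claim. After that your multiplicity-one argument identifying the image with a multiple of $\Puregeomtwodim_{\nested_{N-1}}$ is fine.

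The deferred normalization computation does close exactly as you set it up. Only $l=0,1$ contribute to the coefficient of $\MTbas_0^{(N-1)} \tens \MTbas_{N-1}^{(N-1)}$; using (from Lemma~\ref{lem: action of F}) $\MTbas_N^{(N)} = \qnum{N}\, \Wbas_1 \tens \MTbas_{N-1}^{(N-1)}$, $\MTbas_1^{(N)} = \MTbas_1^{(N-1)} \tens \Wbas_0 + q^{1-N}\, \MTbas_0^{(N-1)} \tens \Wbas_1$, $\MTbas_{N-1}^{(N)} = q^{1-N}\, \Wbas_0 \tens \MTbas_{N-1}^{(N-1)} + \cdots$, and Lemma~\ref{lem: projection formulas}, the coefficient is
\begin{align*}
\qnum{N}\,\frac{q^{-1}-q}{\qnum{2}} \;-\; q^{N-2}\cdot q^{2-2N}\,\frac{1-q^{-2}}{\qnum{2}}
\;=\; \frac{q^{-N-2}-q^{N}}{\qnum{2}}
\;=\; \frac{(q^{-2}-1)\,\qnum{N+1}}{\qnum{2}},
\end{align*}
which is precisely the ratio $c_{N-1}/c_N$ of the prefactors in \eqref{eq: solution for nested}, so $\kappa_N = 1$. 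Note also that Lemma~\ref{lem: projection in the middle} of Appendix~\ref{app: auxiliary calculations}, specialized to $s_1 = s_2 = N$ and $k = N-l$, gives $\hat{\pi}^{(1)}_{N}(\MTbas_l^{(N)} \tens \MTbas_{N-l}^{(N)})$ in closed form; summing over $l$ then produces $\frac{(q^{-2}-1)\qnum{N+1}}{\qnum{2}} \sum_{m}(-1)^m q^{m(N-m-2)}\, \MTbas_m^{(N-1)} \tens \MTbas_{N-1-m}^{(N-1)}$ directly, proving the full projection identity (normalization included) without any appeal to multiplicity one. That route is presumably closest to the proof in the cited reference, but your reconstruction, with the repair above, is sound.
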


\subsection{\label{subsec: construction}Construction}

Now we construct the basis vectors $\Puregeom_\linkpatt$
of Theorem~\ref{thm: highest weight vector space basis vectors}.
In the construction, we use the vectors $\Puregeomtwodim_\alpha$
of Theorem~\ref{thm: existence of multiple SLE vectors},
with $N = \frac{1}{2}(n+s)$ chosen as in~\eqref{eq: particular choice of N},
and the map~\eqref{eq: link pattern to pair partition}, 
\begin{align*}
\LP^{(s)}_{\multii} \rightarrow \PP_{N},\qquad
\linkpatt \mapsto \alpha(\linkpatt),
\end{align*}
see also Figure~\ref{fig: open up map} in Section~\ref{subsec: combinatorial maps}.

For a link pattern $\linkpatt \in \LP^{(s)}_{\multii}$, the basis vector 
$\Puregeom_\linkpatt \in \HWsp_\multii^{(s)} \subset
\bigotimes_{i=1}^{p}\Wd_{d_{i}}$ is obtained from the vector
$\Puregeomtwodim_{\alpha(\linkpatt)} \in \HWsp_{2N}^{(0)} \subset \Wd_2^{\tens 2N}$ 
as follows: we let
\begin{align}\label{eq: basis vector construction}
\Puregeom_{\linkpatt} :=
 R_+^{(s)} \left(\Projectionhat^{(\multii,s)}(\Puregeomtwodim_{\alpha(\linkpatt)})\right)
\qquad\qquad \text{ with } \qquad\qquad
\xymatrixcolsep{3pc}
\xymatrix{
\HWsp_{2N}^{(0)} \; \; \ar@<1ex>[r]^{\Projectionhat^{(\multii,s)}}
   \ar@{<-^{)}}@<-2ex>[r]_{\Embedding^{(\multii,s)}}
   & \; \;  \HWsp_{(\multii,s)}^{(0)} 
   \ar[r]^{R_+^{(s)}} 
   & \; \;  \HWsp_\multii^{(s)} \\
  \Puregeomtwodim_{\alpha(\linkpatt)}  \; \; \ar@{|->}@<1ex>[r]^{\Projectionhat^{(\multii,s)}}
  \ar@{<-^{)}}@<-2ex>[r]_{\Embedding^{(\multii,s)}} &  \; \; \Puregeom_{\linkpatt}^{\infty} \; \;\ar@{|->}[r]^{R_+^{(s)}} 
  & \; \;\Puregeom_{\linkpatt},
}
\end{align}
where
\begin{itemize}
\item $\Projectionhat^{(\multii,s)} := \Projectionhat^{(s)}\tens\Projectionhat^{(s_p)}\tens\cdots\tens
\Projectionhat^{(s_1)}$ and 
$\Embedding^{(\multii,s)} := \Embedding^{(s)}\tens\Embedding^{(s_p)}\tens\cdots\tens
\Embedding^{(s_1)}$ (recall Section~\ref{subsec: powers of two-dimensionals}),
\item we denote by 
$\Puregeom_{\linkpatt}^{\infty} := \Projectionhat^{(\multii,s)}(\Puregeomtwodim_{\alpha(\linkpatt)})$, and
\item $R_+^{(s)} \colon \HWsp_{(\multii,s)}^{(0)} \to \HWsp_\multii^{(s)}$ 
is a linear isomorphism, 
which will be defined in more detail in Section~\ref{subsec: normalization}.
\end{itemize}

The idea is to think the tensor power 
$\Wd_2^{\tens 2N}$ of as a chain of blocks of smaller tensor powers of $\Wd_2$,
\begin{align*}
\Wd_2^{\tens 2N} = \Wd_2^{\tens s}\tens\Wd_2^{\tens s_p}\tens\Wd_2^{\tens s_{p-1}}\tens\cdots\tens\Wd_2^{\tens s_2}\tens\Wd_2^{\tens s_1},
\end{align*}
where each block $\Wd_2^{\tens r}$ is mapped onto the 
$\projdmn = r+1$-dimensional irreducible representation 
$\Wd_{\projdmn}$ under the map 
$\Projectionhat^{(\multii,s)} \colon \Wd_2^{\tens 2N} \to 
\Wd_{d}\tens\Wd_{d_p}\tens\cdots\tens\Wd_{d_1}$.
Conversely, the image of the embedding $\Embedding^{(\multii,s)}$ 
can be characterized by projection properties inside the blocks, as we show next.

\begin{prop} \label{prop: pure geometries lie in projected space}
The image of the space $\HWsp_{(\multii,s)}^{(0)}$
under the embedding $\Embedding^{(\multii,s)}$ is the space
\begin{align*}
\ImgofEmbedding^{(\multii,s)}_N :=
\set{ v \in \HWsp_{2N}^{(0)} \; \Big| \;
\hat{\pi}^{(1)}_{j}(v) = 0 , \text{ for all } j \in 
\big\{1,\ldots,2N-1\big\}\setminus
\bigg\{\sum_{i=1}^ks_i\;\big|\;1\leq k\leq p\bigg\}} .
\end{align*}
The projection $\Projectionhat^{(\multii,s)}$ defines 
an isomorphism of representations of $\Uqsltwo$,
\begin{align*}
\Projectionhat^{(\multii,s)} \;\colon\;
\ImgofEmbedding^{(\multii,s)}_N \to
\HWsp_{(\multii,s)}^{(0)},
\end{align*}
and its inverse is $\Embedding^{(\multii,s)}$.
For any $\linkpatt\in\LP^{(s)}_{\multii}$,
the vector $\Puregeomtwodim_{\alpha(\linkpatt)}$ lies in the space
$\ImgofEmbedding^{(\multii,s)}_N$ and, in particular, 
\begin{align}\label{eq: pure geometries lie in projected space}
\Embedding^{(\multii,s)} 
\left(\Projectionhat^{(\multii,s)}
(\Puregeomtwodim_{\alpha(\linkpatt)})\right) 
= \Embedding^{(\multii,s)} 
\left(\Puregeom_{\linkpatt}^{\infty}\right)
= \Puregeomtwodim_{\alpha(\linkpatt)}.
\end{align}
\end{prop}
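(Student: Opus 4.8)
The plan is to prove the three assertions in order, using Lemma~\ref{lem: all projections vanish} as the central tool. First I would characterize the image of $\Embedding^{(\multii,\nodefsmaller)}$. By definition, $\Embedding^{(\multii,\nodefsmaller)} = \Embedding^{(\nodefsmaller)}\tens\Embedding^{(s_p)}\tens\cdots\tens\Embedding^{(s_1)}$ maps each tensorand $\Wd_{\projdmn}$ into the corresponding block $\Wd_2^{\tens r}$ (with $\projdmn = r+1$) as the top-dimensional subrepresentation. By Lemma~\ref{lem: all projections vanish}, a vector of $\Wd_2^{\tens r}$ lies in this top subrepresentation $\Wd_{\projdmn}$ exactly when all the internal projections $\hat{\pi}^{(1)}_{j}$ vanish, for $j$ ranging over the internal positions of that block. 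The block boundaries sit precisely at the partial sums $\sum_{i=1}^k s_i$ (with the defect block $\Wd_2^{\tens \nodefsmaller}$ at the far left), so the internal positions are exactly $\{1,\ldots,2N-1\}\setminus\{\sum_{i=1}^k s_i \mid 1\le k\le p\}$. Thus the image of $\Embedding^{(\multii,\nodefsmaller)}$ consists of those highest weight vectors whose $\hat{\pi}^{(1)}_{j}$ vanish at all non-boundary $j$; this is exactly the set $\ImgofEmbedding^{(\multii,\nodefsmaller)}_N$. I would state this block-by-block and invoke the lemma once per block, taking care that $\Embedding^{(\multii,\nodefsmaller)}$ preserves the highest weight conditions $E.v=0$, $K.v=v$ since each factor $\Embedding^{(r)}$ is an intertwiner landing in $\Wd_{\projdmn}$.

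Next, the isomorphism claim: since $\Projection^{(\nodefsmaller)}=\Embedding^{(\nodefsmaller)}\circ\Projectionhat^{(\nodefsmaller)}$ is the projection onto the top subrepresentation in each block, the composite $\Embedding^{(\multii,\nodefsmaller)}\circ\Projectionhat^{(\multii,\nodefsmaller)}$ is just the tensor product of these block projections, hence acts as the identity on any vector already lying in $\ImgofEmbedding^{(\multii,\nodefsmaller)}_N$ (such a vector lies in each block's top subrepresentation). Conversely $\Projectionhat^{(\multii,\nodefsmaller)}\circ\Embedding^{(\multii,\nodefsmaller)}$ is the identity on $\HWsp_{(\multii,\nodefsmaller)}^{(0)}$ because each $\Projectionhat^{(r)}\circ\Embedding^{(r)}=\id$ by construction. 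These two facts give the mutually inverse intertwiner isomorphisms, and Equation~\eqref{eq: pure geometries lie in projected space} will then be immediate once the membership $\Puregeomtwodim_{\alpha(\linkpatt)}\in\ImgofEmbedding^{(\multii,\nodefsmaller)}_N$ is established, since applying $\Embedding^{(\multii,\nodefsmaller)}\circ\Projectionhat^{(\multii,\nodefsmaller)}$ to a vector of $\ImgofEmbedding^{(\multii,\nodefsmaller)}_N$ returns it unchanged.

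The main obstacle, which I would attack last, is the membership $\Puregeomtwodim_{\alpha(\linkpatt)}\in\ImgofEmbedding^{(\multii,\nodefsmaller)}_N$: I must check that $\hat{\pi}^{(1)}_{j}(\Puregeomtwodim_{\alpha(\linkpatt)})=0$ at every internal position $j$ of a block. By the defining projection property~\eqref{eq: multiple sle projection conditions} of $\Puregeomtwodim_{\alpha}$, the projection $\hat{\pi}^{(1)}_{j}(\Puregeomtwodim_{\alpha(\linkpatt)})$ vanishes unless $\linkInEquation{j}{j+1}\in\alpha(\linkpatt)$. So the task reduces to a purely combinatorial verification: the planar pair partition $\alpha(\linkpatt)=\Embeddingcomb(\Rpluscomb^{-1}(\linkpatt))$ produced by the ``opening up'' map of Section~\ref{subsec: combinatorial maps} contains no link $\link{j}{j+1}$ joining two consecutive indices that belong to the same opened-up block. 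This is geometrically clear from the construction — when a single index $i$ of valence $s_i$ is split into $s_i$ adjacent indices, the lines attached to $i$ fan outward to other blocks and never connect two of these freshly created neighboring endpoints to each other, since in $\linkpatt'$ no index is linked to itself. I would make this precise by tracking, for each consecutive pair $j,j+1$ lying strictly inside one block, the images in $\linkpatt'$ of the two endpoints and observing that a link $\link{j}{j+1}$ in $\alpha(\linkpatt)$ would force a self-link at the shared original index, contradicting that $\linkpatt'\in\LP^{(0)}_{(\multii,\nodefsmall)}$ has links only between \emph{distinct} indices. This combinatorial step is where the careful bookkeeping lives; the representation-theoretic parts above are formal consequences of Lemma~\ref{lem: all projections vanish} and the definitions of $\Projection$, $\Embedding$.
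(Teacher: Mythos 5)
Your proposal is correct and follows essentially the same route as the paper's proof: Lemma~\ref{lem: all projections vanish} applied block by block to identify the image of $\Embedding^{(\multii,\nodefsmaller)}$, the intertwiner property of $\Projectionhat^{(\multii,\nodefsmaller)}$ for the isomorphism, and the observation that a link $\linkInEquation{j}{j+1}$ in $\alpha(\linkpatt)$ at an internal position of a block would force a self-link in $\linkpatt'$, together with the projection properties~\eqref{eq: multiple sle projection conditions}, for the membership. The only difference is that you spell out the combinatorial self-link argument more explicitly than the paper does.
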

\begin{proof}
The property
$\Embedding^{(\multii,s)} 
\big( \HWsp_{(\multii,s)}^{(0)}\big)
= \ImgofEmbedding^{(\multii,s)}_N$
follows from Lemma~\ref{lem: all projections vanish} and the fact that 
$\Embedding^{(\multii,s)}$ commutes with the action of the algebra 
$\Uqsltwo$. Since $\Projectionhat^{(\multii,s)}$ also commutes with the action of $\Uqsltwo$, it follows that restricted to 
$\ImgofEmbedding^{(\multii,s)}_N$, it is an isomorphism of representations, with inverse $\Embedding^{(\multii,s)}$.

Let then $\linkpatt\in\LP^{(s)}_{\multii}$. 
By definition of the map $\linkpatt \mapsto \alpha(\linkpatt)$
in Section~\ref{subsec: combinatorial maps}, the planar pair partition
$\alpha(\linkpatt)$ can contain a link of type $\link{j}{j+1}$ only if these 
points correspond to different points in the link pattern $\linkpatt$,
that is, if $j \in \set{\sum_{i=1}^ks_i\;\big|\;1\leq k\leq p}$.
In particular, by the projection 
properties~\eqref{eq: multiple sle projection conditions} of 
$\Puregeomtwodim_{\alpha(\linkpatt)}$, we have 
$\hat{\pi}^{(1)}_j(\Puregeomtwodim_{\alpha(\linkpatt)})=0$,
for all $j \notin \set{\sum_{i=1}^ks_i\;\big|\;1\leq k\leq p}$, so  
$\Puregeomtwodim_{\alpha(\linkpatt)} \in 
\ImgofEmbedding^{(\multii,s)}_N$.
This concludes the proof.
\end{proof}

It now follows almost immediately from the definitions that the 
vectors~\eqref{eq: basis vector construction} form a basis
of the highest weight vector space. This proves part (b)
of Theorem~\ref{thm: highest weight vector space basis vectors}.

\begin{prop}\label{prop: basis of highest weight vector space}
The collection
$\left(\Puregeom_{\linkpatt}\right)_{\linkpatt\in\LP^{(s)}_{\multii}}$
defined in~\eqref{eq: basis vector construction}
is a basis of the vector space $\HWsp_\multii^{(s)}$.
\end{prop}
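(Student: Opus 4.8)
The plan is to exploit the fact that, by the construction~\eqref{eq: basis vector construction}, each vector $\Puregeom_\linkpatt$ is the image of the vector $\Puregeomtwodim_{\alpha(\linkpatt)}$ of Theorem~\ref{thm: existence of multiple SLE vectors} under a chain of linear maps that are isomorphisms on the relevant subspaces. Since $\dmn(\HWsp_\multii^{(\nodefsmaller)}) = \#\LP^{(\nodefsmaller)}_\multii$ by Lemma~\ref{lem: cardinalities are equal}, it suffices to prove that the collection $(\Puregeom_\linkpatt)_{\linkpatt\in\LP^{(\nodefsmaller)}_\multii}$ is linearly independent; a dimension count then upgrades linear independence to the basis property.

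First I would record that the combinatorial map $\varphi\colon\linkpatt\mapsto\alpha(\linkpatt)$ of~\eqref{eq: link pattern to pair partition} is injective on $\LP^{(\nodefsmaller)}_\multii$: since the valences $\multii=(s_1,\ldots,s_p)$ are fixed, one recovers $\linkpatt$ from $\alpha(\linkpatt)$ by regrouping the $2N$ opened-up indices into consecutive blocks of sizes $s_1,\ldots,s_p,\nodef$ and collapsing each block back to a single index, i.e.\ by inverting $\Embeddingcomb$ and then $\Rpluscomb$. Consequently $\{\alpha(\linkpatt)\mid\linkpatt\in\LP^{(\nodefsmaller)}_\multii\}$ is a subset of $\PP_N$ of cardinality exactly $\#\LP^{(\nodefsmaller)}_\multii$. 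Because $(\Puregeomtwodim_\alpha)_{\alpha\in\PP_N}$ is a basis of $\HWsp_{2N}^{(0)}$ by Theorem~\ref{thm: existence of multiple SLE vectors}, the subfamily $(\Puregeomtwodim_{\alpha(\linkpatt)})_{\linkpatt\in\LP^{(\nodefsmaller)}_\multii}$ is linearly independent.

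Next I would transport this linear independence through the two maps in~\eqref{eq: basis vector construction}. Proposition~\ref{prop: pure geometries lie in projected space} guarantees that every $\Puregeomtwodim_{\alpha(\linkpatt)}$ lies in $\ImgofEmbedding^{(\multii,\nodefsmaller)}_N$, and that $\Projectionhat^{(\multii,\nodefsmaller)}$ restricts there to an isomorphism $\ImgofEmbedding^{(\multii,\nodefsmaller)}_N\to\HWsp_{(\multii,\nodefsmaller)}^{(0)}$; hence the vectors $\Puregeom_\linkpatt^\infty=\Projectionhat^{(\multii,\nodefsmaller)}(\Puregeomtwodim_{\alpha(\linkpatt)})$ remain linearly independent. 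Since $R_+^{(\nodefsmaller)}\colon\HWsp_{(\multii,\nodefsmaller)}^{(0)}\to\HWsp_\multii^{(\nodefsmaller)}$ is a linear isomorphism, applying it again preserves linear independence, so $(\Puregeom_\linkpatt)_{\linkpatt\in\LP^{(\nodefsmaller)}_\multii}$ is a linearly independent family of $\#\LP^{(\nodefsmaller)}_\multii$ vectors in $\HWsp_\multii^{(\nodefsmaller)}$. Comparing with $\dmn(\HWsp_\multii^{(\nodefsmaller)})=\#\LP^{(\nodefsmaller)}_\multii$ from Lemma~\ref{lem: cardinalities are equal} shows the family has exactly the right size, so it is a basis.

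I expect no serious obstacle: once the cited results are in hand, the argument is a purely formal chaining of isomorphisms together with a dimension count. The only point needing a small verification is the injectivity of $\varphi$, which is immediate from the explicit inverse given by regrouping indices according to the fixed valences $\multii$. One could even bypass this check by noting that the composite $R_+^{(\nodefsmaller)}\circ\Projectionhat^{(\multii,\nodefsmaller)}$ is an isomorphism $\ImgofEmbedding^{(\multii,\nodefsmaller)}_N\to\HWsp_\multii^{(\nodefsmaller)}$ and that $\#\LP^{(0)}_{(\multii,\nodefsmaller)}=\#\LP^{(\nodefsmaller)}_\multii$ via the bijection $\Rpluscomb$, so that the linearly independent vectors $\Puregeomtwodim_{\alpha(\linkpatt)}$ already form a basis of $\ImgofEmbedding^{(\multii,\nodefsmaller)}_N$ by a direct dimension match.
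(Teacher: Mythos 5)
Your argument is correct and follows essentially the same route as the paper's proof: linear independence of the $\Puregeomtwodim_{\alpha(\linkpatt)}$ (as part of the basis from Theorem~\ref{thm: existence of multiple SLE vectors}) is transported through the isomorphisms $\Projectionhat^{(\multii,\nodefsmaller)}$ (restricted via Proposition~\ref{prop: pure geometries lie in projected space}) and $R_+^{(\nodefsmaller)}$, and then Lemma~\ref{lem: cardinalities are equal} supplies the dimension count. Your explicit check that $\varphi$ is injective is a detail the paper leaves implicit (and note that your proposed ``bypass'' at the end still tacitly uses it, since distinctness of the $\alpha(\linkpatt)$ is exactly what makes that subfamily linearly independent), but the proof as written is sound.
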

\begin{proof}
Because $R_+^{(s)} \colon \HWsp_{(\multii,s)}^{(0)} \to \HWsp_\multii^{(s)}$ 
is a linear isomorphism 
(by~\cite[Lemma~5.3]{Kytola-Peltola:Conformally_covariant_boundary_correlation_functions_with_quantum_group}), the vectors 
$\Puregeom_{\linkpatt} := R_+^{(s)} \left(\Projectionhat^{(\multii,s)}(\Puregeomtwodim_{\alpha(\linkpatt)})\right)$ 
belong to the space $\HWsp_\multii^{(s)}$
by construction. Their linear independence follows the facts that, first,
the maps $R_+^{(s)}$ and 
$\Projectionhat^{(\multii,s)}$ are linear isomorphisms,
by~\cite[Lemma~5.3(a)]{Kytola-Peltola:Conformally_covariant_boundary_correlation_functions_with_quantum_group}
and Proposition~\ref{prop: pure geometries lie in projected space}, respectively, and second, the vectors 
$\Puregeomtwodim_{\alpha(\linkpatt)}$ are linearly independent, by 
Theorem~\ref{thm: existence of multiple SLE vectors}. Finally, by
Lemma~\ref{lem: cardinalities are equal}, the linear span of the vectors 
$\Puregeom_{\linkpatt}$, for $\linkpatt \in \LP^{(s)}_{\multii}$,
has the correct dimension 
$\# \LP^{(s)}_{\multii} = \dmn \HWsp_\multii^{(s)}$.
\end{proof}

To prove part (a)
of Theorem~\ref{thm: highest weight vector space basis vectors},
we still have to show that the vectors $\Puregeom_{\linkpatt}$ satisfy
\eqref{eq: projection conditions}~--~\eqref{eq: normalization of basis vectors}.
The projection properties~\eqref{eq: projection conditions} will be verified in  
Section~\ref{subsec: projection properties}.
The normalization conditions~\eqref{eq: normalization of basis vectors}
follow by considering the action of the map $R_+^{(s)}$ 
on the vectors $\Puregeomtwodim_{\nested_{s}}$, associated to 
the rainbow link patterns $\nested_{s}$.

\subsection{\label{subsec: normalization}Normalization}

For any partition $\partition$ of $s$,
the vectors $\Puregeom_{\defpatt_\partition}$ correspond to $\Puregeomtwodim_{\nested_{N}}$ with $N=s$
under the map $\linkpatt \mapsto \alpha(\linkpatt)$ --- see Figure~\ref{fig: defects and rainbow} for an illustration.
This observation gives rise to the normalization constant in~\eqref{eq: normalization of basis vectors}, 
as we show next. 

\begin{figure}
\includegraphics[scale=.75]{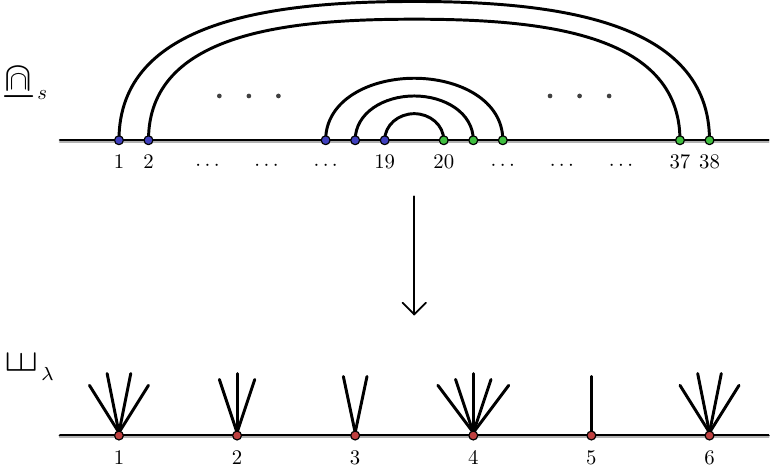}
\caption{\label{fig: defects and rainbow}
For the link pattern {$\defpatt_\partition$}, 
consisting of {$s$} defects,
the corresponding planar pair partition {$\alpha(\defpatt_\partition)$}
is the rainbow pattern {$\nested_{s} = \alpha(\defpatt_\partition)$}.
}
\end{figure}

First, we give the precise definition of the linear isomorphism $R_+^{(s)}$ already used above 
in Equation~\eqref{eq: basis vector construction}.
By \cite[Lemma~5.3(a)]{Kytola-Peltola:Conformally_covariant_boundary_correlation_functions_with_quantum_group},
any vector $v \in \HWsp_{(\multii,s)}^{(0)}$
can be written in the form 
\[v=\sum_{l=0}^{s}(-1)^{s-l}
q^{(l+1)(s-l)} 
\times (\Wbas_l^{(d)}\tens F^{s-l}.\tau_0^+) ,\]
for a unique vector 
$\tau_0^+\in\HWsp_{\multii}^{(s)}$, with $d = s + 1$.
The map (compare with $\Rpluscomb^{(s)}$ in 
Section~\ref{subsec: combinatorial maps})
\begin{align}\label{eq: R plus}
R_+ = R_+^{(s)} \;\colon\; \HWsp_{(\multii,s)}^{(0)} \to
\HWsp_{\multii}^{(s)}, \qquad \qquad 
R_+^{(s)}(v) := \tau_0^+,
\end{align}
is thus well-defined. 
It was shown in~\cite[Lemma~5.3]{Kytola-Peltola:Conformally_covariant_boundary_correlation_functions_with_quantum_group}
that $R_+^{(s)}$ is a linear isomorphism.

\begin{rem}\label{rem: Rplus commutes with projections}
\emph{
The map $R_+^{(s)}$ commutes with the maps 
$\widetilde{\pi}_{j}^{(\projdmn)}$
defined in~\eqref{eq: generalized projection},
for any $j \in \{1,2,\ldots,p-1\}$, $m \in \{0,1,\ldots,\min(s_j,s_{j+1})\}$, 
and $\projdmn=s_{j}+s_{j+1}+1-2m$, because the maps $\widetilde{\pi}_{j}^{(\projdmn)}$ 
act on the tensor components $(j,j+1)$ of the tensor product
$\Wd_d \tens \Wd_{d_p} \tens \Wd_{d_{p-1}} \tens \cdots \tens \Wd_{d_2} \tens \Wd_{d_1}$, 
away from the tensor position involving $\Wd_d$
--- see also~\cite[Lemma~5.3 \& Equation~(5.2)]{Kytola-Peltola:Conformally_covariant_boundary_correlation_functions_with_quantum_group}.
}
\end{rem}

\begin{lem}\label{lem: images of rainbow vectors}
Let $\partition=(s_{1},\ldots,s_{|\partition|})$ be a partition of 
$s \in \bZpos$. Then we have 
$\alpha(\defpatt_\partition) = \nested_{s}$, and
\begin{align*}
\Puregeom_{\defpatt_\partition} :=
R_+^{(s)}\left(\Projectionhat^{(\partition,s)}(\Puregeomtwodim_{\nested_{s}})\right) =
\frac{1}{(q-q^{-1})^{s}}\frac{\qnum{2}^{s}}{\qfact{s+1}}
\times (\Wbas_0^{(s_{|\partition|}+1)}\tens\cdots\tens\Wbas_0^{(s_1+1)}) 
\, \in\,\HWsp_\partition^{(s)}.
\end{align*}
\end{lem}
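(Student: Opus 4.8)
The plan is to prove the two assertions in turn: first the combinatorial identity $\alpha(\defpatt_\partition) = \nested_{\nodefsmaller}$, and then to evaluate the vector by feeding the explicit formula for $\Puregeomtwodim_{\nested_N}$ (Proposition~\ref{prop: solution for nested}) through the two maps $\Projectionhat^{(\partition,\nodefsmaller)}$ and $R_+^{(\nodefsmaller)}$. For the combinatorial claim I would simply trace the composition $\varphi = \Embeddingcomb \circ \Rpluscomb^{-1}$. The pattern $\defpatt_\partition$ has $s_i$ defects attached to each index $i \in \{1,\ldots,|\partition|\}$ and no links. Applying $\Rpluscomb^{-1}$ introduces a rightmost index $|\partition|+1$ of valence $\nodef$ and turns every defect into a link ending at it, so the resulting $\linkpatt'$ consists entirely of links incident to $|\partition|+1$. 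Opening up all points via $\Embeddingcomb$ splits index $i$ into $s_i$ univalent points and splits $|\partition|+1$ into $\nodef$ univalent points, producing $2N = 2\nodef$ points (so $N = \nodef$); planarity forces the pairing to be nested, i.e.\ reading left to right the $j$:th point joins the $(2N+1-j)$:th, which is precisely the rainbow $\nested_{\nodefsmaller}$.

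For the vector, I would start from Proposition~\ref{prop: solution for nested} with $N = \nodef$ and apply $\Projectionhat^{(\partition,\nodefsmaller)} = \Projectionhat^{(\nodefsmaller)}\tens\Projectionhat^{(s_{|\partition|})}\tens\cdots\tens\Projectionhat^{(s_1)}$ blockwise. Writing $\MTbas_l^{(N)} = F^l.\MTbas_0^{(N)}$ and $\MTbas_{N-l}^{(N)} = F^{N-l}.\MTbas_0^{(N)}$, the key point is that each $\Projectionhat^{(\cdot)}$ is a morphism of $\Uqsltwo$-representations (projection onto the top subrepresentation followed by the identification $\MTbas_k \mapsto \Wbas_k$), hence so is the tensor product, and it commutes with the action of $F$. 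Since $\Projectionhat^{(\nodefsmaller)}(\MTbas_0^{(N)}) = \Wbas_0^{(d)}$ and $(\Projectionhat^{(s_{|\partition|})}\tens\cdots\tens\Projectionhat^{(s_1)})(\MTbas_0^{(N)}) = w := \Wbas_0^{(s_{|\partition|}+1)}\tens\cdots\tens\Wbas_0^{(s_1+1)}$, commuting past the powers of $F$ gives
\[
\Projectionhat^{(\partition,\nodefsmaller)}(\Puregeomtwodim_{\nested_N})
= \frac{1}{(q^{-2}-1)^{N}}\frac{\qnum{2}^{N}}{\qfact{N+1}}
\sum_{l=0}^{N}(-1)^lq^{l(N-l-1)}\,\bigl(\Wbas_l^{(d)}\tens F^{N-l}.w\bigr),
\qquad d = N+1.
\]

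Finally I would apply $R_+^{(\nodefsmaller)}$, which by definition extracts $\tau_0^+$ from the unique expansion $\sum_l (-1)^{\nodefsmaller-l}q^{(l+1)(\nodefsmaller-l)}\bigl(\Wbas_l^{(d)}\tens F^{\nodefsmaller-l}.\tau_0^+\bigr)$. Matching the two sums term by term (with $\nodefsmaller = N$), the candidate $\tau_0^+$ carries the $l$-dependent factor $(-1)^l q^{l(N-l-1)}/\bigl((-1)^{N-l}q^{(l+1)(N-l)}\bigr)$ times $w$. The crux of the argument, and the step I expect to be the genuine content, is to check that this factor is in fact \emph{independent} of $l$: the sign collapses via $(-1)^{2l-N}=(-1)^N$, and the exponent simplifies through $l(N-l-1)-(l+1)(N-l) = -N$ to $q^{-N}$, so the factor equals $(-1)^Nq^{-N}$. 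This constancy is exactly what makes $\tau_0^+$ well-defined and proportional to $w$. The remainder is routine scalar bookkeeping: multiplying $\tfrac{1}{(q^{-2}-1)^N}\tfrac{\qnum{2}^N}{\qfact{N+1}}$ by $(-1)^Nq^{-N}$ and using $q^{-2}-1 = -(q^2-1)/q^2$ together with $q^2-1 = q(q-q^{-1})$ reduces the prefactor to $\tfrac{1}{(q-q^{-1})^{\nodefsmaller}}\tfrac{\qnum{2}^{\nodefsmaller}}{\qfact{\nodef+1}}$, yielding $\Puregeom_{\defpatt_\partition} = \tfrac{1}{(q-q^{-1})^{\nodefsmaller}}\tfrac{\qnum{2}^{\nodefsmaller}}{\qfact{\nodef+1}}\,w$, as claimed.
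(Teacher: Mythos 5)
Your proposal is correct and follows essentially the same route as the paper: identify $\alpha(\defpatt_\partition)=\nested_{\nodefsmaller}$ from the definition of $\varphi$, push the explicit formula of Proposition~\ref{prop: solution for nested} through $\Projectionhat^{(\partition,\nodefsmaller)}$ using that it intertwines the $F$-action, and then read off $\tau_0^+$ from the defining expansion of $R_+^{(\nodefsmaller)}$. The coefficient matching you carry out explicitly (the $l$-independence of $(-1)^{2l-N}q^{l(N-l-1)-(l+1)(N-l)}=(-1)^Nq^{-N}$) is exactly the step the paper leaves implicit in the phrase ``follows from the definition of the map $R_+^{(\nodefsmaller)}$''.
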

\begin{proof}
The first assertion $\alpha(\defpatt_\partition) = \nested_{s}$
is immediate from the definition of the map 
$\linkpatt \mapsto \alpha(\linkpatt)$.
For the second assertion, using the formula~\eqref{eq: solution for nested} for the vector 
$\Puregeomtwodim_{\nested_{s}}$, we calculate 
the image of $\Puregeomtwodim_{\nested_{s}}$
under the map
$\Projectionhat^{(\partition,s)} 
= \Projectionhat^{(s)}\tens\Projectionhat^{\partition} 
= \Projectionhat^{(s)}\tens
(\Projectionhat^{(s_{|\partition|})}\tens\cdots\tens\Projectionhat^{(s_1)})$:
\begin{align*}
\Projectionhat^{(\partition,s)}(\Puregeomtwodim_{\nested_{s}}) = \; &
\frac{1}{(q^{-2}-1)^{s}}\frac{\qnum{2}^{s}}{\qfact{s+1}}\,
\sum_{l=0}^{s}(-1)^lq^{l(s-l-1)}\times
\left(\Projectionhat^{(s)}(\MTbas_l^{(s)}) \tens 
\Projectionhat^{\partition}(\MTbas_{s-l}^{(s)})\right) \\
= \; & \frac{1}{(q^{-2}-1)^{s}}\frac{\qnum{2}^{s}}{\qfact{s+1}}\,
\sum_{l=0}^{s}(-1)^lq^{l(s-l-1)}\times
\left(\Wbas_l^{(d)} \tens F^{s-l}.(\Wbas_0^{(s_{|\partition|}+1)}\tens\cdots\tens\Wbas_0^{(s_1+1)})\right).
\end{align*}
The second assertion now follows from the definition of the map $R_+^{(s)}$ in~\eqref{eq: R plus}.
\end{proof}

\subsection{\label{subsec: projection properties}Projection properties}

We show next that the vectors $\Puregeom_{\linkpatt}$ 
defined by~\eqref{eq: basis vector construction}
indeed satisfy the projection properties~\eqref{eq: projection conditions}.
To establish this, we need some auxiliary calculations, 
given in Appendix~\ref{app: auxiliary calculations}.
The crucial observation is the following commutative diagram.
\begin{lem}\label{lem: commutative diagram}
Let $s_1,s_2 \in \bZpos$ and $m \in \{1,2,\ldots,\min(s_1,s_2) \}$, 
and denote $r = s_1+s_2-2m$ and $\projdmn = r+1$.  
The following diagram commutes, up to a non-zero multiplicative constant,
given below.
\begin{align*}
\xymatrixcolsep{3pc}\xymatrixrowsep{3.2pc}
\xymatrix{
   \Wd_{2}^{\tens s_2}\tens\Wd_{2}^{\tens s_1} \; \; \ar@{<-^{)}}[rr]^{\hspace{.9cm}\Embedding^{(s_2)} \, \tens\, \Embedding^{(s_1)}}\ar[d]_{\hat{\pi}_{s_1}^{(1)}\;}
  & & \; \; \Wd_{d_2} \tens \Wd_{d_1} \ar[ddddd]^{\hat{\pi}^{(\projdmn)}} \\
    \Wd_{2}^{\tens (s_2-1)} \tens \Wd_{2}^{\tens (s_1-1)} \; \; \ar[d]_{\hat{\pi}_{s_1-1}^{(1)}\;}
  & & \; \; \\
    \vdots \ar[d]
  & & \; \; \\
    \Wd_{2}^{\tens (s_2-m+1)} \tens \Wd_{2}^{\tens (s_1-m+1)} \; \; \ar[d]_{\hat{\pi}_{s_1-m+1}^{(1)}\;}
  & & \; \; \\  
    \Wd_2^{\tens r}  \; \; \ar[d]_{\Projectionhat^{(r)}\;}
  & & \; \; \\  
    \Wd_{\projdmn} \; \; \ar@{<->}[rr]_{\isom}
  & & \; \;  \Wd_{\projdmn}
} 
\end{align*}
More precisely, we have
\begin{align*}
\Projectionhat^{(r)}\circ\left(\hat{\pi}^{(1)}_{s_1-m+1}\circ\cdots\circ\hat{\pi}^{(1)}_{s_1-1}
\circ\hat{\pi}^{(1)}_{s_1}\right)\circ
\left(\Embedding^{(s_2)}\tens\Embedding^{(s_1)}\right)
= & \; \constantfromdiagram{m}{s_1}{s_{2}}\times\hat{\pi}^{(\projdmn)},
\end{align*}
where the non-zero constant equals 
\[ \constantfromdiagram{m}{s_1}{s_{2}} =
\frac{\qfact{s_1-m}\qfact{s_2-m}\qfact{s_1+s_2-m+1}}{\qnum{2}^m\qfact{s_1}\qfact{s_2}\qfact{s_1+s_2-2m+1}} =
\frac{\qbin{s_1+s_2-m+1}{m}}{\qnum{2}^m\qfact{m}\qbin{s_1}{m}\qbin{s_2}{m}} .\]
\end{lem}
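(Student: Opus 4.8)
The plan is to exploit that every arrow in the diagram is a morphism of $\Uqsltwo$-representations, and thereby to reduce the asserted identity of maps to the comparison of a single scalar. Since $q$ is not a root of unity, the category of finite-dimensional $\Uqsltwo$-modules is semisimple, and the quantum Clebsch--Gordan rule (Lemma~\ref{lem: tensor product representations of quantum sl2}, Equation~\eqref{eq: decomposition of tensor product}) shows that $\Wd_{\projdmn}$ occurs in $\Wd_{d_2}\tens\Wd_{d_1}$ with multiplicity exactly one, because $\projdmn = d_1+d_2-1-2m$ lies in the Clebsch--Gordan range for $m\in\{1,\ldots,\min(s_1,s_2)\}$. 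The embeddings $\Embedding^{(s_i)}$, the trivial-subrepresentation projections $\hat{\pi}^{(1)}_{j}$, the projection $\Projectionhat^{(r)}$, and $\hat{\pi}^{(\projdmn)}$ are all intertwiners, so both the left-hand composite and the map $\hat{\pi}^{(\projdmn)}$ on the right lie in $\Hom_{\Uqsltwo}(\Wd_{d_2}\tens\Wd_{d_1},\Wd_{\projdmn})$, which by Schur's lemma is one-dimensional. Hence the two maps are proportional, and the entire content of the lemma is the value of the proportionality constant together with its non-vanishing.

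To pin down the constant, I would evaluate both maps on the highest weight vector $\Tbas_0^{(\projdmn;d_1,d_2)}$ of the copy of $\Wd_{\projdmn}$ inside $\Wd_{d_2}\tens\Wd_{d_1}$, given explicitly in Equation~\eqref{eq: tensor product hwv}. This vector has $K$-eigenvalue $q^{\projdmn-1}=q^{r}$, and since every map in sight preserves weights, both sides produce a weight-$q^{r}$ vector of $\Wd_{\projdmn}$, i.e. a scalar multiple of the highest weight vector $\Wbas_0^{(\projdmn)}$. On the right-hand side one has immediately $\hat{\pi}^{(\projdmn)}(\Tbas_0^{(\projdmn;d_1,d_2)})=\Wbas_0^{(\projdmn)}$ by definition of $\hat{\pi}^{(\projdmn)}$. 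Thus it remains to compute the coefficient of $\Wbas_0^{(\projdmn)}$ in the left-hand map applied to $\Tbas_0^{(\projdmn;d_1,d_2)}$, and to check that it equals $\constantfromdiagram{m}{s_1}{s_2}$.

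The computation of this coefficient is the crux. Using $\Embedding^{(s_i)}(\Wbas_l^{(d_i)})=\MTbas_l^{(s_i)}=F^{l}.(\Wbas_0\tens\cdots\tens\Wbas_0)$, I would substitute the explicit formula~\eqref{eq: tensor product hwv} to express $(\Embedding^{(s_2)}\tens\Embedding^{(s_1)})(\Tbas_0^{(\projdmn;d_1,d_2)})$ as a concrete element of $\Wd_2^{\tens s_2}\tens\Wd_2^{\tens s_1}$. The composite $\hat{\pi}^{(1)}_{s_1-m+1}\circ\cdots\circ\hat{\pi}^{(1)}_{s_1}$ then acts as a ``zipper'' that successively contracts the $m$ tensorands of the $\Wd_2^{\tens s_1}$-block nearest the interface against the $m$ adjacent tensorands of the $\Wd_2^{\tens s_2}$-block into trivial components, and $\Projectionhat^{(r)}$ extracts the $\Wd_{\projdmn}$-component of the remaining $\Wd_2^{\tens r}$. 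The main obstacle is precisely this step: tracking how each trivial projection $\hat{\pi}^{(1)}$ acts on the relevant basis vectors and assembling the resulting $q$-combinatorial sums into closed form. I would isolate these as auxiliary $q$-binomial identities carried out in Appendix~\ref{app: auxiliary calculations} and feed their output back here. The resulting product of $q$-factorials is exactly the stated constant $\constantfromdiagram{m}{s_1}{s_2}$, which is non-zero for generic $q$ because every $q$-integer $\qnum{k}$ with $k\geq 1$ is non-zero when $q$ is not a root of unity. This simultaneously identifies the proportionality constant and shows it does not vanish, establishing commutativity up to the asserted non-zero constant.
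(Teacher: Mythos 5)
Your proposal is correct and follows essentially the same route as the paper: reduce to a scalar via multiplicity one of $\Wd_{\projdmn}$ in $\Wd_{d_2}\tens\Wd_{d_1}$ and Schur's lemma, then evaluate the composite on the highest weight vector $\Tbas_0^{(\projdmn;d_1,d_2)}$, with the explicit $q$-combinatorial computation of the resulting coefficient deferred to the auxiliary lemmas of Appendix~\ref{app: auxiliary calculations} (the paper's Lemma~\ref{lem: projections in the middle for hwv}).
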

\begin{proof}
The subrepresentation isomorphic to $\Wd_{\projdmn}$ appears in the tensor product 
$\Wd_{d_2}\tens\Wd_{d_1}$  with multiplicity one. By Schur's lemma, to prove that 
the diagram commutes, it therefore suffices to show that the map
$\Projectionhat^{(r)}\circ\left(\hat{\pi}^{(1)}_{s_1-m+1}\circ\cdots\circ\hat{\pi}^{(1)}_{s_1-1}
\circ\hat{\pi}^{(1)}_{s_1}\right)\circ
\left(\Embedding^{(s_2)}\tens\Embedding^{(s_1)}\right)$
is non-zero. But, by Lemma~\ref{lem: projections in the middle for hwv},
the vector $\Tbas_{0}^{(\projdmn;d_1,d_2)}\in\Wd_{d_2}\tens\Wd_{d_1}$ 
maps to a non-zero multiple of $\Wbas_0^{(\projdmn)} \in \Wd_{\projdmn}$ in this 
map, with the explicit, non-zero proportionality constant $\constantfromdiagram{m}{s_1}{s_{2}}$. This finishes the proof.
\end{proof}

\begin{prop}\label{prop: commutative diagram}
The collection of vectors
$\left(\Puregeom_{\linkpatt}\right)_{\linkpatt\in\LP^{(s)}}$, 
defined in~\eqref{eq: basis vector construction}, 
satisfies the equations~\eqref{eq: projection conditions}.
\end{prop}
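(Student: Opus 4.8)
The plan is to reduce the claimed identity, block by block, to the fine tensor power $\Wd_2^{\tens 2N}$, where it becomes a statement about the vectors $\Puregeomtwodim_{\alpha(\linkpatt)}$ of Theorem~\ref{thm: existence of multiple SLE vectors}, and then to feed it into the commutative diagram of Lemma~\ref{lem: commutative diagram}. First I fix $j \in \{1,\ldots,p-1\}$ and $m \in \{1,\ldots,\min(s_j,s_{j+1})\}$, write $\projdmn = s_j+s_{j+1}+1-2m$, $r = s_j+s_{j+1}-2m$, $C := \constantfromdiagram{m}{s_j}{s_{j+1}}$, and $\hat{\linkpatt} = \linkpatt\removeLink(m\times\link{j}{j+1})$. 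Since $R_+^{(\nodefsmaller)}$ commutes with $\widetilde{\pi}_{j}^{(\projdmn)}$ by Remark~\ref{rem: Rplus commutes with projections}, and $\Puregeom_\linkpatt = R_+^{(\nodefsmaller)}(\Puregeom_{\linkpatt}^{\infty})$, $\Puregeom_{\hat{\linkpatt}} = R_+^{(\nodefsmaller)}(\Puregeom_{\hat{\linkpatt}}^{\infty})$ by the construction~\eqref{eq: basis vector construction}, it suffices to prove the analogue of~\eqref{eq: projection conditions} at the level of the $\infty$-vectors, namely that $\widetilde{\pi}_{j}^{(\projdmn)}(\Puregeom_{\linkpatt}^{\infty})$ equals $\frac{1}{C}\Puregeom_{\hat{\linkpatt}}^{\infty}$ when $\ell_{j,j+1}\geq m$, and $0$ otherwise.

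Next, because $\Embedding^{(\hat{\multii},\nodefsmaller)}$ is injective (Proposition~\ref{prop: pure geometries lie in projected space}), where $\hat{\multii}$ denotes the valence multiindex of $\hat{\linkpatt}$, I would verify this identity after applying $\Embedding^{(\hat{\multii},\nodefsmaller)}$. On the right-hand side, Proposition~\ref{prop: pure geometries lie in projected space} applied to $\hat{\linkpatt}$ gives $\Embedding^{(\hat{\multii},\nodefsmaller)}(\Puregeom_{\hat{\linkpatt}}^{\infty}) = \Puregeomtwodim_{\alpha(\hat{\linkpatt})}$. For the left-hand side I unfold $\widetilde{\pi}_{j}^{(\projdmn)} = \iota_{j}^{(\projdmn;d_j-m,d_{j+1}-m)}\circ\hat{\pi}_{j}^{(\projdmn)}$ from~\eqref{eq: generalized projection} and $\Puregeom_{\linkpatt}^{\infty} = \Projectionhat^{(\multii,\nodefsmaller)}(\Puregeomtwodim_{\alpha(\linkpatt)})$, and analyze the composite $\Embedding^{(\hat{\multii},\nodefsmaller)}\circ\iota_{j}^{(\projdmn;d_j-m,d_{j+1}-m)}\circ\hat{\pi}_{j}^{(\projdmn)}\circ\Projectionhat^{(\multii,\nodefsmaller)}$ one tensor block at a time. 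On each block $i\neq j,j+1$ the embedding and projection compose to the block projection $\Projection^{(s_i)}$, which fixes $\Puregeomtwodim_{\alpha(\linkpatt)}$ because it lies in $\ImgofEmbedding^{(\multii,\nodefsmaller)}_N$, i.e. each block already sits in its top irreducible $\Wd_{d_i}$ (Lemma~\ref{lem: all projections vanish} and Proposition~\ref{prop: pure geometries lie in projected space}). On the two interface blocks I use the identity $(\Embedding^{(s_{j+1}-m)}\tens\Embedding^{(s_j-m)})\circ\iota_{j}^{(\projdmn;d_j-m,d_{j+1}-m)} = \Embedding^{(r)}$ — both sides are $\Uqsltwo$-morphisms $\Wd_{\projdmn}\to\Wd_2^{\tens r}$ sending the highest weight vector to $\Wbastwodim_0^{\tens r}$, hence agree by Schur's lemma — together with Lemma~\ref{lem: commutative diagram}, which rewrites $\hat{\pi}_{j}^{(\projdmn)}\circ(\Projectionhat^{(s_{j+1})}\tens\Projectionhat^{(s_j)})$, on the image of the block embeddings, as $\frac{1}{C}$ times $\Projectionhat^{(r)}$ precomposed with the cascade $\hat{\pi}^{(1)}_{\sigma-m+1}\circ\cdots\circ\hat{\pi}^{(1)}_{\sigma}$ of $m$ fine projections at the interface fine position $\sigma = s_1+\cdots+s_j$. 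Collecting the blocks, this should yield $\Embedding^{(\hat{\multii},\nodefsmaller)}(\widetilde{\pi}_{j}^{(\projdmn)}(\Puregeom_{\linkpatt}^{\infty})) = \tfrac{1}{C}\,(\hat{\pi}^{(1)}_{\sigma-m+1}\circ\cdots\circ\hat{\pi}^{(1)}_{\sigma})(\Puregeomtwodim_{\alpha(\linkpatt)})$, the surviving block projections acting as the identity once the final fine vector is recognized as a $\Puregeomtwodim$.

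To finish, I would evaluate this fine cascade using the defining projection properties~\eqref{eq: multiple sle projection conditions}. By the ``opening up'' map of Section~\ref{subsec: combinatorial maps} (Figure~\ref{fig: open up map}), the $\ell_{j,j+1}$ links $\link{j}{j+1}$ of $\linkpatt$ become, in $\alpha(\linkpatt)$, exactly $\ell_{j,j+1}$ nested arcs straddling the interface, the innermost being $\link{\sigma}{\sigma+1}$. Applying $\hat{\pi}^{(1)}_{\sigma}$ first removes this innermost arc when present, producing $\Puregeomtwodim$ of the pair partition with that arc deleted; after relabeling the next arc becomes the new innermost one, and iterating $m$ times removes the $m$ innermost interface arcs, giving $\Puregeomtwodim_{\alpha(\hat{\linkpatt})}$ precisely when $\ell_{j,j+1}\geq m$. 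If instead $\ell_{j,j+1}<m$, then at some step the relevant interface arc is absent and~\eqref{eq: multiple sle projection conditions} forces the result to vanish. Thus $\Embedding^{(\hat{\multii},\nodefsmaller)}(\widetilde{\pi}_{j}^{(\projdmn)}(\Puregeom_{\linkpatt}^{\infty}))$ equals $\tfrac{1}{C}\Puregeomtwodim_{\alpha(\hat{\linkpatt})} = \tfrac{1}{C}\Embedding^{(\hat{\multii},\nodefsmaller)}(\Puregeom_{\hat{\linkpatt}}^{\infty})$ or $0$ accordingly, and injectivity of $\Embedding^{(\hat{\multii},\nodefsmaller)}$ together with the reduction of the first step completes the argument.

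The hard part will be the block-by-block bookkeeping of the second step: commuting maps acting on disjoint tensor positions, tracking the relabelings so that the interface cascade matches $\varphi$ applied to $\hat{\linkpatt}$, and handling the degenerate cases $s_j = m$ or $s_{j+1} = m$ where an entire block is consumed so that a coarse index must be dropped. In those cases $\Wd_{d_j-m}$ or $\Wd_{d_{j+1}-m}$ degenerates to the trivial representation $\Wd_1$ and $\iota_{j}^{(\projdmn;d_j-m,d_{j+1}-m)}$ reduces to the identity (as recorded after~\eqref{eq: generalized projection}); I will need to check this is consistent with the index-removal conventions for $\linkpatt\removeLink(m\times\link{j}{j+1})$ from Section~\ref{subsub: link removals}.
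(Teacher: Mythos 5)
Your proposal is correct and follows essentially the same route as the paper's proof: reduce via the commutation of $R_+^{(\nodefsmaller)}$ with $\widetilde{\pi}_{j}^{(\projdmn)}$ to the $\infty$-vectors, transfer everything to $\Wd_2^{\tens 2N}$ via Proposition~\ref{prop: pure geometries lie in projected space}, apply Lemma~\ref{lem: commutative diagram} to trade $\hat{\pi}_j^{(\projdmn)}$ for the cascade of $m$ fine projections at the interface position $k_j=\sum_{i\leq j}s_i$ (producing the constant $\constantfromdiagram{m}{s_j}{s_{j+1}}$), and evaluate that cascade using the nested-interface-arcs correspondence and the defining properties~\eqref{eq: multiple sle projection conditions} of $\Puregeomtwodim_{\alpha(\linkpatt)}$. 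The only cosmetic difference is that you verify the identity after applying the injective embedding $\Embedding^{(\hat{\multii},\nodefsmaller)}$, whereas the paper writes the chain of equalities directly and invokes the consistency relation~\eqref{eq: consistency of projections} in its projection form rather than your equivalent embedding form.
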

\begin{proof}
By Remark~\ref{rem: Rplus commutes with projections}, the maps 
$\tilde{\pi}_j^{(\projdmn)}$ appearing in 
the equations~\eqref{eq: projection conditions} 
commute with the linear isomorphism 
$R_+^{(s)}$, for any $j \in \{1,2,\ldots,p-1 \}$,
$m \in \{0,1,\ldots,\min(s_j,s_{j+1})\}$, and $\projdmn = s_{j}+s_{j+1}+1-2m$.
Therefore, it suffices to show that 
the vector $\Puregeom_{\linkpatt}^{\infty} := 
\Projectionhat^{(\multii,s)}(\Puregeomtwodim_{\alpha(\linkpatt)})$ 
satisfies the properties~\eqref{eq: projection conditions}. 
Using the commutative diagram of Lemma~\ref{lem: commutative diagram}
together with Proposition~\ref{prop: pure geometries lie in projected space},
the properties~\eqref{eq: projection conditions} can be written in terms of $\Puregeomtwodim_{\alpha(\linkpatt)}$, which, in turn, are known to satisfy
the properties~\eqref{eq: multiple sle projection conditions},
by Theorem~\ref{thm: existence of multiple SLE vectors}.

Fix $j \in \{1,2,\ldots,p-1 \}$, $m \in \{1,2,\ldots,\min(s_j,s_{j+1}) \}$,
and denote by $k_j = \sum_{i=1}^{j}s_i$. We first note that,
by definition of the map $\linkpatt \to \alpha(\linkpatt)$
(see Section~\ref{subsec: combinatorial maps}),
the link pattern $\alpha(\linkpatt)$ contains the nested links
\[ \linkInEquation{k_j}{k_j+1},\linkInEquation{k_j-1}{k_j+2},\ldots,\linkInEquation{k_j-m+1}{k_j+m} \]
if and only if the link pattern $\linkpatt$ contains at least 
$m$ links $\link{j}{j+1}$, and if this is the case, then we have
\begin{align*}
\alpha(\linkpatt) \; \removeLink\linkInEquation{k_j}{k_j+1} \; 
\removeLink\linkInEquation{k_j-1}{k_j+2} \; \removeLink \; \ldots \;\removeLink\linkInEquation{k_j-m+1}{k_j+m} 
= \; & \alpha \Big(\linkpatt \removeLink (m\times\linkInEquation{j}{j+1})\Big)
= \alpha( \hat{\linkpatt} ),
\end{align*}
where we denote by 
$\hat{\linkpatt} = \linkpatt \removeLink (m\times\link{j}{j+1})$.
The projection properties~\eqref{eq: multiple sle projection conditions}
for the vector $\Puregeomtwodim_{\alpha(\linkpatt)}$ show that 
\begin{align}\label{eq: taking nested links off}
\left(\hat{\pi}_{k_j-m+1}^{(1)} \circ \cdots \circ \hat{\pi}_{k_j-1}^{(1)} \circ \hat{\pi}_{k_j}^{(1)}\right)
(\Puregeomtwodim_{\alpha(\linkpatt)})
= \; & \begin{cases} 
\Puregeomtwodim_{\alpha (\hat{\linkpatt})} & \mbox{if there are at least $m$ links } \linkInEquation{j}{j+1} \text{ in }\linkpatt \\
0 & \mbox{otherwise}. \end{cases}
\end{align}
Denote by $r = \projdmn - 1 = s_{j}+s_{j+1}-2m$ and
$\hat{\multii} = (s_1,\ldots,s_{j-1},r,s_{j+2},\ldots,s_p)$.
Using the commutative diagram of Lemma~\ref{lem: commutative diagram},
Equation~\eqref{eq: pure geometries lie in projected space},
and Equation~\eqref{eq: taking nested links off}, we obtain
\begin{align}\label{eq: projection hat taking nested links off}
\constantfromdiagram{m}{s_j}{s_{j+1}} \times
\hat{\pi}_j^{(\projdmn)}(\Puregeom_\linkpatt^{\infty})
= \; & \left( \Projectionhat^{(\hat{\multii},s)} \circ \left(\hat{\pi}_{k_j-m+1}^{(1)} \circ \cdots \circ \hat{\pi}_{k_j-1}^{(1)} \circ \hat{\pi}_{k_j}^{(1)} \right)
\circ\Embedding^{(\multii,s)}\right)(\Puregeom_\linkpatt^{\infty})
\\
= \; & \left( \Projectionhat^{(\hat{\multii},s)} \circ
\left(\hat{\pi}_{k_j-m+1}^{(1)} \circ \cdots \circ \hat{\pi}_{k_j-1}^{(1)} \circ \hat{\pi}_{k_j}^{(1)}\right)\right)
(\Puregeomtwodim_{\alpha(\linkpatt)}) 
\nonumber \\
= \; & \begin{cases} \Projectionhat^{(\hat{\multii},s)}
(\Puregeomtwodim_{\alpha (\hat{\linkpatt})}) 
& \mbox{if there are at least $m$ links } \linkInEquation{j}{j+1} \text{ in }\linkpatt \\
0 & \mbox{otherwise}. \end{cases} \nonumber
\end{align}

Now, it follows directly from the definitions that we have
\begin{align}\label{eq: consistency of projections}
\Projectionhat^{(s_{j+1}-m)}\tens\Projectionhat^{(s_{j}-m)} =
\iota^{(\projdmn)} \circ \Projectionhat^{(r)} \; \colon \;
\Wd_{2}^{\tens r} \to \Wd_{d_{j+1}-m} \tens \Wd_{d_{j}-m} ,
\end{align}
where the maps
$\Projectionhat^{(r)} \colon \Wd_{2}^{\tens r} \to \Wd_\projdmn$ and
$\iota^{(\projdmn)} = \iota^{(\projdmn;d_j-m,d_{j+1}-m)} \colon \Wd_\projdmn \hookrightarrow \Wd_{d_{j+1}-m} \tens \Wd_{d_{j}-m}$
were defined in~\eqref{eq: projections}
and~\eqref{eq: embedding of a subrepresentation}, respectively. 
Using Equation~\eqref{eq: generalized projection},
Equation~\eqref{eq: projection hat taking nested links off}, 
and Equation~\eqref{eq: consistency of projections}, we obtain
\begin{align*}
\constantfromdiagram{m}{s_j}{s_{j+1}} \times 
\tilde{\pi}_j^{(\projdmn)}(\Puregeom_\linkpatt^{\infty})
= \; & \constantfromdiagram{m}{s_j}{s_{j+1}} \times 
\iota_{j}^{(\projdmn)} \left( \hat{\pi}_{j}^{(\projdmn)} (\Puregeom_\linkpatt^{\infty}) \right) \\
= \; & \begin{cases} 
\left( \iota_{j}^{(\projdmn)} \circ \Projectionhat^{(\hat{\multii},s)} \right)
(\Puregeomtwodim_{\alpha (\hat{\linkpatt})}) & \mbox{if there are at least $m$ links } \linkInEquation{j}{j+1} \text{ in }\linkpatt \\
0 & \mbox{otherwise} \end{cases} \\
= \; & \begin{cases}
\Projectionhat^{(\tilde{\multii},s)}
(\Puregeomtwodim_{\alpha (\hat{\linkpatt})}) & \mbox{if there are at least $m$ links } \linkInEquation{j}{j+1} \text{ in }\linkpatt \\
0 & \mbox{otherwise} \end{cases} \\
= \; & \begin{cases} \Puregeom_{\hat{\linkpatt}}^{\infty}
 & \mbox{if there are at least $m$ links } \linkInEquation{j}{j+1} \text{ in }\linkpatt \\
0 & \mbox{otherwise}, \end{cases}
\end{align*}
where $\tilde{\multii} = (s_1,\ldots,s_{j-1},s_j-m,s_{j+1}-m,s_{j+2},\ldots,s_p)$,
and $\Puregeom_{\hat{\linkpatt}}^{\infty} = \Projectionhat^{(\tilde{\multii},s)} (\Puregeomtwodim_{\alpha (\hat{\linkpatt})})$.
This is the property~\eqref{eq: projection conditions} for 
$\Puregeom_{\linkpatt}^{\infty}$. Finally, we obtain the asserted property for 
$\Puregeom_\linkpatt$ by applying the map $R_+^{(s)}$:
\begin{align*}
\tilde{\pi}_j^{(\projdmn)}(\Puregeom_\linkpatt)
= \; & \begin{cases} 
\frac{1}{\constantfromdiagram{m}{s_j}{s_{j+1}}}\times\Puregeom_{\hat{\linkpatt}} & \mbox{if there are at least $m$ links } \linkInEquation{j}{j+1} \text{ in }\linkpatt\\ 0 & \mbox{otherwise}. \end{cases}
\end{align*}
\end{proof}

\subsection{\label{subsec: uniqueness}Uniqueness}

We finish by proving that the solutions 
to~\eqref{eq: cartan eigenvalue}~--~\eqref{eq: projection conditions}
are necessarily unique, up to normalization.
Fixing the normalization~\eqref{eq: normalization of basis vectors}, uniqueness follows from the observation 
that the homogeneous system, in which all of the projections vanish, admits 
a non-trivial solution only when $n = \sum_{i=1}^p s_i = s$, and in this 
case, the space $\HWsp_\multii^{(s)}$ is one-dimensional
and spanned by $\Puregeom_{\defpatt_\multii}$ (see 
Remark~\ref{rem: one dimensionality of partition space}).

\begin{lem}\label{lem: all projections vanish gives zero}
Assume that $n > s$ and that the vector $v\in\bigotimes_{i=1}^{p}\Wd_{d_{i}}$ satisfies
$E.v=0$, $K.v=q^{s}\,v$, and $\widetilde{\pi}_{j}^{(\projdmn)}(v) = 0$, for all $j \in \{1,2,\ldots,p-1\}$,
all $\projdmn=s_{j}+s_{j+1}+1-2m$, and all $m \in \{ 1,2,\ldots,\min(s_{j+1},s_{j})\}$.
Then we have $v=0.$  
\end{lem}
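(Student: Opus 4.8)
The plan is to transport the problem to the tensor power $\Wd_2^{\tens n}$, where $n = \sum_{i=1}^p s_i$, and reduce it to Lemma~\ref{lem: all projections vanish}. Set $\Embedding^{(\multii)} := \Embedding^{(s_p)}\tens\cdots\tens\Embedding^{(s_1)}$ and let $w := \Embedding^{(\multii)}(v) \in \Wd_2^{\tens n}$. Since $\Embedding^{(\multii)}$ is injective and commutes with the $\Uqsltwo$-action, $w$ is again a highest weight vector of weight $q^{\nodefsmaller}$, and it suffices to show $w=0$. The key will be to prove that \emph{all} first-order projections vanish, i.e. $\hat{\pi}^{(1)}_k(w)=0$ for every $k \in \{1,\ldots,n-1\}$: granting this, Lemma~\ref{lem: all projections vanish} places $w$ in the top-dimensional irreducible $\Wd_{n+1} \subset \Wd_2^{\tens n}$, whose space of highest weight vectors is spanned by a vector of weight $q^{n}$. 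Since $n > \nodefsmaller$ and $q$ is not a root of unity, $q^{\nodefsmaller} \neq q^{n}$, and therefore $w=0$, i.e. $v = 0$.

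To prove $\hat{\pi}^{(1)}_k(w) = 0$, I would distinguish positions internal to a block from the junction positions $k_j := \sum_{i=1}^{j} s_i$ separating blocks $j$ and $j+1$. For internal $k$ the vanishing is automatic: by the equivalence in Lemma~\ref{lem: all projections vanish} applied blockwise, the image of $\Embedding^{(\multii)}$ consists exactly of the vectors all of whose block-internal first-order projections vanish (the no-defect analogue of Proposition~\ref{prop: pure geometries lie in projected space}), and $w$ lies in this image by construction. So the entire content sits at the junctions.

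Fix $j$ and write $\projdmn_m := s_j + s_{j+1} + 1 - 2m$ and $r_m := \projdmn_m - 1$. Since $\widetilde{\pi}_j^{(\projdmn_m)} = \iota_j^{(\projdmn_m)} \circ \hat{\pi}_j^{(\projdmn_m)}$ with $\iota_j^{(\projdmn_m)}$ injective, the hypothesis gives $\hat{\pi}_j^{(\projdmn_m)}(v) = 0$ for all $m \in \{1,\ldots,\min(s_j,s_{j+1})\}$. Feeding this into the commutative diagram of Lemma~\ref{lem: commutative diagram} at the pair $(j,j+1)$ (the remaining tensorands being inert spectators) yields $\Projectionhat^{(r_m)}(P_m(w)) = 0$, where $P_m := \hat{\pi}^{(1)}_{k_j - m + 1} \circ \cdots \circ \hat{\pi}^{(1)}_{k_j}$ is the composite of $m$ consecutive junction projections. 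Now put $G := \hat{\pi}^{(1)}_{k_j}(w) = P_1(w)$. It is a highest weight vector of weight $q^{\nodefsmaller}$ in $\Wd_2^{\tens(n-2)}$ whose block-internal projections all vanish --- these commute with $\hat{\pi}^{(1)}_{k_j}$ and are inherited from $w$ --- so $G = \Embedding^{(\multii')}(v')$ for a unique $v' \in \HWsp^{(\nodefsmaller)}_{\multii'}$, where $\multii'$ is obtained from $\multii$ by decreasing both $s_j$ and $s_{j+1}$ by one.

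The main obstacle is to deduce $v' = 0$. Applying the commutative diagram of Lemma~\ref{lem: commutative diagram} once more, now to the reduced pair $(s_j-1,s_{j+1}-1)$ with $m'$ merges, identifies $\Projectionhat^{(r_{m'+1})}(P_{m'+1}(w))$ --- where $P_{m'+1}(w)$ is just $m'$ further junction projections applied to $G$ --- with a nonzero multiple of $\hat{\pi}_j^{(\projdmn'_{m'})}(v')$ transported through $\Embedding^{(\multii')}$, with $\projdmn'_{m'} := (s_j-1)+(s_{j+1}-1)+1-2m' = \projdmn_{m'+1}$; the degenerate case $m'=0$ is the defining identity $\Projectionhat^{(r_1)} \circ \Embedding^{(\multii')} = \hat{\pi}_j^{(\projdmn'_0)}$. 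The left-hand side vanishes for all $m'+1 \in \{1,\ldots,\min(s_j,s_{j+1})\}$, i.e. for all $m' \in \{0,1,\ldots,\min(s_j-1,s_{j+1}-1)\}$, whence $\hat{\pi}_j^{(\projdmn'_{m'})}(v') = 0$ for every admissible $m'$. By the quantum Clebsch--Gordan decomposition (Lemma~\ref{lem: tensor product representations of quantum sl2}) these are precisely the projections onto the complete list of irreducible summands of $\Wd_{d_{j+1}-1} \tens \Wd_{d_j-1}$ in the $(j,j+1)$ slot, so they sum to the identity; hence $v' = 0$ and $G = \hat{\pi}^{(1)}_{k_j}(w) = 0$. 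Running this for each $j$ shows that all first-order projections of $w$ vanish, and the weight argument above then gives $w = 0$.
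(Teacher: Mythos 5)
Your proof is correct, and its skeleton coincides with the paper's: embed $v$ into $\Wd_2^{\tens n}$ via $\Embedding^{(\multii)}$, show that every first-order projection $\hat{\pi}^{(1)}_k$ of the embedded vector vanishes, invoke Lemma~\ref{lem: all projections vanish} to place it in the top component $\Wd_{n+1}$, and finish with the weight comparison ($n > \nodef$ and $q$ not a root of unity force $q^{n}\neq q^{\nodefsmaller}$). Where you genuinely diverge is in the treatment of the junction positions $k_j=\sum_{i\leq j}s_i$. The paper disposes of these in one line: since $\widetilde{\pi}_{j}^{(\projdmn)}(v)=0$ for every $m\geq 1$, semisimplicity of the decomposition~\eqref{eq: decomposition of tensor product} forces $v$ to lie entirely in the top component $\Wd_{d_j+d_{j+1}-1}\subset\Wd_{d_{j+1}}\tens\Wd_{d_j}$, the embedding carries this into $\Wd_{s_j+s_{j+1}+1}\subset\Wd_2^{\tens(s_j+s_{j+1})}$, and Lemma~\ref{lem: all projections vanish}, applied just to these two blocks, already yields $\hat{\pi}^{(1)}_{k_j}\big(\Embedding^{(\multii)}(v)\big)=0$; no further machinery is needed. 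Your detour through $G=\hat{\pi}^{(1)}_{k_j}(w)$, its preimage $v'$ under $\Embedding^{(\multii')}$, and two applications of the quantitative commutative diagram of Lemma~\ref{lem: commutative diagram} reaches the same conclusion, and each step checks out --- including the bookkeeping that $P_{m'+1}(w)$ equals $m'$ further junction projections applied to $G$, and that the reduced Clebsch--Gordan projections of $v'$ exhaust all summands of $\Wd_{d_{j+1}-1}\tens\Wd_{d_j-1}$ and hence sum to the identity --- but it imports the explicit constants $\constantfromdiagram{m}{s_j}{s_{j+1}}$, of which only the non-vanishing is ever used, at the cost of roughly a page of extra verification. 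In short: correct, same destination, but the junction step can be collapsed to the observation that your hypothesis is exactly the statement that $v$ has no component outside the top summand at each consecutive pair.
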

\begin{proof}
The properties $E.v=0$ and $K.v=q^{s}\,v$ show that 
$v$ belongs to the highest weight vector space 
$\HWsp_\multii^{(s)}$, so we have
$\Embedding^{(\multii)}(v) = (\Embedding^{(s_p)}\tens\cdots\tens
\Embedding^{(s_1)})(v) \in \HWsp_n^{(s)}$ as well.
Furthermore, the properties $\widetilde{\pi}_{j}^{(\projdmn)}(v) = 0$,
for all $j$, $\projdmn$, and $m$ imply that in 
the tensor product~\eqref{eq: order of tensorands}, in the direct sum 
decomposition of any two consecutive tensorands $\Wd_{d_{j+1}}\tens\Wd_{d_j}$
into irreducibles, the vector $v$ lies in the highest dimensional subrepresentation
isomorphic to $\Wd_{d_j+d_{j+1}-1}$. For the vector 
$\Embedding^{(\multii)}(v)$, this and Lemma~\ref{lem: all projections vanish}
show that we have
\[\hat{\pi}^{(1)}_k\left(\Embedding^{(\multii)}(v)\right)= 0 ,
\qquad\text{for all }\;k \in \{1,2,\ldots,n-1\} .\] 
Therefore, Lemma~\ref{lem: all projections vanish} applied to the whole tensor product $\Wd_{2}^{\tens n}$ shows that the vector $\Embedding^{(\multii)}(v)$ 
belongs to the highest dimensional subrepresentation
$\Wd_{n+1}\subset\Wd_2^{\tens n}$. We conclude that
\[ \Embedding^{(\multii)}(v) \in \Wd_{s+1} \cap \Wd_{n+1} 
\subset \Wd_2^{\tens n} .\]
Now, by assumption $n > s$, we have 
$\Wd_{s+1} \cap \Wd_{n+1}  = \set{0}$, so we get
$\Embedding^{(\multii)}(v) = 0$, and $v = 0$ as well.
\end{proof}

\begin{prop} \label{prop: uniqueness}
Let $s\in\bZnn$, and let 
$\left(\Puregeom_{\linkpatt}\right)_{\linkpatt\in\LP^{(s)}}$ and
$\left(\Puregeom'_{\linkpatt}\right)_{\linkpatt\in\LP^{(s)}}$ 
be two collections of solutions 
to~\eqref{eq: cartan eigenvalue}~--~\eqref{eq: projection conditions},
such that we have
$\Puregeom_{\defpatt_\partition},\Puregeom_{\defpatt_\partition}'\neq0$,
for all partitions $\partition$ of $s$.
Then, there are constants $c_\partition\in\bC\setminus\{0\}$ such that
\begin{align*}
\Puregeom'_{\linkpatt}=c_\partition\,\Puregeom_{\linkpatt}, \qquad\text{for all }
\linkpatt \in \LP^{(s)}(\partition).
\end{align*}
\end{prop}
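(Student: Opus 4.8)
The plan is to argue by induction on the number $\ell$ of links in the link pattern, establishing the proportionality $\Puregeom'_{\linkpatt} = c_\partition\,\Puregeom_{\linkpatt}$ for every $\linkpatt \in \LP^{(\nodefsmaller)}(\partition)$ with a single constant $c_\partition$ attached to each partition $\partition$ of $\nodef$.

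First I would treat the base case $\ell = 0$. Here the only link pattern in $\LP^{(\nodefsmaller)}(\partition)$ is $\defpatt_\partition$, and by Remark~\ref{rem: one dimensionality of partition space} the space $\HWsp_\partition^{(\nodefsmaller)}$ is one-dimensional. Since $\Puregeom_{\defpatt_\partition}$ and $\Puregeom'_{\defpatt_\partition}$ are both non-zero by hypothesis, this forces $\Puregeom'_{\defpatt_\partition} = c_\partition\,\Puregeom_{\defpatt_\partition}$ for a unique $c_\partition \in \bC \setminus \{0\}$. These constants, one per partition, are thereby fixed for the remainder of the argument.

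For the inductive step I would fix $\linkpatt \in \LP^{(\nodefsmaller)}(\partition)$ with $\ell \geq 1$ links and consider the difference $w := \Puregeom'_{\linkpatt} - c_\partition\,\Puregeom_{\linkpatt}$, which lies in $\HWsp_\multii^{(\nodefsmaller)}$ by~\eqref{eq: cartan eigenvalue}~--~\eqref{eq: highest weight vector}. The goal is to show that all of its projections $\widetilde{\pi}_j^{(\projdmn)}(w)$ vanish, so that Lemma~\ref{lem: all projections vanish gives zero} (applicable since $n = 2\ell + \nodef > \nodef$) forces $w = 0$. Applying~\eqref{eq: projection conditions} to each collection, the case of fewer than $m$ links $\link{j}{j+1}$ gives $\widetilde{\pi}_j^{(\projdmn)}(w) = 0$ immediately, while in the remaining case
\begin{align*}
\widetilde{\pi}_j^{(\projdmn)}(w) =
\frac{1}{\constantfromdiagram{m}{s_j}{s_{j+1}}}
\left( \Puregeom'_{\hat{\linkpatt}} - c_\partition\,\Puregeom_{\hat{\linkpatt}} \right),
\end{align*}
where $\hat{\linkpatt} = \linkpatt \removeLink (m\times\link{j}{j+1})$ has $\ell - m < \ell$ links, so the inductive hypothesis applies to $\hat{\linkpatt}$ and makes this expression vanish.

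The main point requiring care --- and essentially the only obstacle --- is ensuring that the constant furnished by the inductive hypothesis for $\hat{\linkpatt}$ is the same $c_\partition$. For this I would verify that link removal never changes the defect partition: an index is deleted only when its valence drops to zero, and such an index carries no defects, so $\partition(\hat{\linkpatt}) = \partition(\linkpatt) = \partition$ throughout. With this in hand, the inductive hypothesis reads $\Puregeom'_{\hat{\linkpatt}} = c_\partition\,\Puregeom_{\hat{\linkpatt}}$ with the very same constant, closing the induction and yielding the claim.
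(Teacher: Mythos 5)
Your proof is correct and follows essentially the same route as the paper's: both settle the base case by the one-dimensionality of $\HWsp_\partition^{(\nodefsmaller)}$ and then apply Lemma~\ref{lem: all projections vanish gives zero} to the difference $\Puregeom'_{\linkpatt} - c_\partition\,\Puregeom_{\linkpatt}$ in an induction (yours on the number of links $\ell$, the paper's on $n=\sum_i s_i$, which is equivalent since $n = 2\ell + \nodef$ with $\nodef$ fixed). Your explicit verification that link removal preserves the defect partition, so that the same constant $c_\partition$ is inherited, is a detail the paper leaves implicit and you handle it correctly.
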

\begin{proof}
Fix a partition $\partition$ of $s$. By assumption, we have
$\Puregeom_{\defpatt_\partition}' = c_\partition\,\Puregeom_{\defpatt_\partition}$,
for some $c_\partition\in\bC\setminus\{0\}$, because the vectors
$\Puregeom_{\defpatt_\partition}'$ and $\Puregeom_{\defpatt_\partition}$
belong to a one-dimensional space 
(see Remark~\ref{rem: one dimensionality of partition space}).
Suppose then that the condition
$\Puregeom' = c_\partition\,\Puregeom_\tau$ holds for all 
$\tau \in \LP^{(s)}(\partition) \cap \LP_\varrho^{(s)}$ 
for which the multiindex $\varrho = (r_1,\ldots,r_t)$ satisfies 
$\sum_{i=1}^t r_i = n \geq s$. Then, for any 
$\linkpatt \in \LP^{(s)}(\partition) \cap \LP_\multii^{(s)}$ with $\multii = (s_1,\ldots,s_p)$ such that
$\sum_{i=1}^p s_i = n+1$, the 
equations~\eqref{eq: cartan eigenvalue}~--~\eqref{eq: projection conditions}
for $\Puregeom_\linkpatt'$ and $c_\partition\,\Puregeom_\linkpatt$ coincide.
It thus follows from Lemma~\ref{lem: all projections vanish gives zero} that
we have $\Puregeom_\linkpatt' = c_\partition\,\Puregeom_\linkpatt$,
for all $\linkpatt \in \LP^{(s)}(\partition) \cap \LP_\multii^{(s)}$.
The assertion then follows by induction on $n$.
\end{proof}

\subsection{\label{subsec: proof}Proof of Theorem~\ref{thm: highest weight vector space basis vectors}}

For $\linkpatt\in\LP^{(s)}$, the vectors 
$\Puregeom_{\linkpatt}$ defined in Equation~\eqref{eq: basis vector construction}
are solutions to~\eqref{eq: cartan eigenvalue}~--~\eqref{eq: projection conditions};
see Proposition~\ref{prop: basis of highest weight vector space}
for the conditions \eqref{eq: cartan eigenvalue}
and \eqref{eq: highest weight vector}, and 
Proposition~\ref{prop: commutative diagram} for~\eqref{eq: projection conditions}.
By Lemma~\ref{lem: images of rainbow vectors}, 
the vectors $\Puregeom_{\defpatt_\partition}$ satisfy the asserted 
normalization~\eqref{eq: normalization of basis vectors},
for all partitions $\partition$ of $s$.
Uniqueness of the solutions follows from Proposition~\ref{prop: uniqueness}. 
This proves part (a). Part (b) follows from 
Proposition~\ref{prop: basis of highest weight vector space}.
This concludes the proof of
Theorem~\ref{thm: highest weight vector space basis vectors}. $\hfill\qed$

\bigskip{}
\section{\label{sec: cyclic permutation symmetry}Cyclic permutation symmetry of the basis vectors}

Next, we derive a further property of the basis vectors 
$\Puregeom_\linkpatt$, also very natural in terms of 
the link patterns $\linkpatt$. This is a symmetry property under cyclic 
permutations of the tensor components $\Wd_{d_i}$ in the trivial subrepresentation
$\HWsp_\multii^{(0)} \subset \bigotimes_{i=1}^{p}\Wd_{d_{i}}$.
We show in Corollary~\ref{cor: cyclic symmetry} that under such a cyclic 
permutation, the vectors $\Puregeom_\linkpatt \in \HWsp_\multii^{(0)}$, with
$\multii = (s_1,\ldots,s_p)$, are mapped to constant multiples of similar 
vectors $\Puregeom_{\linkpatt'} \in \HWsp^{(0)}_{\multii'}$, 
where the link pattern $\linkpatt' \in \LP^{(0)}_{\multii'}$
is obtained by applying the combinatorial bijections $\Rpluscomb$ and 
$\Rminuscomb$ of Section~\ref{subsec: combinatorial maps}, 
so that we either have $\multii' = (s_p,s_1,s_2,\ldots,s_{p-1})$ or 
$\multii' = (s_2,s_3,\ldots,s_p,s_1)$, depending on the orientation of 
the permutation --- see Figure~\ref{fig: cyclic permutation} for 
an illustration of the former case. From this property, it also follows 
(Corollary~\ref{cor: iterate of Smap is constant multiple of identity})
that the $p$:th iterate of the cyclic permutation of the tensor components is 
a constant multiple of the identity map on $\HWsp_\multii^{(0)}$.

\subsection{\label{subsec: two infinities}Placing tensor components at infinity}

We now consider the linear isomorphism $R_+$
defined in Section~\ref{subsec: normalization}, and a similar linear isomorphism
\begin{align}
R_- = R_-^{(s)} \;\colon\; \HWsp_{(s,\multii)}^{(0)} \to
\HWsp_{\multii}^{(s)}, \qquad \qquad 
R_-^{(s)}(v) := \tau_0^-,
\end{align}
where $\tau_0^-\in\HWsp_{\multii}^{(s)}$ is the unique vector such that we have
\[v=\sum_{l=0}^{s}(-1)^{s-l}q^{(l-1)(s-l)} \times ( F^{s-l}.\tau_0^-\tens \Wbas_{l}^{(d)}), \]
with $d = s + 1$; see \cite[Lemma~5.3]{Kytola-Peltola:Conformally_covariant_boundary_correlation_functions_with_quantum_group}
(compare also with $\Rpluscomb$ and $\Rminuscomb$ in 
Section~\ref{subsec: combinatorial maps}).

We also define the composed map $S = R_-^{-1} \circ R_+$, permuting the tensor 
components cyclically, 
\begin{align}\label{eq: Smap}
S = S^{(s)} \;\colon\; \HWsp_{(\multii,s)}^{(0)} \to \HWsp_{(s,\multii)}^{(0)}.
\end{align}
Iterating the map $S$, we define a linear map on $\HWsp_{\multii}^{(0)}$,
with $\multii = (s_1,\ldots,s_p)$,
\begin{align}\label{eq: iteration of Smap}
S^{(s_1)}\circ S^{(s_2)}\circ\cdots\circ S^{(s_{p-1})}\circ S^{(s_p)} \;\colon\;
\HWsp_{\multii}^{(0)} \to \HWsp_{\multii}^{(0)}.
\end{align}

Analogously to the map~\eqref{eq: Smap}, we denote the composition of the maps
defined in Section~\ref{subsec: combinatorial maps} by
\begin{align}\label{eq: combinatorial Smap}
\Scomb = \Scomb^{(s)}  \;\colon\; \LP^{(0)}_{(\multii,s)} 
\to \LP^{(0)}_{(s,\multii)},\qquad
\Scomb := \Rminuscomb^{-1} \circ \Rpluscomb.
\end{align}
When $\multii = (s_1,\ldots,s_p)$, the link pattern $\Scomb(\linkpatt)$ is 
obtained from $\linkpatt\in\LP^{(0)}_{(\multii,s)}$ 
by moving the rightmost index $p+1$ of $\linkpatt$ (with valence $s$)
to the left of all others, and relabeling the indices from left to right 
by $1,2,3,\ldots,p+1$. This is illustrated in Figure~\ref{fig: cyclic permutation}.

\begin{figure}
\includegraphics[scale=.75]{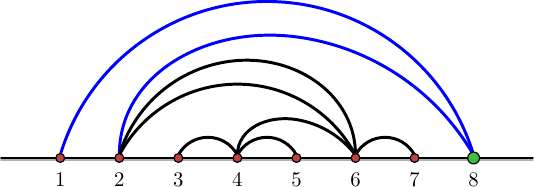}
\quad \; \raisebox{1em}{\Large $\mapsto$}  \quad \;
\includegraphics[scale=.75]{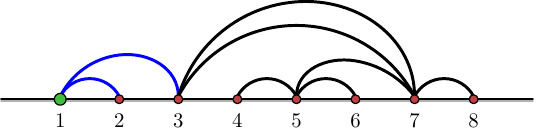}
\caption{\label{fig: cyclic permutation}
In the cyclic permutation $\Scomb = \Rminuscomb^{-1} \circ \Rpluscomb$, 
the rightmost point $p+1 = 8$ 
is moved to the left of all others, and the points are relabeled 
by $1,2,\ldots,8$.
}
\end{figure}

\subsection{\label{subsec: cyclic permutations}Cyclic permutations of tensor components}

In Section~\ref{subsec: construction}, we constructed the vectors
$\Puregeom_{\linkpatt}$ using the map $R_+$, 
see~\eqref{eq: basis vector construction}.
It follows from Proposition~\ref{prop: uniqueness} below that the construction 
could have been established as well using the map $R_-$ instead, only 
changing the normalization~\eqref{eq: normalization of basis vectors} of $\Puregeom_{\linkpatt}$.

Let $N$ be chosen as in~\eqref{eq: particular choice of N} and recall the map $\linkpatt \mapsto \alpha(\linkpatt)$ 
from Section~\ref{subsec: combinatorial maps}, illustrated in Figure~\ref{fig: open up map}.
For convenience, we denote the $s$:th iterate of the map
$\Scomb^{(1)} \colon \PP_N \to \PP_N$ by
$\Scomb^{\circ s} = \Scomb^{(1)} \circ \cdots \circ \Scomb^{(1)}$.
We then define, for any $\linkpatt \in \LP^{(s)}_{\multii}$,
the following vectors 
(compare with Equation~\eqref{eq: basis vector construction}):
\[ \Puregeom_{\linkpatt}' := R_-^{(s)}
\left(\Projectionhat^{(s,\multii)}(\Puregeomtwodim_{\alpha'(\linkpatt)})\right), \qquad \qquad
\text{where} \qquad \qquad 
\alpha'(\linkpatt) = \Scomb^{\circ s}\big(\alpha(\linkpatt)\big) .\]

\begin{prop}\label{prop: relation of maps R+ and R-}
We have
$\Puregeom_{\linkpatt}' = (-q)^{s} \, \Puregeom_{\linkpatt}$,
for all $\linkpatt \in \LP^{(s)}$.
\end{prop}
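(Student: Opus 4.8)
The plan is to invoke the uniqueness statement of Proposition~\ref{prop: uniqueness}. Concretely, I would show that the collection $(\Puregeom'_\linkpatt)_{\linkpatt\in\LP^{(\nodefsmaller)}}$ satisfies the \emph{same} defining system~\eqref{eq: cartan eigenvalue}~--~\eqref{eq: projection conditions} as $(\Puregeom_\linkpatt)$, deduce from uniqueness that the two collections agree up to partition-dependent constants $c_\partition$, and finally pin down $c_\partition=(-q)^{\nodefsmaller}$ by evaluating on the defect patterns $\defpatt_\partition$. The conditions~\eqref{eq: cartan eigenvalue}~--~\eqref{eq: highest weight vector} hold for $\Puregeom'_\linkpatt$ at once, since $R_-^{(\nodefsmaller)}$ maps into $\HWsp_\multii^{(\nodefsmaller)}$; the substance is the projection conditions~\eqref{eq: projection conditions}.

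For the projection conditions I would repeat the argument of Section~\ref{subsec: construction} in its mirror image, with $R_+^{(\nodefsmaller)}$ replaced by $R_-^{(\nodefsmaller)}$, with $\Projectionhat^{(\multii,\nodefsmaller)}$ replaced by $\Projectionhat^{(\nodefsmaller,\multii)}$, and with $\alpha(\linkpatt)$ replaced by $\alpha'(\linkpatt)=\Scomb^{\circ\nodefsmaller}(\alpha(\linkpatt))$. The map $R_-^{(\nodefsmaller)}$ commutes with the projections $\widetilde{\pi}_j^{(\projdmn)}$, $j\in\{1,\ldots,p-1\}$ (the mirror of Remark~\ref{rem: Rplus commutes with projections}: the defect block now sits at the opposite end of the tensor product, still away from the positions on which these projections act), and the commutative diagram of Lemma~\ref{lem: commutative diagram} together with the projection properties~\eqref{eq: multiple sle projection conditions} of $\Puregeomtwodim_{\alpha'(\linkpatt)}$ reduces~\eqref{eq: projection conditions} to the combinatorial correspondence between removing $m$ links $\link{j}{j+1}$ from $\linkpatt$ and removing $m$ nested links from $\alpha'(\linkpatt)$, exactly as in Proposition~\ref{prop: commutative diagram}. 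Verifying this last correspondence is the main obstacle: one must check that the cyclic relabeling $\Scomb^{\circ\nodefsmaller}$ does not disturb it. Since $\Scomb^{\circ\nodefsmaller}$ only transports the defect block from one end of the diagram to the other while preserving the planar order of the blocks attached to the real indices $1,\ldots,p$ and their nestings, the relevant nested links are merely shifted in label, and the argument of Proposition~\ref{prop: commutative diagram} carries over.

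Once $(\Puregeom'_\linkpatt)$ is known to solve~\eqref{eq: cartan eigenvalue}~--~\eqref{eq: projection conditions}, Proposition~\ref{prop: uniqueness} produces constants $c_\partition\in\bC\setminus\{0\}$ with $\Puregeom'_\linkpatt=c_\partition\,\Puregeom_\linkpatt$ on each class $\LP^{(\nodefsmaller)}(\partition)$. To evaluate $c_\partition$ I would specialize to $\linkpatt=\defpatt_\partition$. Here $\alpha(\defpatt_\partition)=\nested_{\nodefsmaller}$ by Lemma~\ref{lem: images of rainbow vectors}, and a short computation shows that $\alpha'(\defpatt_\partition)=\Scomb^{\circ\nodefsmaller}(\nested_{\nodefsmaller})=\nested_{\nodefsmaller}$, i.e. the half-rotation of the rainbow on $2\nodefsmaller$ points is again the rainbow. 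Applying $\Projectionhat^{(\nodefsmaller,\multii)}$ to the explicit vector $\Puregeomtwodim_{\nested_{\nodefsmaller}}$ of Proposition~\ref{prop: solution for nested} and then $R_-^{(\nodefsmaller)}$ (matching the resulting sum against the defining expansion of $R_-^{(\nodefsmaller)}$ via the reindexing $l\mapsto\nodefsmaller-l$, which reflects the roles of the two tensor blocks relative to Lemma~\ref{lem: images of rainbow vectors}) gives
\[ \Puregeom'_{\defpatt_\partition}=\frac{1}{(q^{-2}-1)^{\nodefsmaller}}\frac{\qnum{2}^{\nodefsmaller}}{\qfact{\nodef+1}}\times(\Wbas_0^{(s_{|\partition|}+1)}\tens\cdots\tens\Wbas_0^{(s_1+1)}). \]
Comparing with the normalization~\eqref{eq: normalization of basis vectors} and using the identity $\frac{q-q^{-1}}{q^{-2}-1}=-q$, I obtain $c_\partition=(q-q^{-1})^{\nodefsmaller}/(q^{-2}-1)^{\nodefsmaller}=(-q)^{\nodefsmaller}$, which is independent of $\partition$. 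This yields $\Puregeom'_\linkpatt=(-q)^{\nodefsmaller}\,\Puregeom_\linkpatt$ for all $\linkpatt\in\LP^{(\nodefsmaller)}$, as claimed.
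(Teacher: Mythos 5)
Your proposal is correct and follows essentially the same route as the paper: verify that the collection $(\Puregeom'_\linkpatt)$ satisfies the defining system \eqref{eq: cartan eigenvalue}--\eqref{eq: projection conditions} by mirroring the argument of Proposition~\ref{prop: commutative diagram}, invoke the uniqueness statement of Proposition~\ref{prop: uniqueness} to get partition-dependent constants, and then pin these down by the explicit computation on $\defpatt_\partition$ using the invariance $\Scomb^{\circ\nodefsmaller}(\nested_{\nodefsmaller})=\nested_{\nodefsmaller}$ and the identity $(q-q^{-1})/(q^{-2}-1)=-q$. The paper's proof is the same argument, merely stating the projection-condition verification more tersely.
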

\begin{proof}
For any $\linkpatt \in \LP_\multii^{(s)}$,
the vector $\Puregeom_{\linkpatt}'$ belongs to the space 
$\HWsp_\multii^{(s)}$ by construction. Also, similarly as in 
the proof of Proposition~\ref{prop: commutative diagram}, we see that 
the collection 
$\left(\Puregeom_{\linkpatt}'\right)_{\linkpatt\in\LP^{(s)}}$, 
satisfies the equations~\eqref{eq: projection conditions}.
Therefore, $\Puregeom_{\linkpatt}'$ satisfy 
the system~\eqref{eq: cartan eigenvalue}~--~\eqref{eq: projection conditions}
of equations, and it follows from Proposition~\ref{prop: uniqueness} that,
for all partitions $\partition$ of $s$, there are constants 
$c_\partition \in \bC\setminus\{0\}$ such that we have
$\Bdryvec_{\linkpatt}' = c_\partition \Bdryvec_{\linkpatt}$, 
for all $\linkpatt \in \LP^{(s)}(\partition)$.
We evaluate the constants $c_\partition$ by studying the pattern 
$\linkpatt = \defpatt_\partition$ consisting of defects only.

By Lemma~\ref{lem: images of rainbow vectors}, 
for any partition $\partition = (s_1,\ldots,s_{|\partition|})$ of $s$,
we have
\[ \alpha(\defpatt_\partition) = \nested_{s} 
= \Scomb^{\circ s}(\nested_{s}) 
= \Scomb^{\circ s}(\alpha(\defpatt_\partition)) 
=  \alpha'(\defpatt_\partition), \]
where we used the observation that the link pattern $ \nested_{s}$
is invariant under the map $\Scomb^{\circ s}$.
Similarly as in the proof of Lemma~\ref{lem: images of rainbow vectors},
using the formula~\eqref{eq: solution for nested} for the vector 
$\Puregeomtwodim_{\nested_{s}}$, we calculate the action of
the map
$\Projectionhat^{(s,\partition)} 
= \Projectionhat^{\partition} \tens \Projectionhat^{(s)}
= (\Projectionhat^{(s_{|\partition|})}\tens\cdots\tens\Projectionhat^{(s_1)})
\tens \Projectionhat^{(s)}$,
\begin{align*}
\Projectionhat^{(s,\partition)}(\Puregeomtwodim_{\nested_{s}}) = \; &
\frac{1}{(q^{-2}-1)^{s}}\frac{\qnum{2}^{s}}{\qfact{s+1}}\,
\sum_{l=0}^{s}(-1)^lq^{l(s-l-1)}\times
\left(\Projectionhat^{\partition}(\MTbas_l^{(s)}) \tens 
\Projectionhat^{(s)}(\MTbas_{s-l}^{(s)})\right) \\
= \; & \frac{1}{(q^{-2}-1)^{s}}\frac{\qnum{2}^{s}}{\qfact{s+1}}\,
\sum_{l=0}^{s}(-1)^lq^{l(s-l-1)}\times
\left(F^{l}.(\Wbas_0^{(s_{|\partition|}+1)}\tens\cdots\tens\Wbas_0^{(s_1+1)}) \tens \Wbas_{s-l}^{(d)} \right) \\
= \; & \frac{1}{(q^{-2}-1)^{s}}\frac{\qnum{2}^{s}}{\qfact{s+1}}\,
\sum_{l=0}^{s}(-1)^{s-l} q^{(s-l)(l-1)}\times
\left(F^{s-l}.(\Wbas_0^{(s_{|\partition|}+1)}\tens\cdots\tens\Wbas_0^{(s_1+1)}) \tens \Wbas_{l}^{(d)} \right).
\end{align*}
It now follows from the definition of the map $R_-^{(s)}$ 
and Lemma~\ref{lem: images of rainbow vectors} that we have 
\begin{align*}
\Puregeom_{\defpatt_\partition}' = \; &  R_-^{(s)}
\left(\Projectionhat^{(s,\partition)}(\Puregeomtwodim_{\nested_{s}})\right)
= \frac{1}{(q^{-2}-1)^{s}}\frac{\qnum{2}^{s}}{\qfact{s+1}} \times (\Wbas_0^{(s_{|\partition|}+1)}\tens\cdots\tens\Wbas_0^{(s_1+1)})
= (-q)^{s} \, \Puregeom_{\defpatt_\partition},
\end{align*}
so $c_\partition = (-q)^s$, independently of the partition 
$\partition$. This concludes the proof.
\end{proof}

The above observation gives the cyclic permutation symmetry of the basis vectors
$\Puregeom_\linkpatt$ of the trivial subrepresentation $\HWsp_\multii^{(0)}$.
(Note that for $\HWsp_\multii^{(s)}$ with $s \geq 1$, the statement would not make sense.)

\begin{cor}\label{cor: cyclic symmetry}
The vectors 
$\Puregeom_{\linkpatt} \in \HWsp_\multii^{(0)}$ satisfy
\[\Puregeom_{\Scomb(\linkpatt)} 
= (-q)^{s_p} S^{(s_p)}(\Puregeom_{\linkpatt}) .\]
\end{cor}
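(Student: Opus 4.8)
The plan is to reduce the statement to Proposition~\ref{prop: relation of maps R+ and R-} by unwinding the definition $S^{(s_p)} = (R_-^{(s_p)})^{-1}\circ R_+^{(s_p)}$ and then verifying one purely combinatorial identity for the map $\alpha$. Write $\multii = (\hat{\multii},s_p)$ with $\hat{\multii} = (s_1,\ldots,s_{p-1})$, and, since $\linkpatt\in\LP_\multii^{(0)}$ has no defects, set $\mu := \Rpluscomb^{(s_p)}(\linkpatt)\in\LP_{\hat{\multii}}^{(s_p)}$, the $s_p$-defect pattern obtained by turning the $s_p$ links at the rightmost index into defects. The first thing I would establish is that $R_+^{(s_p)}(\Puregeom_\linkpatt) = \Puregeom_\mu$. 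For the defect-free pattern $\linkpatt$ the map $R_+^{(0)}$ in~\eqref{eq: basis vector construction} is the identity, the projection $\Projectionhat^{((\hat{\multii},s_p),0)}$ carries only a trivial extra factor $\Projectionhat^{(0)}$ and hence agrees with $\Projectionhat^{(\hat{\multii},s_p)}$, and $\alpha(\mu) = \alpha(\linkpatt)$ because $\Rpluscomb^{-1}(\mu) = \linkpatt$ and opening up a valence-zero index contributes nothing. Thus $\Puregeom_\linkpatt = \Projectionhat^{(\hat{\multii},s_p)}(\Puregeomtwodim_{\alpha(\linkpatt)})$, and applying $R_+^{(s_p)}$ reproduces exactly the construction formula~\eqref{eq: basis vector construction} for $\Puregeom_\mu$.

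With this in hand, I would invoke Proposition~\ref{prop: relation of maps R+ and R-} for $\nodef = s_p$, which gives $\Puregeom'_\mu = (-q)^{s_p}\Puregeom_\mu$ with $\Puregeom'_\mu = R_-^{(s_p)}\big(\Projectionhat^{(s_p,\hat{\multii})}(\Puregeomtwodim_{\alpha'(\mu)})\big)$ and $\alpha'(\mu) = \Scomb^{\circ s_p}(\alpha(\mu))$. Applying $(R_-^{(s_p)})^{-1}$ and the first step yields
\[ S^{(s_p)}(\Puregeom_\linkpatt) = (R_-^{(s_p)})^{-1}\big(R_+^{(s_p)}(\Puregeom_\linkpatt)\big) = (R_-^{(s_p)})^{-1}(\Puregeom_\mu) = (-q)^{-s_p}\,\Projectionhat^{(s_p,\hat{\multii})}(\Puregeomtwodim_{\alpha'(\mu)}) . \]
On the other hand, $\Scomb(\linkpatt) = \Rminuscomb^{-1}(\mu)\in\LP_{(s_p,\hat{\multii})}^{(0)}$ again has no defects, so by the same reasoning as in the first step $\Puregeom_{\Scomb(\linkpatt)} = \Projectionhat^{(s_p,\hat{\multii})}(\Puregeomtwodim_{\alpha(\Scomb(\linkpatt))})$ (both vectors lie in $\HWsp_{(s_p,\hat{\multii})}^{(0)}$). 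Thus the whole statement reduces to the combinatorial identity $\alpha(\Scomb(\linkpatt)) = \alpha'(\mu)$.

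It then remains to check this identity at the level of planar pair partitions. Since both $\Scomb(\linkpatt) = \Rminuscomb^{-1}(\mu)$ and $\linkpatt = \Rpluscomb^{-1}(\mu)$ are defect-free, I have $\alpha(\Scomb(\linkpatt)) = \Embeddingcomb(\Rminuscomb^{-1}(\mu))$ and $\alpha(\mu) = \Embeddingcomb(\Rpluscomb^{-1}(\mu))$, so the goal becomes $\Embeddingcomb(\Rminuscomb^{-1}(\mu)) = \Scomb^{\circ s_p}\big(\Embeddingcomb(\Rpluscomb^{-1}(\mu))\big)$. Geometrically, $\Rpluscomb^{-1}$ attaches the $s_p$ defects of $\mu$ as a block to an index on the right while $\Rminuscomb^{-1}$ attaches them on the left; after opening up all indices, the two resulting pair partitions differ precisely by moving the block of $s_p$ rightmost endpoints to the $s_p$ leftmost positions, keeping their relative order and their connectivity to the remaining points. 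As one application of $\Scomb^{(1)}$ moves a single rightmost point to the front, iterating it $s_p$ times realizes exactly this block move, giving $\alpha(\Scomb(\linkpatt)) = \Scomb^{\circ s_p}(\alpha(\mu)) = \alpha'(\mu)$.

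Combining the displays then gives $S^{(s_p)}(\Puregeom_\linkpatt) = (-q)^{-s_p}\Puregeom_{\Scomb(\linkpatt)}$, which rearranges to the asserted $\Puregeom_{\Scomb(\linkpatt)} = (-q)^{s_p} S^{(s_p)}(\Puregeom_\linkpatt)$. I expect the main obstacle to be the combinatorial identity of the last step: although intuitively transparent, it requires careful bookkeeping with the relabeling conventions for $\Rpluscomb$, $\Rminuscomb$, $\Embeddingcomb$, and $\Scomb^{(1)}$ from Section~\ref{subsec: combinatorial maps}, whereas all of the representation-theoretic content is supplied directly by Proposition~\ref{prop: relation of maps R+ and R-}.
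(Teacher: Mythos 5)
Your proof is correct and follows essentially the same route as the paper: both reduce the claim to Proposition~\ref{prop: relation of maps R+ and R-} via the identifications $R_+^{(s_p)}(\Puregeom_\linkpatt)=\Puregeom_{\Rpluscomb(\linkpatt)}$ and $\alpha(\Scomb(\linkpatt))=\Scomb^{\circ s_p}(\alpha(\linkpatt))$. The only difference is that you isolate and sketch a justification of this last combinatorial identity, which the paper's proof uses implicitly without comment.
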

\begin{proof}
By definition, we have 
$\Puregeom_\linkpatt = \Projectionhat^{(\multii)} (\Puregeomtwodim_{\alpha(\linkpatt)})$.
On the other hand, we have 
$\Puregeom_{\Scomb(\linkpatt)} = \Projectionhat^{(\multii')} (\Puregeomtwodim_{\alpha'(\linkpatt)})$,
where $\multii' = (s_p,s_1,\ldots,s_{p-1})$ and 
$\alpha'(\linkpatt) = \Scomb^{\circ s_p}\big(\alpha(\linkpatt)\big)$. 
Proposition~\ref{prop: relation of maps R+ and R-} now gives
\begin{align*}
R_-^{(s_p)} \left( \Puregeom_{\Scomb(\linkpatt)} \right)
= R_-^{(s_p)} \left( \Projectionhat^{(\multii')} 
(\Puregeomtwodim_{\alpha'(\linkpatt)}) \right)
= (-q)^{s_p} R_+^{(s_p)}  \left( \Projectionhat^{(\multii)} 
(\Puregeomtwodim_{\alpha(\linkpatt)}) \right)
= (-q)^{s_p} R_+^{(s_p)}  \left( \Puregeom_\linkpatt \right).
\end{align*}
Applying the map $(R_-^{(s_p)})^{-1}$ to both sides and using 
the definition~\eqref{eq: Smap} of the map $S$ we get
\begin{align*}
\Puregeom_{\Scomb(\linkpatt)}
= (-q)^{s_p} 
\left( (R_-^{(s_p)})^{-1} \circ R_+^{(s_p)} \right)
\left( \Puregeom_\linkpatt \right)
= (-q)^{s_p} S^{(s_p)} \left( \Puregeom_\linkpatt \right).
\end{align*}
\end{proof}

\begin{cor}\label{cor: iterate of Smap is constant multiple of identity}
The composed map~\eqref{eq: iteration of Smap}
is a constant multiple of the identity: we have
\[ S^{(s_1)}\circ S^{(s_2)}\circ\cdots\circ S^{(s_{p-1})}\circ S^{(s_p)} 
= \left(\prod_{i=1}^p (-q)^{-s_i}\right)\times \id_{\HWsp_\multii^{(0)}} .\]
\end{cor}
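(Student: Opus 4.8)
The plan is to evaluate the composed map in~\eqref{eq: iteration of Smap} on each basis vector $\Puregeom_\linkpatt$ of $\HWsp_\multii^{(0)}$, using the cyclic symmetry of Corollary~\ref{cor: cyclic symmetry} together with the combinatorial fact that the map $\Scomb$ has order $p$. Fix $\multii = (s_1,\ldots,s_p)$ and a link pattern $\linkpatt \in \LP^{(0)}_\multii$. Reading Corollary~\ref{cor: cyclic symmetry} as $S^{(s_p)}(\Puregeom_\linkpatt) = (-q)^{-s_p}\,\Puregeom_{\Scomb(\linkpatt)}$, I would apply the $p$ factors of~\eqref{eq: iteration of Smap} one at a time, beginning with the rightmost factor $S^{(s_p)}$.

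First I would track the valence sequence through the iteration. After applying $S^{(s_p)}$, the image $\Puregeom_{\Scomb(\linkpatt)}$ lies in $\HWsp^{(0)}_{(s_p,s_1,\ldots,s_{p-1})}$, whose last valence is $s_{p-1}$; thus the next factor $S^{(s_{p-1})}$ is precisely the one to which Corollary~\ref{cor: cyclic symmetry} applies, and it produces $(-q)^{-s_{p-1}}\,\Puregeom_{\Scomb^{\circ 2}(\linkpatt)}$. Continuing inductively, at the $k$-th step (counting from the right) the rightmost valence equals $s_{p-k+1}$, the applied map is $S^{(s_{p-k+1})}$, and the vector advances to $\Puregeom_{\Scomb^{\circ k}(\linkpatt)}$ while a scalar $(-q)^{-s_{p-k+1}}$ is collected. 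After all $p$ steps this gives
\begin{align*}
\left(S^{(s_1)}\circ\cdots\circ S^{(s_p)}\right)(\Puregeom_\linkpatt)
= \left(\prod_{i=1}^p (-q)^{-s_i}\right) \Puregeom_{\Scomb^{\circ p}(\linkpatt)}.
\end{align*}

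The remaining input is the combinatorial identity $\Scomb^{\circ p} = \id$ on $\LP^{(0)}_\multii$. I would verify this directly from the definition $\Scomb = \Rminuscomb^{-1}\circ\Rpluscomb$ in~\eqref{eq: combinatorial Smap}: the map $\Scomb$ detaches the rightmost index together with its links (turning them into defects via $\Rpluscomb$) and reattaches them as a new leftmost index (via $\Rminuscomb^{-1}$), relabelling the points $1,\ldots,p$. This is exactly a one-step cyclic rotation of the $p$ marked points on the boundary of the disk $\bH \cup \{\infty\}$; iterating it $p$ times returns every index, and its attached links, to the original position, so $\Scomb^{\circ p}(\linkpatt) = \linkpatt$. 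Substituting this back yields $\left(S^{(s_1)}\circ\cdots\circ S^{(s_p)}\right)(\Puregeom_\linkpatt) = \left(\prod_{i=1}^p (-q)^{-s_i}\right)\Puregeom_\linkpatt$.

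Finally, since the vectors $(\Puregeom_\linkpatt)_{\linkpatt\in\LP^{(0)}_\multii}$ form a basis of $\HWsp_\multii^{(0)}$ by Theorem~\ref{thm: highest weight vector space basis vectors}(b), and the composed map acts as multiplication by the same scalar $\prod_{i=1}^p (-q)^{-s_i}$ on each of them, the map is that scalar multiple of the identity, as claimed. The only genuinely nontrivial point is the order-$p$ identity for $\Scomb$; the bookkeeping of valence sequences, ensuring that at each step Corollary~\ref{cor: cyclic symmetry} is invoked with the correct superscript, is where one must be careful, but it becomes routine once the cyclic-shift pattern of the valences is made explicit.
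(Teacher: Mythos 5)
Your proposal is correct and follows essentially the same route as the paper: iterate Corollary~\ref{cor: cyclic symmetry} on each basis vector $\Puregeom_\linkpatt$, observe that the $p$-fold composition of the combinatorial cyclic shifts returns $\linkpatt$ to itself, and conclude by the basis property from Theorem~\ref{thm: highest weight vector space basis vectors}(b). The paper's proof is just a terser version of yours, leaving the valence bookkeeping and the order-$p$ identity for $\Scomb$ implicit.
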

\begin{proof}
By Theorem~\ref{thm: highest weight vector space basis vectors},
the vectors $\Puregeom_\linkpatt$ with $\linkpatt \in \LP_\multii^{(0)}$
form a basis of the trivial subrepresentation $\HWsp_\multii^{(0)}$.
The assertion follows by iterating Corollary~\ref{cor: cyclic symmetry}
for each basis vector $\Puregeom_\linkpatt$ 
and noticing that we have
\[ \Scomb^{(s_1)}\circ \Scomb^{(s_2)}\circ\cdots\circ \Scomb^{(s_{p-1})}\circ \Scomb^{(s_p)} (\linkpatt) = \linkpatt .\]
\end{proof}

\bigskip{}
\section{\label{sec: basis functions}Solutions to the Benoit~\& Saint-Aubin  PDEs with particular asymptotics properties}

Now we construct solutions to partial differential equations 
of Benoit~\& Saint-Aubin type~\cite{BSA:Degenerate_CFTs_and_explicit_expressions_for_some_null_vectors}.
These PDEs have been well-known in CFT for many decades.
From statistical physics point of view, scaling limits of correlations in critical models 
have been observed to satisfy this type of PDEs, 
in, e.g.,~\cite{BPZ:Infinite_conformal_symmetry_of_critical_fluctuations_in_2D,
Cardy:Critical_percolation_in_finite_geometries, 
Watts:A_crossing_probability_for_critical_percolation_in_two_dimensions,
Bauer-Bernard:Conformal_field_theories_of_SLEs,
Gamsa-Cardy:The_scaling_limit_of_two_cluster_boundaries_in_critical_lattice_models}, 
with a few rigorous results now 
established too~\cite{Dubedat:Excursion_decomposition_for_SLE_and_Watts_crossing_formula,
Sheffield-Wilson:Schramms_proof_of_Watts_formula,
BJV:Some_remarks_on_SLE_bubbles_and_Schramms_2point_observable,
KKP:Conformal_blocks_pure_partition_functions_and_KW_binary_relation,
LV:Coulomb_gas_for_commuting_SLEs, LV:Coulomb_gas_integrals_PART2,
PW:Global_multiple_SLEs_and_pure_partition_functions}.
Solutions to 
such PDEs have also been associated with of random curves,
in, e.g.,~\cite{Friedrich-Werner:Conformal_restriction_highest_weight_representations_and_SLE,
Kontsevich:CFT_SLE_and_phase_boundaries,
Friedrich-Kalkkinen:On_CFT_and_SLE,
BBK:Multiple_SLEs_and_statistical_mechanics_martingales,
Dubedat:Commutation_relations_for_SLE,
Kontsevich-Suhov:On_Malliavin_measures_SLE_and_CFT,
Dubedat:SLE_and_Virasoro_representations_fusion,
Kytola-Peltola:Pure_partition_functions_of_multiple_SLEs,
PW:Global_multiple_SLEs_and_pure_partition_functions}.

The main result of this article is the construction 
of particular solutions to these PDEs, with specific asymptotic
boundary conditions, given in Theorem~\ref{thm: asymptotic properties of general basis vectors}.
Such asymptotics can be thought of as specifying the fusion channels
if the solutions are thought of as CFT correlation functions. 
In terms of random curves, this corresponds to coalescing the 
starting points of the curves.

As the main tool in our construction, we use the quantum group method 
developed in the article~\cite{Kytola-Peltola:Conformally_covariant_boundary_correlation_functions_with_quantum_group}
and summarized in Theorem~\ref{thm: SCCG correspondence},
together with the results obtained in Section~\ref{sec: basis vectors}. 
The basis functions $\BasisF_\linkpatt$ of our main Theorem~\ref{thm: asymptotic properties of general basis vectors}
are constructed from the vectors $\Puregeom_\linkpatt$ of 
Theorem~\ref{thm: highest weight vector space basis vectors} as
\[ \BasisF_\linkpatt (x_1, \ldots, x_p) = \sF[\Puregeom_\linkpatt] (x_1, \ldots, x_p) , \] 
where $\sF$ denotes a map from the highest weight vector space
$\HWsp_\multii^{(s)}$ to the solution space of the PDEs.
In this map, the projection 
properties~\eqref{eq: projection conditions} of the vectors 
$\Puregeom_\linkpatt$ correspond with the required asymptotics properties of 
the basis functions, as stated explicitly in Theorems~\ref{thm: SCCG correspondence} 
and~\ref{thm: asymptotic properties of general basis vectors}.

In Lemma~\ref{lem: limits at same rate} and Propositions~\ref{prop: strong limits} and~\ref{prop: cyclicity of limits},
we prove additional properties of the basis functions $\BasisF_\linkpatt$,
concerning asymptotics when taking several variables together simultaneously,
or taking a variable to infinity. 
These properties are needed in further applications,
e.g., in Section~\ref{sec: multiple SLEs}
and~\cite{Flores-Peltola:Solution_space_of_BSA_PDEs}.

\subsection{\label{subsec: SCCG correspondence}Solutions to the Benoit~\& Saint-Aubin PDEs}

Fix a parameter $\kappa > 0$. 
Given a multiindex $\multii = (s_1,\ldots,s_p) \in \bZpos^p$, we use the notations
of~\eqref{eq: s in terms of d}~and~\eqref{eq: definition of n and s}
throughout. We also denote
\begin{align*}
h_{1,d} :=\; & \frac{(d-1)(2(d+1)-\kappa)}{2\kappa} \qquad \qquad \text{ and }
\qquad \qquad 
\Delta^{d_1,\ldots,d_p}_d := h_{1,d}-\sum_{i=1}^{p} h_{1,d_{i}}.
\end{align*}
For fixed $j \in \{1,2,\ldots,p\}$, 
the Benoit~\& Saint-Aubin  partial differential operators
\begin{align}\label{eq: BSA differential operator}
\sD_{d_{j}}^{(j)} 
:= \; & \sum_{k=1}^{d_{j}}\sum_{\substack{n_{1},\ldots,n_{k}\geq1\\
n_{1}+\ldots+n_{k}=d_{j}}}
\frac{(-4/\kappa)^{d_{j}-k}\,(d_{j}-1)!^{2}}{\prod_{l=1}^{k-1}(\sum_{i=1}^{l}n_{i})(\sum_{i=l+1}^{k}n_{i})}\times\sL_{-n_{1}}^{(j)}\cdots
\sL_{-n_{k}}^{(j)},
\end{align}
homogeneous of order $d_j$, 
are defined in terms of the first order differential operators\footnote{
The operators $\sL_{m}^{(j)} $ are related to the generators of the Virasoro algebra~\cite{DMS:CFT,
BDIZ:Covariant_differential_equations_and_singular_vectors_in_Virasoro_representations},
and the formulas~\eqref{eq: BSA differential operator} are obtained from 
the similar formulas for singular vectors in representations of 
the Virasoro algebra found by L.~Benoit and Y.~Saint-Aubin in
\cite{BSA:Degenerate_CFTs_and_explicit_expressions_for_some_null_vectors}.
}
\begin{align*}
\sL_{m}^{(j)} 
:= \; & -\sum_{i\neq j}\left((x_{i}-x_{j})^{1+m}\pder{x_{i}}+(1+m)\, h_{1,d_{i}}\,(x_{i}-x_{j})^{m}\right).
\end{align*}

We are interested in solutions $\BasisF \colon \chamber_p \to \bC$
to the PDE system
\begin{align}\label{eq: BSA differential equations}
\tag{PDE}
\sD_{d_j}^{(j)} \BasisF(x_1,\ldots,x_p) = 0 , \qquad \text{ for all } j \in \{1,2,\ldots,p\},
\end{align}
defined on the chamber domain
\begin{align}\label{eq: chamber}
\chamber_p := \; & 
\big\{(x_{1},\ldots,x_p)\in\bR^{p}\;\big|\; x_{1}<\cdots<x_p\big\}.
\end{align}

We very briefly summarize the method of~\cite{Kytola-Peltola:Conformally_covariant_boundary_correlation_functions_with_quantum_group} 
for constructing solutions to the Benoit~\& Saint-Aubin PDE 
systems~\eqref{eq: BSA differential equations}. For details about this method, we refer to Sections~3 and~4 in the 
article~\cite{Kytola-Peltola:Conformally_covariant_boundary_correlation_functions_with_quantum_group}.
The idea is to construct solutions in terms of Dotsenko-Fateev (Feigin-Fuchs) integrals
\cite{Dotsenko-Fateev:Conformal_algebra_and_multipoint_correlation_functions_in_2D_statistical_models},
which appear in the Coulomb gas formalism of CFT. The solutions are of the form
\begin{align}\label{eq: ansatz for solution to PDEs}
F(\boldsymbol{x}) = \int_\Gamma f^{(\ell)}(\boldsymbol{x};\boldsymbol{w}) \ud w_1 \cdots \ud w_\ell,
\end{align}
with $\boldsymbol{w} = (w_1, \ldots, w_\ell)$, defined for $\boldsymbol{x} = (x_1,\ldots,x_p) \in \chamber_p$ as follows.
First, 
the integrand $f^{(\ell)}(\boldsymbol{x};\boldsymbol{w})$ is a branch of the following multivalued function, 
a product of powers of differences,
\begin{align} \label{eq: integrand}
f^{(\ell)}(\boldsymbol{x};\boldsymbol{w}) 
=\; & \prod_{1\leq i<j\leq p}(x_{j}-x_{i})^{\frac{2}{\kappa} s_i s_j}
\prod_{\substack{1\leq i\leq p\\1\leq r\leq\ell}}
(w_{r}-x_{i})^{-\frac{4}{\kappa} s_i}
\prod_{1\leq r<s\leq\ell}(w_{s}-w_{r})^{\frac{8}{\kappa}},
\end{align}
with parameters $s_i\in\bZnn$, for $i=1,\ldots,p$, and $\kappa > 0$, and $\ell \in \bZnn$.
Second, the integration contours $\Gamma$ are closed
$\ell$-surfaces which can be written as linear combinations of surfaces
corresponding to the natural basis 
$\{ \Wbas_{l_p}^{(d_p)} \tens \cdots \tens  \Wbas_{l_{2}}^{(d_{2})} \tens \Wbas_{l_1}^{(d_1)} \}$
of the tensor product representation~\eqref{eq: order of tensorands}
of the quantum group $\Uqsltwo$, with dimensions $d_i$ of the tensorands related to the parameters $s_i$ as in~\eqref{eq: s in terms of d},
and with $\ell = \sum_{i=1}^p l_i$.
For the detailed relation, see Figure~\ref{fig: basis} and 
\cite[Sections~3.3 and~4.1]{Kytola-Peltola:Conformally_covariant_boundary_correlation_functions_with_quantum_group}. 
In the figure, an auxiliary point $x_0$ appears; 
however by \cite[Proposition~4.5]{Kytola-Peltola:Conformally_covariant_boundary_correlation_functions_with_quantum_group}, 
the functions $\BasisF_\linkpatt$ constructed in this article do not depend on $x_0$.

The relation of vectors in the tensor product representation~\eqref{eq: order of tensorands} and functions of 
type~\eqref{eq: ansatz for solution to PDEs} is called in~\cite{Kytola-Peltola:Conformally_covariant_boundary_correlation_functions_with_quantum_group} 
``the spin chain -- Coulomb gas correspondence'' $\sF$. We state its main features in Theorem~\ref{thm: SCCG correspondence} below. 
We restrict our attention to the space $\HWsp_\multii^{(s)}$ of highest weight vectors, 
because these are the vectors that yield solutions to~\eqref{eq: BSA differential equations}. 
We will prove in~\cite{Flores-Peltola:Solution_space_of_BSA_PDEs} that $\sF$ is in fact injective on $\HWsp_\multii^{(s)}$
when $q$ is not a root of unity.

\begin{thm}{\cite[Theorem~4.17]{Kytola-Peltola:Conformally_covariant_boundary_correlation_functions_with_quantum_group}}
\label{thm: SCCG correspondence}
Let $\kappa \in (0,\infty) \setminus \bQ$ and $q = e^{\ii \pi 4 / \kappa}$, and  
$s \in \bZnn$. There exist linear maps 
$\sF \colon \HWsp_\multii^{(s)} \to \sC^\infty(\chamber_{p})$,
for all $\multii \in \bZpos^p$, such that the following holds for any 
$v\in \HWsp_\multii^{(s)}$.

\begin{description}
\item [{(PDE)}] The function $\sF[v] \colon \chamber_{p} \to \bC$ satisfies 
the system~\eqref{eq: BSA differential equations} of partial differential equations.
\item [{(COV)}] The function $\sF[v]$ is translation invariant and 
homogeneous of degree $\Delta = \Delta^{d_1,\ldots,d_p}_d:$
\begin{align}\label{eq: scaling covariance}
\sF[v](\lambda x_1 + \xi,\ldots,\lambda x_p + \xi)
= \lambda^{\Delta} \times \sF[v](x_1,\ldots,x_p) ,
\end{align}
for any $\xi \in \bR$ and $\lambda \in \bRpos$.
Moreover, if $s = 0$, then $\sF[v]$ satisfies the following 
covariance property under any M\"obius transformation $\Mob\colon\bH\to\bH$ 
such that $\Mob(x_{1})<\Mob(x_{2})<\cdots<\Mob(x_{p}):$
\begin{align}\label{eq: Mobius covariance}
\sF[v](x_{1},\ldots, x_p) = 
\prod_{i=1}^{p}\Mob'(x_{i})^{h_{1,d_{i}}} \times 
\sF[v] \left(\Mob(x_{1}),\ldots,\Mob(x_p)\right) .
\end{align}
\item [{(ASY)}] 
Let $j \in \{1,2,\ldots,p-1\}$ and
$m = \frac{1}{2}\left(d_{j}+d_{j+1}-\projdmn-1\right) 
\in \{0,1,\ldots,\min(d_{j},d_{j+1})-1 \}$, 
and suppose that we have $\pi_{j}^{(\projdmn)}(v) = v$.
Then, the function $\sF[v] \colon \chamber_{p} \to \bC$ has the asymptotics
\begin{align*}
\lim_{x_j,x_{j+1}\to\xi}
\frac{\sF[v](x_1,\ldots,x_{p})}{|x_{j+1}-x_j|^{\Delta^{d_j,d_{j+1}}_{\projdmn}}} 
= \; & B^{d_j,d_{j+1}}_{\projdmn} \times 
\sF[\hat{\pi}_{j}^{(\projdmn)}(v)]
(x_1,\ldots,x_{j-1},\xi,x_{j+2}\ldots,x_{p}),
\end{align*}
for any $\xi \in (x_{j-1},x_{j+2})$, where $\sF[1] = 1$ in the case $p=2$, and
the multiplicative constant is 
\begin{align}\label{eq: Selberg integral}
B^{d_j,d_{j+1}}_{\projdmn} = \; & \frac{1}{m!} \; \prod_{u=1}^m
    \frac{\Gamma\big( 1 - \frac{4}{\kappa}(d_j-u)\big) \; \Gamma\big( 1 - \frac{4}{\kappa}(d_{j+1}-u)\big) \; \Gamma\big( 1 + \frac{4}{\kappa}u \big)}
        {\Gamma\big( 1 + \frac{4}{\kappa}\big) \; \Gamma\big( 2 - \frac{4}{\kappa}(d_j+d_{j+1}-m-u)\big)}.
\end{align}
\end{description}
\end{thm}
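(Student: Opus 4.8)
The plan is to realize $\sF$ via Dotsenko--Fateev (Coulomb gas) integrals and then read off the three properties from the structure of these integrals, in the spirit of the Feigin--Fuchs approach to CFT correlators. Writing $n=\sum_i s_i$ and $\nodef=n-2\ell$, a highest weight vector in $\HWsp_\multii^{(\nodefsmaller)}$ has exactly $\ell$ units of ``$F$-weight'' to distribute, and this $\ell$ will be the number of screening variables. First I would fix the master integrand
\[\mathcal{J}(x;w)=\prod_{1\le i<i'\le p}(x_{i'}-x_i)^{2s_is_{i'}/\kappa}\;\prod_{i=1}^p\prod_{\gamma=1}^{\ell}(x_i-w_\gamma)^{-4s_i/\kappa}\;\prod_{1\le\gamma<\gamma'\le\ell}(w_{\gamma'}-w_\gamma)^{8/\kappa},\]
in which the vertex at $x_i$ carries the charge of the weight $h_{1,d_i}$ and each $w_\gamma$ carries a screening charge. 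To each tensor monomial $\Wbas_{l_p}\tens\cdots\tens\Wbas_{l_1}$ with $\sum_i l_i=\ell$ I would associate an integration cycle $\Gamma_{\vec{l}}$ tying $l_i$ of the screening variables to the point $x_i$, and define $\sF[v]$ on $v=\sum_{\vec{l}}c_{\vec{l}}\,\Wbas_{l_p}\tens\cdots\tens\Wbas_{l_1}$ by the corresponding linear combination of integrals $\int_{\Gamma_{\vec{l}}}\mathcal{J}(x;w)\,\ud w_1\cdots\ud w_\ell$. The essential algebraic input, borrowed from the quantum-group/CFT dictionary, is that the generator $F$ of $\Uqsltwo$ acts by inserting one screening contour while $E$ acts as the corresponding ``boundary'' operator on cycles; this is what makes $\sF$ an intertwiner and what the condition $E.v=0$ buys us below.

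Next, for (PDE): each operator $\sD_{d_j}^{(j)}$ is built (from the Benoit--Saint-Aubin singular vector of weight $h_{1,d_j}$) precisely so that $\sD_{d_j}^{(j)}\mathcal{J}(x;w)$ is a total derivative in the screening variables $w_\gamma$. Hence $\sD_{d_j}^{(j)}\sF[v]$ is the integral of an exact form, and it vanishes because the cycles $\Gamma_{\vec{l}}$ attached to a highest weight vector are closed: the boundary terms produced by moving a $w_\gamma$ around are exactly encoded by $E$, so $E.v=0$ forces their cancellation. This is the one place where the highest weight condition is indispensable. For (COV), translation invariance is immediate since $\mathcal{J}$ depends only on the differences $x_{i'}-x_i$, $x_i-w_\gamma$, $w_{\gamma'}-w_\gamma$; homogeneity of degree $\Delta^{d_1,\ldots,d_p}_d$ follows by scaling $x\mapsto\lambda x$, $w\mapsto\lambda w$ and adding the degree of $\mathcal{J}$ to the $\ell$ coming from the measure $\ud w_1\cdots\ud w_\ell$ --- a short charge computation that returns exactly $h_{1,d}-\sum_i h_{1,d_i}$. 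The extra M\"obius covariance when $\nodef=0$ is the neutrality case: the screening charge then exactly balances the vertex charges, so the spurious contribution from the point at infinity drops out and the integral transforms with the conformal factors $\prod_i\Mob'(x_i)^{h_{1,d_i}}$.

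The crux is (ASY), which I would treat by a degeneration analysis as $x_j,x_{j+1}\to\xi$. Setting $\eps=x_{j+1}-x_j$, I split the screening variables into those trapped inside the shrinking window $(x_j,x_{j+1})$ and those staying away; by the exponents in $\mathcal{J}$ the leading contribution comes from configurations with exactly $m=\frac{1}{2}(d_j+d_{j+1}-\projdmn-1)$ trapped variables. Rescaling the trapped variables to the window, the integral factorizes to leading order: the inner integral is a Selberg integral over $m$ variables with the local exponents, evaluating to the constant $B^{d_j,d_{j+1}}_{\projdmn}$ of~\eqref{eq: Selberg integral} times $\eps^{\Delta^{d_j,d_{j+1}}_{\projdmn}}$, while the outer integral is precisely $\sF$ applied to the vector obtained by fusing the $j$th and $(j+1)$st tensorands into the $\projdmn$-dimensional channel, i.e. to $\hat{\pi}_{j}^{(\projdmn)}(v)$.

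The main obstacle is exactly this last step. One must (i) rigorously control the subleading cycles and convergence (working at $\kappa\notin\bQ$ and extending to all $\kappa>0$ by analyticity of $\sF[v]$ in $\kappa$ at the end); (ii) evaluate the inner Selberg integral to obtain $B^{d_j,d_{j+1}}_{\projdmn}$ in closed form; and (iii) match the residual cycle to the quantum-group projection $\hat{\pi}_{j}^{(\projdmn)}$ on the nose rather than up to an unknown scalar, which requires synchronizing the Clebsch--Gordan data (the highest weight vector $\Tbas_0^{(\projdmn;d_j,d_{j+1})}$ of Lemma~\ref{lem: tensor product representations of quantum sl2}) with the Coulomb-gas fusion. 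Once (ASY) is in place the theorem follows.
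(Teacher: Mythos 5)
This theorem is not proved in the present paper at all: it is quoted verbatim from \cite[Theorem~4.17]{Kytola-Peltola:Conformally_covariant_boundary_correlation_functions_with_quantum_group}, and the article deliberately refrains from even defining the map $\sF$, referring to \cite[Section~4.1]{Kytola-Peltola:Conformally_covariant_boundary_correlation_functions_with_quantum_group} for the construction. Your sketch correctly reproduces the Coulomb-gas strategy of that cited source --- screening integrals over cycles indexed by tensor monomials, the Benoit--Saint-Aubin operators acting on the integrand as total derivatives whose boundary contributions are governed by the $E$-action so that $E.v=0$ kills them, charge counting for (COV), and a Selberg-integral degeneration for (ASY) --- so the approach is essentially the same, with the honest caveat that the three obstacles you name at the end (the twisted-homology interpretation of the cycles for non-rational $\kappa$, the closed-form Selberg evaluation giving $B^{d_j,d_{j+1}}_{\projdmn}$, and the exact matching of the residual cycle with $\hat{\pi}_{j}^{(\projdmn)}$ rather than a scalar multiple of it) are precisely where the bulk of the work in the cited reference lies.
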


We record an explicit formula of a special case.

\begin{lem}\label{lem: basis functions with only defects}
For any partition $\partition = (s_1,\ldots,s_{|\partition|})$ 
of $s \in \bZnn$, the image of the vector 
$\Puregeom_{\defpatt_\partition} \in \HWsp_\partition^{(s)}$
has the explicit formula 
\begin{align*}
\sF[\Puregeom_{\defpatt_\partition}] (x_1,\ldots,x_{|\partition|})
= \; & \frac{1}{(q-q^{-1})^{s}}
\frac{\qnum{2}^{s}}{\qfact{s+1}}
\times \prod_{1 \leq i < j \leq |\partition|} (x_j-x_i)^{\frac{2}{\kappa}s_is_j}.
\end{align*}
\end{lem}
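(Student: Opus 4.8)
\emph{Proof plan.} The plan is to use linearity of $\sF$ to strip off the scalar prefactor, and then to recognize $\sF$ of the remaining pure highest weight tensor as a Dotsenko-Fateev integral that carries no screening variables. First I would apply Lemma~\ref{lem: images of rainbow vectors}, which expresses
\[ \Puregeom_{\defpatt_\partition} = \frac{1}{(q-q^{-1})^{\nodefsmaller}}\frac{\qnum{2}^{\nodefsmaller}}{\qfact{\nodef+1}} \times \big(\Wbas_0^{(s_{|\partition|}+1)}\tens\cdots\tens\Wbas_0^{(s_1+1)}\big). \]
Since $\sF$ is linear, this prefactor is exactly the one appearing in the claim, so it suffices to show
\[ \sF\big[\Wbas_0^{(s_{|\partition|}+1)}\tens\cdots\tens\Wbas_0^{(s_1+1)}\big](x_1,\ldots,x_{|\partition|}) = \prod_{1 \leq i < j \leq |\partition|} (x_j-x_i)^{\frac{2}{\kappa}s_is_j}. \]
The tensor $\Wbas_0^{(s_{|\partition|}+1)}\tens\cdots\tens\Wbas_0^{(s_1+1)}$ is annihilated by $E$ and is a $K$-eigenvector of eigenvalue $q^{\nodefsmaller}$ with $\nodefsmaller = \sum_i s_i = n$; in the terminology of Section~\ref{subsec: connectivities} it sits at the extremal weight with $\ell = 0$ links.

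The main tool is the observation that in the spin chain~--~Coulomb gas correspondence of~\cite{Kytola-Peltola:Conformally_covariant_boundary_correlation_functions_with_quantum_group} the number of integration variables in the Dotsenko-Fateev integral defining $\sF[v]$ equals the number $\ell = \frac{1}{2}(n-\nodefsmaller)$ of screening charges. For the pure highest weight tensor this number vanishes, so $\sF$ of it is integral-free and collapses to the product of pairwise Coulomb gas factors $\prod_{i<j}(x_j-x_i)^{\frac{2}{\kappa}s_is_j}$, the overall constant being fixed to $1$ by the normalization of $\sF$ (e.g. the one-variable value $\sF[\Wbas_0^{(d_1)}] = 1$). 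I would read off both the exponent $\frac{2}{\kappa}s_is_j$ and this constant directly from the definition of $\sF$ in~\cite[Section~4.1]{Kytola-Peltola:Conformally_covariant_boundary_correlation_functions_with_quantum_group}. As a consistency check, the degree of the product, $\frac{2}{\kappa}\sum_{i<j}s_is_j = \frac{1}{\kappa}(\nodefsmaller^2 - \sum_i s_i^2)$, matches the homogeneity degree $\Delta^{d_1,\ldots,d_{|\partition|}}_{\nodefsmaller+1} = h_{1,\nodefsmaller+1} - \sum_i h_{1,s_i+1}$ imposed by property (COV) of Theorem~\ref{thm: SCCG correspondence}, and a short computation with $h_{1,d} = \frac{(d-1)(2(d+1)-\kappa)}{2\kappa}$ identifies each exponent with the top-channel fusion weight $\Delta^{d_i,d_j}_{d_i+d_j-1} = \frac{2}{\kappa}s_is_j$.

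A more self-contained alternative is induction on $|\partition|$: the base case $|\partition|=2$ is forced by translation invariance and homogeneity alone, and the inductive step collapses $x_1,x_2\to\xi$, under which the product factors as $(x_2-x_1)^{\frac{2}{\kappa}s_1s_2}$ times the formula for the pattern with the first two points merged into one of valence $s_1+s_2$. The main obstacle here is that property (ASY) as stated in Theorem~\ref{thm: SCCG correspondence} covers only $m \geq 1$, whereas the pure highest weight tensor lives in the top channel $m=0$, for which the constant $B^{d_j,d_{j+1}}_{\projdmn}$ of~\eqref{eq: Selberg integral} degenerates to the empty product $1$. Supplying this $m=0$ asymptotics, together with fixing the global constant to $1$, is precisely the information that must be imported from the explicit construction of $\sF$; for this reason the quickest route is simply to quote the screening-free form of the integral.
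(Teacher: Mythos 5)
Your proposal is correct and matches the paper's own (one-line) proof, which likewise combines the explicit formula~\eqref{eq: normalization of basis vectors} for $\Puregeom_{\defpatt_\partition}$ with the definition of $\sF$ from \cite[Section~4.1]{Kytola-Peltola:Conformally_covariant_boundary_correlation_functions_with_quantum_group}; your observation that the Dotsenko--Fateev integral carries $\ell=0$ screening variables and hence reduces to the bare Coulomb gas product is exactly the content of ``follows immediately from the definition of $\sF$.'' The extra consistency checks and the (correctly flagged as problematic) inductive alternative are fine but not needed.
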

\begin{proof}
The assertion follows immediately from the definition of the correspondence map $\sF$ given in 
\cite[Section~4.1]{Kytola-Peltola:Conformally_covariant_boundary_correlation_functions_with_quantum_group}
and the formula~\eqref{eq: normalization of basis vectors} of 
$\Puregeom_{\defpatt_\partition}$.
\end{proof}

Images of more general vectors $v \in \HWsp_\multii^{(s)}$ 
under the map $\sF$ have a similar form, but we need to integrate over $\ell$
so-called screening variables, as in~\eqref{eq: ansatz for solution to PDEs}, where $\ell$ is the number of links in 
the link patterns in $\LP_\multii^{(s)}$, 
as in~\eqref{eq: definition of n and s}. The integration 
$\ell$-surface is determined by the vector $v$ as explained in 
the article~\cite{Kytola-Peltola:Conformally_covariant_boundary_correlation_functions_with_quantum_group}.

\begin{figure}
\includegraphics[scale=.28]{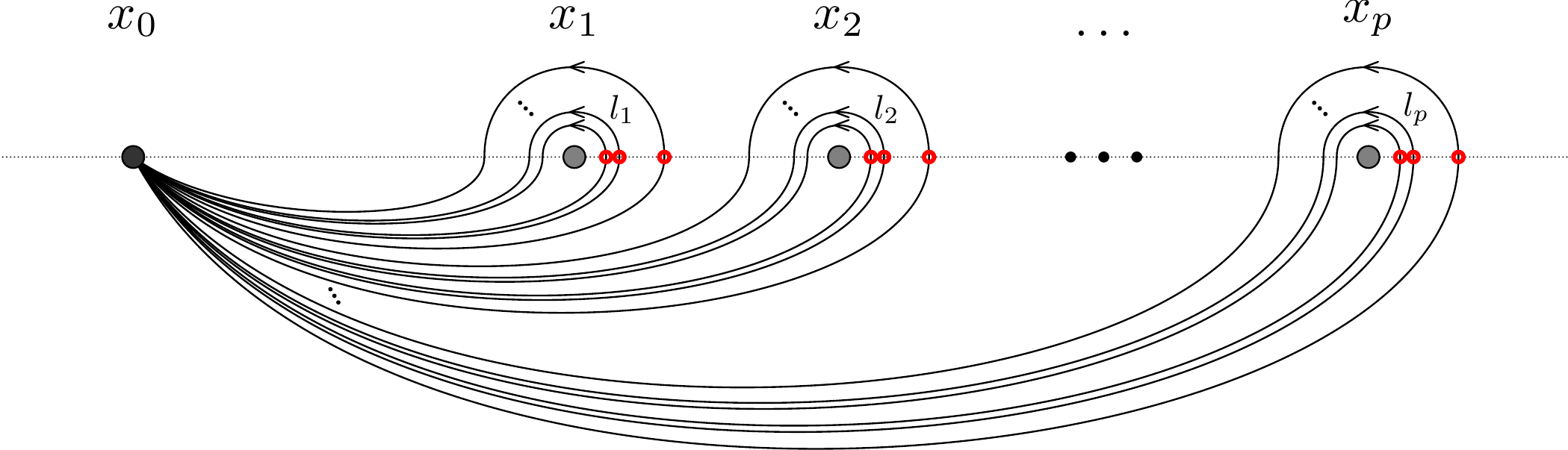}
\bigskip

\caption{\label{fig: basis}
Illustration of the integration surface for a ``basis integral function'' $\varphi^{(x_0)}_{l_1,l_2,\ldots,l_p}(x_1, x_2, \ldots,x_p)$, that is,
a Coulomb gas integral of type~\eqref{eq: ansatz for solution to PDEs} with $\Gamma$ the surface depicted in the figure.
The red circles indicate a choice of branch for the integrand in~\eqref{eq: ansatz for solution to PDEs},
so that it is real and positive when the integration variables lie at those points,
see~\cite{Kytola-Peltola:Conformally_covariant_boundary_correlation_functions_with_quantum_group} for details.
These integrals are the images under the spin chain~--~Coulomb gas correspondence map $\sF$
of the natural basis vectors
$\Wbas_{l_p}^{(d_p)} \tens \cdots \tens  \Wbas_{l_{2}}^{(d_{2})} \tens \Wbas_{l_1}^{(d_1)}$
of the tensor product representation~\eqref{eq: order of tensorands}, with $\ell = \sum_{i=1}^p l_i$.
}
\end{figure}

\subsection{\label{subsec: basis functions}The solutions with particular asymptotics}

By Theorem~\ref{thm: SCCG correspondence}, the projection 
properties~\eqref{eq: projection conditions} of the vectors 
$\Puregeom_\linkpatt$ of Theorem~\ref{thm: highest weight vector space basis vectors}
give explicit asymptotic behavior 
for the solutions $\BasisF_\linkpatt = \sF[\Puregeom_\linkpatt]$
when two variables $x_j,x_{j+1}$ tend to a common limit.
Furthermore, in Proposition~\ref{prop: strong limits} in the next section,
we establish similar recursive asymptotics
when taking many variables to a common limit.

Recall that, for a link pattern $\linkpatt$, we denote by 
$\ell_{j,j+1} = \ell_{j,j+1}(\linkpatt)$ the multiplicity of the link 
$\link{j}{j+1}$ in $\linkpatt$.

\begin{thm}\label{thm: asymptotic properties of general basis vectors}
Let $\kappa \in (0,8) \setminus \bQ$. The functions
$\BasisF_\linkpatt = \sF[\Puregeom_\linkpatt] \colon \chamber_{p} \to \bC$ 
have the following properties.
\begin{description}
\item[(1)] For any $\linkpatt \in \LP_\multii^{(s)}$, the function
$\BasisF_\linkpatt$ satisfies the Benoit $\&$ Saint-Aubin
PDE system~\eqref{eq: BSA differential equations}.

\item[(2)] The function $\BasisF_\linkpatt$ is translation invariant and homogeneous 
as in~\eqref{eq: scaling covariance}, with $d = s + 1$. 

\item[(3)] If $s = 0$, then $\BasisF_\linkpatt$ satisfies the full M\"obius covariance \eqref{eq: Mobius covariance}.

\item[(4)] For any $j \in \{1,2,\ldots,p-1 \}$ and
$m = \frac{1}{2}\left(s_{j}+s_{j+1}-\projdmn+1\right) 
\in \{0,1,\ldots,\min(s_{j},s_{j+1}) \}$, 
and $\xi \in (x_{j-1},x_{j+2})$,
the function $\BasisF_\linkpatt$ has the asymptotics
\begin{align}\label{eq: asymptotic properties}
\lim_{x_j,x_{j+1}\to\xi}
\frac{\BasisF_\linkpatt(x_1,\ldots,x_{p})}{|x_{j+1}-x_j|^{\Delta^{d_j,d_{j+1}}_{\projdmn}}}
=\; & \begin{cases}
0\quad 
& \mbox{if } \ell_{j,j+1} < m\\
\frac{B^{d_j,d_{j+1}}_{\projdmn}}{\constantfromdiagram{m}{s_j}{s_{j+1}}} \times 
\BasisF_{\hat{\linkpatt}}(x_1,\ldots,x_{j-1},\xi,x_{j+2}\ldots,x_{p})
& \mbox{if } \ell_{j,j+1} = m,\\
\end{cases}
\end{align}
where $\hat{\linkpatt} = \linkpatt\removeLink (m\times\link{j}{j+1})$,
and the constants $B^{d_j,d_{j+1}}_{\projdmn}$, 
given in Equation~\eqref{eq: Selberg integral},
and $\constantfromdiagram{m}{s_j}{s_{j+1}}$, given 
in Equation~\eqref{eq: constant from diagram}, are non-zero.
\end{description}
\end{thm}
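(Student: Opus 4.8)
The plan is to read off properties (1)--(3) directly from Theorem~\ref{thm: SCCG correspondence} and to derive the asymptotics (4) by feeding the projection identities~\eqref{eq: projection conditions} of the vectors $\Puregeom_\linkpatt$ into the (ASY) property of the map $\sF$. Since $\Puregeom_\linkpatt\in\HWsp_\multii^{(\nodefsmaller)}$ by Theorem~\ref{thm: highest weight vector space basis vectors}, the map $\sF$ applies to it, and the three properties of $\BasisF_\linkpatt=\sF[\Puregeom_\linkpatt]$ are immediate: property (PDE) gives (1); property (COV) gives the translation invariance and homogeneity of degree $\Delta^{d_1,\ldots,d_p}_{\nodef+1}$ in (2); and the $\nodef=0$ clause of (COV) gives the full M\"obius covariance (3). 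No additional argument is needed for these parts.

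For (4), I would first expand $\Puregeom_\linkpatt$ by the quantum Clebsch--Gordan decomposition in the $(j,j+1)$ tensor slots, writing $\Puregeom_\linkpatt=\sum_{m'}\pi_j^{(\projdmn')}(\Puregeom_\linkpatt)$ with $\projdmn'=s_j+s_{j+1}+1-2m'$ and $m'\in\{0,1,\ldots,\min(s_j,s_{j+1})\}$. Each summand lies in its subrepresentation, so $\pi_j^{(\projdmn')}$ fixes it and the (ASY) property applies, giving
\[
\frac{\sF[\pi_j^{(\projdmn')}(\Puregeom_\linkpatt)](x_1,\ldots,x_p)}{|x_{j+1}-x_j|^{\Delta^{d_j,d_{j+1}}_{\projdmn'}}}\;\longrightarrow\; B^{d_j,d_{j+1}}_{\projdmn'}\,\sF[\hat{\pi}_j^{(\projdmn')}(\Puregeom_\linkpatt)]
\]
as $x_j,x_{j+1}\to\xi$. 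By linearity of $\sF$, the function $\BasisF_\linkpatt$ is the sum of these contributions, and each channel $m'$ enters with its own power $|x_{j+1}-x_j|^{\Delta^{d_j,d_{j+1}}_{\projdmn'}}$.

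The decisive step is to order the exponents. A short computation with $h_{1,d}=\frac{(d-1)(2(d+1)-\kappa)}{2\kappa}$ yields
\[
\Delta^{d_j,d_{j+1}}_{\projdmn'}-\Delta^{d_j,d_{j+1}}_{\projdmn'-2}=h_{1,\projdmn'}-h_{1,\projdmn'-2}=\frac{4(\projdmn'-1)-\kappa}{\kappa},
\]
and since the larger dimension in any adjacent fusion channel is at least $|d_j-d_{j+1}|+3\geq 3$, this difference is bounded below by $(8-\kappa)/\kappa$, which is positive precisely when $\kappa<8$. Hence $\Delta^{d_j,d_{j+1}}_{\projdmn'}$ is strictly decreasing in $m'$, so the dominant contribution as $x_j,x_{j+1}\to\xi$ comes from the largest $m'$ with $\pi_j^{(\projdmn')}(\Puregeom_\linkpatt)\neq 0$; by~\eqref{eq: projection conditions} (and the fact that $\widetilde{\pi}_j^{(\projdmn')}$, $\hat{\pi}_j^{(\projdmn')}$ and $\pi_j^{(\projdmn')}$ share the same kernel) this projection vanishes exactly when $m'>\ell_{j,j+1}$. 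Dividing by $|x_{j+1}-x_j|^{\Delta^{d_j,d_{j+1}}_{\projdmn}}$ with $\projdmn$ attached to the fixed $m$, one then reads off the two cases: if $\ell_{j,j+1}<m$ every surviving channel has $m'<m$ and hence a strictly larger exponent, so the ratio tends to $0$; if $\ell_{j,j+1}=m$ the channels $m'>m$ vanish identically and the channels $m'<m$ drop out in the limit, leaving the single term $B^{d_j,d_{j+1}}_{\projdmn}\,\sF[\hat{\pi}_j^{(\projdmn)}(\Puregeom_\linkpatt)]$.

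It remains to identify this surviving term with a multiple of $\BasisF_{\hat{\linkpatt}}$. Here I would use~\eqref{eq: projection conditions} in the form $\widetilde{\pi}_j^{(\projdmn)}(\Puregeom_\linkpatt)=\constantfromdiagram{m}{s_j}{s_{j+1}}^{-1}\,\Puregeom_{\hat{\linkpatt}}$ together with the factorization $\widetilde{\pi}_j^{(\projdmn)}=\iota_{j,j+1}^{(\projdmn;d_j-m,d_{j+1}-m)}\circ\hat{\pi}_j^{(\projdmn)}$ from~\eqref{eq: generalized projection}, and conclude, after matching the merged link pattern $\hat{\linkpatt}$ correctly, that $\sF[\hat{\pi}_j^{(\projdmn)}(\Puregeom_\linkpatt)]=\constantfromdiagram{m}{s_j}{s_{j+1}}^{-1}\,\BasisF_{\hat{\linkpatt}}$; this produces the stated constant $B^{d_j,d_{j+1}}_{\projdmn}/\constantfromdiagram{m}{s_j}{s_{j+1}}$. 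I expect the main obstacle to lie exactly in this identification: reconciling the projection $\hat{\pi}_j^{(\projdmn)}$ that appears in (ASY), which fuses the two tensor slots into a single $\Wd_{\projdmn}$ and produces a $(p-1)$-point function, with the projection $\widetilde{\pi}_j^{(\projdmn)}$ used in~\eqref{eq: projection conditions}, which lands in $\Wd_{d_{j+1}-m}\tens\Wd_{d_j-m}$, and checking that $\sF$ maps the two versions of $\hat{\linkpatt}$ to the same function. The exponent ordering, which is where the hypothesis $\kappa\in(0,8)$ is used, is the other step demanding care.
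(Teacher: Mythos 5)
Your proposal follows essentially the same route as the paper's proof: parts (1)--(3) are read off from the (PDE) and (COV) clauses of Theorem~\ref{thm: SCCG correspondence}, and part (4) is obtained by decomposing into Clebsch--Gordan channels, checking that the exponents $\Delta^{d_j,d_{j+1}}_{\projdmn'}$ are strictly ordered when $\kappa\in(0,8)$, and combining the (ASY) clause with the projection properties~\eqref{eq: projection conditions}. In fact you supply more detail than the paper does — the explicit computation $h_{1,\projdmn'}-h_{1,\projdmn'-2}=\frac{4(\projdmn'-1)-\kappa}{\kappa}\geq\frac{8-\kappa}{\kappa}$ behind the exponent ordering, and the bookkeeping between $\hat{\pi}_j^{(\projdmn)}$ and $\widetilde{\pi}_j^{(\projdmn)}$, are both left implicit in the paper's three-sentence argument.
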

\begin{proof}
Assertions (1)~--~(3) follow immediately from 
the properties \eqref{eq: cartan eigenvalue}~--~\eqref{eq: highest weight vector} 
of the basis vectors $\Puregeom_\linkpatt$ of 
Theorem~\ref{thm: highest weight vector space basis vectors}, and
(PDE) and (COV) parts of Theorem~\ref{thm: SCCG correspondence}.
To prove assertion (4), we first note that, when $\kappa \in (0,8)$, 
then the exponents in the property
(ASY) in Theorem~\ref{thm: SCCG correspondence} satisfy
$\Delta_{\projdmn}^{d_j,d_{j+1}} < \Delta_{\projdmn'}^{d_j,d_{j+1}}$,
for any $2 \leq \projdmn < \projdmn'$,
and for $\projdmn = 1$ and $\projdmn' \geq 3$, we also have
$\Delta_{\projdmn'}^{d_j,d_{j+1}} - \Delta_{1}^{d_j,d_{j+1}} > 0$. 
Because in~\eqref{eq: asymptotic properties}, 
$\projdmn$ increases in steps of two, we conclude that  
assertion (4) follows from the properties~\eqref{eq: projection conditions} 
of the basis vectors $\Puregeom_\linkpatt$ of 
Theorem~\ref{thm: highest weight vector space basis vectors} and
the (ASY) part of Theorem~\ref{thm: SCCG correspondence}.
This concludes the proof.
\end{proof}

%
%
%
%

We remark that in the above theorem, the range of the parameter $\kappa$ is restricted 
to $(0,8) \setminus \bQ$. The restriction to the interval $(0,8]$ is necessary in order to establish the asymptotics property (4).
Indeed, when $\kappa > 8$, the mutual order of the exponents in the formula~\eqref{eq: asymptotic properties}
may change, resulting in the leading powers in the asymptotics to change.
On the other hand, we expect the statement of Theorem~\ref{thm: asymptotic properties of general basis vectors} 
to be morally true also when $\kappa \in (0,8) \cap \bQ$: functions $\BasisF_\linkpatt$ with properties (1)~--~(4) should still exist.
In principle, the functions of Theorem~\ref{thm: asymptotic properties of general basis vectors} 
can be analytically continued 
to rational values of $\kappa$ --- to do this systematically, further care would be needed. 

\begin{cor}
The functions $\BasisF_\linkpatt$ are not identically zero.
\end{cor}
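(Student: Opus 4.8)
The plan is to argue by induction on the number $\ell$ of links in $\linkpatt$, using the recursive asymptotics of property~(4) in Theorem~\ref{thm: asymptotic properties of general basis vectors} to strip off a block of links at each step. For the base case $\ell=0$ the link pattern consists of defects only, so $\linkpatt=\defpatt_\partition$ with $\partition=\multii$ (regarded as a partition of $\nodef=n$), and Lemma~\ref{lem: basis functions with only defects} supplies the closed formula $\BasisF_{\defpatt_\partition}(x_1,\ldots,x_p)=\frac{1}{(q-q^{-1})^{\nodefsmaller}}\frac{\qnum{2}^{\nodefsmaller}}{\qfact{\nodef+1}}\prod_{1\leq i<j\leq|\partition|}(x_j-x_i)^{\frac{2}{\kappa}s_is_j}$. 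Since $q$ is not a root of unity the scalar prefactor is nonzero, and the product of powers is a nonzero function on $\chamber_p$; hence $\BasisF_\linkpatt\not\equiv 0$ in the base case (including the trivial cases $p\leq 1$, where the function is a nonzero constant).

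The combinatorial input I would isolate first is that every planar link pattern with $\ell\geq 1$ links admits an innermost link joining two consecutive indices: choosing a link $\link{a\;}{\;b\,}$ with $b-a$ minimal, I claim $b=a+1$. Indeed, if $b-a\geq 2$ then the index $a+1$ lies strictly under the arc $\link{a\;}{\;b\,}$; by planarity any link incident to $a+1$ must have both endpoints in $\{a,\ldots,b\}$, which either contradicts the minimality of $b-a$ or yields a strictly shorter link, while a defect at $a+1$ is impossible because it would be trapped in the bounded region under the arc, contradicting the requirement that all defects lie in the unbounded component of the complement of the links. Thus no index can lie strictly between $a$ and $b$, forcing $b=a+1$; writing $j=a$ we obtain a pair of consecutive indices with $\ell_{j,j+1}\geq 1$.

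With such $j$ in hand I would set $m=\ell_{j,j+1}\geq 1$, observing that $m\leq\min(s_j,s_{j+1})$ since each of these links consumes one unit of the respective valence, so $m$ lies in the admissible range of property~(4); the reduced pattern $\hat\linkpatt=\linkpatt\removeLink(m\times\link{j}{j+1})$ then has $\ell-m<\ell$ links and the same number $\nodef$ of defects, whence $\hat\linkpatt\in\LP^{(\nodefsmaller)}$ and the induction hypothesis gives $\BasisF_{\hat\linkpatt}\not\equiv 0$. Because $\ell_{j,j+1}=m$, the relevant branch of~\eqref{eq: asymptotic properties} reads $\lim_{x_j,x_{j+1}\to\xi}|x_{j+1}-x_j|^{-\Delta^{d_j,d_{j+1}}_{\projdmn}}\BasisF_\linkpatt=\frac{B^{d_j,d_{j+1}}_{\projdmn}}{\constantfromdiagram{m}{s_j}{s_{j+1}}}\,\BasisF_{\hat\linkpatt}(x_1,\ldots,x_{j-1},\xi,x_{j+2},\ldots,x_p)$, and since both constants are nonzero the right-hand side is not identically zero, so $\BasisF_\linkpatt$ cannot vanish identically either. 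The only genuinely delicate point is the combinatorial lemma on the existence of a consecutive innermost link — in particular the planarity/defect argument ruling out trapped defects — whereas the remainder is bookkeeping with the already-established asymptotics and the nonvanishing of the $q$-deformed constants.
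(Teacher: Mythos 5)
Your argument is correct and is essentially the paper's own proof: induction (the paper inducts on $n=|\multii|$ rather than on the number of links, which is immaterial), with the base/defects-only case handled by the explicit formula of Lemma~\ref{lem: basis functions with only defects} and the inductive step by applying the asymptotics~\eqref{eq: asymptotic properties} with $m=\ell_{j,j+1}$ at an innermost link between consecutive indices. The only addition is that you spell out the planarity argument for the existence of such an innermost link (including why no defect can be trapped under it), which the paper states without proof.
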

\begin{proof}
This follows from 
Theorem~\ref{thm: asymptotic properties of general basis vectors}
by induction on the number $n = \sum_{i=1}^p s_i =: |\multii|$ for
the link pattern $\linkpatt \in \LP_\multii^{(s)}$ with
$\multii = (s_1,\ldots,s_p)$. 
By Lemma~\ref{lem: basis functions with only defects}, the base case is immediate, 
as $\BasisF_\emptyset = \sF[\Puregeom_\emptyset] = \sF[1] = 1$. 
Fix $s \in \bZnn$ and assume that no function $\BasisF_\tau$ with 
$\tau \in \LP_\varrho^{(s)}$ and $|\varrho| \leq n$ is
identically zero. Consider a function $\BasisF_\linkpatt$ with
$\linkpatt \in \LP_\multii^{(s)}$, and $|\multii| = n+1$.
First, if $\linkpatt = \defpatt_\partition$ only consists of defects, then
the function $\BasisF_\linkpatt = \BasisF_{\defpatt_\partition}$ 
is not identically zero, by the explicit 
formula in Lemma~\ref{lem: basis functions with only defects}.
On the other hand, if $\linkpatt$ contains links, then there is an innermost link
$\link{j}{j+1} \in \linkpatt$. Applying the asymptotics property 
\eqref{eq: asymptotic properties} with $m = \ell_{j,j+1}$, the induction 
hypothesis shows that $\BasisF_\linkpatt$ cannot be identically zero.
\end{proof}

\subsection{\label{subsec: further limit properties}Limits when collapsing several variables}

\begin{figure}
\includegraphics[scale=.75]{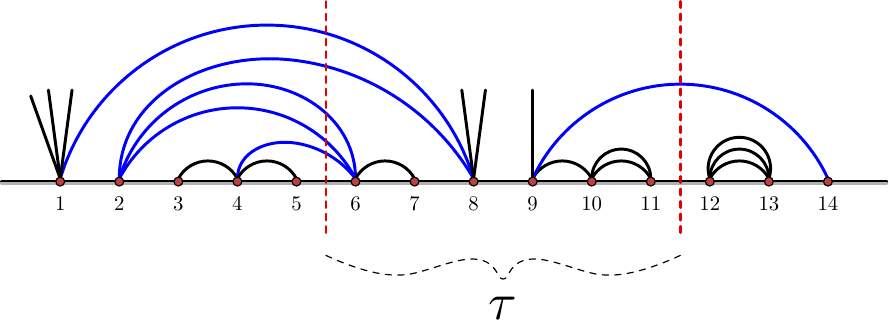}

\bigskip
\bigskip
\bigskip

\raisebox{1em}{\includegraphics[scale=.75]{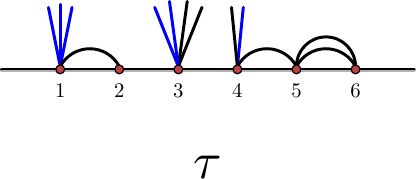}} \qquad \qquad  \qquad 
\includegraphics[scale=.75]{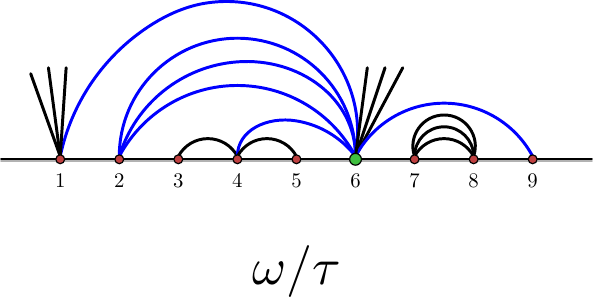}
\caption{\label{fig: removal of sub pattern}
The sub-link pattern $\tau$ of a link pattern $\linkpatt$, and
the link pattern $\linkpatt / \tau$, obtained from $\linkpatt$ by
removing the links of $\tau$ and collapsing the indices involved in $\tau$ into 
one point, colored in green. The lines which are colored in blue are common in
both $\tau$ and $\linkpatt / \tau$ --- note that in $\tau$, these links are 
cut when separating $\tau$ from $\linkpatt$ and they become defects, whereas 
$\linkpatt / \tau$ has equally many defects as $\linkpatt$, 
and the blue links remain.}
\end{figure}

We now consider the limit of the function $\BasisF_\linkpatt$ as several of its variables tend to a common limit simultaneously.
For this, we need some notation.

Fix a link pattern
\begin{align*}
\linkpatt=\Big\{\linkInEquation{a_1}{b_1},\ldots,\linkInEquation{a_\ell}{b_\ell}\Big\}
\bigcup
\Big\{\defectInEquation{c_1},\ldots,\defectInEquation{c_{s}}\Big\} \,
\in \, \LP^{(s)}_{\multii}
\end{align*}
and indices $1 \leq j < k \leq p$, and denote by 
\begin{align*}
\mathfrak{A}_{j,k} := \; & \set{a_i \;|\; a_i \in \set{j,j+1,\ldots,k} 
\text{ and } \, b_i \notin \set{j,j+1,\ldots,k} } \subset \set{a_1,a_2,\ldots,a_\ell}, \\
\mathfrak{B}_{j,k} := \; & \set{b_i \;|\; a_i \notin \set{j,j+1,\ldots,k} 
\text{ and } \, b_i \in \set{j,j+1,\ldots,k} } \subset \set{b_1,b_2,\ldots,b_\ell}, \\
\mathfrak{C}_{j,k} := \; & \set{c_i \;|\; c_i \in \set{j,j+1,\ldots,k}} 
\subset \set{c_1,c_2,\ldots,c_s} ,
\end{align*}
and $r = \# \mathfrak{A}_{j,k} + \# \mathfrak{B}_{j,k} + \# \mathfrak{C}_{j,k}$,
and let $\tau \in \LP_{\multii_{j,k}}^{(r)}$ be the sub-link pattern of 
$\linkpatt$ with index valences $\multii_{j,k} = (s_j,s_{j+1}\ldots,s_k)$, 
consisting of the lines of $\linkpatt$ attached to the indices $j,j+1,\ldots,k$, 
that is,
\begin{align*}
\tau = \tau_{j,k}(\linkpatt) =
\Big\{\linkInEquation{a_i}{b_i} \;\Big|\; a_i,b_i \in \set{j,j+1,\ldots,k} \Big\}
\bigcup
\Big\{\defectInEquation{c} \;\Big|\; 
c \in \mathfrak{A}_{j,k} \cup \mathfrak{B}_{j,k} \cup \mathfrak{C}_{j,k} \Big\} \,
\in \, \LP_{\multii_{j,k}}^{(r)},
\end{align*}
see Figure~\ref{fig: removal of sub pattern}. Also, denote by 
$\linkpatt \removeLink \tau$ the link pattern obtained from $\linkpatt$
by ``removing $\tau$'', that is, removing from $\linkpatt$ the links 
$\link{a\;}{\;b\,}$ with indices $a,b \in \{ j, j+1,\ldots, k \}$, collapsing
the indices $j,j+1,\ldots,k$ of $\linkpatt$ into one point, and relabeling
the indices thus obtained from left to right by $1,2,\ldots$,
as emphasized in Figure~\ref{fig: removal of sub pattern}.

The function $\BasisF_\linkpatt$ has the following limiting behavior. 
\begin{lem} \label{lem: limits at same rate}
Let $1 \leq j < k \leq p$ and $x_{j-1} < \xi < x_{k+1}$, and suppose that
\begin{align}\label{eq: limit at same rate}
\begin{split}
& x_j, x_{j+1}, \ldots, x_k \to \xi
    \qquad \quad \text{ in such a way that } \qquad 
    \frac{x_{i}-x_j}{x_k-x_j} \to \eta_{i} ,
    \qquad \text{ for } i \in \{j,j+1,\ldots,k \} , \\
& \text{with } 0=\eta_{j} < \eta_{j+1} < \cdots < \eta_{k-1} < \eta_{k} = 1 .
\end{split}
\end{align}
Denote $\Delta = \Delta_{\projdmn}^{d_j,\ldots,d_k}$, with $\projdmn = r+1$.
Then, in the limit~\eqref{eq: limit at same rate}, we have
\begin{align*}
\frac{\BasisF_\linkpatt(x_1,\ldots,x_p)}{|x_k - x_j|^{\Delta}}
\; \longrightarrow \; \; & \BasisF_\tau(\eta_{j}, \ldots, \eta_{k})
\times \BasisF_{\linkpatt \removeLink \tau}(x_1,\ldots,x_{j-1},\xi,x_{k+1},\ldots,x_p).
\end{align*}
\end{lem}
\begin{proof}
By Lemma~\ref{lem: details for proof of further limit properties}, 
for some constants $c_{l_1,\ldots,l_{j-1};l;l_{k+1},\ldots,l_p} \in \bC$,
we have
\begin{align} 
\label{eq: vector omega}
\Puregeom_\linkpatt
= \; & \sum_{l=0}^r \sum_{\substack{l_1,\ldots,l_{j-1},\\l_{k+1},\ldots,l_p}}
c_{l_1,\ldots,l_{j-1};l;l_{k+1},\ldots,l_p} \times 
\left(\Wbas_{l_p} \tens \cdots \tens \Wbas_{l_{k+1}}
\tens F^l.\Puregeom_{\tau} \tens \Wbas_{l_{j-1}}
\tens \cdots \tens \Wbas_{l_1}\right), \\
 \label{eq: vector omega minus tau}
\Puregeom_{\linkpatt \removeLink \tau} 
= \; & \sum_{l=0}^r \sum_{\substack{l_1,\ldots,l_{j-1},\\l_{k+1},\ldots,l_p}}
c_{l_1,\ldots,l_{j-1};l;l_{k+1},\ldots,l_p} \times
\left(\Wbas_{l_p} \tens \cdots \tens \Wbas_{l_{k+1}}
\tens \Wbas_l \tens \Wbas_{l_{j-1}}
\tens \cdots \tens \Wbas_{l_1}\right).
\end{align}
Therefore, by \cite[Proposition~5.1]{Kytola-Peltola:Conformally_covariant_boundary_correlation_functions_with_quantum_group},
the limit~\eqref{eq: limit at same rate} has the asserted form:
\begin{align*}
\frac{\sF[\Puregeom_\linkpatt](x_1,\ldots,x_p)}{|x_k - x_j|^{\Delta}}
\; \longrightarrow \; \; &
\sF[\Puregeom_\tau](\eta_{j}, \ldots, \eta_{k}) \times
\sF[\Puregeom_{\linkpatt \removeLink \tau}]
(x_1,\ldots,x_{j-1},\xi,x_{k+1},\ldots,x_p).
\end{align*}
\end{proof}

In the proof of Lemma~\ref{lem: limits at same rate}, we 
use~\cite[Proposition~5.1]{Kytola-Peltola:Conformally_covariant_boundary_correlation_functions_with_quantum_group}.
To prove the latter, the idea is to rearrange the integrations in the Coulomb gas type integral representation~\eqref{eq: ansatz for solution to PDEs} 
of the function $\BasisF_\linkpatt= \sF[\Puregeom_\linkpatt]$ in such a way that the collapsing variables 
$x_j,  \ldots, x_k$ are surrounded by nested contours --- see Figure~\ref{fig: limits}.
After this rearranging, also ``hypercube type'' integrations between the variables $x_j,  \ldots, x_k$ might occur.
Now, if the limit $x_j, \ldots,x_k \to \xi$ is taken as in~\eqref{eq: limit at same rate}, then the function~\eqref{eq: ansatz for solution to PDEs} 
with integration surface of Figure~\ref{fig: limits}(top) converges to a function of type~\eqref{eq: ansatz for solution to PDEs} with   
integration surface of Figure~\ref{fig: limits}(bottom) times a constant which depends on the convergence rate~\eqref{eq: limit at same rate}.
This multiplicative constant results in the constant $\mathscr{F}_\tau(\eta_j, \ldots,\eta_k)$ in 
Lemma~\ref{lem: limits at same rate}.
From its dependence on the convergence rate~\eqref{eq: limit at same rate},
we see that if the variables $x_j,  \ldots, x_k$ tend to $\xi$ in a different way, the limit can be different or fail to exist.

However, in some cases, no integrations between the collapsing variables $x_j,  \ldots, x_k$ occur, and then the limit
Lemma~\ref{lem: limits at same rate} in fact exists along any sequence 
$x_j,  \ldots, x_k \to \xi$ and not only along sequences of type~\eqref{eq: limit at same rate}. 
Indeed, if in Figure~\ref{fig: limits} there are no contours between $x_j,  \ldots, x_k$, but only around them, then
similarly as in the proof of 
\cite[Proposition~5.1]{Kytola-Peltola:Conformally_covariant_boundary_correlation_functions_with_quantum_group},
dominated convergence theorem allows us to collapse these variables inside the integration in~\eqref{eq: ansatz for solution to PDEs} 
along any sequence $x_j,  \ldots, x_k \to \xi$.

\begin{figure}
\includegraphics[scale=.28]{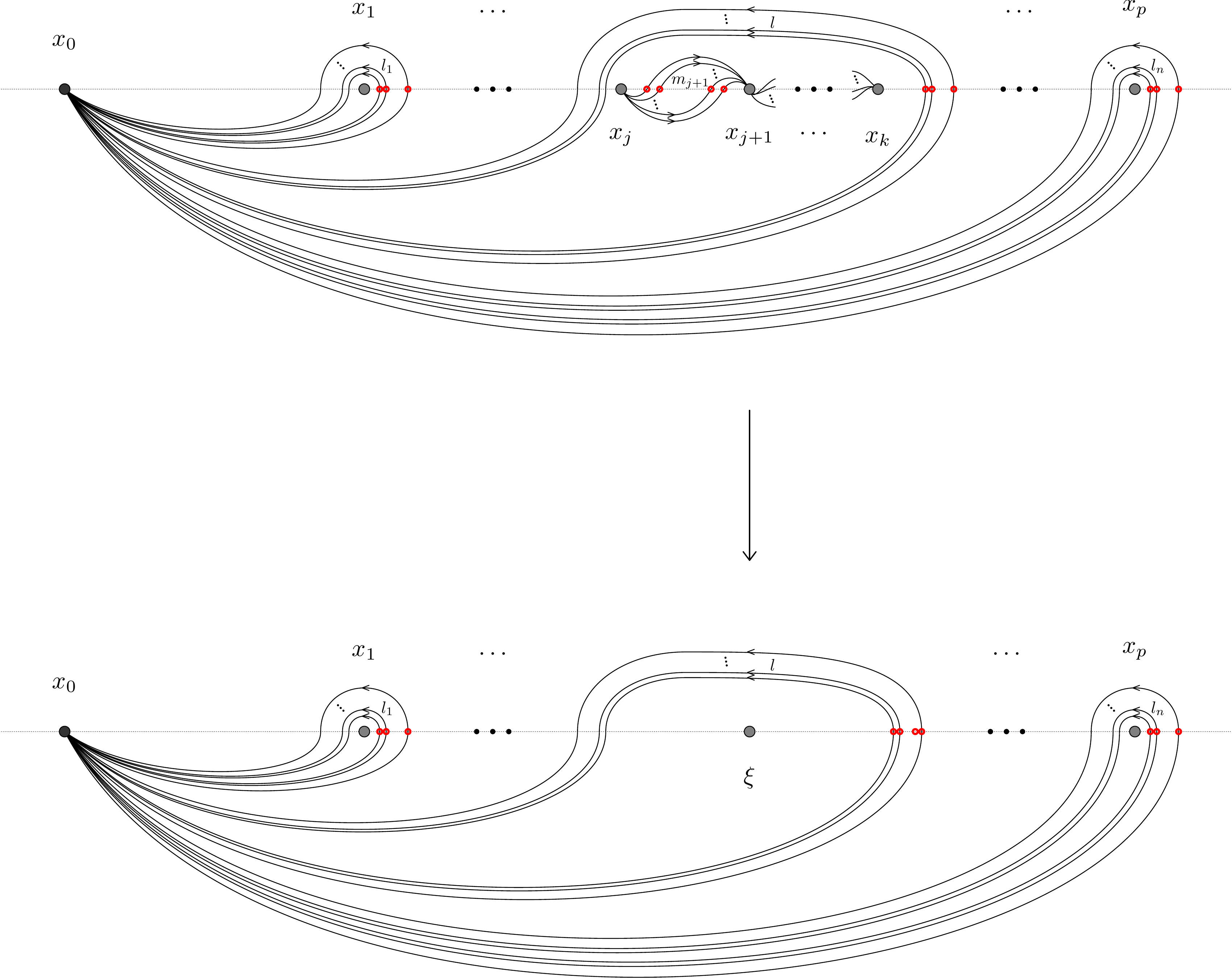}
\bigskip

\caption{\label{fig: limits}
To deal with limits $x_j,  \ldots,x_k \to \xi$, the main idea is to write the integration surface $\Gamma$ of~\eqref{eq: ansatz for solution to PDEs}
as a linear combination of surfaces appearing in the top figure, 
where a number $l \geq 0$ of non-intersecting nested loops surround the collapsing points $x_j,  \ldots,x_k$ and 
there are $m_{j+1}, \ldots, m_{k} \geq 0$ contours between these points.
The latter form a ``deformed hypercube integral'' $\widetilde{\rho}_{0,m_{j+1},\ldots,m_{k}}(x_j, \ldots, x_k)$,
illustrated in Figure~\ref{fig: rho}, which can in some cases be evaluated explicitly.
The red circles indicate a choice of branch for the integrand in~\eqref{eq: ansatz for solution to PDEs},
so that it is real and positive when the integration variables lie at those points,
see~\cite{Kytola-Peltola:Conformally_covariant_boundary_correlation_functions_with_quantum_group} for details.
The integral function~\eqref{eq: ansatz for solution to PDEs} with surface $\Gamma$ as 
in the top figure is denoted by $\alpha^{(x_0)}_{l_1,\ldots l_{j-1};l,\{m_{j+1},\ldots,m_{k}\};l_{k+1},\ldots,l_p}(x_1, \ldots, x_p)$.
}
\end{figure}

By changing the normalization of the function $\BasisF_\linkpatt$
in Lemma~\ref{lem: limits at same rate}, we can remove the restriction~\eqref{eq: limit at same rate}.

\begin{prop} \label{prop: strong limits}
Let $1 \leq j < k \leq p$, and $x_{j-1} < \xi < x_{k+1}$. Then we have
\begin{align} \label{eq: strong limit}
\lim_{x_j, x_{j+1}, \ldots, x_k \to \xi}
\frac{\BasisF_\linkpatt(x_1,\ldots,x_p)}{\BasisF_\tau(x_j,\ldots,x_k)}
= \; & \BasisF_{\linkpatt \removeLink \tau}(x_1,\ldots,x_{j-1},\xi,x_{k+1},\ldots,x_p).
\end{align}
\end{prop}
\begin{proof}
First, by~\cite[Proposition~4.5]{Kytola-Peltola:Conformally_covariant_boundary_correlation_functions_with_quantum_group},
we can write $\BasisF_\tau = \sF[\Puregeom_\tau]$ in the form
\begin{align*}
\sF[\Puregeom_\tau] (x_j, \ldots, x_k)
    = \sum_{m_{j+1},\ldots,m_{k} \geq 0} a_{m_{j+1},\ldots,m_{k}} \times  \widetilde{\rho}_{0,m_{j+1},\ldots,m_{k}}(x_j, \ldots, x_k) , 
\end{align*}
where $a_{m_{j+1},\ldots,m_{k}} \in \bC$ are some constants, each
$\widetilde{\rho}_{0,m_{j+1},\ldots,m_{k}}$ denotes a Coulomb gas integral of type~\eqref{eq: ansatz for solution to PDEs}
with $\Gamma$ a surface of type depicted in Figure~\ref{fig: rho}, and 
\begin{align}  \label{eq: ms}
m_{j+1} + \cdots + m_{k} = \frac{1}{2} \left(\sum_{i=j}^k d_i - k + j - \projdmn \right) .
\end{align}

Second, by~\cite[Proof of Proposition~5.1]{Kytola-Peltola:Conformally_covariant_boundary_correlation_functions_with_quantum_group},
we can write the functions $\sF^{(x_0)}[v]$ for vectors
\[ v = \left(\Wbas_{l_p} \tens \cdots \tens \Wbas_{l_{k+1}} \tens F^l.\Puregeom_{\tau} \tens \Wbas_{l_{j-1}} \tens \cdots \tens \Wbas_{l_1}\right) \]
in the form
\[ \sF^{(x_0)}[v] (x_1, \ldots, x_p)
    = \sum_{m_{j+1},\ldots,m_{k} \geq 0} a_{m_{j+1},\ldots,m_{k}}
        \times \alpha^{(x_0)}_{l_1,\ldots,l_{j-1};l,\set{m_{j+1},\ldots,m_{k}};l_{k+1},\ldots,l_p}(x_1, \ldots, x_p) , \]
where each $\alpha^{(x_0)}_{\cdots;l,\set{m_{j+1},\ldots,m_{k}};\cdots}$ denotes a Coulomb gas integral of type~\eqref{eq: ansatz for solution to PDEs}
with $\Gamma$ a surface of type depicted in Figure~\ref{fig: limits}(top).
We note that these integration surfaces a priori depend on an auxiliary point $x_0$,
but as proved in \cite[Proposition~4.5]{Kytola-Peltola:Conformally_covariant_boundary_correlation_functions_with_quantum_group},
all solutions to~\eqref{eq: BSA differential equations} are independent of $x_0$.

All in all, we can write the ratio appearing in the asserted equation~\eqref{eq: strong limit} in the form
\begin{align*}
& \frac{\sF[\Puregeom_\linkpatt](x_1,\ldots,x_p)}{\sF[\Puregeom_\tau](x_j,\ldots,x_k)} \\
= \; & \sum_{l=0}^r \sum_{\substack{l_1,\ldots,l_{j-1},\\l_{k+1},\ldots,l_p}} c_{l_1,\ldots,l_{j-1};l;l_{k+1},\ldots,l_p} 
\frac{\underset{m_{j+1},\ldots,m_{k} \geq 0}{\sum} a_{m_{j+1},\ldots,m_{k}}
        \times \alpha^{(x_0)}_{l_1,\ldots,l_{j-1};l,\set{m_{j+1},\ldots,m_{k}};l_{k+1},\ldots,l_p}(x_1, \ldots, x_p)}{\underset{m_{j+1},\ldots,m_{k} \geq 0}{\sum} a_{m_{j+1},\ldots,m_{k}} \times  \widetilde{\rho}_{0,m_{j+1},\ldots,m_{k}}(x_j, \ldots, x_k)} ,
\end{align*}
where we also used Equation~\eqref{eq: vector omega}.

Then, using Equation~\eqref{eq: vector omega minus tau}, we write the right hand side of the asserted equation~\eqref{eq: strong limit} in the form
\begin{align*}
& \sF[\Puregeom_{\linkpatt \removeLink \tau}] (x_1,\ldots,x_{j-1},\xi,x_{k+1},\ldots,x_p) \\
= \; & \sum_{l=0}^r \sum_{\substack{l_1,\ldots,l_{j-1},\\l_{k+1},\ldots,l_p}} c_{l_1,\ldots,l_{j-1};l;l_{k+1},\ldots,l_p} \times
\varphi^{(x_0)}_{l_1,\ldots,l_{j-1},l,l_{k+1},\ldots,l_p} (x_1,\ldots,x_{j-1},\xi,x_{k+1},\ldots,x_p) ,
\end{align*}
where each $\varphi^{(x_0)}_{l_1,\ldots,l_{j-1},l,l_{k+1},\ldots,l_p}$ denotes a Coulomb gas integral of type~\eqref{eq: ansatz for solution to PDEs}
with $\Gamma$ a surface of type depicted in Figure~\ref{fig: basis}.

Now, to evaluate the limit~\eqref{eq: strong limit}, we can apply dominated convergence theorem 
to the integration over all variables in $\alpha^{(x_0)}_{\cdots;l,\set{m_{j+1},\ldots,m_{k}};\cdots}$
whose contour is a loop, since these contours remain bounded away from the points $x_j , \ldots , x_k$ 
and any hypercube type integration contours between them. On the other hand, we note that the hypercube 
integrals are the same in $\alpha^{(x_0)}_{\cdots;l,\set{m_{j+1},\ldots,m_{k}};\cdots}$ and in $\widetilde{\rho}_{0,m_{j+1},\ldots,m_{k}}$,
and they cancel each other in the limit~\eqref{eq: strong limit}.
We also note that the remaining loop integrals in $\alpha^{(x_0)}_{\cdots;l,\set{m_{j+1},\ldots,m_{k}};\cdots}$ are the same as in 
$\varphi^{(x_0)}_{l_1,\ldots,l_{j-1},l,l_{k+1},\ldots,l_p}$.

To finish, we consider the integrand~\eqref{eq: integrand} for 
$\alpha^{(x_0)}_{\cdots;l,\set{m_{j+1},\ldots,m_{k}};\cdots} (x_1, \ldots, x_p)$:
\begin{align*}
\prod_{1\leq i<j\leq p}(x_{j}-x_{i})^{\frac{2}{\kappa} s_i s_j}
\prod_{\substack{1\leq i\leq p\\1\leq r\leq\ell}}
(w_{r}-x_{i})^{-\frac{4}{\kappa} s_i}
\prod_{1\leq r<s\leq\ell}(w_{s}-w_{r})^{\frac{8}{\kappa}} .
\end{align*}
Using the identity~\eqref{eq: ms}, we see that as $x_j, \ldots, x_k \to \xi$, the factors containing these variables converge to
the corresponding factors in the integrand for $\varphi^{(x_0)}_{l_1,\ldots,l_{j-1},l,l_{k+1},\ldots,l_p}(x_1,\ldots,x_{j-1},\xi,x_{k+1},\ldots,x_p)$,
where those terms have the form $(\xi-x_{i})^{\frac{2}{\kappa} s_i(\projdmn-1)}$ and 
$(w_{r}-\xi)^{-\frac{4}{\kappa} s_i(\projdmn-1)}$.
This gives the asserted result.
\end{proof}

\begin{figure}
\includegraphics[scale=.28]{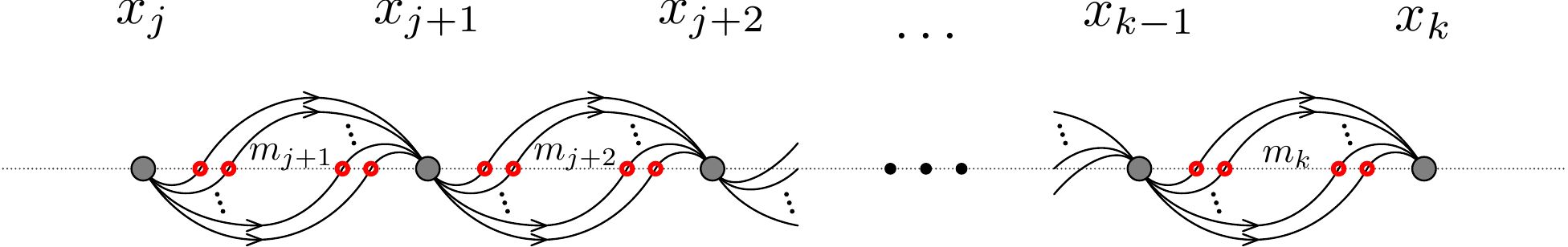}
\bigskip

\caption{\label{fig: rho}
Illustration of the integration surface for a ``deformed hypercube integral'' $\widetilde{\rho}_{0,m_{j+1},\ldots,m_{k}}(x_j, \ldots, x_k)$.
The red circles indicate a choice of branch for the integrand in~\eqref{eq: ansatz for solution to PDEs},
so that it is real and positive when the integration variables lie at those points,
see~\cite{Kytola-Peltola:Conformally_covariant_boundary_correlation_functions_with_quantum_group} for details.
}
\end{figure}

As a final remark, we observe that the functions $\BasisF_\linkpatt$ of Theorem~\ref{thm: asymptotic properties of general basis vectors}
can be realized as limits of functions $\BasisF_{\alpha(\linkpatt)} = \sF[\Puregeomtwodim_{\alpha(\linkpatt)}]$,
where $\linkpatt \mapsto \alpha(\linkpatt)$ is the map~\eqref{eq: link pattern to pair partition}
and $\Puregeomtwodim_\alpha$ the vectors of Theorem~\ref{thm: existence of multiple SLE vectors}.
For the precise statement, we denote $n = |\multii| := \sum_{i=1}^p s_i$, as in~\eqref{eq: definition of n and s}, 
and for $i \in \{1,\ldots, p\}$, we let $\partition_i$ be
the partition of $s_i$ into positive integers with size
$|\partition_i| = s_i$, that is, $\partition_i = (1,1,\ldots,1)$ with $s_i$ parts all equal to one.

\begin{cor}\label{cor: fusion}
Along any sequence $(y_1, \ldots, y_n) \in \chamber_n$ 
converging to $(x_1,\ldots,x_p) \in \chamber_p$ as shown,
we~have 
\begin{align*}
\BasisF_\linkpatt(x_1,\ldots,x_p) = \lim_{\substack{\vspace*{1mm} \\ \hspace*{6.5mm} y_1, \ldots,y_{s_1} \to x_1 \\ \hspace*{2mm} y_{s_1+1}, \ldots,y_{s_2} \to x_2 \\ \vdots \\ y_{n-s_p+1}, \ldots,y_{n} \to x_p}}
\frac{\BasisF_{\alpha(\linkpatt)} (y_1, \ldots, y_n)}{\BasisF_{\defpatt_{\partition_1}} (y_1, \ldots,y_{s_1}) 
\BasisF_{\defpatt_{\partition_2}} (y_{s_1+1}, \ldots,y_{s_2}) \cdots
\BasisF_{\defpatt_{\partition_p}} (y_{n-s_p+1}, \ldots,y_{n})} .
\end{align*}
\end{cor}
\begin{proof}
This follows from definitions and Proposition~\ref{prop: strong limits}.
The integral forms of the functions guarantee that we can take the limits in any order, and Proposition~\ref{prop: strong limits} 
that the limits exist along any sequence.
\end{proof}

The statement of Corollary~\ref{cor: fusion} is very natural in the sense of fusion in CFT. Indeed, viewed as correlation functions,
the solutions $\BasisF_\linkpatt$ should be obtained by fusion from the solutions $\BasisF_{\alpha(\linkpatt)}$.
We also note that the functions $\BasisF_{\defpatt_{\partition_i}}$ appearing in the denominator
in Corollary~\ref{cor: fusion} have a simple form, given by Lemma~\ref{lem: basis functions with only defects}.

\subsection{\label{subsec: limits of functions at infinity}Limits when taking variables to infinity}

From the cyclic permutation symmetry of the vectors 
$\Puregeom_{\linkpatt} \in \HWsp_\multii^{(0)}$ (Corollary~\ref{cor: cyclic symmetry})
we can derive a similar property for the M\"obius covariant functions 
$\BasisF_\linkpatt = \sF[\Puregeom_\linkpatt]$,
concerning the limit when the rightmost variable tends to $+\infty$. Indeed, this 
limit is equal to the limit of the function $\BasisF_{\Scomb(\linkpatt)}$ as its 
leftmost variable tends to $-\infty$, where $\Scomb$ is 
the cyclic permutation map defined in Equation~\eqref{eq: combinatorial Smap}
in Section~\ref{sec: cyclic permutation symmetry}, and illustrated in Figure~\ref{fig: cyclic permutation}.

\begin{prop}\label{prop: cyclicity of limits}
For any $\linkpatt \in \LP_{(\multii,s)}^{(0)}$, we have
$($with $d = s + 1)$
\begin{align*}
\lim_{y \to + \infty} \Big( y^{2h_{1,d}} \times 
\sF[\Puregeom_{\linkpatt}](x_1,\ldots,x_{p},y) \Big)
= \lim_{y \to - \infty} \Big( |y|^{2h_{1,d}} \times 
\sF[\Puregeom_{\Scomb(\linkpatt)}](y,x_1,\ldots,x_{p}) \Big).
\end{align*} 
\end{prop}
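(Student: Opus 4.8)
The plan is to read both sides as limits of one and the same M\"obius covariant function, in which the tensor factor $\Wd_{d}$ of dimension $d=\nodef+1$ and weight $h_{1,d}$, occupying the leftmost (resp.\ rightmost) tensor slot, is sent to $+\infty$ (resp.\ $-\infty$), and then to evaluate each limit through the linear isomorphisms $R_+^{(\nodefsmaller)}\colon\HWsp_{(\multii,\nodefsmaller)}^{(0)}\to\HWsp_\multii^{(\nodefsmaller)}$ and $R_-^{(\nodefsmaller)}\colon\HWsp_{(\nodefsmaller,\multii)}^{(0)}\to\HWsp_\multii^{(\nodefsmaller)}$ of Sections~\ref{subsec: normalization} and~\ref{subsec: two infinities}. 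Since $\linkpatt\in\LP^{(0)}_{(\multii,\nodefsmaller)}$ and $\Scomb(\linkpatt)\in\LP^{(0)}_{(\nodefsmaller,\multii)}$ carry no defects, both $\sF[\Puregeom_{\linkpatt}]$ and $\sF[\Puregeom_{\Scomb(\linkpatt)}]$ are M\"obius covariant by part (COV) of Theorem~\ref{thm: SCCG correspondence}; this makes $y^{2h_{1,d}}$ and $|y|^{2h_{1,d}}$ exactly the prefactors that render the two limits finite and non-trivial (the distinguished variable carries weight $h_{1,d}$, so $\sF$ decays like $y^{-2h_{1,d}}$ at infinity). The assignment of $R_+^{(\nodefsmaller)}$ to the rightmost point and $R_-^{(\nodefsmaller)}$ to the leftmost point matches the tensor-ordering convention~\eqref{eq: order of tensorands}, under which the last point $p+1$ sits in the leftmost tensor slot (where the expansion defining $R_+^{(\nodefsmaller)}$ isolates $\Wbas_l^{(d)}$) and the first point sits in the rightmost slot (where $R_-^{(\nodefsmaller)}$ does).

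The key analytic input is the compatibility of the correspondence map $\sF$ with $R_\pm^{(\nodefsmaller)}$, which I would invoke from \cite[Section~5]{Kytola-Peltola:Conformally_covariant_boundary_correlation_functions_with_quantum_group} (the same circle of ideas underlying \cite[Proposition~5.1]{Kytola-Peltola:Conformally_covariant_boundary_correlation_functions_with_quantum_group}, used in the proof of Proposition~\ref{prop: limits at same rate}): namely, for $v\in\HWsp_{(\multii,\nodefsmaller)}^{(0)}$ and $w\in\HWsp_{(\nodefsmaller,\multii)}^{(0)}$,
\begin{align*}
\lim_{y\to+\infty}\Big(y^{2h_{1,d}}\,\sF[v](x_1,\ldots,x_p,y)\Big)
&= \sF\big[R_+^{(\nodefsmaller)}(v)\big](x_1,\ldots,x_p),\\
\lim_{y\to-\infty}\Big(|y|^{2h_{1,d}}\,\sF[w](y,x_1,\ldots,x_p)\Big)
&= (-q)^{-\nodefsmaller}\,\sF\big[R_-^{(\nodefsmaller)}(w)\big](x_1,\ldots,x_p),
\end{align*}
where the very expansions defining $R_\pm^{(\nodefsmaller)}$ are the ones picking out the leading term of $\sF$ as the distinguished variable leaves to $\pm\infty$. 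Applying these with $v=\Puregeom_{\linkpatt}$ and $w=\Puregeom_{\Scomb(\linkpatt)}$ reduces the proposition to the purely algebraic identity $R_+^{(\nodefsmaller)}(\Puregeom_{\linkpatt})=(-q)^{-\nodefsmaller}\,R_-^{(\nodefsmaller)}(\Puregeom_{\Scomb(\linkpatt)})$ between vectors of $\HWsp_\multii^{(\nodefsmaller)}$.

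That identity is exactly what the cyclic permutation symmetry supplies: the intermediate step in the proof of Corollary~\ref{cor: cyclic symmetry}, with the role of $s_p$ played by the valence $\nodef$ of the point sent to infinity, reads $R_-^{(\nodefsmaller)}(\Puregeom_{\Scomb(\linkpatt)})=(-q)^{\nodefsmaller}\,R_+^{(\nodefsmaller)}(\Puregeom_{\linkpatt})$, so the factor $(-q)^{\nodefsmaller}$ cancels the relative normalization $(-q)^{-\nodefsmaller}$ of the two limit formulas and the two sides coincide. I expect the main obstacle to be the second paragraph: establishing the limit-at-infinity dictionary with the \emph{correct} constants. Only the ratio of the two constants is needed (the individual ones drop out), and the essential point is that this ratio is exactly $(-q)^{\nodefsmaller}$. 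This asymmetry is the analytic shadow of the mismatched $q$-exponents $q^{(l+1)(\nodefsmaller-l)}$ and $q^{(l-1)(\nodefsmaller-l)}$ in the definitions of $R_+^{(\nodefsmaller)}$ and $R_-^{(\nodefsmaller)}$, and pinning it down amounts to tracking the monodromy of the Dotsenko--Fateev integrand as the distinguished point is carried through $\infty$ in the upper half-plane.
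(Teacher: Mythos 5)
Your proof is correct and follows essentially the same route as the paper: the paper likewise invokes the limit-at-infinity formulas from \cite[Proposition~5.4]{Kytola-Peltola:Conformally_covariant_boundary_correlation_functions_with_quantum_group} (with constants $(q-q^{-1})^{d-1}\qfact{d-1}^2 B^{d,d}_1$ and $(q^{-2}-1)^{d-1}\qfact{d-1}^2 B^{d,d}_1$, whose ratio $(-q^{-1})^{d-1}=(-q)^{-\nodefsmaller}$ is exactly the relative normalization you identified) and then cancels it against the identity $R_-^{(\nodefsmaller)}(\Puregeom_{\Scomb(\linkpatt)})=(-q)^{\nodefsmaller}R_+^{(\nodefsmaller)}(\Puregeom_{\linkpatt})$ from the cyclic symmetry (Corollary~\ref{cor: cyclic symmetry} via Proposition~\ref{prop: relation of maps R+ and R-}). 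The only cosmetic difference is that you absorb the common overall constant into the normalization of the limit formulas, which is immaterial since, as you note, only the ratio enters.
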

\begin{proof}
By~\cite[Proposition~5.4]{Kytola-Peltola:Conformally_covariant_boundary_correlation_functions_with_quantum_group},
we have
\begin{align*}
\lim_{y \to + \infty} \Big( y^{2h_{1,d}} \times 
\sF[\Puregeom_{\linkpatt}](x_1,\ldots,x_{p},y) \Big)
    = & \; (q-q^{-1})^{d-1} \qfact{d-1}^2 \times B^{d,d}_1
     \times \sF[R_+^{(s)}(\Puregeom_{\linkpatt})](x_1,\ldots,x_{p}) , \\
\lim_{y \to - \infty} \Big( |y|^{2h_{1,d}} \times 
\sF[\Puregeom_{\Scomb(\linkpatt)}](y,x_1,\ldots,x_{p}) \Big)
    =  & \; (q^{-2}-1)^{d-1} \qfact{d-1}^2 \times B^{d,d}_1
     \times \sF[R_-^{(s)}(\Puregeom_{\Scomb(\linkpatt)})]
    (x_1,\ldots,x_{p}) ,
\end{align*}
where $B^{d,d}_1$ is the constant defined in~\eqref{eq: Selberg integral}.
Using Corollary~\ref{cor: cyclic symmetry}, we calculate
\begin{align*}
& \lim_{y \to - \infty} \Big( |y|^{2h_{1,d}} \times 
\sF[\Puregeom_{\Scomb(\linkpatt)}](y,x_1,\ldots,x_{p}) \Big) \\
= & \; \frac{(q^{-2}-1)^{d-1} \qfact{d-1}^2 \times B^{d,d}_1 \times (-q)^{d-1}}{(q-q^{-1})^{d-1} \qfact{d-1}^2 \times B^{d,d}_1}
\lim_{y \to + \infty} \Big( y^{2h_{1,d}} \times 
\sF[\Puregeom_{\linkpatt}](x_1,\ldots,x_{p},y) \Big) \\
= & \; \lim_{y \to + \infty} \Big( y^{2h_{1,d}} \times 
\sF[\Puregeom_{\linkpatt}](x_1,\ldots,x_{p},y) \Big).
\end{align*} 
\end{proof}

\bigskip{}
\section{\label{sec: multiple SLEs}Cyclic permutation symmetry of the pure partition functions of multiple $\SLE$s}

Multiple $(\SLEk)_{\kappa \geq 0}$ is a collection of random conformally invariant curves  
started from given boundary points 
of a simply connected domain, and connecting them pairwise without crossing
\cite{BBK:Multiple_SLEs_and_statistical_mechanics_martingales,
Dubedat:Commutation_relations_for_SLE, 
Kozdron-Lawler:Configurational_measure_on_mutually_avoiding_SLEs,
Kytola-Peltola:Pure_partition_functions_of_multiple_SLEs,
PW:Global_multiple_SLEs_and_pure_partition_functions}.
Such curves describe scaling limits of interfaces in statistical mechanics models.
Indeed, convergence of a single interface to the $\SLEk$ has now been proven for a number of 
models, see, e.g.,~\cite{Smirnov:Critical_percolation_in_the_plane,
LSW:Conformal_invariance_of_planar_LERW_and_UST,
Schramm-Sheffield:Harmonic_explorer_and_its_convergence_to_SLE4,
Smirnov:Towards_conformal_invariance_of_2D_lattice_models,
Zhan:Scaling_limits_of_planar_LERW_in_finitely_connected_domains,
Hongler-Kytola:Ising_interfaces_and_free_boundary_conditions,
CDHKS:Convergence_of_Ising_interfaces_to_SLE},
and convergence of several interfaces to multiple $\SLE$s 
has also been established in some cases~\cite{Camia-Newman:2D_percolation_full_scaling_limit,
Izyurov:Critical_Ising_interfaces_in_multiply_connected_domains,
BPW:On_the_uniqueness_of_global_multiple_SLEs, KS-configurations_of_FK_Ising_interfaces}.

A multiple $\SLE$ can be constructed as a growth process encoded in 
a Loewner chain, see~\cite{Schramm:Scaling_limits_of_LERW_and_UST, 
Dubedat:Commutation_relations_for_SLE,
Kytola-Peltola:Pure_partition_functions_of_multiple_SLEs, PW:Global_multiple_SLEs_and_pure_partition_functions}.
As an input for the construction, 
one uses a function $\PartF \colon \chamber_{2N} \to \bRpos$, 
called a partition function of the multiple $\SLEk$. 
This function appears in the Radon-Nikodym derivative of the multiple $\SLE_\kappa$ measure 
with respect to the product measure of independent $\SLE_\kappa$ curves.
It must satisfy the second order PDE system
\begin{align}\label{eq: multiple SLE PDEs}
& \left[\frac{\kappa}{2}\pdder{x_{i}} + \sum_{j\neq i}
\left(\frac{2}{x_{j}-x_{i}}\pder{x_{j}}-\frac{2h_{1,2}}{(x_{j}-x_{i})^{2}}\right)\right]\PartF(x_{1},\ldots,x_{2N}) = 0 ,
\qquad\text{for all } i \in \{1,2,\ldots,2N\} .
\end{align}
With translation invariance, \eqref{eq: multiple SLE PDEs} 
are equivalent to the second order Benoit~\& Saint-Aubin PDEs.
Furthermore, by conformal invariance of the multiple $\SLE_\kappa$,
the partition function $\PartF$ must be covariant under all 
M\"obius transformations $\Mob\colon\bH\to\bH$ 
such that $\Mob(x_{1})<\Mob(x_{2})<\cdots<\Mob(x_{p})$:
\begin{align*}
\PartF(x_{1},\ldots, x_{2N}) = 
\prod_{i=1}^{2N}\Mob'(x_{i})^{h_{1,2}} \times 
\PartF \left(\Mob(x_{1}),\ldots,\Mob(x_{2N})\right) .
\end{align*}

The law of the multiple $\SLE_\kappa$ is not unique, for the random curves may have several 
topological connectivities of the marked boundary points. 
The connectivities are encoded in planar pair partitions $\alpha \in \PP_N$.
In fact, the convex set of probability measures of (local) multiple $\SLEk$ processes
is in one-to-one correspondence with the set of positive (and normalized) partition functions $\PartF$ ---
see~\cite{Dubedat:Commutation_relations_for_SLE, Kytola-Peltola:Pure_partition_functions_of_multiple_SLEs}.
The extremal points of this convex set correspond to the different possible connectivities~\cite{PW:Global_multiple_SLEs_and_pure_partition_functions}.

Pertaining to the construction of the extremal processes,
in~\cite{Kytola-Peltola:Pure_partition_functions_of_multiple_SLEs}
a basis $\left( \PartF_\alpha \right)_{\alpha \in \PP_N}$ of M\"obius covariant
solutions to the PDE system~\eqref{eq: multiple SLE PDEs} was constructed, using 
Theorem~\ref{thm: SCCG correspondence} and the vectors $\Puregeomtwodim_\alpha$ 
of Theorem~\ref{thm: existence of multiple SLE vectors}.
A defining property of the basis functions $\PartF_\alpha$ is 
the recursive asymptotics property
\begin{align}\label{eq: multiple SLE asymptotics}
\lim_{x_{j},x_{j+1}\to\xi}
\frac{\PartF_{\alpha}(x_{1},\ldots,x_{2N})}{|x_{j+1}-x_{j}|^{-2h_{1,2}}} =\; & \begin{cases}
0\quad & \text{if }\linkInEquation{j}{j+1}\notin\alpha\\
\PartF_{\hat{\alpha}}(x_{1},\ldots,x_{j-1},x_{j+2},\ldots,x_{2N}) & \text{if }\linkInEquation{j}{j+1}\in\alpha ,
\end{cases}
\end{align}
with $\hat{\alpha} = \alpha \removeLink \link{j}{j+1}$,
for any $x_{j-1} < \xi < x_{j+2}$ and $j \in \{1,2,\ldots,2N-1\}$, and
$\kappa \in (0,8) \setminus \bQ$.

In~\cite{Kytola-Peltola:Pure_partition_functions_of_multiple_SLEs},
these functions $\PartF_{\alpha}$ were called the pure partition functions of 
the multiple $\SLEk$. They were argued to be the partition functions of the extremal multiple $\SLEk$ processes, 
with the deterministic connectivities $\alpha$. A proof for this fact appeared recently in~\cite{PW:Global_multiple_SLEs_and_pure_partition_functions}
in the case $0 < \kappa \leq 4$.

Specifically, with $q = e^{\ii \pi 4 / \kappa}$,
the pure partition functions were constructed in~\cite{Kytola-Peltola:Pure_partition_functions_of_multiple_SLEs} as
\[ \PartF_\alpha := (B^{2,2}_1)^{-N} \sF[\Puregeomtwodim_{\alpha}] , \qquad \text{for } \alpha \in \PP_N , \]
with the normalization constant chosen in such a way that the functions 
$\PartF_\alpha$ 
satisfy the asymptotics~\eqref{eq: multiple SLE asymptotics}
with no constants in front.
The property~\eqref{eq: multiple SLE asymptotics} is in fact a special case 
of Theorem~\ref{thm: asymptotic properties of general basis vectors}, and indeed, the more general functions 
$\BasisF_\linkpatt = \sF[\Puregeom_\linkpatt]$ should 
provide pure partition functions of systems of multiple $\SLEk$ curves
growing from the boundary, 
in the spirit of~\cite{Friedrich-Werner:Conformal_restriction_highest_weight_representations_and_SLE,
Kontsevich:CFT_SLE_and_phase_boundaries,
Friedrich-Kalkkinen:On_CFT_and_SLE,
Duplantier:Conformal_random_geometry,
Kontsevich-Suhov:On_Malliavin_measures_SLE_and_CFT,
Dubedat:SLE_and_Virasoro_representations_fusion}. 
Also, the functions $\BasisF_\linkpatt$ describe 
observables concerning geometric properties of interfaces
--- see, e.g.,~\cite{Gamsa-Cardy:The_scaling_limit_of_two_cluster_boundaries_in_critical_lattice_models,
BJV:Some_remarks_on_SLE_bubbles_and_Schramms_2point_observable,
Dubedat:SLE_and_Virasoro_representations_fusion,
JJK:SLE_boundary_visits,
KKP:Conformal_blocks_pure_partition_functions_and_KW_binary_relation,
LV:Coulomb_gas_for_commuting_SLEs, LV:Coulomb_gas_integrals_PART2,
PW:Global_multiple_SLEs_and_pure_partition_functions}.

In Corollary~\ref{cor: removing link around infinity} we show that 
the property~\eqref{eq: multiple SLE asymptotics} of the pure partition functions
$\PartF_\alpha$ is also true when taking 
the limit $x_1 \to -\infty$ and $x_{2N} \to +\infty$, corresponding to the removal
of the link $\link{1}{2N}$. 
We also consider the more general pure partition functions
\begin{align}\label{eq: pure partition functions with defects}
\PartF_\alpha := (B_1^{2,2})^{-N} \BasisF_\alpha,  \qquad \text{for } \alpha \in \PP_N^{(s)},
\end{align}
which are homogeneous
solutions to the second order PDEs~\eqref{eq: multiple SLE PDEs},
but, when $s > 0$, not covariant under all M\"obius maps in the covariance 
formula~\eqref{eq: Mobius covariance}.
We prove in Proposition~\ref{prop: FK dual elements}
that these functions are linearly independent, and thus obtain 
a basis of a solution space of the PDE system~\eqref{eq: multiple SLE PDEs}.

\subsection{\label{subsec: cyclic permutation symmetry for multiple SLEs}Cyclic permutation symmetry}

Corollary~\ref{cor: cyclic symmetry} gives a general cyclic permutation symmetry of the vectors
$\Puregeom_{\linkpatt}\in\HWsp_\multii^{(0)} \subset \bigotimes_{i=1}^{p}\Wd_{d_{i}}$
in the trivial subrepresentation.
The special case of $n=2N$, with $d_i=2$, for all $i$, can be used in applications 
to the properties of the pure partition functions $\PartF_\alpha$.

First, from Proposition~\ref{prop: cyclicity of limits} we immediately get
the following corollary.

\begin{cor}\label{cor: cyclicity of limits for pure geometries}
Let $\kappa \in (0,\infty) \setminus \bQ$. Then we have
\begin{align*}
\lim_{y \to + \infty} \Big( y^{2h_{1,2}} \times 
\PartF_\alpha(x_1,\ldots,x_{2N-1},y) \Big)
= \lim_{y \to - \infty} \Big( |y|^{2h_{1,2}} \times 
\PartF_{\Scomb(\alpha)}(y,x_1,\ldots,x_{2N-1}) \Big).
\end{align*} 
\end{cor}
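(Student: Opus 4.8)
The plan is to obtain this as a direct specialization of Proposition~\ref{prop: cyclicity of limits} to the setting of the multiple $\SLE$ pure partition functions, namely to all valences equal to one. First I would record the identification that, for a pair partition $\alpha \in \PP_N$, the general basis vector $\Puregeom_\alpha$ of Theorem~\ref{thm: highest weight vector space basis vectors} coincides with the special vector $\Puregeomtwodim_\alpha$ of Theorem~\ref{thm: existence of multiple SLE vectors}. Indeed, when $\multii = (1,\ldots,1)$ and $\nodef = 0$, the only admissible value in~\eqref{eq: projection conditions} is $m=1$ with $\projdmn = 1$, and a one-line computation gives $\constantfromdiagram{1}{1}{1} = 1$, so the system~\eqref{eq: cartan eigenvalue}~--~\eqref{eq: projection conditions} reduces exactly to~\eqref{eq: multiple sle cartan eigenvalue}~--~\eqref{eq: multiple sle projection conditions}; uniqueness then forces $\Puregeom_\alpha = \Puregeomtwodim_\alpha$. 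Consequently $\PartF_\alpha = |B_1^{2,2}|^{-N}\sF[\Puregeom_\alpha]$, and since $\Scomb(\alpha)$ again lies in $\PP_N$, the same normalization constant applies to it.

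Next I would view $\alpha$ as an element of $\LP^{(0)}_{(\multii,\nodefsmaller)}$ with $\multii = (1,\ldots,1)$ having $2N-1$ entries and $\nodef = 1$, singling out the $2N$-th point $y = x_{2N}$ as the distinguished index of valence $\nodef = 1$. Then $d = \nodef + 1 = 2$, so $h_{1,d} = h_{1,2}$ matches the exponent in the statement, and the cyclic map $\Scomb = \Scomb^{(1)}$ sends $\alpha$ to the pair partition obtained by moving the rightmost index to the leftmost position—exactly the cyclic permutation appearing on the right-hand side. Applying Proposition~\ref{prop: cyclicity of limits} verbatim to $\Puregeom_\alpha$ would then yield
\begin{align*}
\lim_{y \to + \infty} \Big( y^{2h_{1,2}} \times
\sF[\Puregeom_{\alpha}](x_1,\ldots,x_{2N-1},y) \Big)
= \lim_{y \to - \infty} \Big( |y|^{2h_{1,2}} \times
\sF[\Puregeom_{\Scomb(\alpha)}](y,x_1,\ldots,x_{2N-1}) \Big),
\end{align*}
and multiplying both sides by $|B_1^{2,2}|^{-N}$ would give the claimed identity for $\PartF_\alpha$ and $\PartF_{\Scomb(\alpha)}$.

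I do not expect a genuine obstacle, since Proposition~\ref{prop: cyclicity of limits} already encapsulates the full content through Corollary~\ref{cor: cyclic symmetry} and the limit formulas of~\cite[Proposition~5.4]{Kytola-Peltola:Conformally_covariant_boundary_correlation_functions_with_quantum_group}. The only points requiring care are bookkeeping: confirming the identification $\Puregeom_\alpha = \Puregeomtwodim_\alpha$, so that the normalized function is genuinely $|B_1^{2,2}|^{-N}\sF[\Puregeom_\alpha]$; and noting that the admissible parameter range here is $\kappa \in (0,\infty)\setminus\bQ$ rather than $\kappa \in (0,8)$. The latter holds because the limits at infinity are extracted directly from the clean single-term formula of~\cite[Proposition~5.4]{Kytola-Peltola:Conformally_covariant_boundary_correlation_functions_with_quantum_group}, expressed through the isomorphisms $R_+^{(\nodefsmaller)}$ and $R_-^{(\nodefsmaller)}$, with no competition between the orders of distinct exponents of the kind that forces the restriction $\kappa \in (0,8)$ in Theorem~\ref{thm: asymptotic properties of general basis vectors}(4).
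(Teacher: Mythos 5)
Your proposal is correct and follows the paper's own route: the paper's proof is precisely the one-line observation that the claim follows from the definition $\PartF_\alpha = |B_1^{2,2}|^{-N}\,\sF[\Puregeomtwodim_\alpha]$ together with Proposition~\ref{prop: cyclicity of limits}. The extra bookkeeping you supply (the identification $\Puregeom_\alpha = \Puregeomtwodim_\alpha$ via $\constantfromdiagram{1}{1}{1}=1$ and uniqueness, the specialization $d=2$, $h_{1,d}=h_{1,2}$, and the remark on the $\kappa$-range) is accurate and merely makes explicit what the paper leaves implicit.
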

\begin{proof}
The assertion follows directly from the definition
$\PartF_\alpha := (B_1^{2,2})^{-N} \sF[\Puregeomtwodim_{\alpha}]$ and Proposition~\ref{prop: cyclicity of limits}.
\end{proof}

Using this, we can extend the cascade property~\eqref{eq: multiple SLE asymptotics} 
for the pure partition functions to $j = 2N$. 

\begin{cor}\label{cor: removing link around infinity}
Let $\kappa \in (0,8) \setminus \bQ$. Denote by
$\hat{\alpha} = \alpha\removeLink\link{1}{2N}$. Then we have
\begin{align*}
\lim_{\substack{x_{1}\to -\infty, \\ x_{2N}\to +\infty}}
|x_{2N}-x_{1}|^{2h_{1,2}} \times \PartF_{\alpha}(x_{1},\ldots,x_{2N})
=\; & \begin{cases}
0\quad & \text{if }\linkInEquation{1}{2N}\notin\alpha\\
\PartF_{\hat{\alpha}}(x_{2},\ldots,x_{2N-1}) 
& \text{if } \linkInEquation{1}{2N}\in\alpha.
\end{cases}
\end{align*} 
\end{cor}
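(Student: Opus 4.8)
The plan is to reduce this ``limit around infinity'' to an ordinary collision of two \emph{linearly adjacent} points, using the cyclic permutation symmetry. The conceptual point is that on the boundary circle $\bR\cup\{\infty\}$ the requirements $x_1\to-\infty$ and $x_{2N}\to+\infty$ make the two extreme marked points approach the single point $\infty$ from its two sides; an orientation-preserving M\"obius map $\Mob\colon\bH\to\bH$ carrying $\infty$ to a finite point turns this into the merging of two points that have become linearly adjacent, at the cost of cyclically relabelling the remaining marked points. At the level of link patterns this relabelling is exactly the map $\Scomb$ of Section~\ref{sec: cyclic permutation symmetry}, and the crucial combinatorial observation is that $\Scomb$ carries the link $\link{1}{2N}$ (of $\alpha$) to the \emph{adjacent} link $\link{1}{2}$ (of $\Scomb(\alpha)$): writing $y=x_{2N}$ for the rightmost argument, the cyclic shift sends old index $2N$ to new index $1$ and old index $i$ to new index $i+1$, so that $\link{1}{2N}\in\alpha$ if and only if $\link{1}{2}\in\Scomb(\alpha)$, and in that case $\Scomb(\alpha)\removeLink\link{1}{2}=\hat{\alpha}$ after relabelling, since both operations delete exactly the indices $1$ and $2N$ and keep $x_2,\ldots,x_{2N-1}$ in place.

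Concretely I would first invoke Corollary~\ref{cor: cyclicity of limits for pure geometries}, which packages precisely this cyclic covariance: it identifies the $x_{2N}\to+\infty$ limit of $x_{2N}^{2h_{1,2}}\PartF_\alpha(x_1,\ldots,x_{2N})$ with the $y\to-\infty$ limit of $|y|^{2h_{1,2}}\PartF_{\Scomb(\alpha)}(y,x_1,\ldots,x_{2N-1})$. Under this identification the two marked points being pushed off to infinity, namely $x_{2N}=y$ and $x_1$, become the two \emph{leftmost} arguments of $\PartF_{\Scomb(\alpha)}$, sitting at positions $1$ and $2$. I then view the simultaneous limit $x_1\to-\infty,\ x_{2N}\to+\infty$ as a single merging of these adjacent points; using the full M\"obius covariance \eqref{eq: Mobius covariance} of $\PartF_{\Scomb(\alpha)}$ (available because $\Scomb(\alpha)$ has no defects) to transport the merging from $\infty$ to a finite interior point $\xi$, the recursive asymptotics \eqref{eq: multiple SLE asymptotics}---equivalently Theorem~\ref{thm: asymptotic properties of general basis vectors}(4) with $\projdmn=1$, i.e. $m=1$---applies at the first index. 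It returns $0$ when $\link{1}{2}\notin\Scomb(\alpha)$ and $\PartF_{\Scomb(\alpha)\removeLink\link{1}{2}}(x_2,\ldots,x_{2N-1})$ when $\link{1}{2}\in\Scomb(\alpha)$; combined with the combinatorial identity $\Scomb(\alpha)\removeLink\link{1}{2}=\hat{\alpha}$ above, this yields exactly the claimed dichotomy, with no spurious constant (the normalizations of $\PartF$ and of Corollary~\ref{cor: cyclicity of limits for pure geometries} are both constant-free).

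The delicate point---and the main obstacle---is the scaling bookkeeping for the simultaneous limit. The statement carries a \emph{single} factor $|x_{2N}-x_1|^{2h_{1,2}}$, matching the single power $|x_{j+1}-x_j|^{2h_{1,2}}$ of one adjacent collision in \eqref{eq: multiple SLE asymptotics}; this is consistent only if the two infinities are treated as one collision of a pair of points, \emph{not} as two independent one-point limits at infinity (the latter would mis-count the exponent, since each separate one-point limit already consumes a factor $2h_{1,2}$). Thus I must check that the products of M\"obius derivatives $\prod_i\Mob'(x_i)^{h_{1,2}}$ generated by the transport combine with the collision factor to reproduce precisely $|x_{2N}-x_1|^{2h_{1,2}}$ on the merging pair while rebuilding the covariance of $\PartF_{\hat{\alpha}}(x_2,\ldots,x_{2N-1})$ on the remaining variables, and that the order-preservation hypothesis of \eqref{eq: Mobius covariance} holds throughout (which forces the pole of the chosen $\Mob$, e.g. $\Mob(z)=-1/(z-P)$ with $P$ to the right of all marked points, to lie outside the relevant interval, so that the linear order is preserved as $x_1$ and $x_{2N}$ slide to $\infty$). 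Finally I would record that $\kappa\in(0,8)\setminus\bQ$ enters exactly as in Theorem~\ref{thm: asymptotic properties of general basis vectors}(4): it guarantees that the exponent $\Delta^{2,2}_{\projdmn}$ for the full-merging channel $\projdmn=1$ is the dominant one, so that the subleading channels do not survive and the collision picks out precisely the link removal.
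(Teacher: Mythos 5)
Your proposal is correct and follows essentially the same route as the paper: both rest on Corollary~\ref{cor: cyclicity of limits for pure geometries} to trade the $x_{2N}\to+\infty$ limit for a leftmost variable tending to $-\infty$ in $\PartF_{\Scomb(\alpha)}$, the combinatorial observation that $\link{1}{2N}\in\alpha$ iff $\link{1\;}{\;2\;}\in\Scomb(\alpha)$ with $\Scomb(\alpha)\removeLink\link{1\;}{\;2\;}=\hat{\alpha}$, and then the adjacent-collision asymptotics~\eqref{eq: multiple SLE asymptotics}. The only difference is presentational: the paper takes the two limits iteratively and applies the asymptotics at $-\infty$ without comment, whereas you make the M\"obius transport of the collision point to a finite location explicit --- a reasonable extra care that does not change the argument.
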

\begin{proof}
We can take either limit first and get the same result.
Observe first that we have $\link{1}{2N}\in\alpha$ if and only if 
$\link{1\;}{\;2\;}\in\Scomb(\alpha)$, and in that case, 
$\Scomb(\alpha)\removeLink\link{1\;}{\;2\;} = \alpha\removeLink\link{1}{2N}$.
Using Corollary~\ref{cor: cyclicity of limits for pure geometries},
the asymptotics property~\eqref{eq: multiple SLE asymptotics}
for $\PartF_{\Scomb(\alpha)}$, and the above observation concerning the links, 
we can calculate the limit
\begin{align*}
& \lim_{x_{1}\to -\infty} \lim_{x_{2N}\to +\infty}
|x_{2N}-x_{1}|^{2h_{1,2}} \times \PartF_{\alpha}(x_{1},\ldots,x_{2N}) \\
= & \; \lim_{x_{1}\to -\infty} \lim_{x_{2N}\to -\infty}
|x_{2N}-x_{1}|^{2h_{1,2}} \times 
\PartF_{\Scomb(\alpha)}(x_{2N},x_{1},x_{2}\ldots,x_{2N-1}) \\
= & \; \begin{cases}
0\quad & \text{if }\linkInEquation{1\;}{\;2\;}\notin\Scomb(\alpha) \\
\PartF_{\Scomb(\alpha)\removeLink\linksmallerInEquation{1\;}{\;2\;}}(x_{2},\ldots,x_{2N-1})  & \text{if } \linkInEquation{1\;}{\;2\;}\in\Scomb(\alpha)
\end{cases} \\
= & \; \begin{cases}
0\quad & \text{if }\linkInEquation{1\,}{\,2N}\notin\alpha \\
\PartF_{\alpha\removeLink\linksmallerInEquation{1\,}{\,2N}}(x_{2},\ldots,x_{2N-1})  & \text{if } \linkInEquation{1\,}{\,2N}\in\alpha.
\end{cases}
\end{align*} 
\end{proof}
 
Corollary~\ref{cor: removing link around infinity} combined with
Equation~\eqref{eq: multiple SLE asymptotics} shows that linear 
combinations of the basis functions $\PartF_{\alpha}$ have a cascade property
with respect to removing any link connecting consecutive points, when the boundary
$\bdry \bH = \bR$ is viewed as the circle 
$\mathbb{S}^1=\set{z\in\bC\;\big|\;|z=1|}$, say.
Such a property is natural for the random $\SLE_\kappa$ type curves --- see 
Figure~\ref{fig: sle connectivities} for an illustration.
In fact, this cascade property can be taken as a defining property of a global multiple $\SLE_\kappa$,
see~\cite{Kozdron-Lawler:Configurational_measure_on_mutually_avoiding_SLEs, PW:Global_multiple_SLEs_and_pure_partition_functions}.

\begin{figure}
\includegraphics[scale=.6]{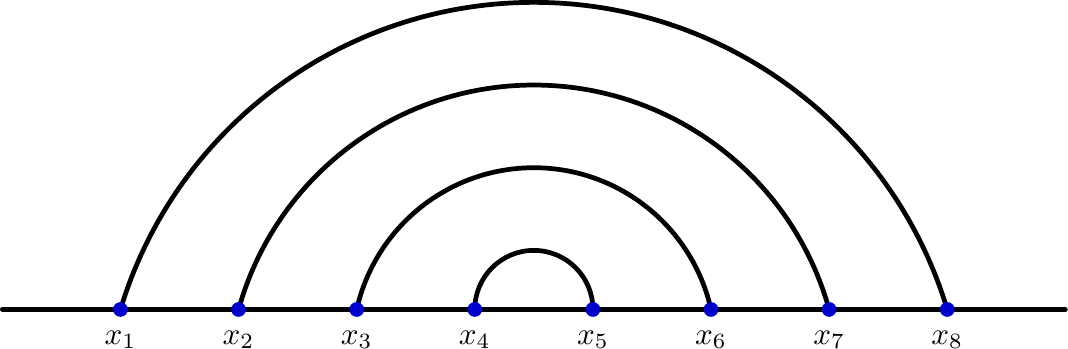} \qquad 
\includegraphics[scale=.6]{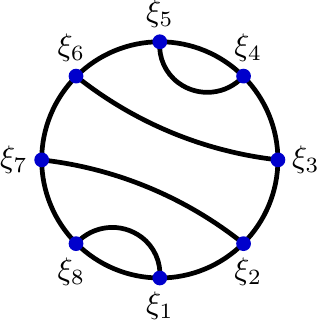}
\caption{\label{fig: sle connectivities}
The probability measure of multiple $\SLEk$ curves 
is conformally invariant. The figure depicts how a connectivity of the curves 
is mapped under a conformal map from the upper half-plane $\bH$ to the disc 
$\mathbb{D}$. The starting points $x_1 < x_2 < \ldots < x_{2N}$ of the curves 
in $\bH$ are mapped to the points 
$\bdrypt_1,\bdrypt_2,\ldots,\bdrypt_{2N}$ appearing in 
counterclockwise order along the boundary of $\mathbb{D}$.
The cascade property~\eqref{eq: multiple SLE asymptotics}
given in Corollary~\ref{cor: removing link around infinity}
for the outermost link connecting $x_1$ and $x_{2N}$ is manifest in the disc on the right.}
\end{figure}

\subsection{\label{subsec: injectivity}Linear independence for solutions to second order PDEs}

We now consider the functions $\PartF_\alpha$, with 
$\alpha \in \PP_N^{(s)}$,
defined in Equation~\eqref{eq: pure partition functions with defects}.
These functions form a basis of the solution space of the second order PDE 
system~\eqref{eq: multiple SLE PDEs}, consisting of homogeneous solutions,
in the sense of items (1) and (2) in 
Theorem~\ref{thm: asymptotic properties of general basis vectors}.
With $n = 2N + s$, this solution space is the image $\sF[\HWsp_{n}^{(s)}]$
of the highest weight vector space $\HWsp_{n}^{(s)}$
under the map $\sF$ of Theorem~\ref{thm: SCCG correspondence}.
We prove the linear independence of the functions $\PartF_\alpha$
by constructing a basis for the dual space
\[ \sF[\HWsp_{n}^{(s)}]^* 
= \set{\psi\colon\HWsp_{n}^{(s)} 
\to \bC \; | \; \psi \text{ is a linear map}} , \]
using similar ideas as 
in~\cite[Section~4.2]{Kytola-Peltola:Pure_partition_functions_of_multiple_SLEs},
where the case $s = 0$ was treated.

\begin{figure}
\includegraphics[scale=.75]{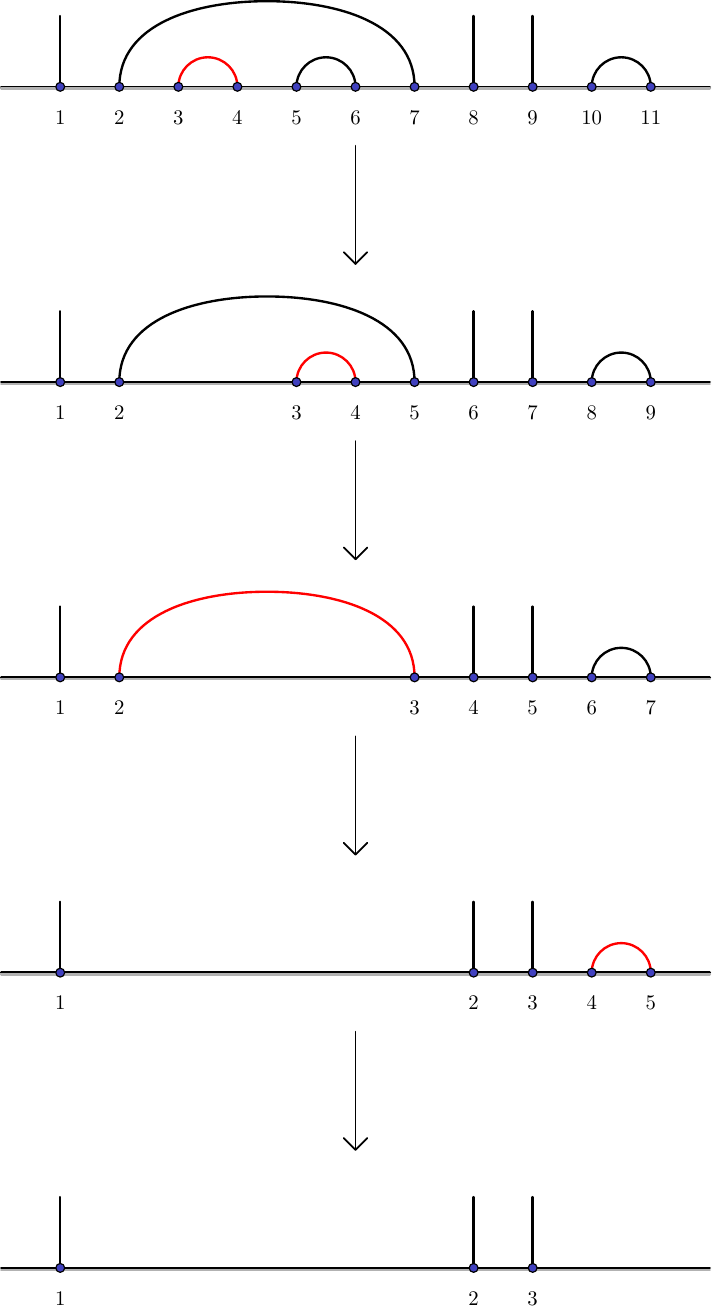}
\caption{\label{fig: allowable removal of links example}
Example of an allowable ordering to remove links. After the removal of all the links, the defects remain.
Notice also the relabeling of the indices after each step}
\end{figure}

\begin{prop}\label{prop: FK dual elements}
Let $\kappa \in (0,8) \setminus \bQ$,  
$s \in \bZnn$, and $n = 2N + s \in \bZpos$. The collection
$\left(\PartF_\alpha\right)_{\alpha\in\PP_N^{(s)}}$ is a
basis of the solution space $\sF[\HWsp_n^{(s)}]$ of
dimension $\frac{s + 1}{N+s+1}\binom{2N+s}{N+s}$.
\end{prop}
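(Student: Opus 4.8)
The plan is to combine a dimension count with an explicit \emph{dual basis}. Since $(\Puregeom_\alpha)_{\alpha\in\PP_N^{(\nodefsmaller)}}$ is a basis of $\HWsp_n^{(\nodefsmaller)}$ by Theorem~\ref{thm: highest weight vector space basis vectors}(b), the functions $\PartF_\alpha = |B_1^{2,2}|^{-N}\sF[\Puregeom_\alpha]$ span the image $\sF[\HWsp_n^{(\nodefsmaller)}]$, so $\dim\sF[\HWsp_n^{(\nodefsmaller)}]\le \#\PP_N^{(\nodefsmaller)} = \dim\HWsp_n^{(\nodefsmaller)} = \frac{\nodef+1}{N+\nodef+1}\binom{2N+\nodef}{N+\nodef}$, the last equalities by Lemma~\ref{lem: cardinalities are equal} and the explicit count of Section~\ref{subsec: powers of two-dimensionals}. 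Hence it suffices to prove that the $\PartF_\alpha$ are linearly independent; this will simultaneously show that $\sF$ is injective on $\HWsp_n^{(\nodefsmaller)}$ and identify the dimension. I would do this by constructing, for each $\alpha\in\PP_N^{(\nodefsmaller)}$, a linear functional $\psi_\alpha$ on $\sF[\HWsp_n^{(\nodefsmaller)}]$ with $\psi_\alpha(\PartF_\beta)=c_\alpha\,\delta_{\alpha,\beta}$, $c_\alpha\neq0$, following the strategy of~\cite[Section~4.2]{Kytola-Peltola:Pure_partition_functions_of_multiple_SLEs} for the case $\nodef=0$.

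The functionals arise from iterated boundary collapses. Because every defect of a pattern in $\PP_N^{(\nodefsmaller)}$ lies in the unbounded complementary region, an innermost link necessarily joins two consecutive indices $j,j+1$, and its removal produces a pattern in $\PP_{N-1}^{(\nodefsmaller)}$ with the same defects; iterating, I fix for each $\alpha$ an allowable removal order reducing $\alpha$ to the defect-only pattern $\defpatt_{(1,\ldots,1)}$, as in Figure~\ref{fig: allowable removal of links}. For a position $j$, set $\Phi_j[f]:=\lim_{x_j,x_{j+1}\to\xi}|x_{j+1}-x_j|^{2h_{1,2}}\,f$. On the spanning functions $\PartF_\gamma$ this limit exists by the asymptotics~\eqref{eq: multiple SLE asymptotics} (equivalently item~(4) of Theorem~\ref{thm: asymptotic properties of general basis vectors}, applied with all valences equal to $1$, hence $m=1$ and $\constantfromdiagram{1}{1}{1}=1$; here $\kappa\in(0,8)$ guarantees that the corresponding exponent is the leading one), and equals either $0$ or a nonzero multiple of the reduced function $\PartF_{\gamma\removeLink\link{j}{j+1}}$. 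Thus $\Phi_j$ is a well-defined linear map on the finite-dimensional space $\sF[\HWsp_n^{(\nodefsmaller)}]$, and I define $\psi_\alpha$ as the composition of the $\Phi_{j}$ along $\alpha$'s removal order, followed by evaluation of the resulting function of $\nodef$ variables at a fixed point of $\chamber_\nodef$.

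The crux --- and the step I expect to be the main obstacle --- is the combinatorial identity $\psi_\alpha(\PartF_\beta)=0$ for $\beta\neq\alpha$. By~\eqref{eq: multiple SLE asymptotics}, a collapse $\Phi_j$ applied to a reduced $\PartF_\gamma$ is nonzero exactly when $\link{j}{j+1}\in\gamma$; hence $\psi_\alpha(\PartF_\beta)\neq0$ forces each collapse in $\alpha$'s fixed position-sequence to find its link present in the successive reductions of $\beta$. Reversing this sequence --- starting from the common terminal defect pattern and, at each backward step, re-inserting a linked pair of consecutive points at the prescribed position --- reconstructs a unique link pattern; since both $\alpha$ and (by assumption) $\beta$ reduce to the terminal pattern along the \emph{same} position-sequence, reduction and reconstruction being mutually inverse forces $\beta=\alpha$. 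On the diagonal, applying $\alpha$'s sequence to $\PartF_\alpha$ peels off every link, each step returning the reduced pure partition function up to a nonzero constant, and terminates at $\PartF_{\defpatt_{(1,\ldots,1)}}$, which by Lemma~\ref{lem: basis functions with only defects} equals a nonzero constant times $\prod_{i<j}(x_j-x_i)^{2/\kappa}>0$ on $\chamber_\nodef$; evaluating at the chosen point yields $c_\alpha\neq0$. The matrix $[\psi_\alpha(\PartF_\beta)]$ is therefore diagonal with nonzero entries, so the $\PartF_\alpha$ are linearly independent; together with the dimension count they form a basis of $\sF[\HWsp_n^{(\nodefsmaller)}]$. (Alternatively, the outermost link $\link{1}{n}$ could be removed through Corollary~\ref{cor: removing link around infinity}, but innermost collapses already suffice.)
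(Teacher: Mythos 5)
Your proposal is correct and follows essentially the same route as the paper: both construct dual functionals by iterated collapse limits along an allowable link-removal ordering of $\alpha$, use the asymptotics of Theorem~\ref{thm: asymptotic properties of general basis vectors} (with $\constantfromdiagram{1}{1}{1}=1$) to get a diagonal pairing matrix against the $\PartF_\beta$, and combine this with the count $\#\PP_N^{(\nodefsmall)}=\dmn \HWsp_n^{(\nodefsmaller)}$. The only differences are cosmetic --- the paper lands on the defect-only function $\BasisF_{\defpatt_{(1,\ldots,1)}}$ and identifies the one-dimensional target with $\bC$ rather than evaluating at a point, and your reconstruction argument for the off-diagonal vanishing is a slightly more explicit version of the paper's observation that some link of $\alpha$'s ordering must be absent from $\beta$.
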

\begin{proof}
The case $s = 0$ was proved in
\cite[Proposition~4.2]{Kytola-Peltola:Pure_partition_functions_of_multiple_SLEs}.
The case $s > 0$ is very similar, so we only give the idea of the proof.
We consider the links in the link pattern 
\begin{align*}
\alpha = \Big\{\linkInEquation{a_1}{b_1},\ldots,\linkInEquation{a_N}{b_N}\Big\}
\bigcup
\Big\{\defectInEquation{c_1},\ldots,\defectInEquation{c_{s}}\Big\} 
\,\in\,\PP_N^{(s)}
\end{align*} 
as an ordered set, 
see Appendix~\ref{app: dual elements} 
and~\cite[Section~3.5]{Kytola-Peltola:Pure_partition_functions_of_multiple_SLEs}
for details.
We say that the ordering of the links is allowable for $\alpha$ 
if all links of $\alpha$ can be removed in such a way that at each step, the link 
to be removed connects two consecutive indices --- see 
Figure~\ref{fig: allowable removal of links example} for an illustration.
The precise definition of ``allowability'' was given in
\cite[Section~3.5]{Kytola-Peltola:Pure_partition_functions_of_multiple_SLEs}
for the case $s = 0$, but as the defects of $\alpha$ play no role in 
the link removal and no defects lie inside any link, the notion of an allowable 
ordering of links is the same for any $\alpha \in \PP_N^{(s)}$.

Suppose that the ordering of the links in $\alpha$ is allowable.
Then, by Theorem~\ref{thm: asymptotic properties of general basis vectors},
the iterated limit 
\begin{align*}
\FKdual_\alpha (\sF[v])
:= \; & \lim_{x_{a_{N}},x_{b_{N}}\to\xi_{N}} 
\cdots \lim_{x_{a_{1}},x_{b_{1}}\to\xi_{1}}
(x_{b_{N}}-x_{a_{N}})^{2h_{1,2}}
\cdots (x_{b_{1}}-x_{a_{1}})^{2h_{1,2}} \times \sF[v](x_1,\ldots,x_{n})
\end{align*}
exists for any $v \in \HWsp_n^{(s)}$. 
Consider the image $\BasisF_\alpha = \sF[\Puregeom_\alpha]$ of the basis vector 
$\Puregeom_\alpha \in \HWsp_n^{(s)}$.
Suppose that $c_1 < c_2 < \cdots < c_{s}$, and denote by
$y_i = x_{c_i}$, for $i \in \{ 1,2,\ldots,s \}$. 
Using the property~\eqref{eq: asymptotic properties} of $\BasisF_\alpha$ 
with the constant $\constantfromdiagram{1}{1}{1} = 1$ given 
by Equation~\eqref{eq: constant from diagram}, 
we evaluate the limit $\FKdual_\alpha (\BasisF_\alpha)$ as
\begin{align*}
\FKdual_\alpha (\BasisF_\alpha) (y_1,\ldots,y_{s}) =
(B_1^{2,2})^{N} \times \BasisF_{\defpatt_\partition}
(y_1,\ldots,y_{s}),
\qquad \text{for } \alpha \in \PP_N^{(s)},
\end{align*}
where $\partition = (1,1,\ldots,1,1) \in \bZ^s$
and $\BasisF_{\defpatt_\partition}$ has the explicit formula
given in Lemma~\ref{lem: basis functions with only defects}.
With the identification $\Puregeom_{\defpatt_\partition} \mapsto 1$
as in Remark~\ref{rem: one dimensionality of partition space},
and the formula in  Lemma~\ref{lem: basis functions with only defects},
we may interpret $\FKdual_\alpha (\PartF_\alpha) = 1$.

On the other hand, if $\beta \neq \alpha$, then the limit 
$\FKdual_\alpha (\BasisF_\beta)$ evaluates to zero, because of 
the property~\eqref{eq: asymptotic properties} and the fact that when 
$\beta \neq \alpha$, then we have $\link{a_j}{b_j} \notin \beta$, for some link
$\link{a_j}{b_j} \in \alpha$ in the allowable ordering.

It follows that the map 
$\FKdual_\alpha \colon \sF[\HWsp_n^{(s)}]\rightarrow\bC$
is well-defined and independent of the choice of the allowable ordering 
for $\alpha$. In particular, the collection
$\left(\FKdual_\alpha\right)_{\alpha\in\PP_N^{(s)}}$ is a
basis of the dual space 
$\sF[\HWsp_n^{(s)}]^*$,
such that 
\begin{align*}
\FKdual_\alpha ( \PartF_\beta ) = \delta_{\alpha,\beta} =\; & \begin{cases}
1\quad & \text{if } \beta = \alpha \\
0 & \text{if }\beta\neq\alpha .
\end{cases}
\end{align*}
Therefore, $\left(\PartF_\alpha\right)_{\alpha\in\PP_N^{(s)}}$ 
is a basis of the solution space 
$\sF[\HWsp_n^{(s)}]$, dual to 
$\left(\FKdual_\alpha\right)_{\alpha\in\PP_N^{(s)}}$.
The formula for the dimension of this space follows from 
Lemma~\ref{lem: cardinalities are equal} 
and~\cite[Lemma~2.2]{Kytola-Peltola:Pure_partition_functions_of_multiple_SLEs}:
$\# \PP_N^{(s)} = 
\frac{s + 1}{N+s+1}\binom{2N+s}{N+s}$.
\end{proof}

The linear independence of the functions $\PartF_\alpha$ immediately 
gives injectivity of the ``spin chain~--~Coulomb gas correspondence'' map $\sF$
in the case of 
$\multii = (1,1,\ldots,1,1)$ and $s \geq 0$.
This generalizes the previous injectivity result 
\cite[Corollary~4.3]{Kytola-Peltola:Pure_partition_functions_of_multiple_SLEs}.
We prove the injectivity of $\sF$ in full generality 
in forthcoming work~\cite{Flores-Peltola:Solution_space_of_BSA_PDEs},
where  we study solution spaces of the Benoit~\& Saint-Aubin PDEs in detail.

\begin{cor}\label{cor: injectivity}
For $s \in \bZnn$ and $n = 2N + s \in \bZpos$, the map 
$\sF \colon \HWsp_n^{(s)} \rightarrow \sC^{\infty}(\chamber_{n})$ is injective.
\end{cor}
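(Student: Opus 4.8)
The plan is to obtain injectivity as a formal consequence of the linear independence already established in Proposition~\ref{prop: FK dual elements}, with no substantial new work required. First I would recall that $\sF$ is a \emph{linear} map by Theorem~\ref{thm: SCCG correspondence}, and that by Theorem~\ref{thm: highest weight vector space basis vectors}(b), specialized to the multiindex $\multii = (1,1,\ldots,1,1) \in \bZpos^n$, the vectors $\left(\Puregeom_\alpha\right)_{\alpha\in\PP_N^{(\nodefsmaller)}}$ form a basis of the domain $\HWsp_n^{(\nodefsmaller)}$.

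Next I would observe that the images $\sF[\Puregeom_\alpha] = |B_1^{2,2}|^{N}\,\PartF_\alpha$ are, up to the single nonzero scalar $|B_1^{2,2}|^{N}$, precisely the functions $\PartF_\alpha$ proved to be linearly independent in Proposition~\ref{prop: FK dual elements}. A linear map carrying a basis of its domain to a linearly independent family is injective: if $v = \sum_{\alpha}c_\alpha\,\Puregeom_\alpha \in \HWsp_n^{(\nodefsmaller)}$ satisfies $\sF[v] = 0$, then $\sum_{\alpha}c_\alpha\,\sF[\Puregeom_\alpha] = 0$ by linearity, and the linear independence of the family $\big(\sF[\Puregeom_\alpha]\big)_{\alpha\in\PP_N^{(\nodefsmaller)}}$ forces every $c_\alpha = 0$, hence $v = 0$. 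Equivalently, one may argue by a dimension count: Lemma~\ref{lem: cardinalities are equal} gives $\dmn(\HWsp_n^{(\nodefsmaller)}) = \#\,\PP_N^{(\nodefsmaller)} = \frac{\nodef+1}{N+\nodef+1}\binom{2N+\nodef}{N+\nodef}$, which by Proposition~\ref{prop: FK dual elements} equals the dimension of the image $\sF[\HWsp_n^{(\nodefsmaller)}]$; since the restriction of $\sF$ to $\HWsp_n^{(\nodefsmaller)}$ is surjective onto its image with equal source and target dimensions, its kernel is trivial.

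There is essentially no obstacle here, as the genuine content was carried out in Proposition~\ref{prop: FK dual elements}. The only point deserving a word of care is bookkeeping: one must note that the relevant map is the restriction of $\sF$ to the subspace $\HWsp_n^{(\nodefsmaller)} \subset \Wd_2^{\tens n}$, and that its image is spanned exactly by the $\PartF_\alpha$, so that the established linear independence of these functions transfers directly into injectivity of $\sF$ on $\HWsp_n^{(\nodefsmaller)}$.
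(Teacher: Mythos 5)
Your argument is correct and coincides with the paper's own proof: both deduce injectivity by linearity from the fact that the basis vectors $\Puregeom_\alpha$ of $\HWsp_n^{(\nodefsmaller)}$ (Proposition~\ref{prop: basis of highest weight vector space}) are mapped to the functions $\PartF_\alpha$ up to a nonzero scalar, whose linear independence is Proposition~\ref{prop: FK dual elements}. The additional dimension-count remark is a harmless reformulation of the same observation.
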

\begin{proof}
The assertion follows by linearity from 
Propositions~\ref{prop: basis of highest weight vector space}~and~\ref{prop: FK dual elements} 
--- the images $\BasisF_\alpha = \sF[\Puregeom_\alpha]$ of the basis vectors 
$\Puregeom_\alpha$ of $\HWsp_n^{(s)}$ are linearly independent,
because the functions $\PartF_\alpha = (B_1^{2,2})^{-N}\sF[\Puregeom_\alpha]$
are.
\end{proof}

\appendix

\bigskip{}
\section{\label{app: q-combinatorics}$q$-combinatorics}

In this appendix, we prove ``$q$-combinatorial formulas'' needed in this article.
We first recall the definitions 
\begin{align*}
\qnum{m} = \; & \frac{q^{m}-q^{-m}}{q-q^{-1}}, \qquad \qquad
\qfact n = \prod_{m=1}^{n}\qnum m, \qquad \qquad
\qbin nk = \frac{\qfact n}{\qfact k \qfact{n-k}} ,
\end{align*}
for $q\in\bC\setminus\set{0}$ not a root of unity, and
$m\in\bZ$, and $n,k\in\bN$, with $0\leq k\leq n$. 

\begin{lem}
\label{lem: q-combinatorics}\
\begin{description}
\item [{(a)}] The $q$-binomial coefficients satisfy the recursion 
\begin{align*}
\qbin{n}{k} = q^k \qbin{n-1}{k} + q^{k-n} \qbin{n-1}{k-1}.
\end{align*}
\item [{(b)}]{\cite[Lemma~2.1(b)]{Kytola-Peltola:Conformally_covariant_boundary_correlation_functions_with_quantum_group}} 
For a permutation $\sigma\in \SymmGrp_{n}$ of $\set{1,2,\ldots,n}$,
denote by
\begin{align*}
{\rm inv}(\sigma)= \; & \set{(i,j) \;\big|\; i<j\text{ and }\sigma(i)>\sigma(j)}
\end{align*}
the set of inversions of $\sigma$. Then we have 
\begin{align*}
\sum_{\sigma\in \SymmGrp_{n}}q^{2\times\#{\rm inv}(\sigma)} = \; 
& q^{{\binom{n}{2}}} \qfact n.
\end{align*}
\item [{(c)}] For any $\nu_1,\nu_2 \in \bZnn$ and $n \in \bN$, we have
\begin{align*}
\sum_{k=0}^n \qbin{n}{k} q^{k(2n-\nu_1-\nu_2-2)}
\qfact{\nu_1-n+k} \qfact{\nu_2-k}
\; = \; \; & q^{n(n-\nu_1-1)} 
\frac{\qfact{\nu_1-n}\qfact{\nu_2-n}\qfact{\nu_1+\nu_2-n+1}}{\qfact{\nu_1+\nu_2-2n+1}}.
\end{align*}
\item [{(d)}] For any $\nu_1,\nu_2 \in \bZnn$ and $n \in \bN$, we have
\begin{align*}
\sum_{k=0}^n \qbin{n}{k} q^{k(\nu_1+\nu_2-2n+2)}
\qfact{\nu_1-k} \qfact{\nu_2-n+k}
\; = \; \; & q^{n(\nu_2+1-n)} 
\frac{\qfact{\nu_1-n}\qfact{\nu_2-n}\qfact{\nu_1+\nu_2-n+1}}{\qfact{\nu_1+\nu_2-2n+1}}.
\end{align*}
\end{description}
\end{lem}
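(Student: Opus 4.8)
The plan is to dispatch the four identities in order, reducing (d) to (c) and reducing (c) to a classical $q$-Chu--Vandermonde summation. For (a), I would first record the addition rule for symmetric $q$-integers, $\qnum{x+y} = q^{y}\qnum{x}+q^{-x}\qnum{y}$, which is immediate from expanding $(q^{x+y}-q^{-(x+y)})/(q-q^{-1})$. Applying it with $x=n-k$ and $y=k$ gives $\qnum{n}=q^{k}\qnum{n-k}+q^{k-n}\qnum{k}$; multiplying through by $\qfact{n-1}/(\qfact{k}\qfact{n-k})$ and regrouping the two resulting terms as $q^{k}\qbin{n-1}{k}+q^{k-n}\qbin{n-1}{k-1}$ yields (a). Part (b) is quoted from the literature, but I would sketch it from the standard inversion-generating-function identity $\sum_{\sigma\in\SymmGrp_{n}}t^{\#{\rm inv}(\sigma)}=\prod_{i=1}^{n}(1+t+\cdots+t^{i-1})$; specializing $t=q^{2}$ and using $1+q^{2}+\cdots+q^{2(i-1)}=q^{i-1}\qnum{i}$ turns the product into $q^{\binom{n}{2}}\qfact{n}$.

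The substance is in (c). First I would clear the $q$-factorials in favor of $q$-binomials: writing $\qbin{n}{k}=\qfact{n}/(\qfact{k}\qfact{n-k})$, together with $\qfact{\nu_1-n+k}/\qfact{k}=\qfact{\nu_1-n}\,\qbin{\nu_1-n+k}{k}$ and $\qfact{\nu_2-k}/\qfact{n-k}=\qfact{\nu_2-n}\,\qbin{\nu_2-k}{n-k}$ on the left, and $\qfact{\nu_1+\nu_2-n+1}/\qfact{\nu_1+\nu_2-2n+1}=\qfact{n}\,\qbin{\nu_1+\nu_2-n+1}{n}$ on the right, the common factor $\qfact{n}\qfact{\nu_1-n}\qfact{\nu_2-n}$ cancels and (c) collapses, with $a=\nu_1-n$ and $b=\nu_2-n$, to the homogeneous identity
\[ \sum_{k=0}^{n} q^{-k(a+b+2)}\,\qbin{a+k}{k}\,\qbin{b+n-k}{n-k} \;=\; q^{-n(a+1)}\,\qbin{a+b+n+1}{n}. \]
This rewriting is valid when $\nu_1,\nu_2\geq n$, which is the only range occurring in the applications of Lemma~\ref{lem: q-combinatorics}; outside it, one reads both sides as Laurent polynomials in $q^{\pm\nu_1},q^{\pm\nu_2}$ and extends by agreement at infinitely many values.

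To recognize the displayed identity, I would pass to the Gaussian normalization in base $Q=q^{2}$. From $\qnum{j}=q^{1-j}(1-q^{2j})/(1-q^{2})$ and the elementary exponent identity $-\binom{m}{2}+\binom{j}{2}+\binom{m-j}{2}=-j(m-j)$, one obtains $\qbin{m}{j}=q^{-j(m-j)}\binom{m}{j}_{q^{2}}$. Substituting this for all three binomials and tracking the powers of $q$, every stray factor $q^{-nb}$ cancels and the statement becomes the standard $q$-Chu--Vandermonde convolution
\[ \sum_{k=0}^{n} Q^{(n-k)(a+1)}\,\binom{a+k}{k}_{Q}\,\binom{b+n-k}{n-k}_{Q} \;=\; \binom{a+b+n+1}{n}_{Q}, \]
the $Q$-analogue of the negative-index Vandermonde convolution, which I would then cite (or prove in a line from the $q$-binomial theorem). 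Unwinding the normalization change gives (c).

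Finally, I would obtain (d) from (c) by the reflection $k\mapsto n-k$. Using $\qbin{n}{n-k}=\qbin{n}{k}$, the factorial pair $\qfact{\nu_1-n+k}\qfact{\nu_2-k}$ becomes $\qfact{\nu_1-k}\qfact{\nu_2-n+k}$, while the weight factorizes as $q^{(n-k)(2n-\nu_1-\nu_2-2)}=q^{n(2n-\nu_1-\nu_2-2)}\,q^{k(\nu_1+\nu_2-2n+2)}$. Hence the sum in (d) equals $q^{-n(2n-\nu_1-\nu_2-2)}$ times the sum in (c), and a direct check that $q^{-n(2n-\nu_1-\nu_2-2)}\,q^{n(n-\nu_1-1)}=q^{n(\nu_2+1-n)}$ matches the two right-hand sides. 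The only genuine obstacle is the bookkeeping of $q$-powers across the normalization change in (c) and the reflection in (d), together with the care needed to justify the factorial-to-binomial rewriting outside the range $\nu_1,\nu_2\geq n$; once the homogeneous form is isolated, the combinatorial content is exactly the classical $q$-Vandermonde identity.
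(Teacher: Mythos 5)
Your proposal is correct, but for the substantive part (c) it takes a genuinely different route from the paper. The paper proves (c) by induction on $n$: the base case $n=0$ gives $\qfact{\nu_1}\qfact{\nu_2}$ on both sides, and the inductive step splits each summand using the elementary identity $q^{n-k-\nu_1}\qnum{\nu_1+\nu_2-n} = \qnum{\nu_2-k} + q^{n-\nu_1-\nu_2}\qnum{\nu_1-n+k}$ together with the recursion of part (a), reducing $L_{n+1}(\nu_1,\nu_2)$ to $\qnum{\nu_1+\nu_2-n}\,q^{n-\nu_1}L_n(\nu_1-1,\nu_2-1)$. You instead strip off the common factor $\qfact{n}\qfact{\nu_1-n}\qfact{\nu_2-n}$, pass to Gaussian binomials in base $Q=q^2$ via $\qbin{m}{j}=q^{-j(m-j)}\binom{m}{j}_{Q}$, and identify the residue as the $q$-Chu--Vandermonde convolution; I checked your exponent bookkeeping (the stray $q^{-nb}$ does cancel, and $-k(a+b+2)-ka-(n-k)b=-2k(a+1)-nb$ as needed) and your reduction is sound. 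The trade-off is that the paper's argument is entirely self-contained and needs nothing beyond part (a), whereas yours outsources the combinatorial content to a classical identity, which makes the structure of the formula more transparent but requires either a citation or a separate proof of $q$-Vandermonde. Two minor remarks: your caveat about extending the factorial-to-binomial rewriting outside the range $\nu_1,\nu_2\geq n$ is unnecessary, since the right-hand side of (c) already involves $\qfact{\nu_1-n}$ and $\qfact{\nu_2-n}$ and so the statement only makes sense in that range anyway; and for (d) your reflection $k\mapsto n-k$ is a clean alternative to the paper's derivation via the simultaneous symmetries $q\leftrightarrow q^{-1}$ and $\nu_1\leftrightarrow\nu_2$ (under which $\qnum{m}$, hence all $q$-factorials and $q$-binomials, are invariant) --- both are one-line arguments and both are correct.
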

\begin{proof}
The proof of (a) is a straightforward calculation using the definition 
of $q$-integers. Part (b) was proved in
\cite[Lemma~2.1(b)]{Kytola-Peltola:Conformally_covariant_boundary_correlation_functions_with_quantum_group}.
To prove part (c), we proceed by induction on $n$. For $n=0$, both sides of the 
equation are equal to $\qfact{\nu_1}\qfact{\nu_2}$. Denote by $L_n(\nu_1,\nu_2)$ 
and $R_n(\nu_1,\nu_2)$ the left and right hand sides of the asserted equation, 
respectively, and assume that we have $L_n(\nu_1,\nu_2) = R_n(\nu_1,\nu_2)$, 
for any $\nu_1,\nu_2 \in \bZnn$. 
Using part (a), we write $L_{n+1}(\nu_1,\nu_2)$ as
\begin{align*}
L_{n+1}(\nu_1,\nu_2)
= \; & \sum_{k=0}^{n} \qbin{n}{k} q^{k(2n+1-\nu_1-\nu_2)}
\qfact{\nu_1-n+k} \qfact{\nu_2-k}
\left( \frac{1}{\qnum{\nu_1-n+k}} + q^{n-\nu_1-\nu_2} \frac{1}{\qnum{\nu_2-k}} \right)\\
= \; & \sum_{k=0}^{n} \qbin{n}{k} q^{k(2n-\nu_1-\nu_2)}
\qfact{\nu_1-n+k} \qfact{\nu_2-k} \, q^{n-\nu_1}
\frac{\qnum{\nu_1+\nu_2-n}}{\qnum{\nu_2-k}\qnum{\nu_1-n+k}} \\
= \; & \qnum{\nu_1+\nu_2-n} q^{n-\nu_1} \times 
\sum_{k=0}^{n} \qbin{n}{k} q^{k(2n-\nu_1-\nu_2)} \qfact{\nu_1-1-n+k} \qfact{\nu_2-1-k},
\end{align*}
where we used the identity
\begin{align*}
q^{n-k-\nu_1}\qnum{\nu_1+\nu_2-n} = 
\qnum{\nu_2-k} + q^{n-\nu_1-\nu_2} \qnum{\nu_1-n+k}.
\end{align*}
By the induction hypothesis, the above sum is equal to 
$L_n(\nu_1-1,\nu_2-1) = R_n(\nu_1-1,\nu_2-1)$, so 
\begin{align*}
L_{n+1}(\nu_1,\nu_2) 
= \; & \qnum{\nu_1+\nu_2-n} q^{n-\nu_1} \times R_n(\nu_1-1,\nu_2-1) \\
= \; & \qnum{\nu_1+\nu_2-n} q^{n-\nu_1} \times 
q^{n(n-\nu_1)} 
\frac{\qfact{\nu_1-1-n}\qfact{\nu_2-1-n}\qfact{\nu_1+\nu_2-n-1}}{\qfact{\nu_1+\nu_2-2n-1}} \\
= \; & R_{n+1}(\nu_1,\nu_2),
\end{align*}
as claimed. This concludes the proof of (c).

Assertion (d) follows immediately from (c) and the symmetries
$q \leftrightarrow q^{-1}$ and $\nu_1 \leftrightarrow \nu_2$ of the identity.
\end{proof}

\bigskip{}
\section{\label{app: auxiliary calculations}Some auxiliary calculations}

In this appendix, we perform some auxiliary calculations needed in the proof of
Lemma~\ref{lem: commutative diagram} and 
Proposition~\ref{prop: commutative diagram}.
We will repeatedly use the notations~\eqref{eq: s in terms of d},
\begin{align*}
s = d - 1, \qquad
s_i = d_i - 1 , \quad \text{ for all } i \in \{1,2,\ldots,p\}, \qquad 
\text{ and } \qquad \multii = (s_1,s_2,\ldots,s_p) \in \bZnn^p.
\end{align*}
In the calculations, we consider the embedding from 
Section~\ref{subsec: powers of two-dimensionals}, defined for any 
$s = d-1 \in \bZpos$ as
\begin{align*}
\Embedding^{(s)} \colon \Wd_{d} \hookrightarrow \Wd_2^{\tens s} ,\qquad
\Embedding^{(s)}(\Wbas_l^{(d)}) := \MTbas_{l}^{(s)} ,
\quad\text{ for } l \in \{0, 1, \ldots, s \} .
\end{align*}

The vectors $\MTbas_{l}^{(s)}$ can be written explicitly as follows.
\begin{lem}\label{lem: action of F}
In the tensor product $\Wd_2^{\tens s}$, we have
\begin{align*}
\Embedding^{(s)}(\Wbas_k^{(d)}) = \MTbas_{k}^{(s)} 
= q^{\binom{k}{2}} \qfact k
\sum_{1\leq r_1<\cdots<r_k\leq s} q^{\sum_{i=1}^k(1-r_i)} \times \left(\Wbas_{l_1(\varrho)}\tens\cdots\tens\Wbas_{l_{s}(\varrho)}\right),
\end{align*}
$0\leq k\leq s = d-1$,
where we denote $(r_1,\ldots,r_k)=\varrho$, and,
for each $i \in \{1,2,\ldots,s \}$, 
\begin{align*}
l_i(\varrho) = \; & \begin{cases}
1 & \text{when }i\in\set{r_1,\ldots,r_k}\\
0 & \text{otherwise} .
\end{cases}
\end{align*}
\end{lem}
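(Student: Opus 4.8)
The plan is to prove the formula by induction on $k$, computing $\MTbas_{k}^{(\nodefsmaller)} = F^k.\MTbas_{0}^{(\nodefsmaller)}$ directly from the action of the generator $F$ via its multiple coproduct~\eqref{eq: multiple coproducts},
\begin{align*}
\Hcp^{(\nodefsmaller)}(F) = \sum_{i=1}^{\nodefsmaller}(K^{-1})^{\tens(i-1)}\tens F\tens 1^{\tens(\nodefsmaller-i)}.
\end{align*}
Throughout, I would encode a basis vector of $\Wd_2^{\tens\nodefsmaller}$ by the set $\varrho = \set{r_1 < \cdots < r_k} \subseteq \set{1,\ldots,\nodefsmaller}$ of positions carrying the vector $\Wbastwodim_1$ (all other slots carrying $\Wbastwodim_0$), and write $\Wbas_\varrho := \Wbas_{l_1(\varrho)}\tens\cdots\tens\Wbas_{l_\nodefsmaller(\varrho)}$ with $l_i(\varrho)$ as in the statement. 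The base case $k=0$ is the definition $\MTbas_0^{(\nodefsmaller)} = \Wbastwodim_0\tens\cdots\tens\Wbastwodim_0 = \Wbas_\emptyset$, matching the right-hand side since $\binom{0}{2}=0$, $\qfact{0}=1$, and the empty tuple is the all-zeros configuration.

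\textbf{Inductive step.} Assuming the formula for $k$, I would apply $\Hcp^{(\nodefsmaller)}(F)$ to each $\Wbas_\varrho$ appearing in $\MTbas_k^{(\nodefsmaller)}$. Since $F.\Wbastwodim_1 = 0$ and $F.\Wbastwodim_0 = \Wbastwodim_1$, only the slots $i \notin \varrho$ contribute, and each such term produces $\Wbas_{\varrho\cup\set{i}}$; the accompanying $K^{-1}$ factors act on slots $1,\ldots,i-1$, giving $q$ for each occupied and $q^{-1}$ for each empty slot to the left. Reindexing the double sum by the \emph{resulting} configuration $\varrho' = \set{r'_1 < \cdots < r'_{k+1}}$, the preimages are exactly the pairs $(\varrho'\setminus\set{r'_m}, r'_m)$ for $m\in\set{1,\ldots,k+1}$; here exactly $m-1$ occupied slots lie to the left of $r'_m$, so the $K^{-1}$ weight is $q^{2(m-1)-r'_m+1}$. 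The key computation is then that the combined exponent collapses: together with the inductive weight $q^{\sum_{j\neq m}(1-r'_j)}$ it equals $q^{\sum_j(1-r'_j)}\cdot q^{2(m-1)}$, independently of $r'_m$.

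\textbf{Concluding identity and main obstacle.} Summing over $m$ and invoking the $q$-number identity
\begin{align*}
\sum_{m=0}^{k} q^{2m} = q^{k}\,\qnum{k+1},
\end{align*}
the coefficient of each $\Wbas_{\varrho'}$ becomes $q^{\binom{k}{2}}\qfact{k}\cdot q^{k}\qnum{k+1}\cdot q^{\sum_j(1-r'_j)}$, and the prefactors consolidate via $\binom{k}{2}+k=\binom{k+1}{2}$ and $\qfact{k}\qnum{k+1}=\qfact{k+1}$ into exactly the asserted formula for $k+1$. The only real obstacle is the bookkeeping of the $K^{-1}$ exponents: one must verify carefully that the weight picked up when inserting $\Wbastwodim_1$ at the $m$-th occupied position is precisely $q^{2(m-1)-r'_m+1}$ (this is where the left-to-right placement of $K^{-1}$ in the coproduct is essential), so that the $r'_m$-dependence cancels and the geometric sum $\sum_{m=0}^{k}q^{2m}$ emerges cleanly. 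Everything else is the routine arithmetic of $q$-factorials.
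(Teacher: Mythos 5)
Your proof is correct. It differs from the paper's argument mainly in organization: the paper applies $F^k$ to $\MTbas_0^{(\nodefsmaller)}$ in one shot, groups the $k!$ orders in which the $k$ positions of $\varrho$ can be filled, and identifies the resulting multiplicity factor with $\sum_{\sigma\in\SymmGrp_k}q^{2\#\mathrm{inv}(\sigma)}=q^{\binom{k}{2}}\qfact{k}$ via Lemma~\ref{lem: q-combinatorics}(b), whereas you induct on $k$ and replace that inversion-statistic identity by the elementary geometric sum $\sum_{m=0}^{k}q^{2m}=q^{k}\qnum{k+1}$ together with $\qfact{k}\qnum{k+1}=\qfact{k+1}$ and $\binom{k}{2}+k=\binom{k+1}{2}$. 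The underlying bookkeeping is the same in both proofs --- only slots carrying $\Wbas_0$ accept $F$, and the $K^{-1}$ factors to the left contribute $q^{2(m-1)-r'_m+1}$ when the inserted $\Wbas_1$ lands in the $m$-th occupied position, so that the $r'_m$-dependence cancels against the weight $q^{1-r'_m}$ --- and your exponent computation checks out. What your route buys is self-containedness (no appeal to the permutation identity, which is in effect re-proved one transposition at a time); what the paper's buys is brevity, since the whole calculation collapses into a single displayed equation once Lemma~\ref{lem: q-combinatorics}(b) is available.
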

\begin{proof}
Using the coproduct~\eqref{eq: multiple coproducts} of $F$, and simplifying with 
Lemma~\ref{lem: q-combinatorics}(b), we calculate
\begin{align*}
\MTbas_{k}^{(s)} = F^k.\MTbas_{0}^{(s)} 
= \; & \sum_{\sigma\in\SymmGrp_{k}} q^{2\times\#{\rm inv}(\sigma)}
\sum_{1\leq r_1<\cdots<r_k\leq s} q^{\sum_{i=1}^k(1-r_i)}
\times \left(\Wbas_{l_1(\varrho)}\tens\cdots\tens\Wbas_{l_{s}(\varrho)}\right) \\
= \; & q^{\binom{k}{2}}
\qfact k\sum_{1\leq r_1<\cdots<r_k\leq s} q^{\sum_{i=1}^k(1-r_i)} 
\times \left(\Wbas_{l_1(\varrho)}\tens\cdots\tens\Wbas_{l_{s}(\varrho)}\right).
\end{align*}
\end{proof}

We will make use of the following formulas for the projection
$\hat{\pi} \colon \Wd_{2}\tens\Wd_{2} \to \bC$.
\begin{lem}{\cite[Lemma~2.3]{Kytola-Peltola:Pure_partition_functions_of_multiple_SLEs}}
\label{lem: projection formulas} For any $v \in \Wd_{2}\tens\Wd_{2}$, we have 
\begin{align*}
&\hat{\pi}^{(1)}(\Wbas_{0}\tens\Wbas_{0})=0,\qquad\qquad\quad\qquad\qquad\hat{\pi}^{(1)}(\Wbas_{1}\tens\Wbas_{1})=0,\\
&\hat{\pi}^{(1)}(\Wbas_{0}\tens\Wbas_{1})=\frac{q^{-1}-q}{\qnum 2},\qquad\qquad\qquad\hat{\pi}^{(1)}(\Wbas_{1}\tens\Wbas_{0})=\frac{1-q^{-2}}{\qnum 2}.
\end{align*}
\end{lem}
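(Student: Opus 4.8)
The plan is to read off $\hat{\pi}^{(1)}$ directly from the quantum Clebsch--Gordan decomposition of $\Wd_2 \tens \Wd_2$. Applying Lemma~\ref{lem: tensor product representations of quantum sl2} with $d_1 = d_2 = 2$ (so $s_1 = s_2 = 1$) gives $\Wd_2 \tens \Wd_2 \isom \Wd_3 \oplus \Wd_1$, and formula~\eqref{eq: tensor product hwv} with $m = 0$ and $m = 1$ yields the highest weight vectors of the two summands,
\begin{align*}
\Tbas_0^{(3;2,2)} = \Wbas_0 \tens \Wbas_0,
\qquad
\Tbas_0^{(1;2,2)} = \frac{1}{q - q^{-1}}\bigl(\Wbas_1 \tens \Wbas_0 - q\, \Wbas_0 \tens \Wbas_1\bigr).
\end{align*}
By definition $\hat{\pi}^{(1)}$ is the projection onto the trivial summand $\Wd_1$ composed with the identification $\Tbas_0^{(1;2,2)} \mapsto 1$, so it suffices to determine the $\Tbas_0^{(1;2,2)}$-component of each of the four vectors $\Wbas_a \tens \Wbas_b$.

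Two of the four cases are immediate from the action of $K$. Since $K.\Wbas_0 = q\,\Wbas_0$ and $K.\Wbas_1 = q^{-1}\Wbas_1$, the vectors $\Wbas_0 \tens \Wbas_0$ and $\Wbas_1 \tens \Wbas_1$ have $K$-eigenvalues $q^2$ and $q^{-2}$, whereas the trivial summand $\Wd_1 = \spn\{\Tbas_0^{(1;2,2)}\}$ sits in the $K$-eigenvalue $1$ subspace. Hence these two vectors have no component along $\Tbas_0^{(1;2,2)}$, giving $\hat{\pi}^{(1)}(\Wbas_0 \tens \Wbas_0) = \hat{\pi}^{(1)}(\Wbas_1 \tens \Wbas_1) = 0$.

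The only genuine computation lies in the two-dimensional $K$-eigenvalue-$1$ subspace spanned by $\Wbas_1 \tens \Wbas_0$ and $\Wbas_0 \tens \Wbas_1$, which is also spanned by $\Tbas_0^{(1;2,2)}$ and by the weight vector $\Tbas_1^{(3;2,2)} = F.\Tbas_0^{(3;2,2)} = \Wbas_1 \tens \Wbas_0 + q^{-1}\,\Wbas_0 \tens \Wbas_1$, computed from the coproduct~\eqref{eq: coproduct} of $F$. First I would invert this $2 \times 2$ change of basis: subtracting $(q - q^{-1})\,\Tbas_0^{(1;2,2)}$ from $\Tbas_1^{(3;2,2)}$ gives $\qnum{2}\,\Wbas_0 \tens \Wbas_1$, so that $\Wbas_0 \tens \Wbas_1 = \qnum{2}^{-1}\bigl(\Tbas_1^{(3;2,2)} - (q - q^{-1})\Tbas_0^{(1;2,2)}\bigr)$, and correspondingly for $\Wbas_1 \tens \Wbas_0$. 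Reading off the $\Tbas_0^{(1;2,2)}$-coefficients then produces $\frac{q^{-1} - q}{\qnum{2}}$ and $\frac{1 - q^{-2}}{\qnum{2}}$, as claimed. There is no real obstacle beyond this finite linear-algebra bookkeeping; as a consistency check, the $\Uqsltwo$-equivariance relation $\hat{\pi}^{(1)}(F.(\Wbas_0 \tens \Wbas_0)) = 0$, valid because $F$ acts as $0$ on $\Wd_1$, forces $\hat{\pi}^{(1)}(\Wbas_1 \tens \Wbas_0) = -q^{-1}\hat{\pi}^{(1)}(\Wbas_0 \tens \Wbas_1)$, matching the two values above. Since the statement is quoted from~\cite{Kytola-Peltola:Pure_partition_functions_of_multiple_SLEs}, one could also simply cite it, but this self-contained verification is short.
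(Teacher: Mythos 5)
Your verification is correct: the identification of $\Tbas_0^{(3;2,2)}$ and $\Tbas_0^{(1;2,2)}$ from Equation~\eqref{eq: tensor product hwv}, the $K$-weight argument killing $\Wbas_0\tens\Wbas_0$ and $\Wbas_1\tens\Wbas_1$, and the $2\times2$ change of basis in the weight-zero subspace all check out, and the equivariance cross-check is consistent. The paper itself offers no proof here --- it only cites \cite[Lemma~2.3]{Kytola-Peltola:Pure_partition_functions_of_multiple_SLEs}, where the same kind of direct Clebsch--Gordan computation is carried out --- so your self-contained argument is exactly the intended one.
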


The next two lemmas explain how to calculate the projections appearing in the 
left column of the commutative diagram in Lemma~\ref{lem: commutative diagram}.

\begin{lem}\label{lem: projection in the middle}
Let $s_1,s_2 \in \bZpos$. Interpreting $\MTbas_{-1}^{(s)}=0$, we have
\begin{align*}
\hat{\pi}^{(1)}_{s_1} \left( \MTbas_{l}^{(s_2)}\tens\MTbas_{k}^{(s_1)} \right)
= \frac{(q-q^{-1})}{\qnum 2} \Big( q^{l-s_2-1-k} \, \; & \qnum {l} \times \left(\MTbas_{l-1}^{(s_2-1)}\tens\MTbas_{k}^{(s_1-1)}\right) 
- \qnum {k} \times \left(\MTbas_{l}^{(s_2-1)}\tens\MTbas_{k-1}^{(s_1-1)}\right) \Big).
\end{align*}
\end{lem}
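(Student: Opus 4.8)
The plan is to exploit that the projection $\hat{\pi}^{(1)}_{s_1}$ touches only the two tensor factors at the boundary between the blocks $\Wd_2^{\tens s_2}$ and $\Wd_2^{\tens s_1}$, namely the rightmost factor of $\MTbas_l^{(s_2)}$ and the leftmost factor of $\MTbas_k^{(s_1)}$ (recall that in the convention of~\eqref{eq: order of tensorands} positions are counted from the right, so the index $s_1$ indeed selects this adjacent pair). Accordingly, the first step is to peel these two boundary factors off the highest-weight descendants. Writing $\MTbas_0^{(s_2)} = \MTbas_0^{(s_2-1)}\tens\Wbas_0$ and applying $F^l$ through the coproduct $\Hcp(F)=F\tens 1 + K^{-1}\tens F$ of~\eqref{eq: coproduct}, the binomial expansion of $F^l$ truncates after two terms because $F^2.\Wbas_0=0$, yielding
\begin{align*}
\MTbas_l^{(s_2)} = \MTbas_l^{(s_2-1)}\tens\Wbas_0 + q^{l-s_2}\qnum{l}\,\MTbas_{l-1}^{(s_2-1)}\tens\Wbas_1 ,
\end{align*}
and symmetrically, peeling the leftmost factor off $\MTbas_k^{(s_1)} = \Wbas_0\tens\MTbas_0^{(s_1-1)}$ gives
\begin{align*}
\MTbas_k^{(s_1)} = q^{-k}\,\Wbas_0\tens\MTbas_k^{(s_1-1)} + \qnum{k}\,\Wbas_1\tens\MTbas_{k-1}^{(s_1-1)} .
\end{align*}

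The coefficients here arise as follows. The two summands $F\tens 1$ and $K^{-1}\tens F$ of $\Hcp(F)$ satisfy a $q^2$-commutation relation, so the single $F$ landing on the boundary factor carries the Gaussian binomial coefficient $q^{l-1}\qnum{l}=\frac{1-q^{2l}}{1-q^2}$; combining this with the weight factor produced by $K^{-1}$ acting on the highest-weight vectors $\MTbas_0^{(s_2-1)}$ (weight $q^{s_2-1}$) and $\Wbas_0$ (weight $q$) produces the powers $q^{l-s_2}$ and $q^{-k}$ above. The conventions $\MTbas_{-1}^{(\,\cdot\,)}=0$ and $\MTbas_l^{(s)}=0$ for $l>s$ make both displays valid in all edge cases.

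With the boundary factors exposed, I would apply $\hat{\pi}^{(1)}_{s_1}$, which acts on the middle pair as the scalar map $\hat{\pi}^{(1)}\colon\Wd_2\tens\Wd_2\to\bC$ of Lemma~\ref{lem: projection formulas} and leaves the surviving $\MTbas^{(s_2-1)}\tens\MTbas^{(s_1-1)}$ untouched. Since $\hat{\pi}^{(1)}$ annihilates $\Wbas_0\tens\Wbas_0$ and $\Wbas_1\tens\Wbas_1$, only the two mixed cross-terms in the product of the two displays contribute: the pair $\Wbas_1\tens\Wbas_0$ (second term of the first display against first term of the second) gives a factor $q^{l-s_2-k}\qnum{l}\cdot\frac{1-q^{-2}}{\qnum 2}$, while the pair $\Wbas_0\tens\Wbas_1$ gives $\qnum{k}\cdot\frac{q^{-1}-q}{\qnum 2}$. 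Using $1-q^{-2}=q^{-1}(q-q^{-1})$ and $q^{-1}-q=-(q-q^{-1})$, and factoring out $\frac{q-q^{-1}}{\qnum 2}$, these become $q^{l-s_2-1-k}\qnum{l}$ and $-\qnum{k}$ respectively, which is precisely the asserted identity.

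The computation is short once set up; the main obstacle is purely bookkeeping. I must keep the tensor-order convention straight (so that $\hat{\pi}^{(1)}_{s_1}$ really does act on the adjacent boundary factors identified above), distinguish $\hat{\pi}^{(1)}(\Wbas_0\tens\Wbas_1)$ from $\hat{\pi}^{(1)}(\Wbas_1\tens\Wbas_0)$ according to which block each factor comes from, and pin down the Gaussian-binomial and $K^{-1}$-weight exponents exactly in the peeling step. Those weight factors are the part most prone to sign and exponent slips, so that is where I would exercise the most care; everything else is a direct substitution into Lemma~\ref{lem: projection formulas}.
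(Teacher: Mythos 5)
Your proof is correct, and it takes a genuinely different route from the paper's. The paper expands both $\MTbas_{l}^{(s_2)}$ and $\MTbas_{k}^{(s_1)}$ completely into monomial tensors via the explicit formula of Lemma~\ref{lem: action of F}, applies the rank-one projection term by term, and then re-indexes the surviving double sums to recognize them as the formulas for $\MTbas_{l-1}^{(s_2-1)}$, $\MTbas_{l}^{(s_2-1)}$, etc. You instead localize the computation to the only two tensor factors the projection actually sees, by first proving the two peeling identities $\MTbas_l^{(s_2)} = \MTbas_l^{(s_2-1)}\tens\Wbas_0 + q^{l-s_2}\qnum{l}\,\MTbas_{l-1}^{(s_2-1)}\tens\Wbas_1$ and $\MTbas_k^{(s_1)} = q^{-k}\,\Wbas_0\tens\MTbas_k^{(s_1-1)} + \qnum{k}\,\Wbas_1\tens\MTbas_{k-1}^{(s_1-1)}$; these follow from $\Hcp(F)=F\tens 1 + K^{-1}\tens F$, the nilpotency of $F$ on $\Wd_2$, and the $q^2$-commutation of the two summands, and I checked that your coefficients (geometric sum $q^{l-1}\qnum{l}$, weights $q^{1-s_2}$ and $q^{-k}$ from $K^{-1}$) are exactly right, as is the final bookkeeping with Lemma~\ref{lem: projection formulas}. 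Your version is shorter and avoids the re-indexing of sums over subsets; the paper's version has the advantage of leaning only on machinery (Lemma~\ref{lem: action of F}) that it has already set up and reuses elsewhere, whereas yours produces the two recursion identities as a reusable byproduct. Either way the identity is established; just make sure the peeling step is written out with the same care you describe, since that is where all the exponents live.
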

\begin{proof}
Using Lemma~\ref{lem: action of F}, we write
\begin{align*}
\MTbas_{l}^{(s_2)}\tens\MTbas_{k}^{(s_1)}
= \; & q^{\binom{l}{2}}\qfact {l}\;
q^{\binom{k}{2}} \qfact {k}\sum_{\substack{1\leq r_1<\cdots<r_{l}\leq s_2 \\ 1\leq t_1<\cdots<t_{k}\leq s_1}} q^{\sum_{j=1}^{l}(1-r_j)+\sum_{j=1}^{k}(1-t_j)}\\
&\qquad\qquad\qquad\qquad\qquad\qquad\qquad\times
\left( \Wbas_{l_1(\varrho)}\tens\cdots\tens\Wbas_{l_{s_2}(\varrho)}\tens\Wbas_{k_1(\vartheta)}\tens\cdots
\tens\Wbas_{k_{s_1}(\vartheta)} \right),
\end{align*}
where $(r_1,\ldots,r_l)=\varrho$, and $(t_1,\ldots,t_k)=\vartheta$, and
\begin{align*}
l_i(\varrho) = \begin{cases}
1 & \text{when }i\in\set{r_1,\ldots,r_l}\\
0 & \text{otherwise} 
\end{cases} 
\qquad \qquad \text{ and } \qquad \qquad
k_i(\vartheta) = \begin{cases}
1 & \text{when }i\in\set{t_1,\ldots,t_k}\\
0 & \text{otherwise} .
\end{cases} 
\end{align*}
Using Lemma~\ref{lem: projection formulas}, we calculate
the action of the middle projection $\hat{\pi}^{(1)}_{s_1}$ 
on each term in the sum,
\begin{align*}
\hat{\pi}^{(1)} (\Wbas_{l_{s_2}(\varrho)}\tens\Wbas_{k_1(\vartheta)})
= \frac{q^{-1}-q}{\qnum 2} \left( \delta_{l_{s_2}(\varrho),0}\;\delta_{k_1(\vartheta),1}-q^{-1} \,\delta_{l_{s_2}(\varrho),1}\;\delta_{k_1(\vartheta),0}\right).
\end{align*}
Not all terms survive. First, when $k_1(\vartheta)=1$, we must have $t_1=1$, and 
similarly, when $l_{s_2}(\varrho)=1$, we must have $r_{l}=s_2$.
On the other hand, when $l_{s_2}(\varrho)=0$, then $r_{l}\neq s_2$,
so $r_{l}\leq s_2-1$, and similarly, when $k_1(\vartheta)=0$, then
$t_1\neq1$, so $2\leq t_1$. We thus obtain
\begin{align*}
& \hat{\pi}^{(1)}_{s_1} \left( \MTbas_{l}^{(s_2)}\tens\MTbas_{k}^{(s_1)} \right) \\
= \; & q^{\binom{l}{2}} \qfact{l} \; q^{\binom{k}{2}} \qfact{k} \times
\sum_{\substack{1\leq r_1<\cdots<r_{l-1}\leq s_2-1 \\ 2\leq t_1<\cdots<t_{k}\leq s_1}}\delta_{l_{s_2}(\varrho),1}\;\delta_{k_1(\vartheta),0}\;
\times \; q^{1-s_2+\sum_{j=1}^{l-1}(1-r_j)+\sum_{j=1}^{k}(1-t_j)} \\
& \qquad\qquad\qquad\qquad\qquad\qquad\qquad \times \hat{\pi}^{(1)}_{s_1}
\left(\Wbas_{l_1(\varrho)} \tens\cdots\tens\Wbas_{l_{s_2-1}(\varrho)} \tens\Wbas_1 \tens\Wbas_0 \tens\Wbas_{k_2(\vartheta)} \tens\cdots
\tens\Wbas_{k_{s_1}(\vartheta)} \right) \\
& + q^{\binom{l}{2}} \qfact{l}\; q^{\binom{k}{2}} \qfact{k} \times 
\sum_{\substack{1\leq r_1<\cdots<r_{l}\leq s_2-1\\ 2\leq t_2<\cdots<t_{k}\leq s_1}}\delta_{l_{s_2}(\varrho),0} \;\delta_{k_1(\vartheta),1}\;
\times\;q^{\sum_{j=1}^{l}(1-r_j)+\sum_{j=2}^{k}(1-t_j)} \\
& \qquad\qquad\qquad\qquad\qquad\qquad\qquad\times\hat{\pi}^{(1)}_{s_1}
\left(\Wbas_{l_1(\varrho)} \tens\cdots\tens\Wbas_{l_{s_2-1}(\varrho)} \tens\Wbas_0 \tens\Wbas_1 \tens\Wbas_{k_2(\vartheta)} \tens\cdots
\tens\Wbas_{k_{s_1}(\vartheta)} \right) \\
= \; & \frac{q^{\binom{l}{2}} \qfact{l}\;
q^{\binom{k}{2}} \qfact{k} \,q^{-s_2}}{\qnum 2}(q-q^{-1}) \times
\sum_{\substack{1\leq r_1<\cdots<r_{l-1}\leq s_2-1 \\ 2\leq t_1<\cdots<t_{k}\leq s_1}}q^{\sum_{j=1}^{l-1}(1-r_j)+\sum_{j=1}^{k}(1-t_j)} \\
& \qquad\qquad\qquad\qquad\qquad\qquad\qquad\qquad\qquad\qquad \times\;
\left(\Wbas_{l_1(\varrho)} \tens\cdots\tens\Wbas_{l_{s_2-1}(\varrho)} \tens\Wbas_{k_2(\vartheta)} \tens\cdots
\tens\Wbas_{k_{s_1}(\vartheta)} \right) \\
& - \frac{q^{\binom{l}{2}} \qfact{l}\;
q^{\binom{k}{2}} \qfact{k}}{\qnum 2}(q-q^{-1}) \times
\sum_{\substack{1\leq r_1<\cdots<r_{l}\leq s_2-1\\ 2\leq t_2<\cdots<t_{k}\leq s_1}}q^{\sum_{j=1}^{l}(1-r_j)+\sum_{j=2}^{k}(1-t_j)} \\
& \qquad\qquad\qquad\qquad\qquad\qquad\qquad\qquad\qquad\qquad\times\;
\left(\Wbas_{l_1(\varrho)} \tens\cdots\tens\Wbas_{l_{s_2-1}(\varrho)} \tens\Wbas_{k_2(\vartheta)} \tens\cdots
\tens\Wbas_{k_{s_1}(\vartheta)} \right).
\end{align*}
Changing the summation indices by $t_j\mapsto t_j-1$ in the first sum and
$r_j\mapsto r_j-1$ in the second sum, and using the formula from 
Lemma~\ref{lem: action of F} for the vectors 
$\MTbas_{l}^{(s_2-1)}$ and $\MTbas_{k}^{(s_1-1)}$, we simplify the above as
\begin{align*}
& \frac{q^{\binom{l}{2}} \qfact{l}\;
q^{\binom{k}{2}} \qfact{k} \,q^{-s_2}}{\qnum 2}(q-q^{-1}) \times
\sum_{\substack{1\leq r_1<\cdots<r_{l-1}\leq s_2-1 \\ 1\leq t_1<\cdots<t_{k}\leq s_1-1}}q^{-k+\sum_{j=1}^{l-1}(1-r_j)+\sum_{j=1}^{k}(1-t_j)} \\
& \qquad\qquad\qquad\qquad\qquad\qquad\qquad\qquad\qquad\qquad\times\;
\left(\Wbas_{l_1(\varrho)} \tens\cdots\tens\Wbas_{l_{s_2-1}(\varrho)} \tens\Wbas_{k_2(\vartheta)} \tens\cdots
\tens\Wbas_{k_{s_1}(\vartheta)} \right) \\
& - \frac{q^{\binom{l}{2}} \qfact{l}\;
q^{\binom{k}{2}} \qfact{k}}{\qnum 2}(q-q^{-1}) \times
\sum_{\substack{1\leq r_1<\cdots<r_{l}\leq s_2-1\\ 1\leq t_2<\cdots<t_{k}\leq s_1-1}}
q^{1-k+\sum_{j=1}^{l}(1-r_j)+\sum_{j=2}^{k}(1-t_j)} \\
& \qquad\qquad\qquad\qquad\qquad\qquad\qquad\qquad\qquad\qquad\times\;
\left(\Wbas_{l_1(\varrho)} \tens\cdots\tens\Wbas_{l_{s_2-1}(\varrho)} \tens\Wbas_{k_2(\vartheta)} \tens\cdots
\tens\Wbas_{k_{s_1}(\vartheta)} \right) \\
= \; & \frac{q^{\binom{l}{2}} \qfact{l}\;
q^{\binom{k}{2}} \qfact{k}}{{\qnum 2}}(q-q^{-1})
\times \left(
\frac{q^{-s_2-k}\,\times \left(\MTbas_{l-1}^{(s_2-1)}\tens\MTbas_{k}^{(s_1-1)}\right)}{q^{\binom{l-1}{2}} q^{\binom{k}{2}} {\qfact{l-1}}{\qfact{k}}}
\; - \; \frac{q^{1-k}\,\times\left(\MTbas_{l}^{(s_2-1)}\tens\MTbas_{k-1}^{(s_1-1)}\right)}{q^{\binom{l}{2}} q^{\binom{k-1}{2}} {\qfact{l}}{\qfact{k-1}}} \right) \\
= \; & \frac{(q-q^{-1})}{\qnum 2} \Big( q^{l-s_2-1-k} \, \qnum {l} \times \left(\MTbas_{l-1}^{(s_2-1)}\tens\MTbas_{k}^{(s_1-1)}\right) 
- \qnum {k} \times \left(\MTbas_{l}^{(s_2-1)}\tens\MTbas_{k-1}^{(s_1-1)}\right) \Big).
\end{align*}
which concludes the proof.
\end{proof}

We generalize the above calculation in the next lemma.

\begin{lem}\label{lem: several projections in the middle}
Let $s_1,s_2 \in \bZpos$ and $m \in \{1,2,\ldots,\min(s_1,s_2) \}$. 
Interpreting $\MTbas_{-1}^{(s)}=0$, we have
\begin{align*}
& \left(\hat{\pi}^{(1)}_{s_1-m+1}\circ\cdots\circ\hat{\pi}^{(1)}_{s_1-1}
\circ\hat{\pi}^{(1)}_{s_1}\right)
\left( \MTbas_{l}^{(s_2)}\tens\MTbas_{k}^{(s_1)} \right) \\
= \; & \frac{(q-q^{-1})^m}{\qnum{2}^{m}} \times
\sum_{j=0}^m \qbin{m}{j} (-1)^j q^{(m-j)(j+l-s_2-1-k)}
\left(\prod_{r=0}^{j-1}\qnum{k-r}\right)
\left(\prod_{s=0}^{m-j-1}\qnum{l-s}\right) \\
& \qquad \qquad \qquad \qquad \qquad \times
\left(\Wbas_{l-m+j}^{(s_2-m)}\tens\Wbas_{k-j}^{(s_1-m)} \right).
\end{align*}
\end{lem}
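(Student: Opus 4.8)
The plan is to prove the formula by induction on $m$, using Lemma~\ref{lem: projection in the middle} as the engine for a single projection and the $q$-binomial recursion of Lemma~\ref{lem: q-combinatorics}(a) to reassemble the coefficients. The base case $m=1$ is exactly Lemma~\ref{lem: projection in the middle}. For the inductive step I would assume the asserted identity for $m$ projections and then apply one further projection $\hat{\pi}^{(1)}_{s_1-m}$, which acts precisely at the interface of the shorter tensor product $\Wd_{2}^{\tens(s_2-m)}\tens\Wd_{2}^{\tens(s_1-m)}$. Invoking Lemma~\ref{lem: projection in the middle} with the shifted parameters $(s_2-m,s_1-m)$ and indices $(l-m+j,k-j)$ splits each summand into two terms: a ``left-lowering'' branch carrying the weight $q^{(l-m+j)-(s_2-m)-1-(k-j)}\qnum{l-m+j}=q^{\,l+2j-s_2-1-k}\qnum{l-m+j}$ and producing $\MTbas_{l-m+j-1}^{(s_2-m-1)}\tens\MTbas_{k-j}^{(s_1-m-1)}$, and a ``right-lowering'' branch carrying $-\qnum{k-j}$ and producing $\MTbas_{l-m+j}^{(s_2-m-1)}\tens\MTbas_{k-j-1}^{(s_1-m-1)}$.

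Next I would collect the coefficient of the target basis vector $\MTbas_{l-(m+1)+j'}^{(s_2-m-1)}\tens\MTbas_{k-j'}^{(s_1-m-1)}$. Exactly two summands of the inductive expression feed into it: the $j=j'$ term through the left-lowering branch and the $j=j'-1$ term through the right-lowering branch. Telescoping the products as $\big(\prod_{s=0}^{m-j'-1}\qnum{l-s}\big)\qnum{l-m+j'}=\prod_{s=0}^{m-j'}\qnum{l-s}$ and $\big(\prod_{r=0}^{j'-2}\qnum{k-r}\big)\qnum{k-j'+1}=\prod_{r=0}^{j'-1}\qnum{k-r}$, and writing $A:=j'+l-s_2-1-k$, both contributions share the common factor $\frac{(q-q^{-1})^{m+1}}{\qnum{2}^{m+1}}(-1)^{j'}q^{(m+1-j')A}\prod_{r=0}^{j'-1}\qnum{k-r}\prod_{s=0}^{m-j'}\qnum{l-s}$, leaving the scalar bracket $q^{j'}\qbin{m}{j'}+q^{\,j'-m-1}\qbin{m}{j'-1}$. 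By Lemma~\ref{lem: q-combinatorics}(a) with $n=m+1$ and $k=j'$, this bracket equals $\qbin{m+1}{j'}$, and since $\prod_{s=0}^{m-j'}\qnum{l-s}=\prod_{s=0}^{(m+1)-j'-1}\qnum{l-s}$ the result is precisely the claimed formula for $m+1$.

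The only points requiring care are the boundary summands and the degenerate basis vectors. The terms $j'=0$ and $j'=m+1$ receive a contribution from a single branch only, matching the vanishing of $\qbin{m}{-1}$ and $\qbin{m}{m+1}$ in the recursion; and whenever an index such as $l-m+j$ or $k-j$ would leave $\bZnn$, the accompanying $q$-number $\qnum{l-m+j}$ or $\qnum{k-j}$ vanishes, so the convention $\MTbas_{-1}^{(\cdot)}=0$ is respected automatically and Lemma~\ref{lem: projection in the middle} applies verbatim at each step. Thus the bookkeeping of the two contributions and their reassembly through the $q$-binomial recursion is the main (and essentially only) content of the argument; the hard part is purely organizational, and no representation-theoretic input beyond the single-projection Lemma~\ref{lem: projection in the middle} is needed.
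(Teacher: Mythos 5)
Your proof is correct and follows essentially the same route as the paper: induction on $m$ with Lemma~\ref{lem: projection in the middle} as both base case and inductive engine, a shift of the summation index to merge the two branches, and the $q$-binomial recursion of Lemma~\ref{lem: q-combinatorics}(a) to produce $\qbin{m+1}{j}$. The telescoping of the $q$-number products and the treatment of the boundary/degenerate terms match the paper's (implicit) bookkeeping.
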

\begin{proof}
We prove the asserted formula by induction on $m$. The base case is given by
Lemma~\ref{lem: projection in the middle}.
Assume that the asserted formula holds for $m$.
Applying the induction hypothesis and Lemma~\ref{lem: projection in the middle},
we calculate
\begin{align*}
& \left(\hat{\pi}^{(1)}_{s_1-m}\circ\hat{\pi}^{(1)}_{s_1-m+1}\circ\cdots\circ\hat{\pi}^{(1)}_{s_1-1}
\circ\hat{\pi}^{(1)}_{s_1}\right)
\left( \MTbas_{l}^{(s_2)}\tens\MTbas_{k}^{(s_1)} \right) \\
= \; & \frac{(q-q^{-1})^m}{ \qnum{2}^{m}} \times 
\sum_{j=0}^m(-1)^jq^{(m-j)(j+l-s_2-1-k)} \qbin{m}{j} 
\left(\prod_{r=0}^{j-1}\qnum{k-r}\right)
\left(\prod_{s=0}^{m-j-1}\qnum{l-s}\right) \\
& \qquad \qquad \qquad \qquad \qquad \times
\hat{\pi}^{(1)}_{s_1-m} \left( \MTbas_{l-m+j}^{(s_2-m)}\tens\MTbas_{k-j}^{(s_1-m)} \right) \\
= \; & \frac{(q-q^{-1})^{m+1}}{ \qnum{2}^{m+1}} \times 
\sum_{j=0}^m(-1)^jq^{(m-j)(j+l-s_2-1-k)} \qbin{m}{j}
\left(\prod_{r=0}^{j-1}\qnum{k-r}\right)
\left(\prod_{s=0}^{m-j-1}\qnum{l-s}\right) \\
& \qquad \qquad \qquad \qquad \qquad \times
\Big( q^{l-k-1-s_2+2j} \, \qnum {l-m+j} \times \left(\MTbas_{l-(m+1)+j}^{(s_2-(m+1))}\tens\MTbas_{k-j}^{(s_1-(m+1))}\right) \\
& \qquad \qquad \qquad \qquad \qquad \qquad 
- \qnum {k-j} \times \left(\MTbas_{l-m+j}^{(s_2-(m+1))}\tens\MTbas_{k-j-1}^{(s_1-(m+1))}\right) \Big).
\end{align*}
Changing the summation index in the second term by $j\mapsto j-1$, we simplify
the above as
\begin{align*}
& \frac{(q-q^{-1})^{m+1}}{ \qnum{2}^{m+1}} \times 
\sum_{j=0}^{m+1} (-1)^jq^{(m-j)(j+l-s_2-1-k)} \qbin{m}{j}
\left(\prod_{r=0}^{j-1}\qnum{k-r}\right)
\left(\prod_{s=0}^{m-j-1}\qnum{l-s}\right) \\
& \qquad \qquad \qquad \qquad \qquad \times
\Big( q^{l-k-1-s_2+2j} \, \qnum {l-m+j} \times \left(\MTbas_{l-(m+1)+j}^{(s_2-(m+1))}\tens\MTbas_{k-j}^{(s_1-(m+1))}\right) \\
& \qquad \qquad \qquad \qquad \qquad \qquad 
- \qnum {k-j} \times \left(\MTbas_{l-m+j}^{(s_2-(m+1))}\tens\MTbas_{k-j-1}^{(s_1-(m+1))}\right) \Big) \\
= \; & \frac{(q-q^{-1})^{m+1}}{ \qnum{2}^{m+1}} \times 
\sum_{j=0}^{m+1}(-1)^jq^{(m+1-j)(j+l-s_2-1-k)} 
\left(\prod_{r=0}^{j-1}\qnum{k-r}\right)
\left(\prod_{s=0}^{(m+1)-j-1}\qnum{l-s}\right) \\
& \qquad \qquad \qquad \qquad \qquad \times 
\bigg( q^{j}\qbin{m}{j}+q^{-(m+1-j)}\qbin{m}{j-1}\bigg) \times
\left(\MTbas_{l-(m+1)+j}^{(s_2-(m+1))}\tens\MTbas_{k-j}^{(s_1-(m+1))}\right),
\end{align*}
which gives the asserted formula by the recursion of 
Lemma~\ref{lem: q-combinatorics}(a) for the $q$-binomial coefficients.
\end{proof}

The next lemma gives the explicit non-zero constant in 
the commutative diagram in Lemma~\ref{lem: commutative diagram}.

\begin{lem}\label{lem: projections in the middle for hwv}
Let $s_1,s_2 \in \bZpos$ and $m \in \{1,2,\ldots,\min(s_1,s_2) \}$, 
and denote $r = s_1+s_2-2m$ and $\projdmn = r+1$. 
We have
\begin{align*}
\left(\hat{\pi}^{(1)}_{s_1-m+1}\circ\cdots\circ\hat{\pi}^{(1)}_{s_1-1}
\circ\hat{\pi}^{(1)}_{s_1}\right)
\left( (\Embedding^{(s_2)}\tens\Embedding^{(s_1)})
(\Tbas_{0}^{(\projdmn;d_{1},d_{2})}) \right)
= \; & \constantfromdiagram{m}{s_1}{s_{2}} \times 
\Embedding^{(r)}(\Wbas_0^{(\projdmn)}),
\end{align*}
where
\[ \constantfromdiagram{m}{s_1}{s_{2}} =
\frac{\qfact{s_1-m}\qfact{s_2-m}\qfact{s_1+s_2-m+1}}{\qnum{2}^m\qfact{s_1}\qfact{s_2}\qfact{s_1+s_2-2m+1}}
= \frac{\qbin{s_1+s_2-m+1}{m}}{\qnum{2}^m\qfact{m}\qbin{s_1}{m}\qbin{s_2}{m}} 
\quad \neq 0. \]
\end{lem}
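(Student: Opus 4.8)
The plan is to write the image out explicitly, reduce it to a single scalar by a weight argument, and then pin down that scalar with the $q$-combinatorial identity of Lemma~\ref{lem: q-combinatorics}(d). First I would expand $\Tbas_{0}^{(\projdmn;d_{1},d_{2})}$ using its explicit formula~\eqref{eq: tensor product hwv} (whose parameter coincides with our $m$, since $\projdmn = d_1+d_2-1-2m$, and whose effective summation is $l_1+l_2=m$ with $l_1\in\{0,\ldots,m\}$ as $m\leq\min(s_1,s_2)$), and apply $\Embedding^{(s_2)}\tens\Embedding^{(s_1)}$ via $\Embedding^{(s_i)}(\Wbas_l^{(d_i)})=\MTbas_l^{(s_i)}$. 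This yields $(\Embedding^{(s_2)}\tens\Embedding^{(s_1)})(\Tbas_{0}^{(\projdmn;d_{1},d_{2})})=\sum_{i=0}^m a_i\,(\MTbas_{m-i}^{(s_2)}\tens\MTbas_i^{(s_1)})$ with $a_i=(-1)^i\frac{\qfact{s_1-i}\qfact{s_2-m+i}}{\qfact{i}\qfact{s_1}\qfact{m-i}\qfact{s_2}}\frac{q^{i(s_1-i+1)}}{(q-q^{-1})^m}$. The key structural observation is that every map in the composition, namely $\Embedding^{(s_2)}\tens\Embedding^{(s_1)}$ and each $\hat{\pi}^{(1)}_j$, is a morphism of $\Uqsltwo$-representations, so the image $w$ of the whole composite is a highest weight vector of $K$-weight $q^{r}$ inside $\Wd_2^{\tens r}$. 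Since $q$ is not a root of unity, the $q^{r}$-eigenspace of $\Wd_2^{\tens r}$ is one-dimensional, spanned by $\Wbas_0^{\tens r}=\MTbas_0^{(r)}=\Embedding^{(r)}(\Wbas_0^{(\projdmn)})$; hence $w=c\,\Embedding^{(r)}(\Wbas_0^{(\projdmn)})$ for a scalar $c$, and the statement reduces to computing $c$, the coefficient of $\Wbas_0^{\tens r}$ in $w$.

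Next I would compute $c$ by applying Lemma~\ref{lem: several projections in the middle} to each summand $\MTbas_{m-i}^{(s_2)}\tens\MTbas_i^{(s_1)}$. By Lemma~\ref{lem: action of F}, the vector $\MTbas_a^{(s)}$ has a nonzero $\Wbas_0^{\tens s}$ component only for $a=0$, so in the resulting double sum only the term with $j=i$ contributes to $\Wbas_0^{\tens r}$. For that term the two $q$-factorial products collapse to $\qfact{i}\qfact{m-i}$, and the identity $\qbin{m}{i}\qfact{i}\qfact{m-i}=\qfact{m}$ produces a clean cancellation. Collecting the prefactors $a_i$ and simplifying, I reduce $c$ to $\frac{1}{\qnum{2}^m\qfact{s_1}\qfact{s_2}}\sum_{i=0}^m\qbin{m}{i}\qfact{s_1-i}\qfact{s_2-m+i}\,q^{E(i)}$, where a short expansion of the two contributing $q$-exponents gives $E(i)=i(s_1+s_2-2m+2)+m(m-s_2-1)$.

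Finally I would factor the $i$-independent constant $q^{m(m-s_2-1)}$ out of the sum and apply Lemma~\ref{lem: q-combinatorics}(d) with $n=m$, $\nu_1=s_1$, $\nu_2=s_2$. Its right-hand side carries exactly the compensating power $q^{m(s_2+1-m)}$, so the two $q$-factors cancel and $c$ becomes $\frac{\qfact{s_1-m}\qfact{s_2-m}\qfact{s_1+s_2-m+1}}{\qnum{2}^m\qfact{s_1}\qfact{s_2}\qfact{s_1+s_2-2m+1}}=\constantfromdiagram{m}{s_1}{s_{2}}$, as claimed. The only genuinely delicate part is arranging the accumulated $q$-exponents into the precise shape needed to invoke Lemma~\ref{lem: q-combinatorics}(d), which indeed appears tailored to this computation; the weight argument in the first step is what spares me from verifying separately that all other components of $w$ vanish, and it turns an otherwise lengthy cancellation into the extraction of a single coefficient.
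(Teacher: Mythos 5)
Your proposal is correct and follows essentially the same route as the paper: expand $\Tbas_{0}^{(\projdmn;d_{1},d_{2})}$ via~\eqref{eq: tensor product hwv}, apply Lemma~\ref{lem: several projections in the middle} termwise, observe that only the diagonal terms $j=l_1$ survive (yielding multiples of $\Wbas_0^{\tens r}$), and close with Lemma~\ref{lem: q-combinatorics}(d) at $n=m$, $\nu_i=s_i$, with the same exponent bookkeeping $E(i)=i(s_1+s_2-2m+2)+m(m-s_2-1)$. Your preliminary weight-space argument (one-dimensionality of the $q^r$-eigenspace of $K$ in $\Wd_2^{\tens r}$) is a valid but inessential addition, since the surviving terms of the direct computation are already all proportional to $\Wbas_0^{\tens r}$.
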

\begin{proof}
Recall from Lemma~\ref{lem: tensor product representations of quantum sl2}
the formulas~\eqref{eq: tensor product hwv}:
\begin{align} 
\Tbas_{0}^{(\projdmn;d_{1},d_{2})} = \; &
\sum_{l_{1},l_{2}} T_{0;m}^{l_{1},l_{2}}(s_{1},s_{2})
\times (\Wbas_{l_{2}}^{(d_2)} \tens \Wbas_{l_{1}}^{(d_1)}) ,
    \label{eq: coef T}\\
\text{where } \qquad
T_{0;m}^{l_{1},l_{2}}(s_{1},s_{2}) =\; &
    \delta_{l_{1}+l_{2},m} \times (-1)^{l_{1}}\frac{\qfact{s_{1}-l_{1}}\,\qfact{s_{2}-l_{2}}}{\qfact{l_{1}}\qfact{s_{1}}\qfact{l_{2}}\qfact{s_{2}}}\,\frac{q^{l_{1}(s_{1}-l_{1}+1)}}{(q-q^{-1})^{m}} . \nonumber 
\end{align}
By Lemma~\ref{lem: several projections in the middle}, we have
\begin{align*}
& \left(\hat{\pi}^{(1)}_{s_1-m+1}\circ\cdots\circ\hat{\pi}^{(1)}_{s_1-1}
\circ\hat{\pi}^{(1)}_{s_1}\right)
\left( (\Embedding^{(s_2)}\tens\Embedding^{(s_1)})
(\Tbas_{0}^{(\projdmn;d_{1},d_{2})}) \right) \\
= \; & \frac{(q-q^{-1})^m}{\qnum{2}^{m}} \times
\sum_{l_{1},l_{2}} T_{0;m}^{l_{1},l_{2}}(s_{1},s_{2}) \times
\sum_{j=0}^m \qbin{m}{j} (-1)^j q^{(m-j)(j+l_2-s_2-1-l_1)}  \\
\; & \qquad \qquad \qquad \qquad \times 
\left(\prod_{r=0}^{j-1}\qnum{l_1-r}\right)
\left(\prod_{s=0}^{m-j-1}\qnum{l_2-s}\right)\times
\left(\MTbas_{l_2-m+j}^{(s_2-m)}\tens\MTbas_{l_1-j}^{(s_1-m)}\right).
\end{align*}
Now, by the formula~\eqref{eq: coef T}, we have $l_{1}+l_{2}=m$ in the sum.
Therefore, only the terms with $j=l_1=m-l_2$ are non-zero. We denote $k = l_1$
and simplify the above expression as
\begin{align*}
& \frac{(q-q^{-1})^m}{\qnum{2}^{m}} \times
\sum_{k=0}^m (-1)^{k}\frac{\qfact{s_{1}-k}\,\qfact{s_{2}-m+k}}{\qfact{k}\qfact{s_{1}}\qfact{m-k}\qfact{s_{2}}}\,\frac{q^{k(s_{1}-k+1)}}{(q-q^{-1})^{m}} \times
\qbin{m}{k} (-1)^{k} q^{(m-k)(m-k-s_2-1)}  \\
\; & \qquad \qquad \qquad \qquad \times 
\left(\prod_{r=0}^{k-1}\qnum{k-r}\right)
\left(\prod_{s=0}^{m-k-1}\qnum{m-k-s}\right)\times
\left(\MTbas_{0}^{(s_2-m)}\tens\MTbas_{0}^{(s_1-m)}\right) \\
= \; & \frac{q^{m(m-s_2-1)}}{\qnum{2}^{m}\qfact{s_{1}}\qfact{s_{2}}} \times
\sum_{k=0}^m \qbin{m}{k} q^{k(s_{1}+s_{2}-2m+2)} \qfact{s_{1}-k} \qfact{s_{2}-m+k}
 \; \times \Embedding^{(r)}(\Wbas_0^{(\projdmn)}).
\end{align*}
Using Lemma~\ref{lem: q-combinatorics}(d), with $\nu_i = s_i$, $i=1,2$,
and $n = m$, we simplify this to the asserted form.
\end{proof}

\bigskip{}
\section{\label{app: dual elements}Dual elements}

This appendix contains results needed in the proof of 
Lemma~\ref{lem: limits at same rate} and Proposition~\ref{prop: strong limits}, concerning the limit of
the solution $\BasisF_\linkpatt$ of the Benoit~\& Saint-Aubin PDE system~\eqref{eq: BSA differential equations}
as several of its variables tend to a common limit simultaneously.
The core idea in the proof is to construct suitable dual elements which allow us 
to evaluate the limit. The same idea was also used in a simpler setup in 
the proof of Proposition~\ref{prop: FK dual elements},
where we constructed dual elements for the basis functions 
$\BasisF_\alpha$, for $\alpha \in \PP_N^{(s)}$, as iterated limits.

Using the projection properties~\eqref{eq: projection conditions} of the vectors
$\Puregeom_\linkpatt$, we will define iterated projections, which provide the 
(unnormalized) dual basis of 
$\Puregeom_\linkpatt \in \HWsp^{(s)}_{\multii}$.
We follow the approach of
\cite[Section~3.5]{Kytola-Peltola:Pure_partition_functions_of_multiple_SLEs},
where such dual elements for the special case of
$\Puregeomtwodim_\alpha \in \HWsp_{2N}^{(0)}$, for $\alpha \in \PP_N$, 
were constructed.
Therefore, we only give the rough reasoning of the general case
--- the details are the same as in \cite[Section~3.5]{Kytola-Peltola:Pure_partition_functions_of_multiple_SLEs}, 
but the notation for this simple construction becomes unnecessarily complicated in the general case.

\subsection{Allowable orderings of links}

We consider the links in the link pattern 
\begin{align*} 
\linkpatt = \Big\{\linkInEquation{a_1}{b_1},\ldots,\linkInEquation{a_\ell}{b_\ell}\Big\}
\bigcup
\Big\{\defectInEquation{c_1},\ldots,\defectInEquation{c_{s}}\Big\},
\end{align*} 
as an ordered multiset of $k \leq \ell = \sum_{a,b} \ell_{a,b}$ elements,
\begin{align}\label{eq: multiset of links}
\mathfrak{L}(\linkpatt) = 
\Big\{\ell_{a_1,b_1} \times \linkInEquation{a_1}{b_1},\; \ldots, \;\ell_{a_k,b_k} \times \linkInEquation{a_k}{b_k}\Big\} .
\end{align}

For instance, if $\multii = (1,1,\ldots,1,1)$, that is, all indices of $\linkpatt$ 
have valence one, then $k = \ell$ and the links can be ordered by their left 
endpoints, such that $a_1 < a_2 < \cdots < a_\ell$. If some other ordering is chosen, 
there is a permutation $\sigma \in \SymmGrp_\ell$ such that we have
$a_{\sigma(1)} < a_{\sigma(2)} < \cdots < a_{\sigma(\ell)}$.
The choice of the ordering of the links is thus encoded in
the unique permutation $\sigma$ with the above property.

For link patterns with $s_j \geq 2$, for some $j$ in 
$\multii = (s_1,\ldots,s_p)$,
the ordering of the links amounts to ordering the multiset $\mathfrak{L}(\linkpatt)$.
For example, we can first order the links in groups by their left endpoints as 
above, and then, in each group of links with the same left endpoint, we may
choose the ordering according to the right endpoint so that the link(s) 
$\link{a\;}{\;b\,}$ among the group with the smallest $b$ get the smallest running number. 
Again, choosing some other ordering amounts to choosing a permutation 
$\sigma \in \SymmGrp_k$ of the multiset of the links.

Recall that the removal of $m \leq \ell_{j,j+1}$ links $\link{j}{j+1}$ from 
$\linkpatt$ is denoted by $\linkpatt\removeLink (m\times\link{j}{j+1})$,
and if $s_j=m$ or $s_{j+1}=m$, we also have to remove the index $j$ or $j+1$,
respectively (or both), and relabel the indices of the 
remaining links and defects as illustrated in Figure~\ref{fig: removing links}.
Slightly informally, we say that the ordering of the links is allowable 
for $\linkpatt$ if all links of $\linkpatt$ can be removed in such a way that at 
each step, the links $\ell_{a,b} \times \link{a\;}{\;b\,}$
to be removed connect two consecutive indices $a = j$ and $b = j+1$ 
(when the indices are relabeled after each removal) --- see 
Figure~\ref{fig: allowable removal of links with fusion} for an illustration.
The concept of ``allowability'' was defined more formally in
\cite[Section~3.5]{Kytola-Peltola:Pure_partition_functions_of_multiple_SLEs}
in the special case of $\alpha \in \PP_N$, but the only differences in the present 
case are that, first, the links come with multiplicity, which only results in 
complications in the notation, and, second, $\linkpatt$ might have defects
$\defect{c}$, which play no role in the link removal and cannot lie inside 
any link $\link{a\;}{\;b\,}$ in the sense that $a < c < b$.

\begin{figure}
\includegraphics[scale=.75]{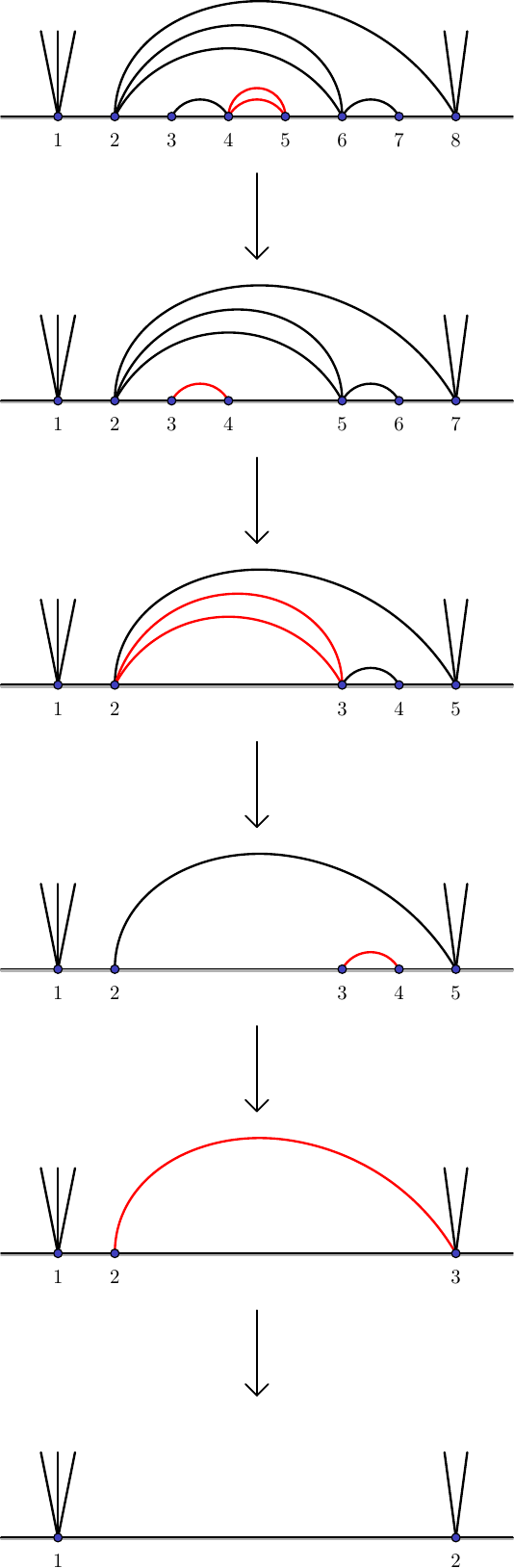}
\caption{\label{fig: allowable removal of links with fusion}
Example of an allowable ordering to remove links from 
a planar link pattern. Notice in particular the relabeling of the indices after 
each step, and the fact that after each removal,
the valences of the endpoints of the removed link decrease, as in
Equation~\eqref{eq: notations for interated projections}.
Furthermore, if after the removal the endpoint becomes empty, 
then it is removed as well. After the removal of all the links, the defects remain.}
\end{figure}

\subsection{\label{subsec: dual elements}Dual elements}

Let $\linkpatt \in \LP_\multii^{(s)}$ and suppose $\sigma \in \SymmGrp_k$
is an allowable ordering of the $k$ links of $\linkpatt$ (with multiplicity),
see~\eqref{eq: multiset of links} and Figure~\ref{fig: allowable removal of links with fusion}.
After removal of all the links of $\linkpatt$ in the order $\sigma$,
one is left with the link pattern $\defpatt_{\partition(\linkpatt)}$, which 
consists of $s$ defects only, determined by the partition 
$\partition(\linkpatt)$ of the defects of $\linkpatt$ 
(recall Section~\ref{subsec: removing links}).
In terms of the vector $\Puregeom_\linkpatt$, the link removal can be realized
as projections to subrepresentations, 
using the properties~\eqref{eq: projection conditions} of $\Puregeom_\linkpatt$
--- since the ordering $\sigma$ is allowable, the links are removed in such a way
that the removed links always connect two consecutive indices $j,j+1$.

As in Remark~\ref{rem: one dimensionality of partition space}, we identify the one-dimensional space
$\HWsp_{\partition(\linkpatt)}^{(s)}$ with $\bC$, via
$\Puregeom_{\defpatt_{\partition(\linkpatt)}} \mapsto 1$.
This identification is implicitly used in the following definition.
We set
\begin{align*}
\Quantumdual_\linkpatt^{(\sigma)} \;\colon\;
\HWsp_\multii^{(s)} \to \bC, \qquad \qquad
\Quantumdual_\linkpatt^{(\sigma)} := \; &
\tilde{\pi}_{a_k(k-1)}^{(\projdmn_k)}
\circ \cdots \circ \tilde{\pi}_{a_2(1)}^{(\projdmn_2)}\circ \tilde{\pi}_{a_1}^{(\projdmn_1)} ,
\end{align*}
where 
\begin{align}\label{eq: notations for interated projections}
\projdmn_j := \; & d_{b_j}^{(j-1)}+d_{a_j}^{(j-1)}-1-2\ell_{a_j,b_j},
\qquad \qquad 
d_{c}^{(j)} := d_{c} - 
\sum_{\substack{ i \in \{1,2,\ldots,j \}\\ c = a_i \text{ or } c = b_i}} \ell_{a_i,b_i}
\end{align}
so that $d_{a}^{(j)}-1$ denotes the valence of the point $a$ 
after removal of the $j$ links 
$\ell_{a_1,b_1} \times \link{a_1}{b_1}$, $\ldots$, 
$\ell_{a_j,b_j} \times \link{a_j}{b_j}$ from $\linkpatt$ in the order $\sigma$,
and $a_j(j-1)$ denotes the relabeled endpoint of the $j$:th link 
$\link{a_j}{b_j}$ after removal of the $j-1$ links 
$\ell_{a_1,b_1} \times \link{a_1}{b_1}$, $\ldots$, 
$\ell_{a_{j-1},b_{j-1}} \times \link{a_{j-1}}{b_{j-1}}$;
see also Figure~\ref{fig: allowable removal of links with fusion}.

We next show that $\Quantumdual_\linkpatt^{(\sigma)}$ 
is in fact independent of the choice of allowable ordering $\sigma$ 
for $\linkpatt$, and thus gives
rise to a well-defined linear map
\begin{align}\label{eq: dual basis without ordering}
\Quantumdual_\linkpatt:=\Quantumdual_\linkpatt^{(\sigma)}
\;\colon\; \HWsp_\multii^{(s)} \to \bC,
\end{align}
for any choice of
allowable ordering $\sigma$ of the links in $\linkpatt$.
Moreover, we show that
$\left(\Quantumdual_\linkpatt\right)_{\linkpatt \in \LP^{(s)}_{\multii}}$
is a basis of the dual space $(\HWsp_\multii^{(s)})^*$, namely
the (unnormalized) dual basis of 
$\left(\Puregeom_\linkpatt\right)_{\linkpatt \in \LP^{(s)}_{\multii}}$.

\begin{prop}\label{prop: quantum dual elements}
{(see also \cite[Proposition~3.7]{Kytola-Peltola:Pure_partition_functions_of_multiple_SLEs})}
\
\begin{description}
\item[(a)] Let $\linkpatt\in\LP^{(s)}_{\multii}$.
For any two allowable orderings $\sigma$ and $\sigma'$ of the links in $\linkpatt$, we have
 \begin{align*}
\Quantumdual_\linkpatt^{(\sigma)} = \Quantumdual_\linkpatt^{(\sigma')}.
\end{align*}
Thus, the linear functional
$\Quantumdual_\linkpatt \in (\HWsp_\multii^{(s)})^*$ 
in \eqref{eq: dual basis without ordering} is well-defined. 

\item[(b)]
For any $\linkpatt,\tau\in\LP^{(s)}_{\multii}$, we have
\begin{align}\label{eq: quantum dual basis}
\Quantumdual_\linkpatt(\Puregeom_\tau) = 
\const\times\delta_{\linkpatt,\tau} , \qquad\text{where}\qquad
\delta_{\linkpatt,\tau} = \; & \begin{cases}
1\quad & \text{if }\tau=\linkpatt\\
0 & \text{if }\tau\neq\linkpatt ,
\end{cases}
\end{align}
and the constant is non-zero and depends only on $\linkpatt$.
\end{description}
In particular, 
$\left(\Quantumdual_{\linkpatt}\right)_{\linkpatt\in\LP^{(s)}_{\multii}}$ 
is a basis of the dual space $(\HWsp_\multii^{(s)})^*$.
\end{prop}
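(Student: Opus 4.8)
The plan is to reduce both claims to evaluating the iterated projection $\Quantumdual_\linkpatt^{(\sigma)}$ on the basis $\left(\Puregeom_\tau\right)_{\tau\in\LP^{(\nodefsmaller)}_{\multii}}$ of $\HWsp_\multii^{(\nodefsmaller)}$ furnished by Theorem~\ref{thm: highest weight vector space basis vectors}(b), organizing the computation by induction on the number $\ell$ of links. Since a functional on $\HWsp_\multii^{(\nodefsmaller)}$ is determined by its values on this basis, part~(a) follows once we know that $\Quantumdual_\linkpatt^{(\sigma)}(\Puregeom_\tau)$ is independent of the allowable ordering $\sigma$ for every $\tau$, while part~(b) amounts to computing these values explicitly.

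First I would prove~(b). The diagonal case $\tau=\linkpatt$ is a direct unwinding of the defining projection properties~\eqref{eq: projection conditions}: the first projection $\tilde\pi_{a_1}^{(\projdmn_1)}$ applied to $\Puregeom_\linkpatt$ removes the bundle $\ell_{a_1,b_1}\times\link{a_1}{b_1}$ and produces $\frac{1}{\constantfromdiagram{\cdots}{\cdots}{\cdots}}\,\Puregeom_{\hat\linkpatt}$, where $\hat\linkpatt$ inherits an allowable ordering; iterating down an allowable ordering reaches $\Puregeom_{\defpatt_{\partition(\linkpatt)}}$, identified with $1$ as in Remark~\ref{rem: one dimensionality of partition space}. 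Because every constant $\constantfromdiagram{m}{s_j}{s_{j+1}}$ in~\eqref{eq: constant from diagram} is non-zero, the resulting scalar $\const$ is non-zero and depends only on $\linkpatt$. For $\tau\neq\linkpatt$, the same first projection sends $\Puregeom_\tau$ either to $0$ (if $\tau$ does not contain $\ell_{a_1,b_1}$ links $\link{a_1}{b_1}$) or to a non-zero multiple of $\Puregeom_{\tau\removeLink(\ell_{a_1,b_1}\times\link{a_1}{b_1})}$, whose reduced pattern still differs from $\hat\linkpatt$; applying the induction hypothesis to $\Quantumdual_{\hat\linkpatt}$ then gives $0$. This yields $\Quantumdual_\linkpatt(\Puregeom_\tau)=\const\,\delta_{\linkpatt,\tau}$.

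The heart of the argument, and the step I expect to be the main obstacle, is the ordering-independence in part~(a). Following the approach of~\cite[Section~3.5]{Kytola-Peltola:Pure_partition_functions_of_multiple_SLEs}, I would transport the question to the tensor power $\Wd_2^{\tens 2N}$, where the analogous statement for the vectors $\Puregeomtwodim_\alpha$ is available as~\cite[Proposition~3.7]{Kytola-Peltola:Pure_partition_functions_of_multiple_SLEs}. Concretely, using the construction $\Puregeom_\linkpatt=R_+^{(\nodefsmaller)}\big(\Projectionhat^{(\multii,\nodefsmaller)}(\Puregeomtwodim_{\alpha(\linkpatt)})\big)$ from~\eqref{eq: basis vector construction}, the commutation of $R_+^{(\nodefsmaller)}$ with the generalized projections (Remark~\ref{rem: Rplus commutes with projections}), and the commutative diagram of Lemma~\ref{lem: commutative diagram}, each factor $\tilde\pi_{a_j(j-1)}^{(\projdmn_j)}$ is rewritten, up to the explicit non-zero constant $\constantfromdiagram{\cdots}{\cdots}{\cdots}$, as a composition of elementary projections $\hat\pi^{(1)}$ at the $\Wd_2$-level; removing a bundle then corresponds to peeling off the associated nested family of links of $\alpha(\linkpatt)$ exactly as in~\eqref{eq: taking nested links off}. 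Under this rewriting $\Quantumdual_\linkpatt^{(\sigma)}$ becomes a scalar multiple of the corresponding iterated-projection functional on $\HWsp_{2N}^{(0)}$, which is well-defined independently of the allowable ordering by~\cite[Proposition~3.7]{Kytola-Peltola:Pure_partition_functions_of_multiple_SLEs}; since $\linkpatt\mapsto\alpha(\linkpatt)$ is injective, ordering-independence of $\Quantumdual_\linkpatt$ descends from the two-dimensional case. The one genuinely technical point is the bookkeeping of the conversion constants $\constantfromdiagram{m}{s_j}{s_{j+1}}$ accumulated along the reduction: one must verify that they combine consistently, so that the resulting functional does not depend on $\sigma$. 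I would control this precisely through the $q$-combinatorial identities of Appendix~\ref{app: q-combinatorics} (in particular Lemma~\ref{lem: q-combinatorics}(c),(d)), which generalize the trivialization $\constantfromdiagram{1}{1}{1}=1$ that renders this step automatic in~\cite{Kytola-Peltola:Pure_partition_functions_of_multiple_SLEs}.

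Finally, granting~(a) and~(b), the matrix $\big(\Quantumdual_\linkpatt(\Puregeom_\tau)\big)_{\linkpatt,\tau\in\LP^{(\nodefsmaller)}_{\multii}}$ is diagonal with non-zero diagonal entries, hence invertible; as $\left(\Puregeom_\tau\right)$ is a basis of $\HWsp_\multii^{(\nodefsmaller)}$, it follows that $\left(\Quantumdual_\linkpatt\right)_{\linkpatt\in\LP^{(\nodefsmaller)}_{\multii}}$ is the (unnormalized) dual basis, and in particular a basis of $(\HWsp_\multii^{(\nodefsmaller)})^*$.
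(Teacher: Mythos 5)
Your part~(b) follows the paper's own route: unwind the projection properties~\eqref{eq: projection conditions} along the chosen allowable ordering, obtaining a product of the non-zero constants $1/\constantfromdiagram{\ell_{a_i,b_i}}{\cdot}{\cdot}$ on the diagonal and $0$ off the diagonal. For part~(a), however, the paper does not descend to $\Wd_2^{\tens 2N}$ at all: it observes that by Proposition~\ref{prop: basis of highest weight vector space} the vectors $\Puregeom_\tau$ form a basis of $\HWsp_\multii^{(\nodefsmaller)}$, so once the values $\Quantumdual_\linkpatt^{(\sigma)}(\Puregeom_\tau)=\const\times\delta_{\linkpatt,\tau}$ are known to be independent of $\sigma$, the functional itself is. Your reduction through $\Embedding^{(\multii,\nodefsmaller)}$, Lemma~\ref{lem: commutative diagram} and \cite[Proposition~3.7]{Kytola-Peltola:Pure_partition_functions_of_multiple_SLEs} is therefore a heavier detour, and it deposits you at exactly the same residual question: whether the accumulated conversion constants agree for all allowable orderings.

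That question is the genuine gap in your proposal, and deferring it to Lemma~\ref{lem: q-combinatorics}(c),(d) is not enough, because the claim is not automatic. Concretely, take $\multii=(1,2,2,1)$ and let $\linkpatt$ consist of one link joining $1$ and $2$, one joining $2$ and $3$, and one joining $3$ and $4$. Removing the bundles from the outside in (first $1$--$2$, then $2$--$3$, then $3$--$4$) accumulates $\constantfromdiagram{1}{1}{2}\cdot\constantfromdiagram{1}{1}{2}\cdot\constantfromdiagram{1}{1}{1}=\qnum{3}^2/\qnum{2}^4$, whereas removing the middle bundle $2$--$3$ first (its endpoints are already consecutive) accumulates $\constantfromdiagram{1}{2}{2}\cdot 1\cdot 1=\qnum{4}/\qnum{2}^3$; these differ, since $\qnum{3}^2=\qnum{4}\qnum{2}+1$. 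So either the precise notion of allowable ordering (which the paper only sketches, deferring to \cite[Section~3.5]{Kytola-Peltola:Pure_partition_functions_of_multiple_SLEs}) must be pinned down so as to exclude the second ordering --- for instance by requiring each removal to empty an endpoint --- or the equality in~(a) must be weakened to proportionality of the $\Quantumdual_\linkpatt^{(\sigma)}$. The off-diagonal vanishing is manifestly $\sigma$-independent, only the diagonal constant is at stake, and proportionality already yields the dual-basis conclusion and every later use of $\Quantumdual_\linkpatt$; but as written, neither your sketch nor the cited $q$-identities establishes the exact equality you (and the proposition) claim.
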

\begin{proof}
We use the notations introduced 
in Equation~\eqref{eq: notations for interated projections}.
Let $\linkpatt,\tau \in \LP_\multii^{(s)}$, and let $\sigma$ be 
any allowable ordering of the links of $\linkpatt$.
Consider $\Quantumdual_\linkpatt^{(\sigma)}( \Puregeom_\tau)$. If $\tau=\linkpatt$, then
by the projection property~\eqref{eq: projection conditions}, we have
\begin{align*}
\tilde{\pi}_{a_1}^{(\projdmn_1)} ( \Puregeom_\linkpatt) =
\frac{1}{\constantfromdiagram{\ell_{a_1,b_1}}{s_{b_1}}{s_{a_1}}}
\times \Puregeom_{\linkpatt\removeLink(\ell_{a_1,b_1}\times\link{a_1}{b_1})},
\end{align*}
and recursively, 
\begin{align*}
\Big( \tilde{\pi}_{a_j(j-1)}^{(\projdmn_j)}
\circ \cdots \circ \tilde{\pi}_{a_2(1)}^{(\projdmn_2)}\circ \tilde{\pi}_{a_1}^{(\projdmn_1)} \Big)( \Puregeom_\linkpatt)
    = \const\times \Puregeom_{\linkpatt\removeLink(\ell_{a_1,b_1}\times\link{a_1}{b_1})\removeLink\cdots \removeLink(\ell_{a_j,b_j}\times\link{a_j}{b_j}) } ,
\end{align*}
for a non-zero constant which is a product of the constants appearing in the 
projection properties~\eqref{eq: projection conditions}.

For $j=k$, the above formula gives 
$\Quantumdual_\linkpatt^{(\sigma)}( \Puregeom_\linkpatt) = 
\const \times \Puregeom_{\defpatt_{\partition(\linkpatt)}}$,
which we identify with the constant times $1 \in \bC$,
via $\Puregeom_{\defpatt_{\partition(\linkpatt)}} \mapsto 1$, as in
Remark~\ref{rem: one dimensionality of partition space}.
On the other hand, if $\tau\neq\linkpatt$, then for some $j$, the link pattern 
$\tau$ does not contain $\ell_{a_j,b_j}$ links $\link{a_j}{b_j}$,
and by the property~\eqref{eq: projection conditions} we then similarly get $\Quantumdual_\linkpatt^{(\sigma)}( \Puregeom_\tau) = 0$.
Summarizing, we have
$\Quantumdual_\linkpatt( \Puregeom_\tau) = \const\times\delta_{\linkpatt,\tau}$,
independently of the choice of allowable ordering $\sigma$, and the constant is 
non-zero and only depends on $\linkpatt$. This proves
Equation~\eqref{eq: quantum dual basis} and assertion (b).

By Proposition~\ref{prop: basis of highest weight vector space}, 
the vectors $\Puregeom_\tau$, with $\tau \in \LP_\multii^{(s)}$,
form a basis of the space $\HWsp_\multii^{(s)}$.
It thus follows from Equation~\eqref{eq: quantum dual basis}
that the value of the operator $\Quantumdual_\linkpatt^{(\sigma)}$ 
is independent of the choice of an allowable ordering $\sigma$ of the links,
and that
$\left(\Quantumdual_{\linkpatt}\right)_{\linkpatt\in\LP^{(s)}_{\multii}}$ 
is a basis of the dual space $(\HWsp_\multii^{(s)})^*$.
This concludes the proof.
\end{proof}

\begin{rem}\label{rem: elements of the commutant}
\emph{For fixed $\linkpatt$, by Theorem~\ref{thm: highest weight vector space basis vectors}(b), the maps 
$\Quantumdual_\linkpatt \colon \HWsp_\multii^{(s)} \to \bC$
also define (unnormalized) projectors 
\begin{align*}
\widehat{\Quantumdual}_\linkpatt \; \colon \; \bigotimes_{i=1}^{p}\Wd_{d_{i}} \to \Wd_d, \qquad
\widehat{\Quantumdual}_\linkpatt(F^l.\Puregeom_\tau) := \; & \begin{cases}
 \Quantumdual_\linkpatt(\Puregeom_\tau) \times \Wbas_l^{(d)} ,
 \quad \text{ for any } l \in \{ 0,1,\ldots,s \} \quad &  \text{if } 
 \tau \in \LP^{(s)}_{\multii} \\
0 & \text{otherwise,}
\end{cases}
\end{align*}
from the tensor product~\eqref{eq: order of tensorands}
onto the $s + 1 = d$-dimensional irreducible representation $\Wd_d$ 
of $\Uqsltwo$. 
For a chosen $\upsilon \in \LP^{(s)}_{\multii}$,  
combining $\widehat{\Quantumdual}_\linkpatt$ with the embedding 
$\Wd_d \hookrightarrow \bigotimes_{i=1}^{p}\Wd_{d_{i}}$ given by
$\Wbas_l^{(d)} \mapsto F^l.\Puregeom_{\upsilon}$,
we can define the (unnormalized) projectors
\begin{align*}
\widetilde{\Quantumdual}_\linkpatt^{\upsilon} \; \colon \; \bigotimes_{i=1}^{p}\Wd_{d_{i}} \to \bigotimes_{i=1}^{p}\Wd_{d_{i}}, \qquad
\widetilde{\Quantumdual}_\linkpatt^{\upsilon}(F^l.\Puregeom_\tau) := \; & \begin{cases}
 \Quantumdual_\linkpatt(\Puregeom_\tau) \times F^l.\Puregeom_{\upsilon} ,
 \quad \text{ for any } l \in \{0,1,\ldots,s \} \quad &  \text{if } 
 \tau \in \LP^{(s)}_{\multii} \\
0 & \text{otherwise,}
\end{cases}
\end{align*}
onto the subrepresentations of the tensor product~\eqref{eq: order of tensorands}
isomorphic to $\Wd_d$, generated by $\Puregeom_{\upsilon}$.
This gives rise to 
$\sum_{s \geq 0}(\# \LP_\multii^{(s)})^2$ linearly independent maps
$\widetilde{\Quantumdual}_\linkpatt^{\upsilon}$, with
$\linkpatt,\upsilon \in \LP^{(s)}_{\multii}$,
that belong to the commutant algebra
$\End_{\Uqsltwo}\big( \bigotimes_{i=1}^{p}\Wd_{d_{i}} \big)$.
We discuss this commutant algebra in forthcoming work~\cite{Flores-Peltola:Colored_braid_representations_and_QSW}.
}
\end{rem}

\subsection{\label{subsec: proof of prop limits at same rate}Some details for the proofs of Lemma~\ref{lem: limits at same rate} and Proposition~\ref{prop: strong limits}}

Let $1 \leq j < k \leq p$ and $\linkpatt\in\LP^{(s)}_{\multii}$, 
and let $\tau \in \LP_{\multii_{j,k}}^{(r)}$ be the sub-link pattern of 
$\linkpatt$ with index valences $\multii_{j,k} = (s_j,s_{j+1}\ldots,s_k)$, 
consisting of the lines of $\linkpatt$ attached to the indices $j,j+1,\ldots,k$, 
as in Section~\ref{subsec: further limit properties}, and let 
$\linkpatt \removeLink \tau$ denote the link pattern obtained from $\linkpatt$
by ``removing $\tau$'', that is, removing from $\linkpatt$ the links 
$\link{a\;}{\;b\,}$ with indices $a,b \in \{ j,j+1,\ldots,k \}$, collapsing
the indices $j,j+1,\ldots,k$ of $\linkpatt$ into one point, and relabeling
the indices thus obtained from left to right by $1,2,\ldots$
(see Section~\ref{subsec: further limit properties}).

\begin{lem}\label{lem: details for proof of further limit properties}
Let $\linkpatt \in \LP_\multii^{(s)}$, 
$\tau \in \LP_{\multii_{j,k}}^{(r)}$, 
and $\linkpatt \removeLink \tau$ be as in 
Section~\ref{subsec: further limit properties}. Then we have 
\begin{align*}
\Puregeom_\linkpatt
= \; & \sum_{l=0}^r \sum_{\substack{l_1,\ldots,l_{j-1},\\l_{k+1},\ldots,l_p}}
c_{l_1,\ldots,l_{j-1};l;l_{k+1},\ldots,l_p} \times 
\left(\Wbas_{l_p} \tens \cdots \tens \Wbas_{l_{k+1}}
\tens F^l.\Puregeom_{\tau} \tens \Wbas_{l_{j-1}}
\tens \cdots \tens \Wbas_{l_1}\right) \\
\Puregeom_{\linkpatt \removeLink \tau} 
= \; & \sum_{l=0}^r \sum_{\substack{l_1,\ldots,l_{j-1},\\l_{k+1},\ldots,l_p}}
c_{l_1,\ldots,l_{j-1};l;l_{k+1},\ldots,l_p} \times
\left(\Wbas_{l_p} \tens \cdots \tens \Wbas_{l_{k+1}}
\tens \Wbas_l \tens \Wbas_{l_{j-1}}
\tens \cdots \tens \Wbas_{l_1}\right) ,
\end{align*}
for some constants $c_{l_1,\ldots,l_{j-1};l;l_{k+1},\ldots,l_p} \in \bC$.
\end{lem}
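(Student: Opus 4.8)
\emph{Reduction to a single identity.} The plan is to recognize both displayed formulas as two facets of one vector identity. Since $\tau \in \LP^{(r)}_{\multii_{j,k}}$, the vector $\Puregeom_\tau$ satisfies \eqref{eq: cartan eigenvalue}~and~\eqref{eq: highest weight vector} and is thus a highest weight vector of weight $q^{r}$ inside the block $\Wd_{d_k}\tens\cdots\tens\Wd_{d_j}$. By semisimplicity (Lemma~\ref{lem: tensor product representations of quantum sl2}) it generates an irreducible subrepresentation isomorphic to $\Wd_\projdmn$, $\projdmn = r+1$, with basis $\set{F^m.\Puregeom_\tau}_{m=0}^{r}$, so that $\Wbas_m^{(\projdmn)}\mapsto F^m.\Puregeom_\tau$ is an isomorphism of $\Wd_\projdmn$ onto this subrepresentation. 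Tensoring with identities yields a homomorphism of $\Uqsltwo$-representations
\[ \iota_\tau \;\colon\; \Big(\bigotimes_{i=k+1}^p\Wd_{d_{i}}\Big)\tens\Wd_\projdmn\tens\Big(\bigotimes_{i=1}^{j-1}\Wd_{d_{i}}\Big)\to\bigotimes_{i=1}^p\Wd_{d_{i}}, \qquad \Wbas_m^{(\projdmn)}\mapsto F^m.\Puregeom_\tau. \]
Because $\linkpatt\removeLink\tau\in\LP^{(\nodefsmaller)}_{\hat{\multii}}$ with $\hat{\multii}=(s_1,\ldots,s_{j-1},r,s_{k+1},\ldots,s_p)$ (the collapsed index carrying valence $r$), the vector $\Puregeom_{\linkpatt\removeLink\tau}$ lies in the domain of $\iota_\tau$. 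Defining the constants $c_{l_1,\ldots,l_{j-1};m;l_{k+1},\ldots,l_p}$ to be the coordinates of $\Puregeom_{\linkpatt\removeLink\tau}$ in the standard tensor product basis gives the second displayed formula verbatim, and applying $\iota_\tau$ term by term (it replaces the slot $\Wbas_m^{(\projdmn)}$ by $F^m.\Puregeom_\tau$ and fixes the others) turns the first display into the assertion
\[ \Puregeom_\linkpatt = \iota_\tau\big(\Puregeom_{\linkpatt\removeLink\tau}\big). \qquad (\star) \]
Thus it suffices to prove $(\star)$.

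\emph{Proof of $(\star)$ by uniqueness.} Since $\iota_\tau$ is a module homomorphism, $v:=\iota_\tau(\Puregeom_{\linkpatt\removeLink\tau})$ lies in $\HWsp^{(\nodefsmaller)}_\multii$, so it satisfies \eqref{eq: cartan eigenvalue}~and~\eqref{eq: highest weight vector}. I would then verify that $v$ satisfies the projection conditions \eqref{eq: projection conditions} with exactly the same right-hand sides as $\Puregeom_\linkpatt$; granting this, the difference $\Puregeom_\linkpatt-v$ solves the homogeneous system and hence vanishes by Lemma~\ref{lem: all projections vanish gives zero}, proving $(\star)$. The verification proceeds by induction on the number of links in $\linkpatt$ (the base case, where $\linkpatt$ consists of defects only so that $\tau$ carries no internal links, follows from Lemma~\ref{lem: images of rainbow vectors}), splitting into three cases according to the position of the index $j'$ in \eqref{eq: projection conditions}. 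If $j'$ lies strictly outside the block, then $\widetilde{\pi}_{j'}^{(\projdmn)}$ acts on tensor factors disjoint from the block and therefore commutes with $\iota_\tau$; the claim reduces to the corresponding projection property of $\Puregeom_{\linkpatt\removeLink\tau}$, combined with the inductive identity $\Puregeom_{\hat{\linkpatt}}=\iota_\tau(\Puregeom_{\hat{\linkpatt}\removeLink\tau})$ for the link pattern $\hat{\linkpatt}$ obtained after the removal. If $j'$ is internal to the block, the block-component of $v$ is precisely the $\Puregeom_\tau$-subrepresentation, so the projection is governed by the internal projection properties of $\Puregeom_\tau$ from Theorem~\ref{thm: highest weight vector space basis vectors} applied to $\tau$, the constants $\constantfromdiagram{m}{s_{j'}}{s_{j'+1}}$ matching because the valences $s_{j'},s_{j'+1}$ agree in $\tau$ and in $\linkpatt$, and $\hat{\linkpatt}\removeLink\hat{\tau}=\linkpatt\removeLink\tau$.

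\emph{Main obstacle.} The remaining and genuinely delicate case is when $j'$ sits on the boundary of the block, i.e.\ $j'=j-1$ or $j'=k$, so that the generalized projection couples the fused factor $\Wd_\projdmn$ with a true neighboring tensorand $\Wd_{d_{j-1}}$ or $\Wd_{d_{k+1}}$. Here one needs a compatibility statement asserting that fusing the block and then projecting across its boundary produces the same result, including the multiplicative constant, as projecting $\Puregeom_{\linkpatt\removeLink\tau}$ across the collapsed index; this is a one-sided generalization of the commuting diagram of Lemma~\ref{lem: commutative diagram}, and its proof will again rest on the explicit Clebsch--Gordan data of Lemma~\ref{lem: tensor product representations of quantum sl2} together with the $q$-combinatorial identities of Appendix~\ref{app: q-combinatorics}. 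Apart from this commutation, the argument is bookkeeping that exactly parallels the construction of the dual elements in \cite[Section~3.5]{Kytola-Peltola:Pure_partition_functions_of_multiple_SLEs}, so I would follow that template and only spell out the points where the link multiplicities and the presence of defects require notational care.
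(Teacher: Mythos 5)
Your reduction of the lemma to the single identity $\Puregeom_\linkpatt=\iota_\tau(\Puregeom_{\linkpatt\removeLink\tau})$ is faithful to the statement, and the plan of verifying the defining system \eqref{eq: cartan eigenvalue}--\eqref{eq: projection conditions} for the candidate vector $v=\iota_\tau(\Puregeom_{\linkpatt\removeLink\tau})$ and then invoking Lemma~\ref{lem: all projections vanish gives zero} is legitimate in principle. But the case you defer is exactly where this route becomes hard, and you have not closed it. For $j'\in\{j-1,k\}$ you need a statement of the form: the projection of $\Wd_{d_{j}}\tens\Wd_{d_{j-1}}$ onto its $\projdmn'$-component, restricted to vectors whose block component lies in the subrepresentation generated by $\Puregeom_\tau$, agrees --- including the constant $\constantfromdiagram{m'}{s_{j-1}}{s_j}$ of \eqref{eq: projection conditions} --- with a projection across the collapsed index of $\linkpatt\removeLink\tau$. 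This is not merely a ``one-sided'' variant of Lemma~\ref{lem: commutative diagram}: there both members of the projected pair are fused via $\Embedding^{(s_1)},\Embedding^{(s_2)}$, whereas here the factor $\Wd_{d_j}$ is entangled inside the multi-fold block through $\Puregeom_\tau$, so one needs to know how $\Puregeom_\tau$ decomposes with respect to its outermost tensorand --- data the paper never computes for general $\tau$. Moreover the membership conditions on the two sides do not match naively: $\linkpatt$ may contain no link $\link{j-1}{j}$ while $\linkpatt\removeLink\tau$ does contain links from $j-1$ to the collapsed index (coming from links $\link{j-1\,}{\,i}$ of $\linkpatt$ with $j<i\leq k$), so $\widetilde{\pi}_{j-1}$ does not commute with $\iota_\tau$ and even the vanishing half of \eqref{eq: projection conditions} requires an argument. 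Without this case, the difference $\Puregeom_\linkpatt-v$ is not known to be annihilated by \emph{all} the generalized projections, and Lemma~\ref{lem: all projections vanish gives zero} cannot be applied; the proof is incomplete at its most delicate point.

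The paper's proof avoids this obstacle altogether and is worth contrasting with your plan. It expands $\Puregeom_\linkpatt$ in the basis of the block supplied by Theorem~\ref{thm: highest weight vector space basis vectors}(b), i.e.\ over all vectors $F^m.\Puregeom_\upsilon$ with $\upsilon\in\LP^{(t)}_{\multii_{j,k}}$, and then applies the dual functionals $(\widehat{\Quantumdual}_\upsilon)_{j,k}$ of Proposition~\ref{prop: quantum dual elements} and Remark~\ref{rem: elements of the commutant}. These functionals are iterated projections taken \emph{strictly inside} the block along an allowable ordering of the links of $\upsilon$, so every step is directly controlled by the already-established properties \eqref{eq: projection conditions} of $\Puregeom_\linkpatt$: the components with $\upsilon\neq\tau$ are annihilated, and the $\upsilon=\tau$ component is identified with $\Puregeom_{\linkpatt\removeLink\tau}$ by comparing the two ways of evaluating $(\widehat{\Quantumdual}_\tau)_{j,k}(\Puregeom_\linkpatt)$. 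No boundary-crossing compatibility is ever needed. If you wish to keep your uniqueness framework you must supply the missing boundary lemma; the cleaner fix is to switch to the dual-element computation.
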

\begin{proof}
By Theorem~\ref{thm: highest weight vector space basis vectors}(b),
the vector $\Puregeom_\linkpatt$ can be written as a linear combination
\[ \Puregeom_\linkpatt = 
\sum_{t \geq 0} \sum_{\upsilon \in \LP^{(t)}_{\multii_{j,k}}} 
\sum_{l=0}^t \sum_{\substack{l_1,\ldots,l_{j-1},\\l_{k+1},\ldots,l_p}}
c^{t,\upsilon}_{l_1,\ldots,l_{j-1};l;l_{k+1},\ldots,l_p} \times 
\left(\Wbas_{l_p} \tens \cdots \tens \Wbas_{l_{k+1}}
\tens F^l.\Puregeom_{\upsilon} \tens \Wbas_{l_{j-1}}
\tens \cdots \tens \Wbas_{l_1}\right), \]
for some constants $c^{t,\upsilon}_{l_1,\ldots,l_{j-1};l;l_{k+1},\ldots,l_p} \in \bC$.
For any $\upsilon \in \LP^{(t)}_{\multii_{j,k}}$, we apply the map 
\[ (\widehat{\Quantumdual}_{\upsilon})_{j,k} = \id^{\tens(p-k)} \tens \widehat{\Quantumdual}_{\upsilon} \tens \id^{\tens(j-1)} \;\colon\; \bigotimes_{i=1}^{p}\Wd_{d_{i}} \; \longrightarrow \; 
(\Wd_{d_p} \tens \cdots \tens \Wd_{d_{k+1}}) \tens \Wd_{t+1} \tens
(\Wd_{d_{j-1}} \tens \cdots \tens \Wd_{d_1}) \]
to both sides of the above expression for $\Puregeom_\linkpatt$.
By the projection properties~\eqref{eq: projection conditions} of 
$\Puregeom_\linkpatt$, the vector
$(\widehat{\Quantumdual}_{\upsilon})_{j,k}(\Puregeom_\linkpatt)$
equals zero unless $\upsilon = \tau$, and if $\upsilon = \tau$, then we have 
$(\widehat{\Quantumdual}_{\tau})_{j,k}(\Puregeom_\linkpatt) = 
\Quantumdual_\tau(\Puregeom_\tau) \times \Puregeom_{\linkpatt \removeLink \tau}$,
by similar arguments as in the proof of 
Proposition~\ref{prop: quantum dual elements}. 
Analogous properties hold for
$(\widetilde{\Quantumdual}_{\upsilon}^{\upsilon})_{j,k}$, which picks the 
component generated by $\Puregeom_\upsilon$ in the tensor positions 
$j,j+1,\ldots,k$. Therefore, we have
\[ \Puregeom_\linkpatt
= \sum_{l=0}^r \sum_{\substack{l_1,\ldots,l_{j-1},\\l_{k+1},\ldots,l_p}}
c_{l_1,\ldots,l_{j-1};l;l_{k+1},\ldots,l_p} \times 
\left(\Wbas_{l_p} \tens \cdots \tens \Wbas_{l_{k+1}}
\tens F^l.\Puregeom_{\tau} \tens \Wbas_{l_{j-1}}
\tens \cdots \tens \Wbas_{l_1}\right), \]
where $c_{l_1,\ldots,l_{j-1};l;l_{k+1},\ldots,l_p} = c^{r,\tau}_{l_1,\ldots,l_{j-1};l;l_{k+1},\ldots,l_p}$,
and in particular,
\[ \Puregeom_{\linkpatt \removeLink \tau} 
= \sum_{l=0}^r \sum_{\substack{l_1,\ldots,l_{j-1},\\l_{k+1},\ldots,l_p}}
c_{l_1,\ldots,l_{j-1};l;l_{k+1},\ldots,l_p} \times
\left(\Wbas_{l_p} \tens \cdots \tens \Wbas_{l_{k+1}}
\tens \Wbas_l \tens \Wbas_{l_{j-1}}
\tens \cdots \tens \Wbas_{l_1}\right). \]
\end{proof}

\bigskip{}

\bibliographystyle{annotate}

\begin{thebibliography}{ACDCH{\etalchar{+}}13}

\bibitem[BB03a]{Bauer-Bernard:Conformal_field_theories_of_SLEs}
M.~Bauer and D.~Bernard.
\newblock Conformal field theories of stochastic {L}oewner evolutions.
\newblock {\em Comm. Math. Phys.}, 239(3):493--521, 2003.

\bibitem[BB03b]{Bauer-Bernard:SLE_CFT_and_zigzag_probabilities}
M.~Bauer and D.~Bernard.
\newblock $\mathrm{SLE}$, {CFT} and zig-zag probabilities.
\newblock In {\em Proceedings of the conference `Conformal Invariance and
  Random Spatial Processes', Edinburgh}, 2003.

\bibitem[BBK05]{BBK:Multiple_SLEs_and_statistical_mechanics_martingales}
M.~Bauer, D.~Bernard, and K.~Kyt{\"o}l{\"a}.
\newblock Multiple {S}chramm-{L}oewner evolutions and statistical mechanics martingales.
\newblock {\em J.~Stat. Phys.}, 120(5-6):1125--1163, 2005.

\bibitem[BFIZ91]{BDIZ:Covariant_differential_equations_and_singular_vectors_in_Virasoro_representations}
M.~Bauer, P.~Di Francesco, C.~Itzykson, and J.~B.~Zuber.
\newblock Covariant differential equations and singular vectors in {V}irasoro representations.
\newblock {\em Nuclear Phys. B}, 362(3):515--562, 1991.

\bibitem[BJV13]{BJV:Some_remarks_on_SLE_bubbles_and_Schramms_2point_observable}
D.~Beliaev and F.~Johansson-Viklund.
\newblock Some remarks on $\mathrm{SLE}$ bubbles and {S}chramm's two-point observable.
\newblock {\em Comm. Math. Phys.}, 320(2):379--394, 2013.

\bibitem[Bon69]{Bony:Maximum_principle_Harnack_inequality_and_uniqueness_of_Cauchy_problem}
J.~M.~Bony.
\newblock Principe du maximum, in\'egalite de Harnack et unicit\'e du probl\`eme de Cauchy pour les op\'erateurs
elliptiques d\'eg\'en\'er\'es.
\newblock {\em Ann. Inst. Fourier (Grenoble)}, 19(1):277--304, 1969.

\bibitem[BPZ84a]{BPZ:Infinite_conformal_symmetry_in_2D_QFT}
A.~A.~Belavin, A.~M.~Polyakov, and A.~B.~Zamolodchikov. 
\newblock Infinite conformal symmetry in two-dimensional quantum field theory.
\newblock {\em Nuclear Phys. B}, 241(2):333--380, 1984.

\bibitem[BPZ84b]{BPZ:Infinite_conformal_symmetry_of_critical_fluctuations_in_2D}
A.~A.~Belavin, A.~M.~Polyakov, and A.~B.~Zamolodchikov. 
\newblock Infinite conformal symmetry of critical fluctuations in two dimensions.
\newblock {\em J.~Stat. Phys.}, 34(5-6):763--774, 1984.

\bibitem[BPW18]{BPW:On_the_uniqueness_of_global_multiple_SLEs}
V.~Beffara, E.~Peltola, and H.~Wu.
\newblock On the uniqueness of global multiple $\SLE$s.
\newblock Preprint in \url{arXiv:1801.07699}, 2018.

\bibitem[BSA88]{BSA:Degenerate_CFTs_and_explicit_expressions_for_some_null_vectors}
L.~Benoit and Y.~Saint-Aubin.
\newblock Degenerate conformal field theories and explicit expressions for some null vectors.
\newblock {\em Phys. Lett. B}, 215(3):517--522, 1988.

\bibitem[Car84]{Cardy:Conformal_invariance_and_surface_critical_behavior}
J.~L.~Cardy.
\newblock Conformal invariance and surface critical behavior.
\newblock {\em Nuclear Phys. B}, 240(4):514--532, 1984.

\bibitem[Car89]{Cardy:Boundary_conditions_fusion_rules_and_Verlinde_formula}
J.~L.~Cardy.
\newblock Boundary conditions, fusion rules and the {V}erlinde formula.
\newblock {\em Nuclear Phys. B}, 324(3):581--596, 1989.

\bibitem[Car92]{Cardy:Critical_percolation_in_finite_geometries}
J.~L.~Cardy.
\newblock Critical percolation in finite geometries.
\newblock {\em J.~Phys. A}, 25(4):L201--206, 1992.

\bibitem[CDCH{\etalchar{+}}14]{CDHKS:Convergence_of_Ising_interfaces_to_SLE}
D.~Chelkak, H.~Duminil-Copin, C.~Hongler, A.~Kemppainen, and S.~Smirnov.
\newblock Convergence of {I}sing interfaces to {S}chramm's {$\mathrm{SLE}$} curves.
\newblock {\em C. R. Acad. Sci. Paris S{\'e}r. I Math.}, 352(2):157--161, 2014.

\bibitem[CN06]{Camia-Newman:2D_percolation_full_scaling_limit}
F.~Camia and C.~M.~Newman.
\newblock Two-dimensional critical percolation: the full scaling limit.
\newblock {\em Comm. Math. Phys.}, 268(1):1--38, 2006.

\bibitem[DF84]{Dotsenko-Fateev:Conformal_algebra_and_multipoint_correlation_functions_in_2D_statistical_models}
V.~S.~Dotsenko and V.~A.~Fateev.
\newblock Conformal algebra and multipoint correlation functions in $2d$ statistical models.
\newblock {\em Nuclear Phys. B}, 240(3):312--348, 1984.

\bibitem[DFMS97]{DMS:CFT}
Philippe Di~Francesco, Pierre Mathieu, and David S{\'e}n{\'e}chal.
\newblock {\em Conformal field theory}.
\newblock Graduate Texts in Contemporary Physics. Springer-Verlag, New York,
  1997.

\bibitem[Dub06a]{Dubedat:Euler_integrals_for_commuting_SLEs}
J.~Dub{\'e}dat.
\newblock Euler integrals for commuting {$\mathrm{SLEs}$}.
\newblock {\em J.~Stat. Phys.}, 123(6):1183--1218, 2006.

\bibitem[Dub06b]{Dubedat:Excursion_decomposition_for_SLE_and_Watts_crossing_formula}
J.~Dub{\'e}dat.
\newblock Excursion decompositions for $\mathrm{SLE}$ and {W}atts' crossing formula.
\newblock {\em Probab. Theory Related Fields}, 134(3):453--488, 2006.

\bibitem[Dub07]{Dubedat:Commutation_relations_for_SLE}
J.~Dub{\'e}dat.
\newblock Commutation relations for {$\mathrm{SLE}$}.
\newblock {\em Comm. Pure Appl. Math.}, 60(12):1792--1847, 2007.

\bibitem[Dub15a]{Dubedat:SLE_and_Virasoro_representations_localization}
J.~Dub{\'e}dat.
\newblock $\mathrm{SLE}$ and {V}irasoro representations: Localization.
\newblock {\em Comm. Math. Phys.}, 336(2):695--760, 2015.

\bibitem[Dub15b]{Dubedat:SLE_and_Virasoro_representations_fusion}
J.~Dub{\'e}dat.
\newblock $\mathrm{SLE}$ and {V}irasoro representations: Fusion.
\newblock {\em Comm. Math. Phys.}, 336(2):761--809, 2015.

\bibitem[Dup06]{Duplantier:Conformal_random_geometry}
B.~Duplantier.
\newblock Conformal random geometry.
\newblock In {\em Mathematical Statistical Physics: Lecture Notes of the Les 
  Houches Summer School (2005)}, Course 3. Elsevier, 2006.

\bibitem[FF84]{Feigin-Fuchs:Verma_modules_over_Virasoro_book}
B.~L.~Fe\u{i}gin and D.~B.~Fuchs.
\newblock Verma modules over the {V}irasoro algebra.
\newblock In {\em Topology (Leningrad 1982)}, volume 1060 of {\em Lecture notes
  in mathematics}, pages 230--245. Springer-Verlag, Berlin Heidelberg, 1984. 

\bibitem[FK97]{Frenkel-Khovanov:Canonical_bases_in_tensor_products_and_graphical_calculus_for_Uqsl2}
I.~B.~Frenkel and M.~G.~Khovanov.
\newblock Canonical bases in tensor products and graphical calculus for $\mathcal{U}_q(\mathfrak{sl}_2)$.
\newblock {\em Duke Math. J.}, 87(3):409--480, 1997.

\bibitem[FK04]{Friedrich-Kalkkinen:On_CFT_and_SLE}
R.~Friedrich and J.~Kalkkinen.
\newblock On conformal field theory and stochastic Loewner evolution.
\newblock {\em Nuclear Phys. B}, 687(3):279--302, 2004.

\bibitem[FK15a]{Flores-Kleban:Solution_space_for_system_of_null-state_PDE3}
S.~M.~Flores and P.~Kleban.
\newblock A solution space for a system of null-state partial differential equations, Part~3.
\newblock {\em Comm. Math. Phys.}, 333(2):597--667, 2015.

\bibitem[FK15b]{Flores-Kleban:Solution_space_for_system_of_null-state_PDE4}
S.~M.~Flores and P.~Kleban.
\newblock A solution space for a system of null-state partial differential equations, Part~4.
\newblock {\em Comm. Math. Phys.}, 333(2):669--715, 2015.


\bibitem[FP18]{Flores-Peltola:WJTL_algebra}
S.~M.~Flores and E.~Peltola.
\newblock 
Standard modules, radicals, and the valenced Temperley-Lieb algebra. 
\newblock Preprint in \url{arXiv:1801.10003}, 2018.

\bibitem[FP20a{\etalchar{+}}]{Flores-Peltola:Colored_braid_representations_and_QSW}
S.~M.~Flores and E.~Peltola.
\newblock Higher quantum and classical {S}chur-{W}eyl duality for {$\mathfrak{sl}(2)$}.
\newblock In preparation.


\bibitem[FP20b{\etalchar{+}}]{Flores-Peltola:Solution_space_of_BSA_PDEs}
S.~M.~Flores and E.~Peltola.
\newblock Solution spaces for the {B}enoit $\&$ {S}aint-{A}ubin partial differential equations.
\newblock In preparation.


\bibitem[FSK15]{FSK-Multiple_SLE_connectivity_weights_for_rectangles_hexagons_and_octagons}
S.~M.~Flores, J.~J.~H.~Simmons, and P.~Kleban.
\newblock Multiple-$\mathrm{SLE}_\kappa$ connectivity weights for rectangles, hexagons, and octagons.
\newblock Preprint in \url{arXiv:1505.07756}, 2015.

\bibitem[FW91]{Felder-Wieczerkowski:Topological_representations_of_quantum_group}
G.~Felder and C.~Wieczerkowski.
\newblock Topological representations of the quantum group $\mathcal{U}_q(\mathfrak{sl}_2)$.
\newblock {\em Comm. Math. Phys.}, 138(3):583--605, 1991.

\bibitem[FW03]{Friedrich-Werner:Conformal_restriction_highest_weight_representations_and_SLE}
R.~Friedrich and W.~Werner.
\newblock Conformal restriction, highest weight representations and {$\mathrm{SLE}$}.
\newblock {\em Comm. Math. Phys.}, 243(1):105--122, 2003.

\bibitem[GC05]{Gamsa-Cardy:The_scaling_limit_of_two_cluster_boundaries_in_critical_lattice_models}
A.~Gamsa and J.~L. Cardy.
\newblock The scaling limit of two cluster boundaries in critical lattice models.
\newblock {\em J.~Stat. Mech. Theory Exp.}, 12:P12009, 2005.

\bibitem[HK13]{Hongler-Kytola:Ising_interfaces_and_free_boundary_conditions}
C.~Hongler and K.~Kyt{\"o}l{\"a}.
\newblock Ising interfaces and free boundary conditions.
\newblock {\em J.~Amer. Math. Soc.}, 26(4):1107--1189, 2013.

\bibitem[H{\"o}r67]{HormanderHypoelliptic}
L.~H{\"o}rmander.
\newblock Hypoelliptic second-order differential equations.
\newblock {\em Acta Math.}, 119:147--171, 1967.

\bibitem[Izy15]{Izyurov:Smirnovs_observable_for_free_boundary_conditions_interfaces_and_crossing_probabilities}
K.~Izyurov.
\newblock Smirnov's observable for free boundary conditions, interfaces and crossing probabilities.
\newblock {\em Comm. Math. Phys.}, 337(1):225--252, 2015.

\bibitem[Izy17]{Izyurov:Critical_Ising_interfaces_in_multiply_connected_domains}
K.~Izyurov.
\newblock Critical {I}sing interfaces in multiply-connected domains.
\newblock {\em Probab. Theory Related Fields}, 167(1):379--415, 2017.

\bibitem[Jim86]{Jimbo:q_analog_of_UqglN_Hecke_algebra_and_YBE}
M.~Jimbo.
\newblock A $q$ analog of $\mathcal{U}(\mathfrak{gl}(n+1))$, {H}ecke algebra and the {Y}ang-{B}axter equation.
\newblock {\em Lett. Math. Phys.}, 11(3):247--252, 1986.

\bibitem[JJK16]{JJK:SLE_boundary_visits}
N.~Jokela, M.~J{\"a}rvinen, and K.~Kyt{\"o}l{\"a}.
\newblock {$\mathrm{SLE}$} boundary visits.
\newblock {\em Ann. Henri Poincar{\'e}}, 17(6):1263--1330, 2016.


\bibitem[KKP19]{KKP:Conformal_blocks}
A.~Karrila, K.~Kyt{\"o}l{\"a}, and E.~Peltola.
\newblock Conformal blocks, $q$-combinatorics, and quantum group symmetry.
\newblock {\em Ann. Inst. Henri Poincar{\'e} D}, 6(3):449--487, 2019.

\bibitem[KKP20]{KKP:Conformal_blocks_pure_partition_functions_and_KW_binary_relation}
A.~Karrila, K.~Kyt{\"o}l{\"a}, and E.~Peltola.
\newblock Boundary correlations in planar {LERW} and {UST}.
\newblock {\em Comm. Math. Phys.}, to appear.
\newblock Preprint in \url{arXiv:1702.03261}.


\bibitem[KL07]{Kozdron-Lawler:Configurational_measure_on_mutually_avoiding_SLEs}
M.~J.~Kozdron and G.~F.~Lawler.
\newblock The configurational measure on mutually avoiding {$\mathrm{SLE}$} paths.
\newblock {\em Fields Inst. Commun.}, 50:199--224, 2007.

\bibitem[Kon87]{Kontsevich:Virasoro_and_Teichmuller_spaces}
M.~L.~Kontsevich.
\newblock The Virasoro algebra and Teichm\"uller spaces.
\newblock {\em Funct. Anal. Appl.}, 21(2):156--157, 1987.

\bibitem[Kon03]{Kontsevich:CFT_SLE_and_phase_boundaries}
M.~L.~Kontsevich.
\newblock CFT, {$\mathrm{SLE}$}, and phase boundaries.
\newblock {\em Oberwolfach Arbeitstagung}, 2003.

\bibitem[KP16]{Kytola-Peltola:Pure_partition_functions_of_multiple_SLEs}
K.~Kyt{\"o}l{\"a} and E.~Peltola.
\newblock Pure partition functions of multiple $\mathrm{SLEs}$.
\newblock {\em Comm. Math. Phys.}, 346(1):237--292, 2016.

\bibitem[KP20]{Kytola-Peltola:Conformally_covariant_boundary_correlation_functions_with_quantum_group}
K.~Kyt{\"o}l{\"a} and E.~Peltola.
\newblock Conformally covariant boundary correlation functions with a quantum group.
\newblock {\em J.~Eur. Math. Soc.}, 22(1):55--118, 2020.

\bibitem[KS07]{Kontsevich-Suhov:On_Malliavin_measures_SLE_and_CFT}
M.~L.~Kontsevich and Y.~Suhov.
\newblock On Malliavin measures, {$\mathrm{SLE}$}, and CFT.
\newblock {\em P. Steklov I. Math.} 258(1):100--146, 2007.

\bibitem[KS18]{KS-configurations_of_FK_Ising_interfaces}
A.~Kemppainen and S.~Smirnov.
\newblock Configurations of FK Ising interfaces and hypergeometric SLE.
\newblock {\em Math. Res. Lett.}, 25(3):875--889, 2018.

\bibitem[LSW04]{LSW:Conformal_invariance_of_planar_LERW_and_UST}
G.~F.~Lawler, O.~Schramm, and W.~Werner.
\newblock Conformal invariance of planar loop-erased random walks and uniform spanning trees.
\newblock {\em Ann. Probab.}, 32(1B):939--995, 2004.

\bibitem[Lus92]{Lusztig:Canonical_bases_in_tensor_products}
G.~Lusztig.
\newblock Canonical bases in tensor products.
\newblock {\em Proc. Nat. Acad. Sci. U.S.A.}, 89:8177--8179, 1992.

\bibitem[LV19a]{LV:Coulomb_gas_for_commuting_SLEs}
J.~Lenells and F.~Viklund.
\newblock Schramm's formula and the {G}reen's function for multiple {$\mathrm{SLE}$}.
\newblock  {\em J. Stat. Phys.}, 176(4):873--931, 2019. 

\bibitem[LV19b]{LV:Coulomb_gas_integrals_PART2}
J.~Lenells and F.~Viklund.
\newblock Asymptotic analysis of {D}otsenko-{F}ateev integrals.
\newblock {\em Ann. Henri Poincar\'e}, 20(11):3799--3848, 2019.

\bibitem[Mar92]{Martin:On_Schur-Weyl_duality_An_Hecke_algebras_and_quantum_slN_on_CN_tensor_nplus1}
P.~P.~Martin.
\newblock On {S}chur-{W}eyl duality, ${A}_n$ {H}ecke algebras and quantum ${sl(N)}$ on $\otimes^{n+1} \mathbb{C}^n$.
\newblock {\em Int. J.~Mod. Phys. A}, 7(01b):645--673, 1992.

\bibitem[PSA14]{PSA:Idempotents_of_the_TL_module_C2_tensor_n_in_terms_of_elements_of_Uqsltwo}
G.~Provencher and Y.~Saint-Aubin.
\newblock The idempotents of the $\mathrm{TL}_n$-module $\otimes^n\mathbb{C}2$ in terms of elements of $\mathcal{U}_q(\mathfrak{sl}_2)$.
\newblock {\em Ann. Henri Poincar\'e}, 15(11):2203--2240, 2014.

\bibitem[PW19]{PW:Global_multiple_SLEs_and_pure_partition_functions}
E.~Peltola and H.~Wu.
\newblock Global and local multiple $\SLE$s for $\kappa \leq 4$ and connection probabilities for level lines of $\mathrm{GFF}$.
\newblock {\em Comm. Math. Phys.}, 366(2):469--536, 2019.

\bibitem[RSA14]{Ridout-Saint-Aubin:Standard_modules_induction_and_structure_of_TL}
D.~Ridout and Y.~Saint-Aubin.
\newblock Standard modules, induction and the structure of the {T}emperley-{L}ieb algebra.
\newblock {\em Adv. Theor. Math. Phys.}, 18(5):957--1041, 2014.

\bibitem[Sch00]{Schramm:Scaling_limits_of_LERW_and_UST}
O.~Schramm.
\newblock Scaling limits of loop-erased random walks and uniform spanning trees.
\newblock {\em Israel J.~Math.}, 118(1):221--288, 2000.

\bibitem[Smi01]{Smirnov:Critical_percolation_in_the_plane}
S.~Smirnov.
\newblock Critical percolation in the plane: conformal invariance, {C}ardy's formula, scaling limits.
\newblock {\em C. R. Acad. Sci.}, 333(3):239--244, 2001.
\newblock (updated 2009, see \url{arXiv:0909.4499}).

\bibitem[Smi06]{Smirnov:Towards_conformal_invariance_of_2D_lattice_models}
S.~Smirnov.
\newblock Towards conformal invariance of {$2D$} lattice models.
\newblock In {\em Proceedings of the ICM 2006, Madrid, Spain}, volume~II, pages 1421--1451. European Mathematical Society, 2006.

\bibitem[SS05]{Schramm-Sheffield:Harmonic_explorer_and_its_convergence_to_SLE4}
O.~Schramm and S.~Sheffield.
\newblock The harmonic explorer and its convergence to $\mathrm{SLE}(4)$.
\newblock {\em Ann. Probab.}, 33(6):2127--2148, 2005.

\bibitem[SW11]{Sheffield-Wilson:Schramms_proof_of_Watts_formula}
S.~Sheffield and D.~Wilson.
\newblock Schramm's proof of Watts' formula.
\newblock {\em Ann. Probab.} 39(5):1844--1863, 2011.

\bibitem[Wat96]{Watts:A_crossing_probability_for_critical_percolation_in_two_dimensions}
G.~Watts.
\newblock A crossing probability for critical percolation in two dimensions.
\newblock {\em J.~Phys. A}, 29:L363, 1996.

\bibitem[Zha08]{Zhan:Scaling_limits_of_planar_LERW_in_finitely_connected_domains}
D.~Zhan.
\newblock The scaling limits of planar {LERW} in finitely connected domains.
\newblock {\em Ann. Probab.}, 36(2):467--529, 2008.

\end{thebibliography}

\newcommand{\etalchar}[1]{$^{#1}$}

\end{document}